\documentclass[journal]{IEEEtran}
\IEEEoverridecommandlockouts
\usepackage[noadjust]{cite}
\usepackage{amsmath,amssymb,amsfonts}
\usepackage{algorithmic}
\usepackage{algorithm}
\usepackage{graphicx, array}
\usepackage{textcomp,dsfont}
\usepackage{xcolor,comment}
\usepackage{nicefrac,xfrac}
\usepackage{subcaption}
\usepackage{flushend}
\usepackage{tikz}%
\usetikzlibrary{arrows.meta,positioning,calc}%

\newtheorem{theorem}{Theorem}
\newtheorem{corollary}{Corollary}
\newtheorem{lemma}{Lemma}
\newtheorem{identity}{Identity}
\newtheorem{definition}{Definition}
\newtheorem{remark}{Remark}
\newtheorem{example}{Example}%

\def\BibTeX{{\rm B\kern-.05em{\sc i\kern-.025em b}\kern-.08em
    T\kern-.1667em\lower.7ex\hbox{E}\kern-.125emX}}

\begin{document}

\title{Zak-OTFS: A Predictable Physical Layer \\ for Communications and Sensing}

\author{Sandesh Rao Mattu$^*$, Nishant Mehrotra$^*$, Venkatesh Khammammetti, and Robert Calderbank~\IEEEmembership{Life Fellow,~IEEE}
\thanks{This work is supported by the National Science Foundation under grants 2342690 and 2148212, in part by funds from federal agency and industry partners as specified in the Resilient \& Intelligent NextG Systems (RINGS) program, and in part by the Air Force Office of Scientific Research under grants FA 8750-20-2-0504 and FA 9550-23-1-0249. \\ The authors are with the Department of Electrical and Computer Engineering, Duke University, Durham, NC, 27708, USA (email: sandesh.mattu@duke.edu, nishantm\allowbreak@alumni.rice.edu,~\allowbreak \{venkatesh.\allowbreak khammammetti\allowbreak,~\allowbreak robert.calderbank\}\allowbreak@duke\allowbreak.edu). \\ $*$ denotes equal contribution.}
}


\maketitle
\begin{abstract}
This tutorial derives the mathematical foundations of what it means for a carrier waveform to be predictable and non-selective. We focus on Zak-OTFS, where each carrier waveform is a pulse in the delay-Doppler (DD) domain, formally a quasi-periodic localized function with specific periods along delay and Doppler. Viewed in the time domain, the Zak-OTFS carrier is realized as a pulse train modulated by a tone (termed a pulsone).

We start by providing physical intuition, describing what it means for the Zak-OTFS carrier waveforms to be geometric modes of the Heisenberg-Weyl (HW) group of discrete delay and Doppler shifts that define the discrete-time communication model. In fact, we show that these geometric modes are common eigenvectors of a maximal commutative subgroup of our discrete HW group.

When the channel delay spread is less than the delay period, and the channel Doppler spread is less than the Doppler period, we show that the Zak-OTFS input-output (I/O) relation is predictable and non-selective. Given the I/O response at one DD point in a frame, it is possible to predict the I/O response at all other points, without recourse to some mathematical model of the channel. While it may be intuitive that geometric modes of the HW group are predictable and non-selective wireless carriers, this is not a requirement. We provide a necessary and sufficient condition that depends on the ambiguity properties of the basis of carrier waveforms. In fact, we show that the structure of a pulse train modulated by a Hadamard matrix is common to several families of waveforms proposed for 6G, including Zak-OTFS, AFDM, OTSM and ODDM.

Given bandwidth $B$ and time duration $T$, we describe the discrete-time communications model that uses $BT$ orthonormal Zak-OTFS carrier waveforms. This is Nyquist signaling, and Faster-than-Nyquist signaling requires that we sacrifice orthogonality. The standard approach is to reduce the symbol time and resolve the inter-symbol interference. We describe an alternative approach using mutually unbiased bases that retains the original symbol period.

We repurpose the discrete-time communications model and describe how the Zak-OTFS carrier waveform can be used for radar sensing. We show that by choosing a radar waveform that is a common eigenvector of a maximal commutative subgroup of the discrete HW group, we are able to reduce the complexity of calculating the radar cross-ambiguity.

The peaks in the I/O relation are associated with dominant scatterers and so the I/O relation for Zak-OTFS evolves slowly from frame to frame. We describe how inter-frame predictability makes differential communication possible. Data detected in one frame can be used to generate a fresh channel estimate that in turn supports data detection in the next frame. Inter-frame predictability makes it possible to increase spectral efficiency by using data as pilots.

We describe how to optimize filter design in the DD domain to balance localization (sensing) and prevention of inter-symbol interference on the information lattice (communication). We describe how the choice of filter enables massive random access on a wireless uplink subject to mobility and delay spread. Successive interference cancellation plays an essential role in coded random access, where users transmit packets in multiple slots. We show that Gaussian filters enable accurate channel estimation across slots, making it possible to perform successive interference cancellation.

Delay-Doppler methods (such as Zak-OTFS) and time-frequency methods (such as OFDM) provide different ways of balancing channel estimation and equalization. In OFDM, equalization is relatively simple, whereas acquiring the I/O relation can be challenging. In Zak-OTFS, the reverse is true, and we describe how to reduce complexity by performing equalization in the frequency domain. We also describe a neural receiver with the potential to reduce the complexity of equalization dramatically.

Finally, since seeing is believing, we also describe over-the-air demonstrations of Zak-OTFS at sub-6 GHz, at mmWave, and at sub-THz.
\end{abstract}

\begin{IEEEkeywords}
6G, Delay-Doppler Communications, Doubly Selective Channels, Radar, Waveforms, Zak-OTFS
\end{IEEEkeywords}

\section{Introduction}
\label{sec:intro}

\IEEEPARstart{}{}

\begin{figure*}
    \centering
    \includegraphics[width=0.8\linewidth]{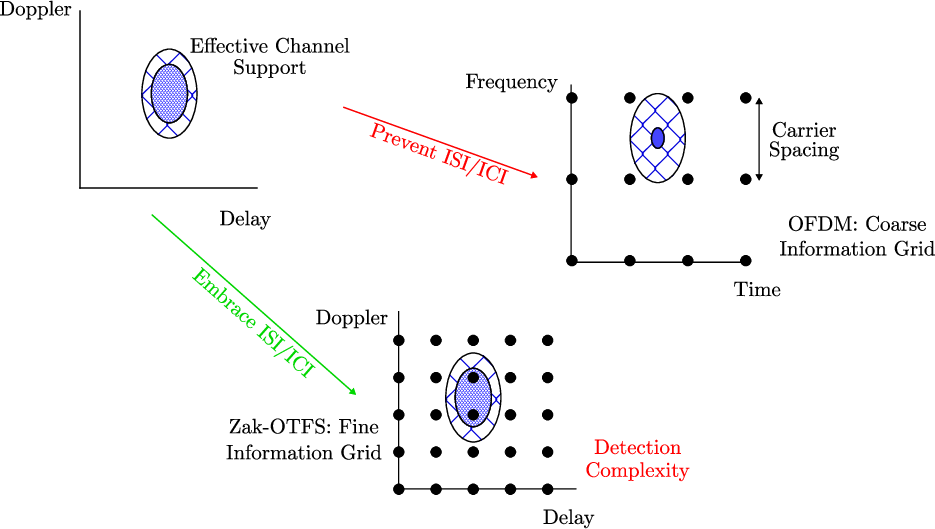}
    \caption{A pictorial representation of the difference between OFDM and Zak-OTFS. OFDM prevents ISI/ICI, that is the subcarrier spacing is adjusted to avoid interference, while Zak-OTFS embraces ISI/ICI. This results in poor resource utilization in OFDM but low equalization complexity while resource utilization is good in Zak-OTFS at the cost of equalization complexity.}
    \label{fig:prevent_vs_embrace}
\end{figure*}

Every ten years there is a new wireless standard, and each new standard is an opportunity to think about fundamentals of wireless communication. Voice was the dominant application in 1-3G and circuit switching was used to support a modest number of persistent but relatively low-rate connections. As data supplanted voice, the air interface needed to change in order to quickly distribute resources according to widely and rapidly varying traffic demands. 
Link adaptation based on approximate channel state information at the transmitter made it possible to satisfy user demand for high-speed data in 4G and led to the transition from CDMA to OFDMA.
This tutorial explores the value to 6G of delay-Doppler signal processing in light of higher speeds, higher frequencies, the balance between uplink and downlink, the integration of sensing and communication, and the introduction of artificial intelligence (AI) and machine learning (ML). Each Section concludes with a list of research problems that identify possible research directions.

Section~\ref{sec:foundations} provides the mathematical foundation for signal processing in the delay-Doppler domain. In the 1960s, Slepian, Landau and Pollack~\cite{pollak1961_pswf1,pollak1961_pswf2} developed the mathematical foundation of what it means for a signal to be limited to time $T$ and bandwidth $B$. We consider a discrete time model where $T$ is partitioned into $N$ equal subintervals and $B$ is partitioned into $M$ equal subintervals. The number of degrees of freedom is the time-bandwidth product $BT (=MN)$ and we consider bases of $BT$ carrier waveforms in the context of doubly selective channels. In both the time domain (TD) and the delay-Doppler (DD) domain, we consider how the basis of carrier waveforms interacts with the discrete delay and Doppler shifts forming the Heisenberg-Weyl (HW) group.
 
We address the fundamental question of what attributes we should seek in a basis of carrier waveforms, and we focus on predictability and non-selectivity because these are the characteristics that improve performance of ML applications. The Zak-OTFS basis is an example where predictability and non-selectivity are intuitive because the HW group permutes the carrier waveforms. We say the Zak-OTFS carrier waveforms are geometric modes of the HW group. We prove that in this case, the carrier waveforms are common eigenvectors of a maximal commutative subgroup of the HW group. We describe how to apply a discrete affine Fourier transform (DAFT) to construct new waveforms that are also predictable and non-selective. We provide examples of waveforms constructed in this way that were proposed for 6G. Conjugation by DAFT preserves the HW group, so these new waveforms are also common eigenvectors of maximal commutative subgroups of HW. However, predictability and non-selectivity do not require that the carrier waveforms be common eigenvectors of a maximal commutative subgroup. We provide a necessary and sufficient condition that depends on the ambiguity functions of the individual carrier waveforms. We provide new examples, not equivalent to Zak-OTFS, and again, some of these waveforms were proposed for 6G.

Section~\ref{sec:zak_intro} introduces Zak-OTFS. We begin by describing how the Zak-transform acts as a unitary transform between the Hilbert space of sequences in the TD with period $MN$ and the Hilbert space of $M \times N$ quasi-periodic arrays in the DD domain. A particular instance is the correspondence between pulses in the DD domain and modulated pulse trains in the TD. The Zak-OTFS basis is parametrized by a delay period and a Doppler period. We describe how to modulate information onto carriers in the DD domain, then apply a transmit filter that restricts the signal to time $T$ and bandwidth $B$, then form a TD signal by applying the inverse Zak transform. We describe how to apply the Zak transform to the received signal, then apply a matched filter, then recover the information in the DD domain. We derive the crystallization condition which describes how predictability depends on the interaction between channel support and the delay and Doppler periods. 

Delay-Doppler methods (Zak-OTFS) and time-frequency methods (OFDM) provide different ways of balancing channel estimation and equalization. In OFDM, once the input / output (I/O) relation is known, equalization is relatively simple, at least when there is no inter-carrier-interference (ICI). However, acquisition of the I/O relation is non-trivial and model dependent. In contrast, equalization is more involved in OTFS due to inter-symbol-interference (ISI), however acquisition of the I/O relation is simple and model free. Acquisition becomes more critical when Doppler spreads measured in kHz make it more and more difficult to estimate channels. The most challenging situation is the combination of unresolvable paths and high channel spreads.
In this scenario, a model-free approach is essential, and Zak OTFS in the crystallization regime maintains performance.

Section~\ref{sec:filter_design} describes a framework for optimizing filter design in the DD domain. A pulse shaping filter is characterized by three metrics: (i) localization to enable accurate I/O relation estimation (sensing), (ii) orthogonality on the information lattice to prevent inter-symbol interference, and (iii) no time and bandwidth expansion beyond $B$ and $T$ to achieve full spectral efficiency. A filter simultaneously meeting all three objectives is ideal for both sensing and communication. We describe each metric in detail and present examples.
 
Section~\ref{sec:predictability} describes how inter-frame predictability improves spectral efficiency. Zak-OTFS provides a compact and sparse DD domain parameterization of the wireless channel, where the parameters map directly to physical attributes of the reflectors that comprise the scattering environment. The effective channel changes only at the speed with which the dominant reflectors change, that is at the speed of physics rather than the speed of wireless. 

We describe how inter-frame predictability makes it possible to design a differential communication scheme for Zak-OTFS systems that alleviates the need for periodic pilot transmission. We describe how data detected in one frame can be used to generate a fresh channel estimate that in turn supports data detection in the next frame, and we propagate these steps forward. The first advantage is that it allows the data symbols to enjoy higher energy since the energy that would otherwise be required for pilot symbols can also be allocated to data symbols. The second advantage is that it allows for full spectral efficiency compared to systems that employ pilots to estimate channels.

Section~\ref{sec:multipleaccess} describes how Zak-OTFS enables massive random access on a wireless uplink subject to mobility and delay spread. An emerging challenge in 6G is support for a massive number of devices that sporadically and unpredictably transmit short data packets to a central base station, typically under constraints on energy consumption and latency. Users upload replicas of their data packet into multiple uplink slots, and when a packet is detected, it is removed from the other slots where it is present. Successive interference cancellation (SIC) is the essential enabling technology, and cancellation requires inter-frame predictability.

Polyanskiy~\cite{Polyanskiy2017:Perspective} proposed a theoretical framework to model grant-free random access for power-constrained additive white Gaussian noise (AWGN) channels, which is now popularly referred to as unsourced random access. He derived an achievable probability of error for random coding as a function of signal-to-noise ratio (SNR), which has served as a benchmark for grant-free random access coding schemes. Polyanskiy demonstrated that the performance of naive random-access schemes such as slotted ALOHA (SA) is significantly far away from the random coding bound. Later works showed that coded random access (CRA) schemes can be optimized for the AWGN channel and that they offer significant performance improvement over naive SA schemes, particularly for a large number of users. Section~\ref{sec:multipleaccess} describes how the combination of coded random access and Zak-OTFS significantly outperforms OFDM-based schemes in channels subject to mobility and delay spread, thereby enabling efficient and reliable grant-free random access for massive machine-type communication (mMTC).

Given bandwidth $B$ and time duration $T$, the number of orthogonal degrees of freedom is the time-bandwidth product $BT$. This is Nyquist signaling, and increasing the rate (Faster-than-Nyquist signaling) requires that we sacrifice orthogonality. The standard approach is to reduce the symbol period and manage the inter-symbol interference. Section VII describes an alternative approach using mutually unbiased bases that retains the original symbol period.

Section~\ref{sec:equalization} explores the choice between avoiding or embracing inter-carrier interference / inter-symbol interference (ICI / ISI); see Fig.~\ref{fig:prevent_vs_embrace}. In OFDM we configure the system so that ICI is of the same magnitude as the noise and we use LDPC codes designed for Gaussian channels to recover the transmitted information. We operate at realistic SNRs by allowing 0.1 block error rate (BLER) and we accept that retransmission will lead to a latency hit. On the positive side we have a one-tap equalizer, on the negative side channel estimation is difficult when Doppler is high.  In Zak-OTFS we can read off the effective channel from the response to a single pilot, but equalization complexity is a challenge. We describe how to reduce complexity by performing equalization in the frequency domain.

Section~\ref{sec:neural_rx} describes a neural receiver with the potential to reduce the complexity of equalization dramatically. Offline AI algorithms have revolutionized image and natural language processing, but the speed of wireless makes it extremely challenging to tailor these algorithms to the wireless domain. Cellular networks are interference-limited rather than noise-limited, there is an extraordinary number of possible interference scenarios, these scenarios change at sub-millisecond speeds, and adaptation relies on a very limited number of over-the-air (OTA) training samples. We start from convolution, which governs transmit and receive signal processing of any signal in any part of the wireless spectrum, and we describe how to design a universal neural receiver. This receiver is designed to invert convolution, and we separate the question of which convolution to invert from the actual deconvolution. The neural network that performs deconvolution is very simple, and we describe how to configure this network by setting weights based on domain knowledge. 
 
Section~\ref{sec:radar} repurposes the discrete-time systems model for Zak-OTFS communications for radar. We start with a target delay resolution of $\nicefrac{1}{B}$ and a Doppler resolution of $\nicefrac{1}{T}$. We define $M$ delay bins and $N$ Doppler bins then work modulo $MN$. We describe how discrete radar with a sensing waveform that is a common eigenvector of a maximal commutative subgroup of HW leads to a bed of nails ambiguity function. Given some knowledge of the scattering environment, we describe how to choose the subgroup. When we apply the DAFT to a Zak-OTFS carrier we obtain a noise-like waveform with low PAPR that is a good fit to radar applications. We describe how we can use multiple waveforms to construct waveform libraries. We then describe a computational benefit of using a common eigenvector of a maximal commutative subgroup. Shifts from the subgroup act on the eigenvector as multiplication by a complex phase and this reduces the complexity of calculating the cross-ambiguity to $O(MN \log N)$ from $O(M^2N^2)$. We conclude by describing how two mutually unbiased spread carriers can be used to implement radar polarimetry.

In 1953, Philip Woodward~\cite{Woodward1953} described how to think of radar in information theoretic terms. He suggested that we view the radar scene as an unknown operator parametrized by delay and Doppler, and that we view radar waveforms as questions that we ask the operator. He measured the goodness of a question in terms of lack of ambiguity in the answer, and he sought questions with good localization in delay and Doppler. The radar waveform he proposed (a train of narrow TD Gaussian pulses modulated with a broad Gaussian envelope) is strikingly similar to the Zak-OTFS carrier waveform (a train of narrow impulses modulated by a sinusoid). The theory of discrete radar described in Section~\ref{sec:radar} runs parallel to the theory of continuous radar developed by Auslander and Tolmieri~\cite{Auslander1985}.

\begin{table}
    \centering
    \caption{Power-delay profile of Veh-A channel model}
    \resizebox{0.9\linewidth}{!}{
    \begin{tabular}{|c|c|c|c|c|c|c|}
         \hline
         Path index $i$ & 1 & 2 & 3 & 4 & 5 & 6 \\
         \hline
         Delay $\tau_i (\mu s)$ & 0 & 0.31 & 0.71 & 1.09 & 1.73 & 2.51 \\
         \hline
         Relative power (dB) & 0 & -1 & -9 & -10 & -15 & -20 \\
         \hline
    \end{tabular}}
    \label{tab:veh_a}
\end{table}

Throughout this tutorial we present simulations for the 3GPP compliant Veh-A channel model~\cite{veh_a} which consists of six channel paths. This channel is representative of real propagation environments since it considers fractional delay and Doppler shifts – the path delays in Table~\ref{tab:veh_a} being non-integer multiples of the delay resolution $1/B$, and the Doppler shifts $\nu_i = \nu_{\max} \cos(\theta_i)$ being non-integer multiples of the Doppler resolution $1/T$, where $\theta_i$ is uniformly distributed in the interval $[0, 2\pi)$.

\emph{Remark:} There are many versions of OTFS, for example, MC-OTFS~\cite{Hadani2017_mcotfs} also referred to as two-step OTFS, DZT-OTFS~\cite{lampel2022onotfs, yogesh2023onthe} also referred to as one-step OTFS among many more. Recent papers, for example~\cite{huang2023channel, halder2025channel}, that report comparison with OTFS use one of these variants. These are different from the Zak-OTFS presented in this paper (For a detailed discussion please refer~\cite{bitspaper1, bitspaper2}).

\textit{Notation:} $x$ denotes a complex scalar, $\mathbf{x}$ denotes a vector with $n$th entry $\mathbf{x}[n]$, and $\mathbf{X}$ denotes a matrix with $(n,m)$th entry $\mathbf{X}[n,m]$. $(\cdot)^{\ast}$ denotes complex conjugate, $(\cdot)^{\top}$ denotes transpose, $(\cdot)^{\mathsf{H}}$ denotes complex conjugate transpose and $\langle \mathbf{x}, \mathbf{y} \rangle = \sum_{n} \mathbf{x}[n] \mathbf{y}^{\ast}[n]$ denotes the inner product. Calligraphic font $\mathcal{X}$ denotes operators or sets, with usage clear from context. $\emptyset$ denotes the empty set. $\mathbb{Z}$ denotes the set of integers and $\mathbb{Z}_{N}$ the set of integers modulo $N$. $\lfloor \cdot \rfloor$ and $\lceil \cdot \rceil$ denote the floor and ceiling functions. $a \cdot b$ and $(a,b)$ respectively denote the bitwise dot product and greatest common divisor of two integers $a,b$. $(\cdot)_{{}_{N}}$ denotes the value modulo $N$ and $(\cdot)^{-1}_{{}_{N}}$ denotes the inverse modulo $N$. $\delta(\cdot)$ denotes the delta function, $\delta[\cdot]$ denotes the Kronecker delta function, $\mathds{1}{\{\cdot\}}$ denotes the indicator function, and $\mathbf{I}_{N}$ denotes the $N \times N$ identity matrix. $\mathbb{E}$ denotes expectation and $\equiv$ denotes congruence. 

\section{Mathematical Foundations}
\label{sec:foundations}



In this Section, we develop the mathematical foundations for communication over doubly-selective channels using signals with finite bandwidth and duration. In Section~\ref{subsec:foundations_identities}, we outline preliminaries that find use throughout the paper, such as certain useful definitions and identities. Section~\ref{subsec:foundation_sysmodel} develops the basic discrete time-domain system model for communicating over doubly-selective channels that is used in the remainder of the paper. Section~\ref{subsec:foundation_pred_sel} describes desirable properties of modulation schemes for reliable, high-rate communication over doubly-selective channels. Section~\ref{subsec:family_complex_hadamard} describes a general family of modulation schemes satisfying the properties outlined in Section~\ref{subsec:foundation_pred_sel}. In Section~\ref{subsec:foundation_hwgroup}, we provide a group-theoretic framework for analyzing a subset of the modulation schemes described in Section~\ref{subsec:family_complex_hadamard}. Section~\ref{sec:zak_intro} subsequently describes equivalent signal representations in the delay-Doppler domain, and overviews the Zak-OTFS (orthogonal time frequency space) modulation.


\subsection{Preliminaries}
\label{subsec:foundations_identities}


The following identities from classical number theory are used repeatedly throughout the paper.

\begin{identity}[\cite{iwaniec2021_numtheorybook}, pg.18]
    \label{idty:sumrootsofunity}
    The sum of $N$th roots of unity is:
    \begin{align*}
        \sum_{n \in \mathbb{Z}_{N}}e^{\frac{j2\pi}{N}kn} = \begin{cases}
        N \quad \text{if } \ k \equiv 0 \bmod{N} \\
        0 \quad \ \text{otherwise}
        \end{cases}.
    \end{align*}
\end{identity}

\begin{identity}[\cite{murty2017evaluation,iwaniec2021_numtheorybook}]
    \label{idty:gauss_sum}
    The quadratic Gauss sum is:
    \begin{align*}
        \sum_{n \in \mathbb{Z}_{N}} e^{\frac{j2\pi}{N}(an^2+bn)} = \epsilon_{N}\sqrt{N}\left(\frac{a}{N}\right)_J e^{-\frac{j2\pi}{N}(4a)^{-1}_{N} b^2},
    \end{align*}
    assuming odd $N$, $(a, N) = 1$, $a, b \in \mathbb{Z}_+$, and $\epsilon_{N}$ defined as:
    \begin{align*}
        \epsilon_{N} = \begin{cases}
            1~~~\text{if $N \equiv 1 \bmod{4}$} \\
            j~~~\text{if $N \equiv 3 \bmod{4}$}
        \end{cases}.
    \end{align*}
\end{identity}

An important definition used throughout the paper is that of the \emph{cross-ambiguity function}, which is a generalization of the inner product. Formally, for sequences in the complex-valued Hilbert space $\mathcal{H}$ of $MN$-periodic sequences, the cross-ambiguity is defined as follows.

\begin{definition}[\cite{benedetto_phasecoded}]
    \label{def:amb_fun}
    The \emph{cross-ambiguity function} of two unit-norm $MN$-periodic sequences $\mathbf{x},\mathbf{y} \in \mathcal{H}$ is defined as:
    \begin{align*}
        \mathbf{A}_{\mathbf{x},\mathbf{y}}[k,l] = \sum_{n \in \mathbb{Z}_{MN}} \mathbf{x}[n] \mathbf{y}^{*}[(n-k)_{{}_{MN}}]e^{-\frac{j2\pi}{MN}l(n-k)}, 
    \end{align*}
    where $k,l \in \mathbb{Z}_{MN}$, the integers modulo $MN$. When $\mathbf{y} = \mathbf{x}$, $\mathbf{A}_{\mathbf{x},\mathbf{x}}[k, l]$ is referred to as the \emph{self-ambiguity function} of $\mathbf{x}$, and is abbreviated to $\mathbf{A}_{\mathbf{x}}[k, l]$ for brevity.
\end{definition}

Moyal's Identity fixes the volume under the surface of the self-ambiguity function, making an ideal ``thumbtack'' ambiguity function impossible to achieve.

\begin{identity}[Moyal's Identity~\cite{Moyal1949,Auslander1985,Miller1991,Moran2001_mathofradar,Moran2004_grouptheory_radar,Howard2006_HW}]
    \label{idty:moyal}
    The self-ambiguities of unit-norm $MN$-periodic sequences $\mathbf{x},\mathbf{y} \in \mathcal{H}$ satisfy:
    \begin{align*}
        \frac{1}{MN} \sum_{k, l \in \mathbb{Z}_{MN}} \mathbf{A}_{\mathbf{x}}^{*}[k, l] \mathbf{A}_{\mathbf{y}}[k, l] = \big|\big\langle \mathbf{x}, \mathbf{y} \big\rangle\big|^{2}.
    \end{align*}

    A special case of the above relation corresponds to $\mathbf{y} = \mathbf{x}$:
    \begin{align*}
        \frac{1}{MN} \sum_{k, l \in \mathbb{Z}_{MN}} \big|\mathbf{A}_{\mathbf{x}}[k, l]\big|^{2} = 1.
    \end{align*}
\end{identity}

\subsection{Communicating over Doubly-Selective Channels}
\label{subsec:foundation_sysmodel}

Here, we develop a discrete time-domain system model for communication over doubly-selective channels using finite bandwidth and duration signals. In continuous time, the linear time varying system model for communication over a doubly-selective channel is given by~\cite{Bello1963_ltv,bitspaper1,bitspaper2,otfs_book}:
\begin{align}
    \label{eq:prelim1}
    y(t) &= \iint h(\tau,\nu) x(t-\tau) e^{j2\pi\nu(t-\tau)} d\tau d\nu + w(t),
\end{align}
where $x(t)$ (resp. $y(t)$) denotes the transmit (resp. receive) waveform in continuous time, $w(t)$ denotes the additive noise at the receiver, and $h(\tau,\nu)$ represents the channel spreading function in delay $\tau$ and Doppler $\nu$. 

We assume communication occurs over a finite bandwidth $B$ and time interval $T$, where we assume the time-bandwidth product $BT$ is an integer. In this paper, we assume the time-bandwidth product $BT$ can be expressed as the product of two integers, i.e., $BT = MN$, e.g., when the bandwidth is proportional to $M$, i.e., $B = M \Delta f$ for subcarrier spacing $\Delta f$, and the time interval is proportional to $N$, i.e., $T = \nicefrac{N}{\Delta f}$. 

Given the above assumptions, the discrete time version of~\eqref{eq:prelim1} corresponds to sampling the transmit and receive waveforms at integer multiples of the delay resolution $\nicefrac{1}{B}$ and limiting the waveforms to duration $T$~\cite{Mehrotra2025_EURASIP,Mattu2025_npj,ComparisonArxiv2025}:
\begin{align}
    \label{eq:prelim2}
    \mathbf{y}[n]\!&=\!\sum_{k,l \in \mathbb{Z}_{MN}}\!\mathbf{h}_{\mathsf{eff}}[k,l] \mathbf{x}[(n-k)_{{}_{MN}}] e^{\frac{j2\pi}{MN}l(n-k)}\!+\!\mathbf{w}[n],
\end{align}
where $n \in \mathbb{Z}_{MN}$ denotes the sampling index ($n = \lfloor Bt \rfloor$ for $0 \leq t \leq T$), $\mathbf{w}$ denotes noise samples, and $\mathbf{h}_{\mathsf{eff}}[k,l] = h\big(\nicefrac{k}{B},\nicefrac{l}{T}\big)$ denotes the effective channel spreading function sampled at integer multiples of the delay / Doppler resolution.

\textit{Note:} The channel spreading function $h(\tau,\nu)$ and its sampled version $\mathbf{h}_{\mathsf{eff}}[k,l]$ take into account both pulse shaping and the physical scattering environment. Pulse shaping limits the time and bandwidth of the waveform to $T$ and $B$ respectively (see Section~\ref{sec:filter_design} for more details). The physical scattering environment may comprise paths with \emph{fractional} delay and Doppler values as long as $\mathbf{h}_{\mathsf{eff}}[k,l]$ is sampled at multiples of the delay / Doppler resolution. For more details, see Section~\ref{sec:zak_intro}.

Observe that both the transmit and receive waveforms $\mathbf{x}$ and $\mathbf{y}$ in~\eqref{eq:prelim2} are $MN$-periodic sequences, with the channel acting on the Hilbert space $\mathcal{H}$ of $MN$-periodic sequences via scaled delay and Doppler shifts. Hence, $MN$ information symbols can be transmitted by mounting on an appropriate $MN$-dimensional orthonormal basis as:
\begin{align}
    \label{eq:prelim3}
    \mathbf{x}[n] &= \sum_{i \in \mathbb{Z}_{MN}} \mathbf{s}[i] \boldsymbol{\phi}_{i}[n],
\end{align}
where $\mathbf{s}$ denotes the $MN$-length vector of information symbols and $\boldsymbol{\phi}$ is an orthonormal basis with $MN$ elements, each of length $MN$. In Section~\ref{subsec:prelim_mod}, we present examples of modulation schemes that can be modeled using~\eqref{eq:prelim3}. One of the primary goals in communication engineering is to design a basis $\boldsymbol{\phi}$ in~\eqref{eq:prelim3} appropriately to enable \emph{reliable, high-rate} communication that \emph{maximizes spectral efficiency}. In Section~\ref{subsec:foundation_pred_sel}, we describe two key characteristics that a basis $\boldsymbol{\phi}$ must possess to achieve the above objectives. Section~\ref{subsec:family_complex_hadamard} defines a family of waveforms satisfying the characteristics described in Section~\ref{subsec:foundation_pred_sel}. Section~\ref{subsec:foundation_hwgroup} analyzes subsets of the waveform family from a group theoretic perspective.

\subsection{Examples of Modulation Schemes for Double Selectivity}
\label{subsec:prelim_mod}


\subsubsection{OFDM}
\label{subsubsec:prelim_ofdm}

The basis element in OFDM (orthogonal frequency division multiplexing) is~\cite{Ebert1971_ofdm,Bingham1990_ofdm}:
\begin{align}
    \label{eq:ofdm1}
    \boldsymbol{\phi}_{i}[n] = \frac{1}{\sqrt{M}} e^{\frac{j2\pi}{M}in} \mathds{1}\big\{\lfloor\nicefrac{n}{M}\rfloor = \lfloor\nicefrac{i}{M}\rfloor\big\},
\end{align}
where $i = 0, 1, \cdots, MN-1$.

\subsubsection{AFDM}
\label{subsubsec:prelim_afdm}

The basis element in AFDM (affine frequency division multiplexing) is~\cite{Bemani2023_afdm,Cho2025_dftpfdma,Hanzo2025_afdm_gen}:
\begin{align}
    \label{eq:afdm1}
    \boldsymbol{\phi}_{i}[n] = \frac{1}{\sqrt{MN}}e^{j2\pi\big(c_1n^2+c_2i^2+\frac{ni}{MN}\big)},
\end{align}
where $c_1, c_2 \in \mathbb{Z}$. The AFDM basis specializes to OCDM (orthogonal chirp division multiplexing)~\cite{Bemani2023_afdm} when $c_1 = c_2 = \nicefrac{1}{2MN}$ and to DFT-p-FDMA (discrete Fourier transform phase rotated \& permuted frequency division multiple access)~\cite{Cho2025_dftpfdma} when $c_1 = c_2 = \nicefrac{\Delta}{MN}$, where $(\Delta,MN) = 1$.

\subsubsection{ODDM}
\label{subsubsec:prelim_oddm}

The basis element in ODDM (orthogonal delay-Doppler division multiplexing) is~\cite{Yuan2022_oddm,Yuan2024_oddm_gen}:
\begin{align}
    \label{eq:oddm1}
    \boldsymbol{\phi}_{i}[n] = \frac{1}{\sqrt{N}} e^{\frac{j2\pi}{N}{\lfloor\nicefrac{i}{M}\rfloor} \lfloor\nicefrac{n}{M}\rfloor} \mathds{1}\big\{n \equiv i \bmod{M}\big\}.
\end{align}

\subsubsection{OTSM}
\label{subsubsec:prelim_otsm}

The basis element in OTSM (orthogonal time sequency division multiplexing) is~\cite{Viterbo2021_otsm,Hanzo2024_otsm_amp}:
\begin{align}
    \label{eq:otsm1}
    \boldsymbol{\phi}_{i}[n] = \frac{1}{\sqrt{N}} (-1)^{\lfloor\nicefrac{i}{M}\rfloor \odot \lfloor\nicefrac{n}{M}\rfloor} \mathds{1}\big\{n \equiv i \bmod{M}\big\},
\end{align}
where $\odot$ denotes the bitwise dot product.

\subsubsection{Zak-OTFS}
\label{subsubsec:prelim_zak}

The basis element in Zak-OTFS (Zak-transform based orthogonal time frequency space modulation) is~\cite{bitspaper1,bitspaper2,otfs_book}:
\begin{align}
    \label{eq:zakotfs1}
    \boldsymbol{\phi}_{i}[n] &= \frac{1}{\sqrt{N}} \sum_{d \in \mathbb{Z}} e^{j\frac{2\pi}{N} d \lfloor \nicefrac{i}{M} \rfloor} \delta[n-(i)_{{}_{M}}-dM] \nonumber \\
    &\textcolor{black}{= \frac{1}{\sqrt{N}} e^{\frac{j2\pi}{N}{\lfloor\nicefrac{i}{M}\rfloor} \lfloor\nicefrac{n}{M}\rfloor} \mathds{1}\big\{n \equiv i \bmod{M}\big\},}
\end{align}
termed \emph{pulsone} due to its structure of a pulse train modulated by a tone, \textcolor{black}{and coincides with the ODDM basis element in~\eqref{eq:oddm1}.}

\subsection{Choosing an Appropriate Basis $\boldsymbol{\phi}$}
\label{subsec:foundation_pred_sel}

We identify two key characteristics -- \emph{predictability} and \emph{non-selectivity} -- that a modulation basis (we refer to the basis elements outlined in Sec.~\ref{subsec:prelim_mod} as modulation basis) must possess to achieve reliability and maximize spectral efficiency. Predictability corresponds to whether the response of the channel to a modulation carrier can be predicted from the response to a different carrier. Predictable modulations have small pilot overhead since only a single pilot transmission suffices to estimate the channel for the full frame. Predictable modulations are also non-selective, i.e., have no variation in received signal energy across modulation carriers. 

In order to formalize the above notions, we first express the system model in~\eqref{eq:prelim2} in terms of the basis $\boldsymbol{\phi}$ as~\cite{Mattu2025_npj,ComparisonArxiv2025}:
\begin{align}
    \label{eq:prelim4}
    \mathbf{r}[f] = &\sum_{n \in \mathbb{Z}_{MN}} \boldsymbol{\phi}_{f}^{*}[n] \mathbf{y}[n] \nonumber \\
    = &\sum_{i \in \mathbb{Z}_{MN}} \mathbf{s}[i] \bigg(\sum_{k,l \in \mathbb{Z}_{MN}} e^{-\frac{j2\pi}{MN}lk} \mathbf{h}_{\mathsf{eff}}[k,l] \nonumber \\ \times &\sum_{n \in \mathbb{Z}_{MN}} \boldsymbol{\phi}_{f}^{*}[n] \boldsymbol{\phi}_{i}[(n-k)_{{}_{MN}}] e^{\frac{j2\pi}{MN}ln}\bigg) + \mathbf{v}[f] \nonumber \\
    = &\sum_{i \in \mathbb{Z}_{MN}} \mathbf{s}[i] \mathbf{H}[f,i] + \mathbf{v}[f],
\end{align}
where $\mathbf{v}[f] = \sum_{n \in \mathbb{Z}_{MN}} \boldsymbol{\phi}_{f}^{*}[n] \mathbf{w}[n]$ denotes the projection of the noise samples on the basis $\boldsymbol{\phi}$. On vectorizing~\eqref{eq:prelim4}:
\begin{align}
    \label{eq:prelim5}
    \mathbf{r} &= \mathbf{H} \mathbf{s} + \mathbf{v},
\end{align}
where $\mathbf{H}$ denotes the equivalent $MN \times MN$ channel matrix.

Recovering the transmitted information symbols $\mathbf{s}$ requires knowledge of the channel matrix $\mathbf{H}$, or equivalently, the sampled channel spreading function $\mathbf{h}_{\mathsf{eff}}[k,l]$. The sampled channel spreading function can be estimated by transmitting a known pilot symbol and computing the cross-ambiguity (c.f. Definition~\ref{def:amb_fun}) between the received and transmitted waveforms: 
\begin{align}
    \label{eq:prelim6}
    \widehat{\mathbf{h}}_{\mathsf{eff}}[k,l] &= \mathbf{A}_{\mathbf{y},\mathbf{x}}[k,l],
\end{align}
which has been shown to be the maximum likelihood estimate in~\cite{Calderbank2025_isac}. Subsequently, the matrix $\mathbf{H}$ is estimated using~\eqref{eq:prelim4} and used to recover the information symbols $\mathbf{s}$, e.g., via the minimum mean squared error (MMSE) estimator~\cite{otfs_book}.

We are now ready to formalize the notion of predictability.

\begin{definition}[\cite{ComparisonArxiv2025}]
    \label{def:predict}
    A modulation is predictable if all its constituent modulation carriers estimate the same channel spreading function via the cross-ambiguity operation in~\eqref{eq:prelim6}:
    \begin{align*}
        \widehat{\mathbf{h}}_{\mathsf{eff}}[k,l] = \mathbf{A}_{\mathbf{y}_{i},\boldsymbol{\phi}_{i}}[k,l] = \mathbf{A}_{\mathbf{y}_{i},\boldsymbol{\phi}_{j}}[k,l],
    \end{align*}
    for all $i,j \in \mathbb{Z}_{MN}$, where $\mathbf{y}_{i}$ denotes the received sequence when information is modulated only on carrier $i$. 
\end{definition}

We term the above condition predictability since it implies that the channel response to any basis element $i\in\mathbb{Z}_{MN}$ can be exactly predicted from the (noiseless) measurements obtained for a given basis element. In other words, the estimate of a channel obtained from a pilot is independent of where the pilot was transmitted in the frame. The following Lemma from~\cite{ComparisonArxiv2025} derives the condition on the channel spreading function under which a modulation basis $\boldsymbol{\phi}$ is predictable.

\begin{lemma}[\cite{ComparisonArxiv2025}]
    \label{lmm:pred}
    Let $\mathcal{S} = \big\{(k,l):|\mathbf{h}[k,l]|\neq0\big\}$ denote the channel support corresponding to spreading function $\mathbf{h}[k,l]$. Define $\mathcal{K}_{\mathcal{S}} = \big\{(k',l'):\mathcal{S}\cap\big(\mathcal{S}+(k',l')\big)\neq\emptyset\big\}$ as the set of DD locations where translates of the channel support overlap with $\mathcal{S}$. Then, channel estimates via~\eqref{eq:prelim6} always equal $\mathbf{h}[k,l]$ for all $(k,l)\in\mathcal{S},i\in\mathbb{Z}_{MN}$ in a modulation basis with:
    \begin{align*}
        \mathbf{A}_{\boldsymbol{\phi}_{i}}[0,0] = 1,\mathbf{A}_{\boldsymbol{\phi}_{i}}[k',l'] = 0~\text{for}~(k',l')\in\mathcal{K}_{\mathcal{S}}\setminus\{(0,0)\},
    \end{align*}
    where $\mathbf{A}_{\mathbf{x}}[k,l]$ denotes the self-ambiguity of $\mathbf{x}$.
\end{lemma}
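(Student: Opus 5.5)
The plan is to compute the cross-ambiguity $\mathbf{A}_{\mathbf{y}_i,\boldsymbol{\phi}_i}[k,l]$ directly from the system model~\eqref{eq:prelim2} in the noiseless case. With carrier $i$ active (unit symbol), we have $\mathbf{y}_i[n]=\sum_{k',l'}\mathbf{h}[k',l']\,\boldsymbol{\phi}_i[(n-k')_{MN}]e^{\frac{j2\pi}{MN}l'(n-k')}$. We substitute this into Definition~\ref{def:amb_fun} and exchange the order of summation. This writes the estimate as a superposition over the channel support: each path $(k',l')\in\mathcal{S}$ contributes $\mathbf{h}[k',l']$ times a (possibly phase-twisted) self-ambiguity of $\boldsymbol{\phi}_i$ evaluated at the offset between the estimation point and that path.

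The key computation is the inner sum
\[
\sum_{n}\boldsymbol{\phi}_i[(n-k')_{MN}]\,\boldsymbol{\phi}_i^*[(n-k)_{MN}]\,e^{\frac{j2\pi}{MN}\left(l'(n-k')-l(n-k)\right)}.
\]
We change variables $m=n-k'$ (legitimate since everything is $MN$-periodic) and collect the exponent. This gives a unimodular phase times $\mathbf{A}_{\boldsymbol{\phi}_i}[k-k',\,l-l']$, or a conjugate/sign-reversed version depending on convention. Which exact index combination appears, and the phase $e^{\frac{j2\pi}{MN}(\cdot)}$, must be tracked carefully. The structural outcome is
\[
\widehat{\mathbf{h}}[k,l]=\sum_{(k',l')\in\mathcal{S}}\mathbf{h}[k',l']\,c_{k,l,k',l'}\,\mathbf{A}_{\boldsymbol{\phi}_i}[k-k',l-l'],
\]
where $|c_{k,l,k',l'}|=1$ and $c=1$ when $(k',l')=(k,l)$.

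Next, we restrict to $(k,l)\in\mathcal{S}$. For every $(k',l')\in\mathcal{S}$, the difference $(k-k',l-l')$ (and its negative) lies in $\mathcal{K}_{\mathcal{S}}$. This holds because $(k,l)\in\mathcal{S}\cap(\mathcal{S}+(k-k',l-l'))$, and $\mathcal{K}_{\mathcal{S}}$ is symmetric under negation. The hypothesis then kills every term except $(k',l')=(k,l)$, which contributes $\mathbf{h}[k,l]\,\mathbf{A}_{\boldsymbol{\phi}_i}[0,0]=\mathbf{h}[k,l]$. Since this holds for every $i$, all carriers produce the same estimate on $\mathcal{S}$, which is the claim.

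The main obstacle is the bookkeeping in the inner sum. We must get the exact argument and phase so that the diagonal term has phase exactly $1$, and confirm that the off-diagonal arguments land in $\mathcal{K}_{\mathcal{S}}$ and not in some shifted set. I would first verify the phase by plugging in $(k',l')=(k,l)$ directly, where the sum collapses to $\sum_n|\boldsymbol{\phi}_i[n-k]|^2=1$. For the off-diagonal terms, only vanishing matters, so the unimodular phase can be ignored.
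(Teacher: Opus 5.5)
Your proposal is correct. Carrying out your change of variables gives exactly $\widehat{\mathbf{h}}[k,l]=\sum_{(k',l')}\mathbf{h}[k',l']\,e^{\frac{j2\pi}{MN}l'(k-k')}\,\mathbf{A}_{\boldsymbol{\phi}_i}[k-k',l-l']$, i.e.\ $\widehat{\mathbf{h}}=\mathbf{h}\ast_{\sigma_d}\mathbf{A}_{\boldsymbol{\phi}_i}$. So the diagonal phase is $1$, and the off-diagonal arguments $(k-k',l-l')$ lie in $\mathcal{K}_{\mathcal{S}}$ directly, without needing the negation symmetry.

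The paper gives no proof of this lemma and only cites it. Your argument (twisted convolution of the channel with the carrier's self-ambiguity, then killing off-diagonal terms via zeros on $\mathcal{K}_{\mathcal{S}}$) is the same reasoning the paper uses in \eqref{eq:prop3}--\eqref{eq:prop5}, so you are taking essentially the intended route.
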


Fig.~\ref{fig:cryst} geometrically illustrates the condition in Lemma~\ref{lmm:pred} for a simple example of a rectangular support $\mathcal{S}$.

\begin{figure}
    \centering
    \includegraphics[width=0.92\linewidth]{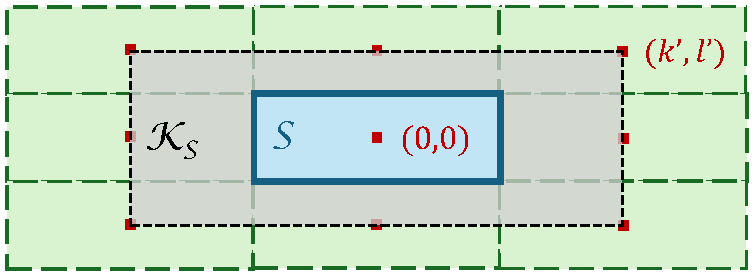}
    \caption{\textcolor{black}{For predictability (Lemma~\ref{lmm:pred}) with a rectangular channel support $\mathcal{S}$, the self-ambiguity $\mathbf{A}_{\boldsymbol{\phi}_{i}}[k',l']$ should have only $1$ non-zero value at $(k',l')=(0,0)$ within the region $\mathcal{K}_{\mathcal{S}}$ (gray).}}
    \label{fig:cryst}
\end{figure}

A consequence of predictability is non-selectivity.

\begin{corollary}[\cite{ComparisonArxiv2025}]
    \label{corr:nonsel}
    When predictability as per Lemma~\ref{lmm:pred} holds, then the constituent carriers in the modulation basis do not undergo symbol fading\footnote{\textcolor{black}{Assuming $\mathbf{s} = c \cdot \mathbf{e}_{i}$ for unit-energy symbol $c$ on standard basis vector / carrier $i$ and calculating $\mathbf{r}^{\mathsf{H}} \mathbf{r} = \mathbf{s}^{\mathsf{H}} \mathbf{G}^{\mathsf{H}}\mathbf{G} \mathbf{s} = |c|^{2} \mathbf{e}_{i}^{\mathsf{H}} \mathbf{G}^{\mathsf{H}}\mathbf{G} \mathbf{e}_{i} = (\mathbf{G}^{\mathsf{H}}\mathbf{G})[i,i]$.}}, i.e., for all $i,j\in\mathbb{Z}_{MN}$:
    \begin{align*}
        (\mathbf{H}^{\mathsf{H}}\mathbf{H})[i,i] = (\mathbf{H}^{\mathsf{H}}\mathbf{H})[j,j] = \sum_{k,l\in\mathcal{S}}|\mathbf{h}[k,l]|^{2}.
    \end{align*}
\end{corollary}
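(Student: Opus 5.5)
The plan is to read the diagonal entry $(\mathbf{H}^{\mathsf{H}}\mathbf{H})[i,i]$ as the energy of the received signal when only carrier $i$ is excited, and then to expand that energy using the self-ambiguity of $\boldsymbol{\phi}_i$. The condition of Lemma~\ref{lmm:pred} will remove every cross term.

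\textbf{Step 1 (reduction to received energy).} Write $\mathcal{T}_{k,l}$ for the discrete delay-Doppler shift $(\mathcal{T}_{k,l}\mathbf{x})[n]=\mathbf{x}[(n-k)_{{}_{MN}}]e^{\frac{j2\pi}{MN}l(n-k)}$. By \eqref{eq:prelim2}, the noiseless received sequence for $\mathbf{s}=\mathbf{e}_i$ is
\begin{align*}
\mathbf{y}_i=\sum_{(k,l)\in\mathcal{S}}\mathbf{h}[k,l]\,\mathcal{T}_{k,l}\boldsymbol{\phi}_i .
\end{align*}
By \eqref{eq:prelim4}, column $i$ of $\mathbf{H}$ is the coefficient vector $\big(\langle \mathbf{y}_i,\boldsymbol{\phi}_f\rangle\big)_{f}$. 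Since $\{\boldsymbol{\phi}_f\}$ is an orthonormal basis of $\mathcal{H}$, Parseval gives $(\mathbf{H}^{\mathsf{H}}\mathbf{H})[i,i]=\sum_f|\mathbf{H}[f,i]|^2=\|\mathbf{y}_i\|^2$. This is the footnote's computation made explicit.

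\textbf{Step 2 (expansion via ambiguity).} Expand the norm as a double sum:
\begin{align*}
\|\mathbf{y}_i\|^2=\sum_{(k,l),(k',l')\in\mathcal{S}}\mathbf{h}[k,l]\,\mathbf{h}^*[k',l']\,\big\langle \mathcal{T}_{k,l}\boldsymbol{\phi}_i,\mathcal{T}_{k',l'}\boldsymbol{\phi}_i\big\rangle .
\end{align*}
Substitute $m=n-k'$ (modulo $MN$) in each inner product. Each inner product then becomes a unimodular phase times a self-ambiguity value of $\boldsymbol{\phi}_i$, evaluated at $\pm(k-k',\,l-l')$ taken modulo $MN$; the possible conjugation comes from the identity $\mathbf{A}_{\mathbf{x}}[-k,-l]=e^{(\cdot)}\mathbf{A}^*_{\mathbf{x}}[k,l]$.

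\textbf{Step 3 (applying the condition).} Take $(k,l),(k',l')\in\mathcal{S}$ and set $d=(k-k',l-l')$. Then $(k,l)=(k',l')+d\in\mathcal{S}\cap(\mathcal{S}+d)$, so $d\in\mathcal{K}_{\mathcal{S}}$. The set $\mathcal{K}_{\mathcal{S}}$ is symmetric under negation, because $\mathcal{S}\cap(\mathcal{S}+d)\neq\emptyset$ if and only if $\mathcal{S}\cap(\mathcal{S}-d)\neq\emptyset$, so $-d\in\mathcal{K}_{\mathcal{S}}$ as well. By the hypothesis of Lemma~\ref{lmm:pred}, every term with $d\neq(0,0)$ therefore vanishes. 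For the diagonal terms $d=(0,0)$ the phase is trivial, and the inner product equals $\mathbf{A}_{\boldsymbol{\phi}_i}[0,0]=\|\boldsymbol{\phi}_i\|^2=1$. Hence
\begin{align*}
(\mathbf{H}^{\mathsf{H}}\mathbf{H})[i,i]=\sum_{(k,l)\in\mathcal{S}}|\mathbf{h}[k,l]|^2 .
\end{align*}
The right-hand side does not depend on $i$, which gives the equality for all $i,j$.

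The main obstacle is the bookkeeping in Step 2. One must get the sign and phase of the shifted inner product right relative to Definition~\ref{def:amb_fun}, and treat the differences consistently modulo $MN$, so that $\mathcal{K}_{\mathcal{S}}$ is understood on the torus $\mathbb{Z}_{MN}^2$. The symmetry of $\mathcal{K}_{\mathcal{S}}$ noted in Step 3 makes the sign ambiguity harmless, since only membership of $\pm d$ in $\mathcal{K}_{\mathcal{S}}$ is used.
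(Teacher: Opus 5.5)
Your proof is correct and follows the route the paper indicates. Step 1 is exactly the footnote's identification of $(\mathbf{H}^{\mathsf{H}}\mathbf{H})[i,i]$ with the received energy $\|\mathbf{y}_i\|^2$ when only carrier $i$ is excited, and Steps 2--3 supply the ambiguity-function computation that the paper defers to the cited reference. As a small streamlining, the inner sum over $(k,l)$ in your Step 2 is precisely the estimate $\mathbf{A}_{\mathbf{y}_i,\boldsymbol{\phi}_i}[k',l']=\langle \mathbf{y}_i,\mathcal{T}_{k',l'}\boldsymbol{\phi}_i\rangle$. Hence $\|\mathbf{y}_i\|^2=\sum_{(k',l')\in\mathcal{S}}\mathbf{h}^*[k',l']\,\mathbf{A}_{\mathbf{y}_i,\boldsymbol{\phi}_i}[k',l']$, and the claim follows at once from the conclusion of Lemma~\ref{lmm:pred}, which makes ``non-selectivity is a consequence of predictability'' literal.
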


\subsection{A Family of Predictable / Non-Selective Modulations}
\label{subsec:family_complex_hadamard}

We now describe a family of waveforms that are predictable and non-selective as per Lemma~\ref{lmm:pred} / Corollary~\ref{corr:nonsel}. Consider the following family of $MN$-periodic waveforms:
\begin{align}
    \label{eq:sys1}
    \Psi = \bigg\{\mathbf{U} \boldsymbol{\phi}_{i}:~&\boldsymbol{\phi}_{i}[n] = \frac{1}{\sqrt{N}} \mathbf{G}_{(i)_{{}_{M}}}[\lfloor \nicefrac{n}{M} \rfloor, \lfloor \nicefrac{i}{M} \rfloor] \nonumber \\ &\times \mathds{1}\big\{n \equiv i \bmod{M}\big\}, i,n \in \mathbb{Z}_{MN} \bigg\},
\end{align}
where $\mathbf{U}$ is an arbitrary $MN \times MN$ unitary matrix, and $\mathbf{G}_{x}$ denotes an $N \times N$ \emph{complex Hadamard matrix} parameterized by index $x \in \mathbb{Z}_{M}$ and with $(p,q)$th entry $\mathbf{G}_{x}[p,q]$.

\begin{definition}[\cite{Butson1962_hadamard,Tadej2006_hadamard,Banica2021_hadamard1,Banica2024_hadamard2}]
    \label{def:complex_Hadamard}
    An $N \times N$ complex Hadamard matrix $\mathbf{G}$ is a square matrix with unimodular entries: $|\mathbf{G}[p,q]| = 1$ for all $p,q \in \mathbb{Z}_{N}$, and pairwise orthogonal rows \& columns: $\mathbf{G}^{\mathsf{H}} \mathbf{G} = \mathbf{G} \mathbf{G}^{\mathsf{H}} = N \mathbf{I}_{N}$.
\end{definition}

Examples of complex Hadamard matrices include the discrete Fourier transform (DFT) matrix and its inverse (IDFT), the Walsh matrix, etc. Different choices of the unitary matrix $\mathbf{U}$ and the complex Hadamard matrix $\mathbf{G}_{(i)_{{}_{M}}}$ in~\eqref{eq:sys1} result in different modulations (except OFDM) from Section~\ref{subsec:prelim_mod}.

\subsubsection{Zak-OTFS \& ODDM}
\label{subsubsec:family_zak}

Choose $\mathbf{U} = \mathbf{I}_{MN}$ and $\mathbf{G}_{(i)_{{}_{M}}}$ as the IDFT matrix with entry $\mathbf{G}_{(i)_{{}_{M}}}[p,q] = e^{j\frac{2\pi}{N}pq}$.

\subsubsection{OTSM}
\label{subsubsec:family_otsm}

Choose $\mathbf{U} = \mathbf{I}_{MN}$ and $\mathbf{G}_{(i)_{{}_{M}}}$ as the Walsh matrix with entry $\mathbf{G}_{(i)_{{}_{M}}}[p, q] = (-1)^{p \odot q}$.

\subsubsection{AFDM}
\label{subsubsec:family_afdm}

Choose $\mathbf{U}$ as the generalized discrete affine Fourier transform (GDAFT) matrix from~\cite{Mehrotra2025_WCLSpread,Mehrotra2025_EURASIP} and $\mathbf{G}_{(i)_{{}_{M}}}$ as the IDFT matrix with entry $\mathbf{G}_{(i)_{{}_{M}}}[p,q] = e^{j\frac{2\pi}{N}pq}$.

All modulations in $\Psi$ are predictable\footnote{MC-OTFS is not a member of $\Psi$ and therefore is not predictable~\cite{ComparisonArxiv2025}.} and non-selective\footnote{Here by non-selectivity, we mean that each information symbol in the transmit grid sees the same fade, irrespective of its location.}, as shown in the Theorem below. Note that for simplicity, we present the result for the case $\mathbf{U} = \mathbf{I}_{MN}$, and analogous results can be similarly derived for specific choices of $\mathbf{U}$, e.g., see~\cite{Mehrotra2025_WCLSpread,Mehrotra2025_EURASIP} for an example with $\mathbf{U}$ as the GDAFT matrix.

\begin{theorem}[\cite{ComparisonArxiv2025}]
    \label{thm:family_pred}
    For $\mathbf{U} = \mathbf{I}_{MN}$, all modulation schemes in $\Psi$ are predictable and non-selective when the channel support $\mathcal{S}$ satisfies $\mathcal{S}\cap\bigg(\bigcup_{(a,b)\neq(0,0)}\big(\mathcal{S}+(aM,bN)\big)\bigg)=\emptyset$.
\end{theorem}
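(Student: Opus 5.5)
The plan is to reduce the theorem to Lemma~\ref{lmm:pred}. I will compute the self-ambiguity $\mathbf{A}_{\boldsymbol{\phi}_i}$ of every carrier in $\Psi$ (with $\mathbf{U}=\mathbf{I}_{MN}$) explicitly. I will show that it vanishes at every point of $\mathcal{K}_{\mathcal{S}}\setminus\{(0,0)\}$ and that $\mathbf{A}_{\boldsymbol{\phi}_i}[0,0]=1$. Lemma~\ref{lmm:pred} then gives predictability, and Corollary~\ref{corr:nonsel} gives non-selectivity directly.

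\textbf{Step 1 (delay support).} Write $i=(i)_M+qM$ with $q=\lfloor i/M\rfloor$. Since $\boldsymbol{\phi}_i[n]\neq0$ only when $n\equiv i \bmod M$, the product $\boldsymbol{\phi}_i[n]\boldsymbol{\phi}_i^*[(n-k)_{MN}]$ vanishes identically unless $k\equiv 0\bmod M$. Hence $\mathbf{A}_{\boldsymbol{\phi}_i}[k,l]=0$ whenever $k\not\equiv 0 \bmod M$.

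\textbf{Step 2 (Doppler support).} For $k=aM$, substitute $n=(i)_M+pM$ with $p\in\mathbb{Z}_N$. This gives
\begin{align*}
\mathbf{A}_{\boldsymbol{\phi}_i}[aM,l]=\frac{c}{N}\sum_{p\in\mathbb{Z}_N}\mathbf{G}_{(i)_M}[p,q]\,\mathbf{G}^*_{(i)_M}[(p-a)_N,q]\,e^{-\frac{j2\pi}{N}lp},
\end{align*}
where $c$ is a unimodular constant. When $a\equiv0\bmod N$ (so that $k\equiv 0 \bmod MN$), unimodularity of the entries (Definition~\ref{def:complex_Hadamard}) makes the summand $e^{-j2\pi lp/N}$. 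Identity~\ref{idty:sumrootsofunity} then shows that $\mathbf{A}_{\boldsymbol{\phi}_i}[0,l]$ equals $c'$ (unimodular) if $l\equiv 0\bmod N$ and $0$ otherwise. Taking $l=0$ gives $\mathbf{A}_{\boldsymbol{\phi}_i}[0,0]=\tfrac1N\sum_p|\mathbf{G}[p,q]|^2=1$. So on the slice $k\equiv 0 \bmod MN$, the ambiguity is supported on the Doppler lattice $l\in N\mathbb{Z}$.

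\textbf{Step 3 (excluding $\mathcal{K}_{\mathcal{S}}$).} I will read the hypothesis as the crystallization condition. It says that no nonzero translate of $\mathcal{S}$ by the lattice $\{(aM,bN)\}$ meets $\mathcal{S}$; equivalently, every difference $(k',l')$ of two points of $\mathcal{S}$ that lies on this lattice is $(0,0)$. By Steps 1--2, a nonzero point of $\mathcal{K}_{\mathcal{S}}$ with nonzero ambiguity must have $k'\equiv 0\bmod M$. If also $k'\equiv 0 \bmod MN$, it must additionally have $l'\equiv0\bmod N$. In that second case the point lies on the lattice and is excluded by the hypothesis.

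\textbf{Main obstacle.} The remaining case is $k'=aM$ with $a\not\equiv 0\bmod N$. For a general complex Hadamard matrix, the single-column correlation in Step 2 need not vanish, for example for the Walsh matrix, where $p-a$ is not a XOR-shift. I would handle this by using the hypothesis in its intended geometric form: $\mathcal{S}$ lies in a box of delay width less than $M$ and Doppler width less than $N$. The case $b=0$, $a\neq0$ of the hypothesis already rules out delay differences that are nonzero multiples of $M$ when $\mathcal{S}$ is taken in a fundamental region. Then every $(k',l')\in\mathcal{K}_{\mathcal{S}}$ has $|k'|<M$, so $k'\equiv0\bmod M$ forces $k'=0$, and Step 2 finishes the argument. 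I expect the main care to go into making this reduction from the stated set condition to the box condition precise modulo $MN$. If it fails, the fallback is to restrict the theorem to Hadamard families, such as the DFT family, whose column correlations vanish for $a\not\equiv0$.
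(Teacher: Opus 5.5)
Your Steps 1--3 are correct, and you have found the real difficulty. The proposed way around it does not work, though. The hypothesis says only that $\mathcal{S}-\mathcal{S}$ contains no nonzero vector of the form $(aM,bN)$. Its $b=0$ case therefore excludes the differences $(aM,0)$. It does not exclude a delay difference $aM$ paired with a Doppler difference $l'\not\equiv 0 \bmod N$. So the box condition cannot be derived from the stated set condition, modulo $MN$ or otherwise. In fact the statement as written fails for non-DFT $\mathbf{G}$.

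Here is a counterexample. Take $M=3$, $N=4$, $\mathbf{G}$ the $4\times 4$ Walsh matrix (OTSM), and $\mathcal{S}=\{(0,0),(3,2)\}$ with gains $h_1=\mathbf{h}[0,0]$ and $h_2=\mathbf{h}[3,2]$.
\begin{itemize}
\item The nonzero differences $\pm(3,2)$ are not congruent to any $(3a,4b)$ modulo $12$, so the hypothesis holds.
\item The carrier $\boldsymbol{\phi}_6$ (column $q=2$ of the Walsh matrix) equals $\tfrac12(1,1,-1,-1)$ at $n=0,3,6,9$ and vanishes elsewhere.
\item A direct computation gives $\mathcal{D}_{(3,2)}\boldsymbol{\phi}_6=\boldsymbol{\phi}_6$, whereas $\langle\mathcal{D}_{(3,2)}\boldsymbol{\phi}_0,\boldsymbol{\phi}_0\rangle=0$.
\item Hence $\mathbf{y}_6=(h_1+h_2)\boldsymbol{\phi}_6$. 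Carrier $6$ estimates $h_1+h_2$ at the origin, while carrier $0$ estimates $h_1$.
\item For $h_2=-h_1$, carrier $6$ is completely faded. Both predictability and non-selectivity fail.
\end{itemize}

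So the gap lies in the theorem's hypothesis, not in your computation. Your Steps 1--2 already prove each of the two correct versions.

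(i) Assume the crystallization condition that the text after the theorem and Section~\ref{subsec:crystallization} actually intend: every difference in $\mathcal{S}-\mathcal{S}$ has delay component in $(-M,M)$ and Doppler component in $(-N,N)$. Take a nonzero difference with nonzero ambiguity. Step 1 forces $k'=0$. Step 2 then forces $l'\equiv 0 \bmod N$ with $0<|l'|<N$, which is impossible. This works for every complex Hadamard family.

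(ii) Keep the set condition but restrict to the DFT choice (Zak-OTFS/ODDM), which is your fallback. There $\mathbf{G}[p,q]\mathbf{G}^*[(p-a)_N,q]=e^{j2\pi aq/N}$ is constant in $p$. So $\mathbf{A}_{\boldsymbol{\phi}_i}$ is supported exactly on $\{(aM,bN)\}$, and the hypothesis is precisely the vanishing that Lemma~\ref{lmm:pred} requires.

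In either case, Corollary~\ref{corr:nonsel} then gives non-selectivity as you planned.
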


In other words, Theorem~\ref{thm:family_pred} implies that the channel response for each modulation scheme in $\Psi$ is contained within the $M\times N$ frame. The (quasi-) periodic repetitions of the channel support do not interfere with one another. This is also referred to as the crystalline condition in Zak-OTFS literature~\cite{bitspaper1, bitspaper2} (see Sec.~\ref{subsec:crystallization}).

One of the advantages of modulations in $\Psi$ that are predictable and non-selective is that they achieve full diversity~\cite{ComparisonArxiv2025}. This is because every symbol in the transmit frame sees the same channel, which is a combination of all the paths in the channel.

\subsection{A Group Theoretic Framework to Analyze Subsets of $\Psi$}
\label{subsec:foundation_hwgroup}

We now describe a group-theoretic framework to explore special subsets of the waveform family $\Psi$ in~\eqref{eq:sys1}, whose self-ambiguity functions satisfy Lemma~\ref{lmm:pred} in addition to being \emph{unimodular} on exactly $MN$ delay-Doppler locations\footnote{The theory presented here is utilized in Sec.~\ref{sec:radar} for generating radar waveforms with good self-ambiguity properties.}.

Let $\mathcal{H}$ denote the Hilbert space of $MN$-periodic sequences. Following~\eqref{eq:prelim2}, we may represent the channel action for a given $k,l \in \mathbb{Z}_{MN}$ (delay / time and Doppler / frequency shift) on the transmitted sequence via a unitary operator acting on $\mathcal{H}$.

\begin{definition}[\cite{Mehrotra2025_EURASIP}]
    \label{def:heis_op}
    \emph{Heisenberg operators} $\mathcal{D}_{(k,l)}$ act on sequences ${\mathbf{x} \in \mathcal{H}}$ as:
    \begin{equation*}
        \begin{gathered}
        \mathcal{D}_{(k,l)}: \mathcal{H} \rightarrow \mathcal{H}, k,l \in \mathbb{Z}_{MN}, \\
        (\mathcal{D}_{(k,l)} \mathbf{x})[n] = \mathbf{x}[(n-k)_{{}_{MN}}]e^{\frac{j2\pi}{MN}l(n-k)},
        \end{gathered}
    \end{equation*}
    for all $n \in \mathbb{Z}_{MN}$. 
\end{definition}

Heisenberg operators are unitary, and the collection of all (phase-scaled) Heisenberg operators forms a \emph{group}.

\begin{figure*}[t]
\centering
\begin{subfigure}{0.32\linewidth}
    \includegraphics[width=\textwidth]{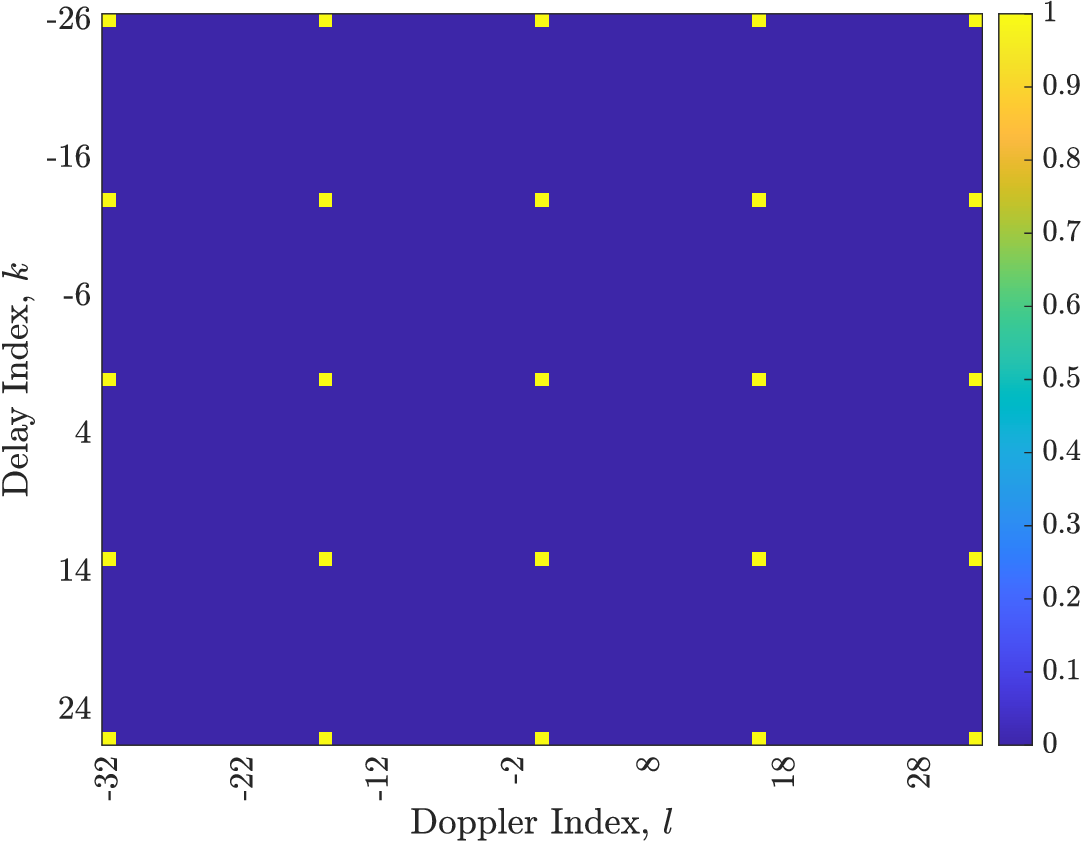}
\caption{Zak-OTFS pulsone (Example~\ref{ex:comm_subgrp_ex1})}
    \label{fig:wvf_lib1}
\end{subfigure}
\begin{subfigure}{0.32\linewidth}
    \includegraphics[width=\textwidth]{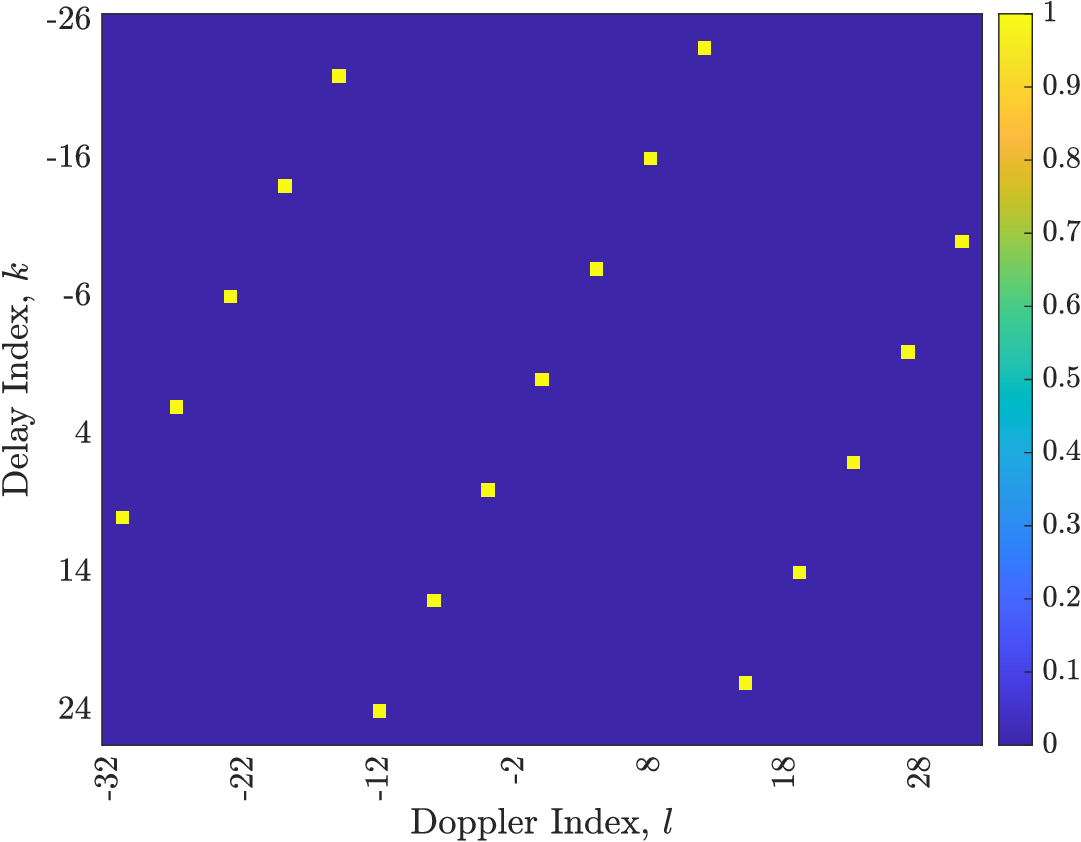}
\caption{AFDM chirp (Example~\ref{ex:comm_subgrp_ex2})}
    \label{fig:wvf_lib2}
\end{subfigure}
\begin{subfigure}{0.32\linewidth}
    \includegraphics[width=\textwidth]{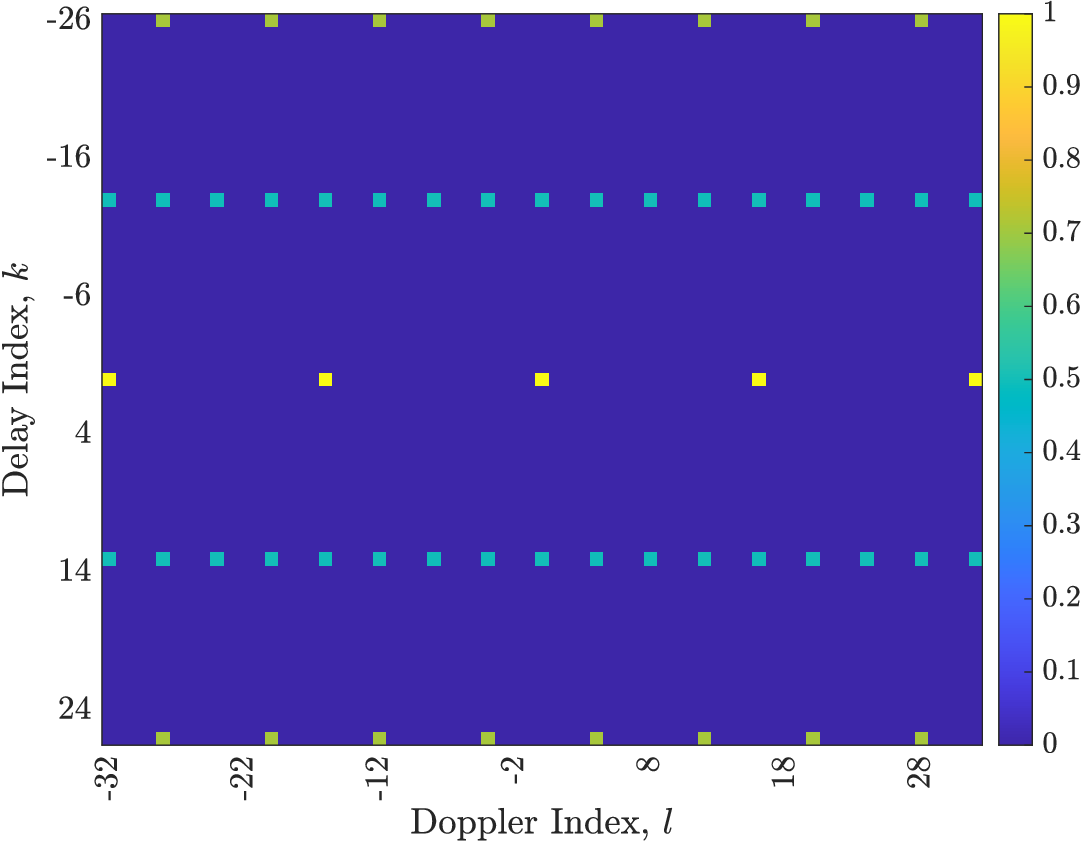}
\caption{OTSM basis element}
    \label{fig:wvf_lib3}
\end{subfigure}
\caption{Illustration of self-ambiguity function magnitudes corresponding to different modulation schemes in the waveform family $\Psi$ for parameters $M = 13$, $N = 16$. The Zak-OTFS pulsone in (a) and the AFDM chirp in (b) are eigenvectors of maximal commutative subgroups (Examples~\ref{ex:comm_subgrp_ex1} and~\ref{ex:comm_subgrp_ex2}), whereas the OTSM basis element in (c) does not belong to a maximal commutative subgroup. All three modulations satisfy the properties in Lemma~\ref{lmm:pred} / Corollary~\ref{corr:nonsel} and are predictable / non-selective.}
\vspace{-5mm}
    \label{fig:wvf_lib}
\end{figure*}

\begin{theorem}[\cite{Mehrotra2025_EURASIP}]
    \label{thm:hw_group}
    Let $\mathcal{H}_{MN}$ denote the collection of phase-scaled Heisenberg operators $\mathcal{D}_{(k,l)}$:
    \begin{align*}
        \mathcal{H}_{MN} &= \big\{e^{\frac{j2\pi}{MN}m} \mathcal{D}_{(k,l)} \big| k,l,m \in \mathbb{Z}_{MN}\big\}. 
    \end{align*}

    The set $\mathcal{H}_{MN}$ under operator composition $\circ$ forms the \emph{Heisenberg-Weyl group}.
\end{theorem}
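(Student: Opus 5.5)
We verify the group axioms directly, by computing how two phase-scaled Heisenberg operators compose. The central step is a composition law for $\mathcal{D}_{(k,l)}$. For $\mathbf{x}\in\mathcal{H}$ and $n\in\mathbb{Z}_{MN}$, apply Definition~\ref{def:heis_op} twice:
\begin{align*}
(\mathcal{D}_{(k_1,l_1)}\mathcal{D}_{(k_2,l_2)}\mathbf{x})[n]
&= (\mathcal{D}_{(k_2,l_2)}\mathbf{x})[(n-k_1)_{{}_{MN}}]\,e^{\frac{j2\pi}{MN}l_1(n-k_1)} \\
&= \mathbf{x}[(n-k_1-k_2)_{{}_{MN}}]\,e^{\frac{j2\pi}{MN}\left(l_2(n-k_1-k_2)+l_1(n-k_1)\right)}.
\end{align*}
We then rewrite the exponent as $(l_1+l_2)(n-k_1-k_2)+l_1k_2$. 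This gives
\begin{align*}
\mathcal{D}_{(k_1,l_1)}\circ\mathcal{D}_{(k_2,l_2)} = e^{\frac{j2\pi}{MN}l_1k_2}\,\mathcal{D}_{(k_1+k_2,\,l_1+l_2)}.
\end{align*}

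One point has to be checked with care. Every exponent has the form $\frac{2\pi}{MN}\times(\text{integer})$, so all phases depend only on residues modulo $MN$. Hence the map $n\mapsto \mathbf{x}[(n-k)_{MN}]e^{\frac{j2\pi}{MN}l(n-k)}$ is well defined on $\mathbb{Z}_{MN}$ and is $MN$-periodic. It follows that $k,l$ may be read modulo $MN$ and that $\mathcal{D}_{(k,l)}$ maps $\mathcal{H}$ to $\mathcal{H}$. The phase $l_1k_2$ likewise matters only modulo $MN$, so it lies in the allowed set of phases $e^{\frac{j2\pi}{MN}m}$.

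With the composition law, the group axioms follow in turn.
\begin{enumerate}
\item \emph{Closure.} We have $e^{\frac{j2\pi}{MN}m_1}\mathcal{D}_{(k_1,l_1)}\circ e^{\frac{j2\pi}{MN}m_2}\mathcal{D}_{(k_2,l_2)}=e^{\frac{j2\pi}{MN}(m_1+m_2+l_1k_2)}\mathcal{D}_{(k_1+k_2,l_1+l_2)}$, which lies in $\mathcal{H}_{MN}$.
\item \emph{Associativity.} This is inherited from composition of linear operators.
\item \emph{Identity.} The identity is $\mathcal{D}_{(0,0)}$ with $m=0$.
\item \emph{Inverses.} The composition law gives $\mathcal{D}_{(k,l)}\circ\mathcal{D}_{(-k,-l)}=e^{-\frac{j2\pi}{MN}lk}\mathcal{D}_{(0,0)}$. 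Therefore
\begin{align*}
\big(e^{\frac{j2\pi}{MN}m}\mathcal{D}_{(k,l)}\big)^{-1}=e^{\frac{j2\pi}{MN}(lk-m)}\mathcal{D}_{(-k,-l)}\in\mathcal{H}_{MN}.
\end{align*}
\end{enumerate}
Unitarity follows because each operator is a cyclic permutation of coordinates followed by multiplication by unimodular phases. Thus $\mathcal{H}_{MN}$ is a subgroup of the unitary group on $\mathcal{H}$.

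We can also record the group structure explicitly. The composition law identifies $\mathcal{H}_{MN}$ with triples $(k,l,m)$ under the multiplication $(k_1,l_1,m_1)(k_2,l_2,m_2)=(k_1+k_2,l_1+l_2,m_1+m_2+l_1k_2)$. This is the finite Heisenberg-Weyl group over $\mathbb{Z}_{MN}$, and it is non-commutative, since $\mathcal{D}_{(k_1,l_1)}\mathcal{D}_{(k_2,l_2)}=e^{\frac{j2\pi}{MN}(l_1k_2-l_2k_1)}\mathcal{D}_{(k_2,l_2)}\mathcal{D}_{(k_1,l_1)}$.

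The main obstacle is only bookkeeping. One must keep the phase convention $e^{\frac{j2\pi}{MN}l(n-k)}$ consistent under the shift, and avoid introducing a half-integer phase, which would not lie in $\mathbb{Z}_{MN}$. This convention avoids one, because the cocycle $l_1k_2$ is an integer.
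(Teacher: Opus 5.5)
Your proof is correct and complete. The composition law $\mathcal{D}_{(k_1,l_1)}\circ\mathcal{D}_{(k_2,l_2)}=e^{\frac{j2\pi}{MN}l_1k_2}\mathcal{D}_{(k_1+k_2,l_1+l_2)}$, together with your check that the indices and the integer cocycle $l_1k_2$ matter only modulo $MN$, gives closure, identity and inverses. This is the standard direct verification; the paper states the theorem without proof and defers to \cite{Mehrotra2025_EURASIP}.

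One small addition concerns your closing remark that $\mathcal{H}_{MN}$ is identified with the triples $(k,l,m)$. For that to be an isomorphism (so that $|\mathcal{H}_{MN}|=(MN)^3$), you also need injectivity of the map $(k,l,m)\mapsto e^{\frac{j2\pi}{MN}m}\mathcal{D}_{(k,l)}$. This follows because the operator sends $\delta[n]$ to $e^{\frac{j2\pi}{MN}m}\delta[n-k]$ and $\delta[n-1]$ to $e^{\frac{j2\pi}{MN}(m+l)}\delta[n-k-1]$ (indices modulo $MN$), and these two images determine $k$, $m$ and $l$.
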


Let us now understand the action of the Heisenberg-Weyl group on an $MN$-length basis element $\boldsymbol{\phi}_{i}$. Let $\mathcal{V}_{i} \subset \mathcal{H}$ denote the $1$-dimensional subspace comprising phase scalings of $\boldsymbol{\phi}_{i}$:
\begin{align}
    \label{eq:stab1}
    \mathcal{V}_{i} = \big\{e^{\frac{j2\pi}{MN}m} \boldsymbol{\phi}_{i} \big| m \in \mathbb{Z}_{MN} \big\}.
\end{align}

We compute the \emph{stabilizer} of $\mathcal{V}_{i}$, which is defined as follows.

\begin{definition}
    \label{def:stab}
    Given a subspace $\mathcal{V} \subset \mathcal{H}$ and a group $\mathcal{G}$, the stabilizer of $\mathcal{V}$ is defined as:
    \begin{align*}
        \mathcal{G}_{\mathcal{V}} = \big\{g \in \mathcal{G} \big| (g\mathbf{v}) \in \mathcal{V},~\text{for all}~\mathbf{v} \in \mathcal{V} \big\}.
    \end{align*}
\end{definition}

Applying Definition~\ref{def:stab} to~\eqref{eq:stab1} and the Heisenberg-Weyl group, the stabilizer of $\mathcal{V}_{i}$ is the subset of the Heisenberg-Weyl group that acts on basis elements $\boldsymbol{\phi}_{i}$ via phase scaling: 
\begin{align}
    \label{eq:stab2}
    \mathcal{G}_{\mathcal{V}_{i}} &= \big\{e^{\frac{j2\pi}{MN}m} \mathcal{D}_{(k,l)} \in \mathcal{H}_{MN} \big| (\mathcal{D}_{(k,l)}\boldsymbol{\phi}_{i}) = e^{\frac{j2\pi}{MN}m'} \boldsymbol{\phi}_{i} \big\}.
\end{align}

It is well-known from~\cite{Mehrotra2025_EURASIP,Howard2006_HW} that such subsets of the Heisenberg-Weyl group form \emph{maximal commutative subgroups} (a.k.a. isotropy subgroups), i.e., commutative subgroups of order $MN$, with the property that no group element outside of the subgroup commutes with the subgroup. Maximal commutative subgroups act on their \emph{eigenvectors} via phase scaling. 


\begin{definition}[\cite{Mehrotra2025_EURASIP}]
    \label{def:max_comm_subgrp}
    A subgroup $\mathcal{T}_{MN} \subset \mathcal{H}_{MN}$ of the Heisenberg-Weyl group is \emph{commutative} if any two elements $t_1 \in \mathcal{T}_{MN}$ and $t_2 \in \mathcal{T}_{MN}$ commute, i.e., $(t_1 \circ t_2) = (t_2 \circ t_1)$. A commutative subgroup is \emph{maximal} if and only if no $h \in \mathcal{H}_{MN}, h \not\in \mathcal{T}_{MN}$ exists such that $h$ commutes with $\mathcal{T}_{MN}$. 
\end{definition}


\begin{lemma}[\cite{Mehrotra2025_EURASIP}]
    \label{lmm:eigenbasis}
    For every maximal commutative subgroup $\mathcal{T}_{MN} \subset \mathcal{H}_{MN}$, there exists a simultaneous orthonormal eigenbasis $\big\{\mathbf{v}_{i}\big\}_{i=1}^{MN}$ for all elements $t \in \mathcal{T}_{MN}$, with group elements $h \not\in \mathcal{T}_{MN}$ not in the subgroup acting orthogonally on the eigenvectors:
    \begin{gather*}
        t \mathbf{v}_{i} = \lambda_{t,i} \mathbf{v}_{i},~t \in \mathcal{T}_{MN}, |\lambda_{t,i}| = 1, \\
        \big\langle \mathbf{v}_{i},h \mathbf{v}_{i} \big\rangle = 0,~h \not\in \mathcal{T}_{MN}, h \in \mathcal{H}_{MN}.
    \end{gather*}
\end{lemma}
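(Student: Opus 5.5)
The plan is to use the finite Heisenberg-Weyl group structure: first show that $\mathcal{T}_{MN}$ is diagonalizable with one-dimensional joint eigenspaces, then show that the characters of $\mathcal{T}_{MN}$ on those eigenvectors force every element outside the subgroup to have vanishing diagonal matrix elements.

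\emph{Joint eigenbasis.} Every element of $\mathcal{T}_{MN}$ is unitary, and the elements commute pairwise. A commuting family of unitary (hence normal) operators on the finite-dimensional space $\mathcal{H}$ is simultaneously unitarily diagonalizable. This gives an orthonormal basis $\{\mathbf{v}_i\}$ with $t\mathbf{v}_i=\lambda_{t,i}\mathbf{v}_i$ and $|\lambda_{t,i}|=1$.

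\emph{Each joint eigenspace is one-dimensional.} For each $i$, the map $t\mapsto\lambda_{t,i}$ is a character $\chi_i$ of $\mathcal{T}_{MN}$. The central phases $e^{\frac{j2\pi}{MN}m}\mathcal{I}$ lie in $\mathcal{T}_{MN}$ by maximality, and $\chi_i$ restricts to them as the identity character. I would then count dimensions.
\begin{itemize}
\item Modulo the centre, the group $\mathcal{H}_{MN}$ is $\mathbb{Z}_{MN}^2$ with the symplectic commutator form. The traces satisfy $\mathrm{tr}\,\mathcal{D}_{(k,l)}=0$ unless $(k,l)=(0,0)$, by Identity~\ref{idty:sumrootsofunity}.
\item The projector onto the $\chi$-eigenspace is $P_\chi=\frac{1}{|\mathcal{T}_{MN}|}\sum_{t}\overline{\chi(t)}\,t$.
\item Taking traces, only the central elements contribute, so $\mathrm{tr}\,P_\chi=\frac{MN\cdot MN}{|\mathcal{T}_{MN}|}$.
\item Maximality means $\mathcal{T}_{MN}$ is its own centralizer, so its image $\bar{\mathcal{T}}$ in $\mathbb{Z}_{MN}^2$ is a Lagrangian subgroup with $|\bar{\mathcal{T}}|=MN$.
\item Hence $|\mathcal{T}_{MN}|=(MN)^2$ counting the central phases, and $\mathrm{tr}\,P_\chi=1$.
\end{itemize}
So each admissible character has a one-dimensional eigenspace. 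There are exactly $MN$ such characters, which matches $\dim\mathcal{H}$.

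\emph{Orthogonality for $h\notin\mathcal{T}_{MN}$.} By maximality, such an $h$ fails to commute with some $t\in\mathcal{T}_{MN}$. In the HW group this means $tht^{-1}=\omega h$ for a central phase $\omega\neq1$. Then
\[
\langle \mathbf{v}_i,h\mathbf{v}_i\rangle=\langle t\mathbf{v}_i,th\mathbf{v}_i\rangle=\omega\langle t\mathbf{v}_i,ht\mathbf{v}_i\rangle=\omega|\lambda_{t,i}|^2\langle\mathbf{v}_i,h\mathbf{v}_i\rangle .
\]
Since $\omega\neq1$, this forces $\langle\mathbf{v}_i,h\mathbf{v}_i\rangle=0$. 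Note that this step does not need the one-dimensionality result, only that $\mathbf{v}_i$ is a joint eigenvector.

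\emph{Main obstacle.} The hardest step is the Lagrangian counting: showing that the centralizer of $\bar{\mathcal{T}}$ equals $\bar{\mathcal{T}}$ forces $|\bar{\mathcal{T}}|=MN$ over the ring $\mathbb{Z}_{MN}$. I would prove this using the nondegeneracy of the pairing $\big((k,l),(k',l')\big)\mapsto e^{\frac{j2\pi}{MN}(kl'-lk')}$. Nondegeneracy gives $|\bar{\mathcal{T}}|\cdot|\bar{\mathcal{T}}^{\perp}|=(MN)^2$, and $\bar{\mathcal{T}}^{\perp}=\bar{\mathcal{T}}$ then yields $|\bar{\mathcal{T}}|=MN$.
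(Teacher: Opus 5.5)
Your proof is correct. The paper does not prove this lemma; it is imported from \cite{Mehrotra2025_EURASIP} (cf.\ \cite{Howard2006_HW}), so there is no in-paper argument to compare against. What you give is the standard argument: simultaneous diagonalization of commuting unitaries, followed by the commutator-phase trick $th=\omega ht$ with $\omega\neq 1$ to kill $\langle \mathbf{v}_i,h\mathbf{v}_i\rangle$.

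The lemma as stated (existence of a joint orthonormal eigenbasis, unimodular eigenvalues, vanishing diagonal entries for $h\notin\mathcal{T}_{MN}$) already follows from your first and third steps alone, as you observe. The Lagrangian counting is extra: $\bar{\mathcal{T}}=\bar{\mathcal{T}}^{\perp}$, hence $|\bar{\mathcal{T}}|=MN$, $|\mathcal{T}_{MN}|=(MN)^2$ and $\mathrm{tr}\,P_\chi=1$. It is worth keeping, because it establishes several facts the paper asserts without proof:
\begin{itemize}
\item maximal commutative subgroups have order $MN$ modulo the central phases;
\item the self-ambiguity in Corollary~\ref{corr:max_comm_ambg} is unimodular on exactly $MN$ delay-Doppler indices, namely the elements of $\bar{\mathcal{T}}$;
\item the joint eigenbasis is unique up to phases, which makes ``the'' eigenvectors in Examples~\ref{ex:comm_subgrp_ex1} and~\ref{ex:comm_subgrp_ex2} well defined.
\end{itemize}

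One cosmetic point: with the paper's inner product $\langle\mathbf{x},\mathbf{y}\rangle=\sum_n\mathbf{x}[n]\mathbf{y}^{*}[n]$, which is conjugate-linear in the second slot, the factor in your displayed identity is $\bar{\omega}$ rather than $\omega$. This changes nothing, since $\bar{\omega}\neq 1$ exactly when $\omega\neq 1$.
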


The operators belonging to a maximal commutative subgroup are \emph{simultaneously diagonalizable}. Maximal commutative subgroups are also \emph{normal subgroups} of the Heisenberg-Weyl group, hence have special ambiguity function properties.

\begin{corollary}[\cite{Mehrotra2025_EURASIP}]
    \label{corr:max_comm_ambg}
    Eigenvectors $\big\{\mathbf{v}_{i}\big\}_{i=1}^{MN}$ of maximal commutative subgroups $\mathcal{T}_{MN} \subset \mathcal{H}_{MN}$ have self-ambiguity:
    \begin{align*}
        \mathbf{A}_{\mathbf{v}_{i}}[k, l] = \begin{cases}
            e^{\frac{j2\pi}{MN}m} \lambda_{t,i}^{*}, & t = e^{\frac{j2\pi}{MN}m} \mathcal{D}_{(k,l)} \in \mathcal{T}_{MN}, \\ 
            0, & h = e^{\frac{j2\pi}{MN}m} \mathcal{D}_{(k,l)} \not\in \mathcal{T}_{MN},
        \end{cases}
    \end{align*}
    i.e., the self-ambiguity function is unimodular over exactly $MN$ delay-Doppler indices $(k,l)$, $k \in \mathbb{Z}_{MN}$, $l \in \mathbb{Z}_{MN}$.
\end{corollary}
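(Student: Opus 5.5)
The plan is to rewrite the self-ambiguity of Definition~\ref{def:amb_fun} as an inner product against a Heisenberg operator, and then read off both cases directly from Lemma~\ref{lmm:eigenbasis}.

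The first step is to compare Definition~\ref{def:amb_fun} with Definition~\ref{def:heis_op}. For $\mathbf{x}\in\mathcal{H}$,
\begin{align*}
\big\langle \mathbf{x}, \mathcal{D}_{(k,l)}\mathbf{x}\big\rangle = \sum_{n\in\mathbb{Z}_{MN}} \mathbf{x}[n]\,\mathbf{x}^{*}[(n-k)_{{}_{MN}}]\,e^{-\frac{j2\pi}{MN}l(n-k)} .
\end{align*}
This is exactly $\mathbf{A}_{\mathbf{x}}[k,l]$. So the self-ambiguity at $(k,l)$ is the overlap of $\mathbf{x}$ with its delay--Doppler shift by $(k,l)$. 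For each $(k,l)$ I will write $\mathcal{D}_{(k,l)} = e^{-\frac{j2\pi}{MN}m}\,g$ with $g=e^{\frac{j2\pi}{MN}m}\mathcal{D}_{(k,l)}\in\mathcal{H}_{MN}$, and split into two cases according to whether $g$ lies in $\mathcal{T}_{MN}$.

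In the first case, $t=e^{\frac{j2\pi}{MN}m}\mathcal{D}_{(k,l)}\in\mathcal{T}_{MN}$. Lemma~\ref{lmm:eigenbasis} gives $t\mathbf{v}_i=\lambda_{t,i}\mathbf{v}_i$, hence $\mathcal{D}_{(k,l)}\mathbf{v}_i = e^{-\frac{j2\pi}{MN}m}\lambda_{t,i}\mathbf{v}_i$. Since the inner product is conjugate-linear in its second argument and $\|\mathbf{v}_i\|=1$, this yields
\begin{align*}
\mathbf{A}_{\mathbf{v}_i}[k,l]=e^{\frac{j2\pi}{MN}m}\lambda_{t,i}^{*},
\end{align*}
which has modulus $1$ because $|\lambda_{t,i}|=1$.

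In the second case, no phase-scaling of $\mathcal{D}_{(k,l)}$ lies in $\mathcal{T}_{MN}$. The orthogonality clause of Lemma~\ref{lmm:eigenbasis} gives $\langle\mathbf{v}_i,h\mathbf{v}_i\rangle=0$, and removing the scalar phase from $h$ leaves $\mathbf{A}_{\mathbf{v}_i}[k,l]=0$.

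The step that needs care is that the two cases are well defined and exhaust all $(k,l)$, i.e.\ that membership in $\mathcal{T}_{MN}$ depends only on $(k,l)$ and not on the phase $m$. I would argue that the scalar operators $e^{\frac{j2\pi}{MN}m}\mathcal{D}_{(0,0)}$ commute with every element of $\mathcal{H}_{MN}$. By the maximality in Definition~\ref{def:max_comm_subgrp} they therefore belong to $\mathcal{T}_{MN}$. Consequently, if one phase-scaling of $\mathcal{D}_{(k,l)}$ is in $\mathcal{T}_{MN}$, then all of them are, so each pair $(k,l)$ falls into exactly one of the two cases. (The value in the first case does not depend on which $m$ is used, since $\lambda_{t,i}$ rotates accordingly.)

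For the final counting claim, the pairs $(k,l)$ realised in $\mathcal{T}_{MN}$ form the projection of the subgroup onto $\mathbb{Z}_{MN}^2$. Taking ``order $MN$'' to mean modulo the central phases, this projection has exactly $MN$ points. As a consistency check, I would verify this count with Moyal's Identity~\ref{idty:moyal}: a function that is unimodular on a set $P$ and zero elsewhere must satisfy $\frac{1}{MN}|P|=1$, which forces $|P|=MN$. This confirms that the ambiguity is unimodular on exactly $MN$ delay--Doppler indices.
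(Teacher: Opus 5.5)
Your proof is correct and follows the intended derivation. Writing $\mathbf{A}_{\mathbf{x}}[k,l]=\langle\mathbf{x},\mathcal{D}_{(k,l)}\mathbf{x}\rangle$ and reading both cases off Lemma~\ref{lmm:eigenbasis} is exactly how the corollary is meant to follow, and your observation that the central phases must lie in $\mathcal{T}_{MN}$ correctly makes the case split depend only on $(k,l)$. One small upgrade: your Moyal ``consistency check'' is already a complete proof that the support has exactly $MN$ points, so you do not need the paper's loose ``order $MN$'' phrasing, which would literally give $M^2N^2$ if the phases are counted as in Example~\ref{ex:comm_subgrp_ex1}.
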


It is clear that Corollary~\ref{corr:max_comm_ambg} corresponds to a special case of Lemma~\ref{lmm:pred}, where the self-ambiguity function is unimodular over exactly $MN$ delay-Doppler indices. Since Lemma~\ref{lmm:pred} is satisfied, eigenvectors of maximal commutative subgroups are predictable / non-selective, hence span a subset of $\Psi$. Below, we present two examples of maximal commutative subgroups.


\begin{example}[\cite{Mehrotra2025_EURASIP}]
    \label{ex:comm_subgrp_ex1}
    The maximal commutative subgroup: 
    \begin{align*}
        \mathcal{T}_{MN} = \big\{e^{\frac{j2\pi}{MN}m} \mathcal{D}_{(aM,bN)} \big| m \in \mathbb{Z}_{MN}, a \in \mathbb{Z}_{{}_{N}}, b \in \mathbb{Z}_{{}_{M}}\big\},
    \end{align*}
    represents a \emph{rectangular grid} spaced at multiples of $M$ along delay and multiples of $N$ along Doppler. The corresponding eigenvectors are the Zak-OTFS pulsones from~\eqref{eq:zakotfs1}:
    \begin{align*}
        \mathbf{v}_{i}[n] = \frac{1}{\sqrt{N}} \sum_{d \in \mathbb{Z}} e^{\frac{j2\pi}{N} d l_0} \delta[n-k_0-dM],~i=k_0+l_0M,
    \end{align*}
    with eigenvalues $\lambda_{t,i} = e^{\frac{j2\pi}{MN}m} e^{\frac{j2\pi}{M}bk_0} e^{-\frac{j2\pi}{N} al_0}$.
\end{example}


Fig.~\ref{fig:wvf_lib}(\subref{fig:wvf_lib1}) illustrates how the self-ambiguity of a Zak-OTFS pulsone is supported on a rectangular grid as per Example~\ref{ex:comm_subgrp_ex1}.

\begin{example}[\cite{Mehrotra2025_EURASIP}]
    \label{ex:comm_subgrp_ex2}
    The maximal commutative subgroup: 
    \begin{align*}
        \mathcal{T}_{MN} = \big\{e^{\frac{j2\pi}{MN}m} \mathcal{D}_{(k,l)} \big|&2\alpha k - l \equiv 0 \bmod{MN}, \nonumber \\ &(\alpha,MN) = 1, k,l,m \in \mathbb{Z}_{MN} \big\},
    \end{align*}
    represents a \emph{line with slope $2 \alpha$} in the delay-Doppler grid. The corresponding eigenvectors are the AFDM chirps from~\eqref{eq:afdm1}:
    \begin{align*}
        \mathbf{v}_{i}[n] = \frac{1}{\sqrt{MN}} e^{\frac{j2\pi}{MN} [\alpha n^2 + \beta n + \gamma]},~i = f(\alpha,\beta,\gamma),
    \end{align*}
    with eigenvalues $\lambda_{t,i} = e^{\frac{j2\pi}{MN}m} e^{-\frac{j2\pi}{MN}(\alpha k^2 + \beta k)}$, where $f$ is an affine function.
\end{example}


Fig.~\ref{fig:wvf_lib}(\subref{fig:wvf_lib2}) illustrates how the self-ambiguity of an AFDM chirp is supported on a line as per Example~\ref{ex:comm_subgrp_ex2}. 

We also plot the self-ambiguity of an OTSM basis element in Fig.~\ref{fig:wvf_lib}(\subref{fig:wvf_lib3}), which is a member of the waveform family $\Psi$, however does not constitute a maximal commutative subgroup.



We now define a special class of unitary operators called \emph{symplectic transformations} that act by conjugation on the Heisenberg-Weyl group, defining automorphisms that preserve the group structure and map between commutative subgroups.

\begin{definition}[\cite{Mehrotra2025_EURASIP}]
    \label{def:sympl_transform}
    A \emph{symplectic transformation} of the Heisenberg-Weyl group is a unitary operator $\mathcal{W}(g): \mathcal{H} \rightarrow \mathcal{H}$ associated with an element $g \in \mathcal{G}$ of the special linear group of order $2$ over $\mathbb{Z}_{MN}$:
    \begin{align*}
        \mathcal{G} &= \bigg\{\begin{bmatrix}
            a & b \\ c & d
        \end{bmatrix} \bigg| a, b, c, d \in \mathbb{Z}_{MN}, ad - bc = 1 \bigg\},
    \end{align*}
    that acts on the Heisenberg-Weyl group $\mathcal{H}_{MN}$ via conjugation:
    \begin{gather*}
        \mathcal{W}(g \cdot h) = \mathcal{W}(g) \circ \mathcal{W}(h), \\
        \mathcal{W}(g) \circ \big(e^{\frac{j2\pi}{MN}m}\mathcal{D}_{(k,l)}\big) \circ \mathcal{W}^{-1}(g) = e^{\frac{j2\pi}{MN}m'}\mathcal{D}_{g \cdot (k,l)},
    \end{gather*}
    for all $g,h \in \mathcal{G}$, and $e^{\frac{j2\pi}{MN}m}\mathcal{D}_{(k,l)}, e^{\frac{j2\pi}{MN}m'}\mathcal{D}_{g \cdot (k,l)} \in \mathcal{H}_{MN}$.
\end{definition}

Symplectic transformations satisfy the following properties.

\begin{lemma}[\cite{Mehrotra2025_EURASIP}]
    \label{lmm:weil_prop}
    Symplectic transformations map commutative subgroups to commutative subgroups and rotate the cross-ambiguity function of waveforms:
    \begin{align*}
        \big|\mathbf{A}_{\mathbf{x},\mathbf{y}}[k,l]\big| &= \big|\mathbf{A}_{\mathcal{W}(g) \mathbf{x},\mathcal{W}(g) \mathbf{y}}[g \cdot (k,l)]\big|.
    \end{align*}
\end{lemma}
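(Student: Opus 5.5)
The plan is to rewrite the cross-ambiguity function as a matrix coefficient of the Heisenberg operators, and then use the conjugation property of $\mathcal{W}(g)$ from Definition~\ref{def:sympl_transform} together with unitarity. The commutative-subgroup statement is a purely group-theoretic consequence of conjugation being an automorphism.

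\textbf{Step 1 (ambiguity as an inner product).} Comparing Definition~\ref{def:amb_fun} with Definition~\ref{def:heis_op}, we have $(\mathcal{D}_{(k,l)}\mathbf{y})^{*}[n] = \mathbf{y}^{*}[(n-k)_{{}_{MN}}]e^{-\frac{j2\pi}{MN}l(n-k)}$. Hence
\begin{align*}
\mathbf{A}_{\mathbf{x},\mathbf{y}}[k,l] = \big\langle \mathbf{x}, \mathcal{D}_{(k,l)}\mathbf{y}\big\rangle .
\end{align*}
This identity holds exactly, with no extra phase.

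\textbf{Step 2 (rotation of the ambiguity).} Definition~\ref{def:sympl_transform}, applied with $m=0$, gives
\begin{align*}
\mathcal{W}(g)\circ\mathcal{D}_{(k,l)}\circ\mathcal{W}^{-1}(g) = e^{\frac{j2\pi}{MN}m'}\mathcal{D}_{g\cdot(k,l)} .
\end{align*}
Equivalently, $\mathcal{D}_{g\cdot(k,l)} = c\,\mathcal{W}(g)\mathcal{D}_{(k,l)}\mathcal{W}^{-1}(g)$ for some unimodular $c$. Therefore
\begin{align*}
\mathbf{A}_{\mathcal{W}(g)\mathbf{x},\mathcal{W}(g)\mathbf{y}}[g\cdot(k,l)]
&= \big\langle \mathcal{W}(g)\mathbf{x},\, c\,\mathcal{W}(g)\mathcal{D}_{(k,l)}\mathbf{y}\big\rangle \\
&= c^{*}\big\langle \mathbf{x},\mathcal{D}_{(k,l)}\mathbf{y}\big\rangle ,
\end{align*}
where the last step uses unitarity of $\mathcal{W}(g)$. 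Taking moduli gives the claimed identity. Since $\det g = 1$, the map $(k,l)\mapsto g\cdot(k,l)$ is a bijection of $\mathbb{Z}_{MN}^{2}$. So this is a genuine ``rotation'' of the whole ambiguity surface, and no points are merged or lost.

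\textbf{Step 3 (commutative subgroups).} Define $\sigma_{g}(h)=\mathcal{W}(g)\circ h\circ\mathcal{W}^{-1}(g)$. By Definition~\ref{def:sympl_transform}, $\sigma_{g}$ maps $\mathcal{H}_{MN}$ into itself. It is multiplicative, since $\sigma_{g}(h_1)\sigma_{g}(h_2)=\sigma_{g}(h_1h_2)$, and it is injective. From the homomorphism property $\mathcal{W}(g\cdot h)=\mathcal{W}(g)\circ\mathcal{W}(h)$ applied with the identity element, $\mathcal{W}(g^{-1})=\mathcal{W}(g)^{-1}$. Hence $\sigma_{g^{-1}}$ is the inverse of $\sigma_g$, and $\sigma_g$ is an automorphism of $\mathcal{H}_{MN}$.

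It follows that the image of a subgroup $\mathcal{T}_{MN}$ is a subgroup. Moreover, $t_1t_2=t_2t_1$ implies $\sigma_g(t_1)\sigma_g(t_2)=\sigma_g(t_2)\sigma_g(t_1)$, so commutativity is preserved. Maximality is preserved as well. Suppose some $h\notin\sigma_g(\mathcal{T}_{MN})$ commuted with all of $\sigma_g(\mathcal{T}_{MN})$. Then $\sigma_{g^{-1}}(h)\notin\mathcal{T}_{MN}$ would commute with all of $\mathcal{T}_{MN}$, contradicting Definition~\ref{def:max_comm_subgrp}.

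\textbf{Main obstacle.} The expected difficulty is not analytic but bookkeeping. First, one must check that the phase $m'$ in the conjugation rule is unimodular and cancels in the modulus. Second, one must justify that $\sigma_g$ is onto $\mathcal{H}_{MN}$ rather than merely into it. I would handle the surjectivity through the inverse $\mathcal{W}(g^{-1})$ as above, and not attempt to construct $\mathcal{W}(g)$ explicitly.
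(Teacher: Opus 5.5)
The paper does not prove this lemma; it quotes it from \cite{Mehrotra2025_EURASIP} without argument, so there is no in-paper proof to compare against. Your proof is correct and is the standard argument. Under the paper's conventions, $\mathbf{A}_{\mathbf{x},\mathbf{y}}[k,l]=\langle \mathbf{x},\mathcal{D}_{(k,l)}\mathbf{y}\rangle$ holds exactly. Conjugation plus unitarity of $\mathcal{W}(g)$ then give the sharper identity $\mathbf{A}_{\mathcal{W}(g)\mathbf{x},\mathcal{W}(g)\mathbf{y}}[g\cdot(k,l)]=e^{\frac{j2\pi}{MN}m'}\,\mathbf{A}_{\mathbf{x},\mathbf{y}}[k,l]$, and the lemma is its modulus.

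In Step 3 you can get surjectivity of $\sigma_g$ without the multiplicativity $\mathcal{W}(g\cdot h)=\mathcal{W}(g)\circ\mathcal{W}(h)$. The group $\mathcal{H}_{MN}$ has at most $(MN)^3$ elements and $\sigma_g$ is injective, so $\sigma_g$ is a bijection. This route is preferable, because multiplicativity is the delicate part of any Weil-type construction; in general one only obtains a projective representation.

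One caveat. Step 2 only uses that the conjugation phase is unimodular, whereas Step 3 uses that it is an $MN$-th root of unity, so that $\sigma_g$ maps $\mathcal{H}_{MN}$ into itself. That requirement is built into Definition~\ref{def:sympl_transform}, so your argument is valid as stated. However, the paper's own GDAFT does not always satisfy it. By Lemma~\ref{lmm:heis_op_gdaft} the phase is $e^{j\frac{\pi}{MN}[ack^2+bdl^2+2bclk]}$. For even $MN$ with $a=b=c=1$, $d=2$, $(k,l)=(1,0)$, this equals $e^{j\frac{\pi}{MN}}$, which is not an $MN$-th root of unity.

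A phase-free version of your Step 3 covers this case as well:
\begin{itemize}
\item Since $\mathcal{D}_{(k,l)}\circ\mathcal{D}_{(k',l')}=e^{\frac{j2\pi}{MN}lk'}\mathcal{D}_{(k+k',l+l')}$, two phase-scaled Heisenberg operators commute if and only if $lk'-kl'\equiv 0 \bmod MN$.
\item A maximal commutative subgroup contains all scalars, so it is determined by its index set $\mathcal{S}\subset\mathbb{Z}_{MN}^{2}$.
\item Commutativity plus maximality say exactly that $\mathcal{S}$ is a subgroup equal to its orthogonal complement under this form.
\item Because $ad-bc=1$, $g$ preserves the form and permutes $\mathbb{Z}_{MN}^{2}$. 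Hence $g\cdot\mathcal{S}$ has the same property, whatever phases $\mathcal{W}(g)$ produces.
\end{itemize}
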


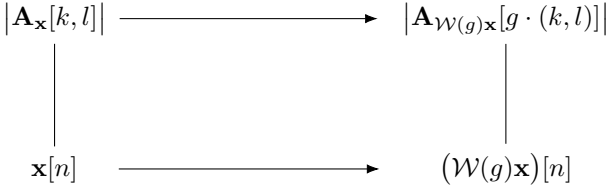
\begin{figure}
  \centering
  \begin{tikzpicture}[>=Latex]
    \node (eqA) at (0, 2) {$\big|\mathbf{A}_{\mathbf{x}}[k,l]\big|$};
    \node (eqB) at (6, 2) {$\big|\mathbf{A}_{\mathcal{W}(g) \mathbf{x}}[g \cdot (k,l)]\big|$};
    \node (eqC) at (0, 0) {$\mathbf{x}[n]$};
    \node (eqD) at (6, 0) {$\big(\mathcal{W}(g) \mathbf{x}\big)[n]$};

    \coordinate (Aleft) at ($(eqA.east)+(0.05,0)$);
    \coordinate (Bleft) at ($(eqB.west)-(0.2,0)$);
    \coordinate (Cleft) at ($(eqC.east)+(0.42,0)$);
    \coordinate (Dleft) at ($(eqD.west)-(0.67,0)$);

    \draw[->] (Aleft) -- (Bleft);
    \draw[->] (Cleft) -- (Dleft);

    \draw[-] ($(eqA.south)+(0,0)$) -- ($(eqC.north)+(0,0)$);
    \draw[-] ($(eqB.south)-(0,0)$) -- ($(eqD.north)-(0,0)$);
  \end{tikzpicture}
  \caption{The action of symplectic transformations on waveforms and their ambiguity functions.}
  \label{fig:weil_ambg_fun}
\end{figure}

Fig.~\ref{fig:weil_ambg_fun} illustrates Lemma~\ref{lmm:weil_prop}. The ability to rotate ambiguity functions via symplectic transformations aids the construction of \emph{radar waveform libraries}; see Section~\ref{sec:radar} for more details.

We now present an example of a symplectic transformation.

\begin{definition}[\cite{Mehrotra2025_WCLSpread,Mehrotra2025_EURASIP}]
    \label{def:gdaft}
    The \emph{generalized discrete affine Fourier transform} (GDAFT) of an $MN$-length sequence $\mathbf{x}$ is:
    \begin{align*}
        \mathcal{F}_a\mathbf{x}[n] = \frac{1}{\sqrt{MN}} \sum_{m \in \mathbb{Z}_{MN}} e^{j\frac{\pi}{MN} b^{-1}_{{}_{MN}} (d n^2 - 2 nm + a m^2)} \mathbf{x}[m],
    \end{align*}
    where $n\in\mathbb{Z}_{MN}$, $b$ is co-prime to $MN$, $(b,MN) = 1$, with parameters $a,b,c,d$ defined as in Definition~\ref{def:sympl_transform}.
\end{definition}

The GDAFT rotates / chirps its input waveform.

\begin{lemma}[\cite{Mehrotra2025_EURASIP}]
    \label{lmm:heis_op_gdaft}
    Let $\tilde{\mathbf{x}}[n] = (\mathcal{F}_a \mathbf{x})[n]$. Conjugation by the GDAFT as in Definition~\ref{def:sympl_transform} rotates / chirps waveforms:
    \begin{align*}
        \big(\mathcal{F}_a \circ \mathcal{D}_{(k,l)} \circ \mathcal{F}_a^{-1}\tilde{\mathbf{x}}\big)[n] &= e^{j\frac{\pi}{MN} [ack^2 + bdl^2 + 2bclk]} \nonumber \\ &\times (\mathcal{D}_{g\cdot(k,l)} \tilde{\mathbf{x}})[n],
    \end{align*}
    which rotates the ambiguity function by $g = \begin{bmatrix}
        a & b \\ c & d
    \end{bmatrix} \in \mathcal{G}$:
    \begin{align*}
        \big|\mathbf{A}_{\mathbf{x},\mathbf{y}}[k,l]\big| &= \big|\mathbf{A}_{\mathcal{F}_a \mathbf{x},\mathcal{F}_a \mathbf{y}}[g \cdot (k,l)]\big|.
    \end{align*}
\end{lemma}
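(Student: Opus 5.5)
The plan is to prove both claims by a direct kernel computation: first an operator identity $\mathcal{F}_a \circ \mathcal{D}_{(k,l)} = c(k,l)\, \mathcal{D}_{g\cdot(k,l)} \circ \mathcal{F}_a$ with a unimodular constant $c(k,l)$, then the ambiguity rotation as a consequence of unitarity. Since $\mathcal{F}_a$ is unitary, the operator identity is equivalent to the first display of Lemma~\ref{lmm:heis_op_gdaft} with $\tilde{\mathbf{x}} = \mathcal{F}_a\mathbf{x}$. Throughout, write $g\cdot(k,l) = (k',l') = (ak+bl,\; ck+dl)$ and $\beta = b^{-1}_{{}_{MN}}$.

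\textbf{Left-hand side.} By Definition~\ref{def:heis_op} and Definition~\ref{def:gdaft},
\begin{align*}
(\mathcal{F}_a\mathcal{D}_{(k,l)}\mathbf{x})[n] = \frac{1}{\sqrt{MN}}\sum_{m} e^{j\frac{\pi}{MN}\beta(dn^2-2nm+am^2)}\,\mathbf{x}[m-k]\,e^{\frac{j2\pi}{MN}l(m-k)}.
\end{align*}
Substituting $m' = m-k$, the exponent (in units of $j\pi/MN$) becomes
\begin{align*}
\beta(dn^2-2nm'+am'^2) + \beta(2akm' - 2nk + ak^2) + 2lm'.
\end{align*}

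\textbf{Right-hand side.} Expanding $(\mathcal{D}_{(k',l')}\mathcal{F}_a\mathbf{x})[n]$ gives the exponent
\begin{align*}
\beta\big(d(n-k')^2 - 2(n-k')m' + am'^2\big) + 2l'(n-k').
\end{align*}

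\textbf{Matching coefficients.} Both exponents contain the same quadratic part $\beta(dn^2-2nm'+am'^2)$, so it suffices to compare the terms in $m'$, in $n$, and the constants.
\begin{itemize}
\item Terms linear in $m'$: the right side gives $2\beta k' m' = 2\beta(ak+bl)m' = 2\beta a k m' + 2lm'$, using $\beta b\equiv 1$. This agrees with the left side.
\item Terms linear in $n$: the right side gives $-2\beta d k' + 2l'$. Multiplying through by $b$, the requirement $-\beta d(ak+bl)+ck+dl = -\beta k$ reduces to $-adk + bck = -k$, which is exactly $ad-bc=1$.
\item Constant terms: these involve only $k,l$. Simplifying them with $ad-bc=1$ and $\beta b\equiv1$ should yield the stated phase $e^{j\frac{\pi}{MN}[ack^2+bdl^2+2bclk]}$, possibly after inverting the sign convention.
\end{itemize}
This establishes the operator identity. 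The same computation also verifies the conjugation property required in Definition~\ref{def:sympl_transform}.

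\textbf{Ambiguity rotation.} From Definition~\ref{def:amb_fun}, $\mathbf{A}_{\mathbf{x},\mathbf{y}}[k,l] = \langle \mathbf{x}, \mathcal{D}_{(k,l)}\mathbf{y}\rangle$ up to a unimodular phase depending on $(k,l)$. Unitarity of $\mathcal{F}_a$ then gives
\begin{align*}
\langle \mathbf{x},\mathcal{D}_{(k,l)}\mathbf{y}\rangle = \langle \mathcal{F}_a\mathbf{x}, \mathcal{F}_a\mathcal{D}_{(k,l)}\mathbf{y}\rangle = c(k,l)^{*}\langle \mathcal{F}_a\mathbf{x}, \mathcal{D}_{g\cdot(k,l)}\mathcal{F}_a\mathbf{y}\rangle.
\end{align*}
Taking absolute values yields $|\mathbf{A}_{\mathbf{x},\mathbf{y}}[k,l]| = |\mathbf{A}_{\mathcal{F}_a\mathbf{x},\mathcal{F}_a\mathbf{y}}[g\cdot(k,l)]|$, which is the second display of the lemma.

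\textbf{Main obstacle: well-definedness modulo $MN$.} The kernel uses $e^{j\pi(\cdot)/MN}$, i.e.\ exponents taken modulo $2MN$, while $ad-bc=1$ and $\beta b=1$ hold only modulo $MN$. Every cancellation above multiplies such congruences by $2$ (in the linear terms, or via cross terms from the squares), so the linear-term matching is safe. The quadratic and constant terms, however, can pick up an extra sign $e^{j\pi(\cdot)}$ when $MN$ is even. My first step would be to check that these discrepancies depend only on $(k,l)$, so that they are absorbed into the scalar phase and do not affect the $n$-dependence. If that fails, I would restrict to odd $MN$, or impose parity conditions (e.g.\ $ab$, $cd$ even) so that the chirp $e^{j\pi\beta d n^2/MN}$ is genuinely $MN$-periodic. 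In either case the absolute-value statement is unaffected.
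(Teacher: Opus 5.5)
Your route is the natural one. The paper states this lemma without proof, importing it from the cited EURASIP work, so there is no in-paper argument to compare against. Proving $\mathcal{F}_a\mathcal{D}_{(k,l)}=c(k,l)\,\mathcal{D}_{g\cdot(k,l)}\mathcal{F}_a$ by kernel comparison spares you from writing down $\mathcal{F}_a^{-1}$. The ambiguity part is complete once you note two facts. First, $\mathbf{A}_{\mathbf{x},\mathbf{y}}[k,l]=\langle\mathbf{x},\mathcal{D}_{(k,l)}\mathbf{y}\rangle$ holds exactly, with no extra phase. Second, $\mathcal{F}_a$ is unitary: it is a diagonal chirp, times a DFT with multiplier $b^{-1}_{{}_{MN}}$ coprime to $MN$, times a diagonal chirp.

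The constant term, however, should be computed rather than hedged. It gives exactly the stated phase, with no sign inversion. With $\beta=b^{-1}_{{}_{MN}}$, $k'=ak+bl$, $l'=ck+dl$, the constant is $\beta ak^2-\beta dk'^2+2l'k'$. Using $1-ad=-bc$ and $\beta b=1$, its coefficients are
\begin{align*}
k^2&:\ \beta a(1-ad)+2ac=ac,\\
kl&:\ 2(ad+bc)-2\beta abd=2bc,\\
l^2&:\ 2bd-\beta b^2 d=bd.
\end{align*}
These congruences hold only modulo $MN$, while the exponents live modulo $2MN$. So the $k^2$ and $l^2$ coefficients can be off by a multiple of $MN$; the $kl$ correction is a multiple of $2MN$ and is harmless. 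The result is a sign depending only on $(k,l)$ and on the integer representatives chosen, so your hoped-for absorption into $c(k,l)$ works.

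The real problem is that your well-definedness diagnosis is inverted. The proposed fallback of restricting to odd $MN$ goes exactly the wrong way.

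\textbf{Even $MN$.} For an integer $x$, $e^{j\frac{\pi}{MN}x(n+MN)^2}=(-1)^{xMN}e^{j\frac{\pi}{MN}xn^2}$. So for even $MN$ every chirp in the kernel is $MN$-periodic. The reindexing $m'=m-k$ over $\mathbb{Z}_{MN}$ and the evaluation $(\mathcal{F}_a\mathbf{x})[(n-k')_{{}_{MN}}]$ are then legitimate, and your computation, with the constant above, is a complete proof.

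\textbf{Odd $MN$.} If $\beta a$ or $\beta d$ is odd, the kernel is anti-periodic. Wrap-around then creates signs depending on $m'$ and $n$, and no scalar can absorb them. For a concrete failure, take $MN=3$, $a=b=d=1$, $c=0$. Then $g\cdot(1,0)=(1,0)$ with phase $1$, so the lemma would force $\mathcal{F}_a$ to commute with the cyclic shift. Yet its kernel $e^{j\pi(n-m)^2/3}/\sqrt{3}$ is not circulant: the entries $(0,1)$ and $(2,0)$ differ by a sign.

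The repair for odd $MN$ is to take an even representative of $b^{-1}_{{}_{MN}}$, which is possible precisely because $MN$ is odd. Equivalently, read $\frac{\pi}{MN}b^{-1}$ as $\frac{2\pi}{MN}(2^{-1}b^{-1})_{{}_{MN}}$. The parities that matter are those of $b^{-1}a$ and $b^{-1}d$, not of $ab$ and $cd$. With that convention your argument goes through unchanged, and the stated phase holds up to a sign depending only on $(k,l)$. The absolute-value identity then follows for both parities of $MN$.
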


We have shown in~\cite{Mehrotra2025_WCLSpread,Mehrotra2025_EURASIP,Mattu2025_npj} that the GDAFT maps the Zak-OTFS pulsone basis from Example~\ref{ex:comm_subgrp_ex1} to the AFDM chirp basis from Example~\ref{ex:comm_subgrp_ex2}, hence establishing equivalence between Zak-OTFS and AFDM. 


\section{Introduction to Zak-OTFS}
\label{sec:zak_intro}

\begin{figure}
    \centering
    \includegraphics[width=\linewidth]{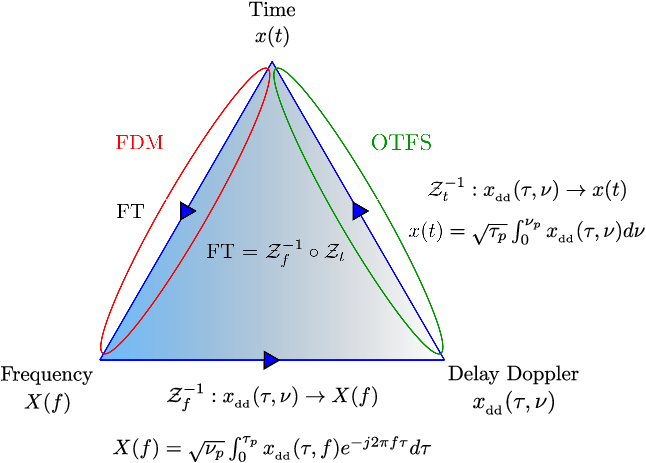}
    \caption{Unitary maps between various signal domains.}
    \label{fig:signaldomains}
\end{figure}

\begin{figure*}
    \centering
    \includegraphics[width=\linewidth]{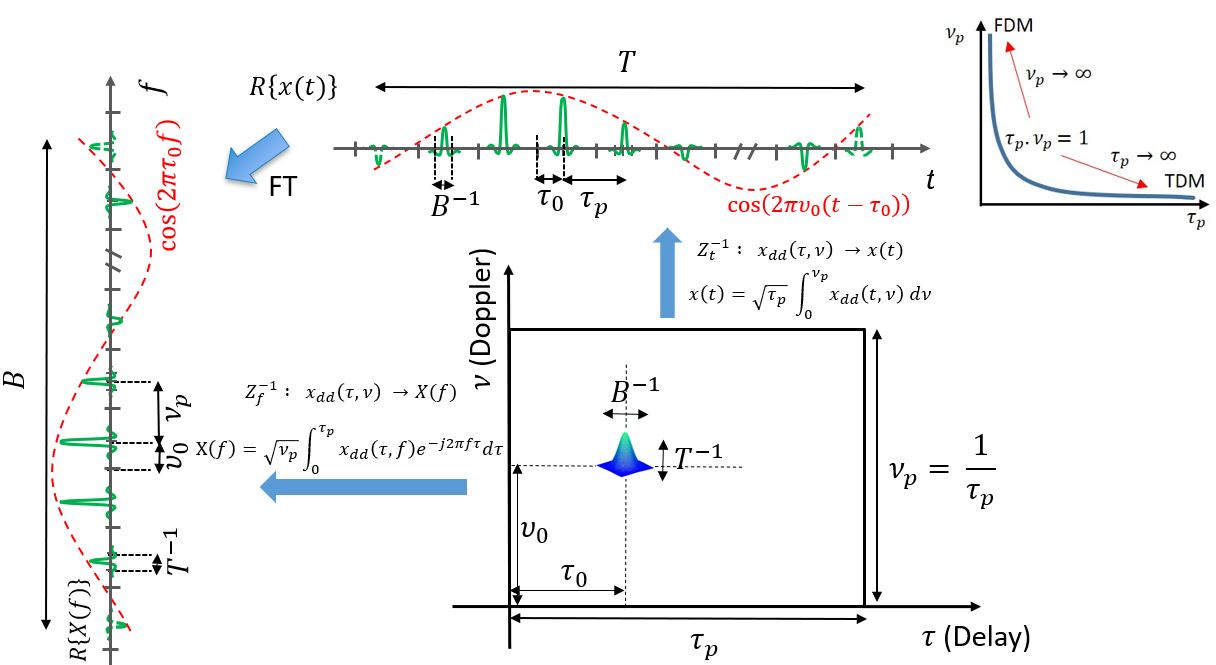}
    \caption{A DD domain pulse and its TD/FD realizations referred to as TD/FD pulsone. The TD pulsone comprises of a finite duration pulse train modulated by a TD tone. The FD pulsone comprises of a finite bandwidth pulse train modulated by a FD tone. The location of the pulses in the TD/FD pulse train and the frequency of the modulated TD/FD tone is determined by the location of the DD domain pulse $(\tau_0,\nu_0)$. The time duration and bandwidth of a pulsone are inversely proportional to the characteristic width of the DD domain pulse along the Doppler axis and the delay axis, respectively. The number of non-overlapping DD pulses, each spread over an area $B^{-1}T^{-1}$, inside the fundamental period ${\mathcal D}_0$ (which has unit area) is equal to the time-bandwidth product $BT$ and the corresponding pulsones are orthogonal to one another, rendering OTFS an orthogonal modulation that achieves the Nyquist rate.  As $\nu_p \rightarrow \infty$, the FD pulsone approaches a single FD pulse which is the FDM carrier. Similarly, as $\tau_p \rightarrow \infty$, the TD pulsone approaches a single TD pulse which is the TDM carrier. OTFS is therefore a family of modulations parameterized by $\tau_p$ that interpolates between TDM and FDM.}
    \label{fig:sec3fig1}
\end{figure*}

\begin{figure}
    \centering
    \includegraphics[width=0.99\linewidth]{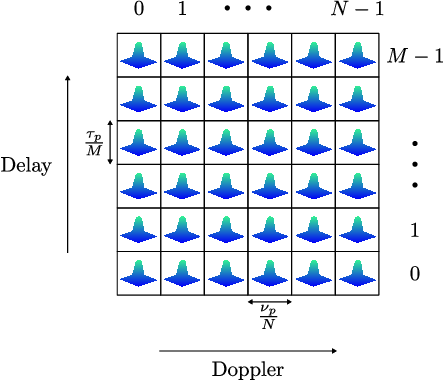}
    \caption{DD domain information grid. There are $M$ delay bins and $N$ Doppler bins. The spacing along the Doppler axis is $\nicefrac{\nu_p}{N}$ and along delay axis is $\nicefrac{\tau_p}{M}$. The DD grid can carry $MN$ information symbols.}
    \label{fig:sec3fig2}
\end{figure}

\subsection{Equivalent Signal Representations in Other Domains}
\label{subsec:zak_intro_dd_mapping}

In Section~\ref{sec:foundations}, we analyzed modulations for doubly-selective channels in the time domain. Here, we describe equivalent signal representations in domains such as the delay-Doppler (DD) domain, as illustrated in Fig.~\ref{fig:signaldomains}. To that end, we define various transforms that map time-domain signals to DD.

\begin{definition}[\cite{dzt}]
    \label{def:dzt}
    The \emph{discrete Zak transform} (DZT) maps $MN$-periodic sequences $\mathbf{x}$ to $M \times N$ quasi-periodic arrays $\mathbf{X}$:
    \begin{align*}
        \mathbf{X}[k, l] = \frac{1}{\sqrt{N}} \sum_{p \in \mathbb{Z}_{N}}\mathbf{x}[k+pM]e^{-\frac{j2\pi}{N} pl},
    \end{align*}
    where $k \in \mathbb{Z}_{M}$ is the delay and $l \in \mathbb{Z}_{N}$ is the Doppler index. The $MN$-periodic sequences $\mathbf{x}$ (resp. $M \times N$ quasi-periodic arrays $\mathbf{X}$) form a Hilbert space, and inner products can be calculated on any $MN$ length interval (resp. $M \times N$ subarray).
\end{definition}

\begin{lemma}
    \label{lmm:dzt_unitary}
    The DZT in Definition~\ref{def:dzt} is a unitary transform that preserves inner products.
\end{lemma}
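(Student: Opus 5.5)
The plan is to prove that the DZT is an isometry between the $MN$-dimensional space of $MN$-periodic sequences and the space of $M\times N$ quasi-periodic arrays, and that it is surjective. Unitarity then follows, and preservation of inner products follows by polarization; alternatively, the inner-product identity can be checked directly.

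\emph{Step 1: quasi-periodicity and the fundamental subarray.} From Definition~\ref{def:dzt}, $\mathbf{X}[k,l+N]=\mathbf{X}[k,l]$. The substitution $p\mapsto p+1$ in the sum, together with $MN$-periodicity of $\mathbf{x}$, gives $\mathbf{X}[k+M,l]=e^{\frac{j2\pi}{N}l}\mathbf{X}[k,l]$. Hence $\mathbf{X}$ is determined by its values on the fundamental $M\times N$ subarray $k\in\mathbb{Z}_M$, $l\in\mathbb{Z}_N$. Because each quasi-periodic shift multiplies by a unimodular phase, the inner product $\sum\mathbf{X}[k,l]\mathbf{Y}^*[k,l]$ is the same over any $M\times N$ subarray. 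The space of quasi-periodic arrays therefore has dimension $MN$, matching $\mathcal{H}$.

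\emph{Step 2: inner-product identity.} For fixed $k\in\mathbb{Z}_M$, the map $l\mapsto\mathbf{X}[k,l]$ is the unitary (normalized) $N$-point DFT of the length-$N$ sequence $(\mathbf{x}[k+pM])_{p\in\mathbb{Z}_N}$. Expanding $\sum_{l\in\mathbb{Z}_N}\mathbf{X}[k,l]\mathbf{Y}^*[k,l]$ produces a double sum over $p,p'$. Applying Identity~\ref{idty:sumrootsofunity} to the sum over $l$ collapses it to $\sum_{p}\mathbf{x}[k+pM]\mathbf{y}^*[k+pM]$. Summing over $k\in\mathbb{Z}_M$ then covers every $n=k+pM\in\mathbb{Z}_{MN}$ exactly once, so $\langle\mathbf{X},\mathbf{Y}\rangle=\langle\mathbf{x},\mathbf{y}\rangle$.

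\emph{Step 3: invertibility.} To show the isometry is onto, I will write down the inverse explicitly, $\mathbf{x}[k+pM]=\frac{1}{\sqrt N}\sum_{l\in\mathbb{Z}_N}\mathbf{X}[k,l]e^{\frac{j2\pi}{N}pl}$ for $k\in\mathbb{Z}_M,p\in\mathbb{Z}_N$, extended $MN$-periodically. I will then check that it returns a consistent $MN$-periodic sequence from any quasi-periodic array. Once more Identity~\ref{idty:sumrootsofunity} does the work, and the only subtlety is consistency when $k$ is taken outside $\mathbb{Z}_M$, which is exactly where quasi-periodicity enters. The other route to surjectivity is a dimension count: an isometry between spaces of equal finite dimension $MN$ is bijective.

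The main obstacle is bookkeeping rather than anything conceptual. One must be careful that the inner product on arrays is well defined, meaning it is independent of the chosen subarray; this is Step 1. One must also check that the index ranges ($k$ reduced mod $M$, $p$ mod $N$) tile $\mathbb{Z}_{MN}$ exactly once. Once the DZT is recognized as a column-wise unitary DFT of the polyphase decomposition of $\mathbf{x}$, the rest reduces to Parseval for the DFT.
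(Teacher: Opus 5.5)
Your proposal is correct, but it argues differently from the paper.

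\textbf{The paper's route.} The paper never computes $\langle\mathbf{X},\mathbf{Y}\rangle$ for general arrays. It fixes one orthonormal basis of $\mathcal{H}$: the block-limited tones $\mathbf{v}_{r,s}[n]=\frac{1}{\sqrt{M}}e^{\frac{j2\pi}{M}sn}$ supported on $rM\le n<(r+1)M$. It computes their DZTs in closed form,
$\mathbf{V}_{r,s}[k,l]=\frac{1}{\sqrt{MN}}e^{\frac{j2\pi}{M}sk}e^{-\frac{j2\pi}{N}(r-\lfloor k/M\rfloor)l}$.
It then checks with Identity~\ref{idty:sumrootsofunity} that these images are again orthonormal. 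Preservation of inner products follows by linearity.

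\textbf{Your route.} You treat the DZT as a column-wise unitary $N$-point DFT of the $M$ polyphase components of $\mathbf{x}$. You then prove Parseval directly for arbitrary $\mathbf{x},\mathbf{y}$, collapsing the same root-of-unity sum.

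\textbf{What yours buys.} Your version is more self-contained on points the paper leaves implicit:
\begin{itemize}
\item You verify the quasi-periodicity $\mathbf{X}[k+M,l]=e^{\frac{j2\pi}{N}l}\mathbf{X}[k,l]$.
\item You show the array inner product does not depend on which $M\times N$ subarray is used, because $\mathbf{X}\mathbf{Y}^{*}$ is genuinely doubly periodic.
\item You establish surjectivity, via the explicit inverse or a dimension count. This is what ``unitary'' requires beyond being an isometry. In the paper, surjectivity is only implicit in the fact that $MN$ orthonormal images must span an $MN$-dimensional space.
\end{itemize}

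\textbf{What the paper's buys.} It gives a concrete dictionary between domains: TD tones confined to a single length-$M$ block become two-dimensional DD exponentials. This ties the lemma to the surrounding pulsone/OFDM discussion, whereas your Parseval argument is shorter and independent of any particular basis.
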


\begin{IEEEproof}
Consider the following basis for the Hilbert space $\mathcal{H}$ of $MN$-periodic sequences:
\begin{align}
    \label{eq:timebasis}
    \mathbf{v}_{r,s}[n] = \begin{cases}
        \frac{1}{\sqrt{M}} e^{\frac{j2\pi}{M} sn}, \quad \text{if } rM \leq n < (r+1)M \\ 
        0, \quad \text{otherwise},
    \end{cases}
\end{align}
where $\mathbf{v}_{r,s}[n]$ is repeated periodically with period $MN$, which closely resembles the discrete time version of the pulsone basis element presented in Example~\ref{ex:comm_subgrp_ex1}. The basis is orthonormal:
\begin{align*}
    \big\langle \mathbf{v}_{r,s}, \mathbf{v}_{r',s'} \big\rangle &= \frac{1}{M} \sum_{n \in \mathbb{Z}_{MN}} \mathds{1}_{\{rM \leq n < (r+1)M\}} e^{\frac{j2\pi}{M} sn} \nonumber \\ &~~~~~~~~~~~~\times \mathds{1}_{\{r'M \leq n < (r'+1)M\}} e^{-\frac{j2\pi}{M} s'n} \nonumber \\
    &= \delta[r-r'] \frac{1}{M} \sum_{n = rM}^{(r+1)M-1} e^{\frac{j2\pi}{M} (s-s')n} \nonumber \\
    &= \delta[r-r']  \delta[s-s'],
\end{align*}
where the last expression follows from Identity~\ref{idty:sumrootsofunity}. Per Definition~\ref{def:dzt}, the DZT of~\eqref{eq:timebasis} is:
\begin{align}
    \label{eq:timebasis_dzt}
    \mathbf{V}_{r,s}[k,l]\!&=\!\frac{1}{\sqrt{N}} \sum_{p \in \mathbb{Z}_{N}}\mathbf{v}_{r,s}[k+pM] e^{-\frac{j2\pi}{N} pl} \nonumber \\
    &=\!\frac{1}{\sqrt{MN}}\hspace{-1mm} \sum_{p \in \mathbb{Z}_{N}}\hspace{-2mm} e^{j2\pi\big[\frac{s(k+pM)}{M}-\frac{pl}{N}\big]} \mathds{1}_{\{rM \leq k+pM < (r+1)M\}} \nonumber \\
    &=\!\frac{1}{\sqrt{MN}} \sum_{p \in \mathbb{Z}_{N}} e^{j2\pi\big[\frac{s(k+pM)}{M}-\frac{pl}{N}\big]} \delta\bigg[p-r+\bigg\lfloor\frac{k}{M}\bigg\rfloor\bigg] \nonumber \\
    &=\!\frac{1}{\sqrt{MN}} e^{\frac{j2\pi}{M}s(k+(r-\lfloor\frac{k}{M}\rfloor)M)} e^{-\frac{j2\pi}{N} (r-\lfloor\frac{k}{M}\rfloor)l} \nonumber \\ 
    &=\!\frac{1}{\sqrt{MN}} e^{\frac{j2\pi}{M}sk} e^{-\frac{j2\pi}{N} (r-\lfloor\frac{k}{M}\rfloor)l}.
\end{align} 

The obtained basis is clearly orthonormal since:
\begin{align*}
   \big\langle \mathbf{V}_{r,s}, \mathbf{V}_{r',s'} \big\rangle &= \frac{1}{MN} \sum_{k = 0}^{M-1} \sum_{l = 0}^{N-1} e^{j\frac{2\pi}{M}(s-s')k} e^{-j\frac{2\pi}{N}(r-r')l} \nonumber \\ 
   &= \delta[r-r'] \delta[s-s'],
\end{align*}
where the final result follows from Identity~\ref{idty:sumrootsofunity}. Hence, the DZT preserves inner products since it maps the orthonormal basis $\mathbf{v}_{r,s}$ to the orthonormal basis $\mathbf{V}_{r,s}$.
\end{IEEEproof}

The DZT maps shifts in time-frequency to delay-Doppler.

\begin{lemma}[\cite{Mehrotra2025_EURASIP}]
    \label{lmm:heis_op_dzt}
    Let $\mathcal{Z}$ denote the operator representation of the DZT. Conjugation by the DZT as in Definition~\ref{def:sympl_transform} maps time-frequency shifts per Definition~\ref{def:heis_op} to delay-Doppler shifts:
    \begin{align*}
        \big(\mathcal{Z} \circ \mathcal{D}_{(k,l)} \circ \mathcal{Z}^{-1}\mathbf{X}\big)[k',l'] &= \mathbf{X}[(k'-k)_{{}_{M}},(l'-l)_{{}_{N}}] \nonumber \\ &\times e^{\frac{j2\pi}{N}(l'-l)\big\lfloor \frac{k'-k}{M} \big\rfloor} e^{\frac{j2\pi}{MN}l(k'-k)}.
    \end{align*}
\end{lemma}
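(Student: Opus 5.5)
The statement is a direct computation, so the plan is to unwind the conjugation $\mathcal{Z}\circ\mathcal{D}_{(k,l)}\circ\mathcal{Z}^{-1}$ explicitly. Let $\mathbf{x}=\mathcal{Z}^{-1}\mathbf{X}$, which exists by Lemma~\ref{lmm:dzt_unitary}, so that $\mathbf{X}=\mathcal{Z}\mathbf{x}$. It then suffices to show
\begin{align*}
(\mathcal{Z}\mathcal{D}_{(k,l)}\mathbf{x})[k',l'] = \mathbf{X}[(k'-k)_{{}_{M}},(l'-l)_{{}_{N}}]\, e^{\frac{j2\pi}{N}(l'-l)\lfloor\frac{k'-k}{M}\rfloor} e^{\frac{j2\pi}{MN}l(k'-k)}
\end{align*}
for all $k'\in\mathbb{Z}_M$ and $l'\in\mathbb{Z}_N$. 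This avoids writing down an explicit inverse Zak transform.

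First I substitute Definition~\ref{def:heis_op} into Definition~\ref{def:dzt}:
\begin{align*}
(\mathcal{Z}\mathcal{D}_{(k,l)}\mathbf{x})[k',l'] = \frac{1}{\sqrt{N}}\sum_{p\in\mathbb{Z}_N}\mathbf{x}[k'-k+pM]\, e^{\frac{j2\pi}{MN}l(k'-k+pM)} e^{-\frac{j2\pi}{N}pl'}.
\end{align*}
The Doppler phase factors as $e^{\frac{j2\pi}{MN}l(k'-k)}e^{\frac{j2\pi}{N}pl}$. The first factor is independent of $p$ and comes out of the sum; the second combines with the DZT kernel to give $e^{-\frac{j2\pi}{N}p(l'-l)}$. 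Because this kernel is $N$-periodic in its Doppler index, $l'-l$ may be replaced by $(l'-l)_{{}_{N}}$.

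Next I handle the delay index, which I expect to be the only real obstacle: keeping track of the quasi-periodicity phase, including its sign. Write $k'-k=(k'-k)_{{}_{M}}+qM$ with $q=\lfloor\frac{k'-k}{M}\rfloor$, and reindex $p\mapsto p'=p+q$. Since $\mathbf{x}$ is $MN$-periodic, summing $p'$ over any complete residue system mod $N$ gives the same result. The sum becomes
\begin{align*}
\frac{1}{\sqrt{N}}\sum_{p'}\mathbf{x}[(k'-k)_{{}_{M}}+p'M]\, e^{-\frac{j2\pi}{N}(p'-q)(l'-l)},
\end{align*}
which equals $\mathbf{X}[(k'-k)_{{}_{M}},(l'-l)_{{}_{N}}]\, e^{\frac{j2\pi}{N}(l'-l)q}$. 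This is exactly the quasi-periodic factor in the statement.

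Multiplying by the factor $e^{\frac{j2\pi}{MN}l(k'-k)}$ pulled out earlier completes the identity. Equivalently, this is the DZT quasi-periodicity relation $\mathbf{X}[k+qM,l]=e^{\frac{j2\pi}{N}ql}\mathbf{X}[k,l]$ applied to reduce $k'-k$ to the fundamental domain. As a consistency check, I would verify that the right-hand side is unchanged when $k'$ or $l'$ is shifted by $M$ or $N$ respectively. This confirms the formula is well defined on quasi-periodic arrays and does not depend on the chosen representatives.
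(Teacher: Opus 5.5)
Your proposal is correct. Substituting Definition~\ref{def:heis_op} into Definition~\ref{def:dzt}, absorbing $e^{\frac{j2\pi}{N}pl}$ into the DZT kernel, and reindexing $p\mapsto p+\lfloor\frac{k'-k}{M}\rfloor$ produces exactly the two phase factors in the statement, with signs consistent with the paper's convention $\mathbf{X}[k+qM,l]=e^{\frac{j2\pi}{N}ql}\mathbf{X}[k,l]$.

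The paper gives no proof of this lemma; it only cites \cite{Mehrotra2025_EURASIP}. So there is no in-paper argument to compare against, and your direct computation is the natural self-contained verification.

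One small remark: your calculation never uses $k'\in\mathbb{Z}_M$ or $l'\in\mathbb{Z}_N$. The identity therefore holds for all integer $(k',l')$ directly, and your closing quasi-periodicity check is a sanity check rather than a step the proof needs.
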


Hence, an equivalent description of the time-domain Heisenberg-Weyl group in Section~\ref{subsec:foundation_hwgroup} is possible in the DD domain. Let $\mathcal{H}_{\mathsf{DD}}$ denote the Hilbert space of all complex-valued $M \times N$ quasi-periodic arrays formed by the DZT of $MN$-periodic sequences in $\mathcal{H}$, and let $\mathcal{D}_{(k,l)}^{\mathsf{DD}}$ denote the delay-Doppler shift operator in Lemma~\ref{lmm:heis_op_dzt}, which is the DD equivalent of the Heisenberg operator from Definition~\ref{def:heis_op}. The DD equivalent of the time-domain Heisenberg-Weyl group from Theorem~\ref{thm:hw_group} may be defined as follows.

\begin{theorem}[\cite{Mehrotra2025_EURASIP}]
    \label{thm:hw_group_dd}
    Let $\mathcal{H}_{MN}^{\mathsf{DD}}$ denote the collection of phase-scaled DD Heisenberg operators $\mathcal{D}_{(k,l)}^{\mathsf{DD}}$:
    \begin{align*}
        \mathcal{H}_{MN}^{\mathsf{DD}} &= \big\{e^{\frac{j2\pi}{MN}m} \mathcal{D}_{(k,l)}^{\mathsf{DD}} \big| k,l,m \in \mathbb{Z}_{MN} \big\}. 
    \end{align*}

    The set $\mathcal{H}_{MN}^{\mathsf{DD}}$ under operator composition $\circ$ forms the Heisenberg-Weyl group in the delay-Doppler domain, which is unitarily equivalent to the Heisenberg-Weyl group representation $\mathcal{H}_{MN}$ in Theorem~\ref{thm:hw_group} over the Hilbert space of $MN$-periodic sequences.
\end{theorem}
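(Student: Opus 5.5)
The statement to prove is Theorem~\ref{thm:hw_group_dd}: the phase-scaled DD Heisenberg operators form a group that is unitarily equivalent to $\mathcal{H}_{MN}$. The plan is to transport the time-domain result through the DZT, which avoids re-verifying closure, identity and inverses directly in the DD domain.

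\textbf{Step 1: the unitary intertwiner.} By Lemma~\ref{lmm:dzt_unitary}, the DZT $\mathcal{Z}:\mathcal{H}\to\mathcal{H}_{\mathsf{DD}}$ preserves inner products. It is also onto, since $\mathcal{H}_{\mathsf{DD}}$ is defined as its image (equivalently, it maps the orthonormal basis $\mathbf{v}_{r,s}$ to the orthonormal basis $\mathbf{V}_{r,s}$). Hence $\mathcal{Z}$ is a unitary isomorphism and $\mathcal{Z}^{-1}$ exists.

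\textbf{Step 2: identify the DD operators.} By Lemma~\ref{lmm:heis_op_dzt}, $\mathcal{D}^{\mathsf{DD}}_{(k,l)}=\mathcal{Z}\circ\mathcal{D}_{(k,l)}\circ\mathcal{Z}^{-1}$. Scalars pass through conjugation, so
\begin{align*}
e^{\frac{j2\pi}{MN}m}\mathcal{D}^{\mathsf{DD}}_{(k,l)}=\mathcal{Z}\circ\big(e^{\frac{j2\pi}{MN}m}\mathcal{D}_{(k,l)}\big)\circ\mathcal{Z}^{-1}.
\end{align*}
This shows $\mathcal{H}^{\mathsf{DD}}_{MN}=\Phi(\mathcal{H}_{MN})$, where $\Phi(g)=\mathcal{Z}\circ g\circ\mathcal{Z}^{-1}$.

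\textbf{Step 3: transfer the group structure.} The map $\Phi$ is a homomorphism, since $\Phi(g_1\circ g_2)=\mathcal{Z} g_1\mathcal{Z}^{-1}\mathcal{Z} g_2\mathcal{Z}^{-1}=\Phi(g_1)\circ\Phi(g_2)$. It is injective because $\mathcal{Z}$ is invertible. Theorem~\ref{thm:hw_group} says $\mathcal{H}_{MN}$ is a group under composition, so its image $\mathcal{H}^{\mathsf{DD}}_{MN}$ is closed under composition. It contains $\Phi(\mathrm{id})=\mathrm{id}$, and it contains inverses via $\Phi(g)^{-1}=\Phi(g^{-1})$. Therefore $\mathcal{H}^{\mathsf{DD}}_{MN}$ is a group isomorphic to $\mathcal{H}_{MN}$. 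Because the isomorphism is implemented by conjugation with the unitary $\mathcal{Z}$, the two representations are unitarily equivalent, which is exactly the claim.

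\textbf{Main obstacle: well-definedness.} The formula in Lemma~\ref{lmm:heis_op_dzt} evaluates $\mathbf{X}$ at reduced indices with an extra quasi-periodic phase. One must confirm that the resulting array is again quasi-periodic, i.e.\ that each $\mathcal{D}^{\mathsf{DD}}_{(k,l)}$ maps $\mathcal{H}_{\mathsf{DD}}$ into itself. The conjugation identity in Step 2 settles this without extra work: if $\mathbf{X}=\mathcal{Z}\mathbf{x}$, then $\mathcal{D}^{\mathsf{DD}}_{(k,l)}\mathbf{X}=\mathcal{Z}(\mathcal{D}_{(k,l)}\mathbf{x})$, and this lies in $\mathcal{H}_{\mathsf{DD}}$ because $\mathcal{D}_{(k,l)}\mathbf{x}\in\mathcal{H}$.

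A further subtlety is the index range. In $\mathcal{H}^{\mathsf{DD}}_{MN}$, the shifts $k,l$ run over $\mathbb{Z}_{MN}$, not $\mathbb{Z}_M\times\mathbb{Z}_N$. So distinct $(k,l)$ must not be collapsed, and the phase factors $e^{\frac{j2\pi}{N}(l'-l)\lfloor\cdot\rfloor}$ must be tracked as part of the operator. The conjugation argument handles this automatically, since it transports every element of $\mathcal{H}_{MN}$ individually and never relies on the explicit DD formula.
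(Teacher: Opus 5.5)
Your argument is correct and follows the paper's own route. The paper states this theorem without a separate proof and relies on the unitarity of the DZT (Lemma~\ref{lmm:dzt_unitary}) together with the conjugation identity of Lemma~\ref{lmm:heis_op_dzt} to carry the group structure of Theorem~\ref{thm:hw_group} over to the DD domain. Your explicit map $\Phi(g)=\mathcal{Z}\circ g\circ\mathcal{Z}^{-1}$, and your observation that conjugation automatically gives well-definedness on $\mathcal{H}_{\mathsf{DD}}$ and for $k,l\in\mathbb{Z}_{MN}$, simply make that implicit step precise.
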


The above discussion has assumed Nyquist rate sampling of all signals, as indicated in~\eqref{eq:prelim2} in the time domain. In continuous time, the DZT is replaced by the \emph{continuous Zak transform}. Below, we describe the continuous DD domain processing underlying the above discussion.

\subsection{Introduction to Zak-OTFS}
\label{subsec:zak_intro_to_zak}

Zak-OTFS~\cite{bitspaper1, bitspaper2} is a two-dimensional DD modulation scheme determined by two parameters: the delay period $\tau_p$ and the Doppler period $\nu_p$, with $\nu_p\tau_p = 1$. Figure~\ref{fig:sec3fig1} shows the realization of Zak-OTFS. Information symbols are mounted in DD domain using pulses. A DD pulse is a quasi-periodic localized function defined by the delay period $\tau_p$ and the Doppler period $\nu_p=\nicefrac{1}{\tau_p}$. The fundamental period in the DD domain is defined as:
\begin{align}
    \{(\tau,\nu): 0\leq\tau<\tau_p, 0\leq\nu<\nu_p\},
\end{align}
where $\tau$ and $\nu$ represent the delay and Doppler variables, respectively. The DD-domain pulse has spread $\nicefrac{1}{B}$ in the delay domain and $\nicefrac{1}{T}$ in the Doppler domain. The time domain realization of the DD domain pulse is obtained via the inverse-time Zak transform~\cite{otfs_book} ($\mathcal{Z}_t^{-1}$). In the time domain (TD), the carrier waveform consists of a pulse train modulated by a tone, hence referred to as the ``pulsone''. Similarly, computing the inverse frequency Zak transform\footnote{Inverse frequency-Zak transform of a DD function $a(\tau,\nu)$ is defined as $\mathcal{Z}_{f}^{-1}(a(\tau,\nu)) = \sqrt{\nu_{\mathrm p}} \int_0^{\tau_{\mathrm p}} a(\tau,f)e^{-j2\pi f\tau} d\tau$.}~\cite{otfs_book} ($\mathcal{Z}_f^{-1}$) of the DD domain pulse also results in a pulsone in the frequency domain (FD). When the delay period is selected to be very large, the Zak-OTFS system becomes equivalent to a time division multiplexing (TDM) system and when the Doppler period is selected to be very large, the Zak-OTFS system becomes equivalent to a frequency division multiplexing (FDM) system. As shown in the right corner of Fig.~\ref{fig:sec3fig1}, the Zak-OTFS system can interpolate between TDM and FDM via the choice of $\tau_p$ and $\nu_p$.

The Zak-OTFS grid consists of $M$ delay bins and $N$ Doppler bins. Figure~\ref{fig:sec3fig2} shows the DD grid. Bins along the delay are spaced $\nicefrac{\tau_p}{M}$ apart and bins along Doppler are spaced $\nicefrac{\nu_p}{N}$ apart. The frame occupies bandwidth $B = 1/\nicefrac{\tau_p}{M} = M\nu_p$ and time $T = 1/\nicefrac{\nu_p}{N} = N\tau_p$. Next, we derive and describe the system model for Zak-OTFS.

The fundamental period consisting of $M$ delay bins and $N$ Doppler bins can be expressed as:
\begin{align}
    \mathcal{D}_0 = \bigg\{\bigg(k\frac{\tau_{{\mathrm p}}}{M},l\frac{\nu_{{\mathrm p}}}{N}\bigg) \bigg\vert \ k=0,\ldots,M-1,l=0,\ldots,N-1\bigg\}.
\end{align}
In each frame, $MN$ information symbols drawn from a modulation alphabet ${\mathbb A}$ (e.g., 4-QAM), $\mathbf{x}[k,l]\in {\mathbb A}$, $k=0,\ldots,M-1$, $l=0,\ldots,N-1$, are multiplexed in the DD domain. The $\mathbf{x}[k,l]$s are encoded according to the following equation to obtain a quasi-periodic extension of the signal in the discrete DD domain:
\begin{equation}
\mathbf{x}_{{}_\mathrm{dd}}[k+nM,l+mN] = \mathbf{x}[k,l]e^{j2\pi n \frac{l}{N}}, \ n,m\in\mathbb{Z}.
\label{eq:quasi_per}
\end{equation}
These discrete DD domain signals $\mathbf{x}_{{}_\mathrm{dd}}[k,l]$s are supported on the information lattice given by:
\begin{align}
    \Lambda_{\mathrm{dd}}=
\bigg\{\bigg(k\frac{\tau_{\mathrm p}}{M},l\frac{\nu_{\mathrm p}}{N}\bigg) \bigg\vert \ k,l\in \mathbb{Z}\big\}.    
\end{align}
The continuous DD domain information signal is given by:
\begin{equation}
x_{{}_\mathrm{dd}}(\tau,\nu)=\sum_{k,l\in \mathbb{Z}} \mathbf{x}_{{}_\mathrm{dd}}[k,l] \delta\Big(\tau-\frac{k\tau_{\mathrm p}}{M}\Big)\delta\Big(\nu-\frac{l\nu_{\mathrm p}}{N}\Big),
\end{equation}
where $\delta(\cdot)$ denotes the Dirac delta impulse function. For any $n,m\in \mathbb{Z}$, we have
$x_{{}_\mathrm{dd}}(\tau+n\tau_{\mathrm{p}},\nu+m\nu_{\mathrm{p}})=e^{j2\pi n\nu \tau_{\mathrm{p}}}x_{{}_\mathrm{dd}}(\tau,\nu)$,
so that $x_{{}_\mathrm{dd}}(\tau,\nu)$ is periodic with period $\nu_{\mathrm p}$ along the Doppler axis and quasi-periodic with period $\tau_{\mathrm p}$ along the delay axis.

Since the TD pulsones are not bounded in time and require infinite bandwidth (due to localization of pulses), the Zak-OTFS system using just the pulsones is not practical. The transmit signal needs to be filtered to restrict the bandwidth and the duration of the TD signal. This is achieved via pulse shaping. The DD domain transmit signal $x_{{}_\mathrm{dd}}^{w_{\mathrm{tx}}}(\tau,\nu)$ is given by the twisted convolution of the transmit pulse shaping filter $w_{{}_\mathrm{tx}}(\tau,\nu)$ with $x_{{}_\mathrm{dd}}(\tau,\nu)$ as:
\begin{align}
    x_{{}_\mathrm{dd}}^{w_{\mathrm{tx}}}(\tau,\nu) = w_{{}_\mathrm{tx}}(\tau,\nu)*_{\sigma}x_{{}_\mathrm{dd}}(\tau,\nu),    
\end{align}
where $*_{\sigma}$ denotes the twisted convolution\footnote{Twisted convolution of two DD functions $a(\tau,\nu)$ and $b(\tau,\nu)$ is defined as $a(\tau,\nu) \ast_\sigma b(\tau,\nu) = \iint a(\tau', \nu') b(\tau-\tau',\nu-\nu')e^{j2\pi\nu'(\tau-\tau')}d\tau'  d\nu'$.}. The transmitted TD signal $s_{{}_\mathrm{td}}(t)$ is the TD realization of $x_{{}_\mathrm{dd}}^{w_{\mathrm{tx}}}(\tau,\nu)$ obtained via the inverse time Zak transform\footnote{Inverse time-Zak transform of a DD function $a(\tau, \nu)$ is given by $\mathcal{Z}_t^{-1}(a(\tau, \nu)) = \sqrt{\tau_\mathrm{p}}\int_0^{\nu_{\mathrm{p}}}a(t, \nu) d\nu$}~\cite{otfs_book} given by
\begin{align}
    s_{{}_\mathrm{td}}(t)=Z_{t}^{-1}\left(x_{{}_\mathrm{dd}}^{w_{\mathrm{tx}}}(\tau,\nu)\right).
\end{align} 
The transmit pulse shaping filter $w_{{}_\mathrm{tx}}(\tau,\nu)$ limits the time and bandwidth of the transmitted signal $s_{\mathrm{td}}(t)$. The transmit signal $s_{{}_\mathrm{td}}(t)$ passes through a doubly-selective channel, resulting in the output signal $r_{{}_\mathrm{td}}(t)$ given by
\begin{align}
    r_{{}_\mathrm{td}}(t) = \iint h_{\mathrm{phy}}(\tau,\nu)s_{{}_{\mathrm{td}}}(t-\tau)e^{j2\pi\nu(t-\tau)} d\tau d\nu,
\end{align}
where $h_{\mathrm{phy}}(\tau,\nu)$ is the DD domain impulse response of the physical channel given by
\begin{align}
    \label{eq:hphy}
    h_{\mathrm{phy}}(\tau,\nu)=\sum_{i=1}^{P}h_{i}\delta(\tau-\tau_{i})\delta(\nu-\nu_{i}),
\end{align}
where $P$ denotes the number of (resolvable) DD paths, and the $i$th path has gain $h_{i}$, delay shift $\tau_{i}$, and Doppler shift $\nu_{i}$. 

The received TD signal $y(t)$ at the receiver is given by 
\begin{align}
    y(t)=r_{{}_\mathrm{td}}(t)+n(t)
\end{align}
where $n(t)$ is AWGN with variance $N_{0}$, i.e., $\mathbb{E}[n(t)n(t+t')]=N_{0}\delta(t')$. The TD signal $y(t)$ is converted to the corresponding DD domain signal $y_{\mathrm{dd}}(\tau,\nu)$ by applying Zak transform\footnote{Zak transform of a continuous TD signal $a(t)$ is defined as
$\mathcal{Z}_t\left(a(t)\right) = \sqrt{\tau_p} \sum_{k \in \mathbb{Z}} a(\tau + k \tau_{\mathrm p}) e^{-j2\pi\nu k\tau_{\mathrm p}}$.}, i.e.,
\begin{eqnarray}
y_{{}_\mathrm{dd}}(\tau,\nu) = \mathcal{Z}_{t}(y(t)) 
= r_{{}_\mathrm{dd}}(\tau,\nu)+n_{{}_\mathrm{dd}}(\tau,\nu),
\end{eqnarray}
where 
\begin{align}
    r_{{}_\mathrm{dd}}(\tau,\nu)=h_{\mathrm{phy}}(\tau,\nu)*_{\sigma}w_{{}_\mathrm{tx}}(\tau,\nu)*_{\sigma}x_{{}_\mathrm{dd}}(\tau,\nu)
\end{align}
is the Zak transform of $r_{{}_\mathrm{td}}(t)$, given by the twisted convolution of $x_{{}_\mathrm{dd}}(\tau,\nu)$, $w_{{}_\mathrm{tx}}(\tau,\nu)$, and $h_{\mathrm{phy}}(\tau,\nu)$,  and $n_{{}_\mathrm{dd}}(\tau,\nu)$ is the Zak transform of $n(t)$. The receive filter is matched to the transmit pulse, i.e., $w_{{}_\mathrm{rx}}(\tau,\nu) = w_{{}_\mathrm{tx}}^*(-\tau,-\nu)e^{j2\pi\nu\tau}$~\cite{Calderbank2025_isac} and acts on $y_{{}_\mathrm{dd}}(\tau,\nu)$ through twisted convolution to give the output 
\begin{align}
    y_{{}_\mathrm{dd}}^{w_{\mathrm{rx}}}(\tau,\nu) &= w_{{}_\mathrm{rx}}(\tau,\nu)*_{\sigma}y_{{}_\mathrm{dd}}(\tau,\nu) \nonumber \\ 
      &= \underbrace{w_{{}_\mathrm{rx}}(\tau,\nu)*_{\sigma}h_{\mathrm{phy}}(\tau,\nu)*_{\sigma}w_{{}_\mathrm{tx}}(\tau,\nu)}_{\ \mathbf{h}_{\mathrm{eff}}(\tau,\nu)}*_{\sigma}x_{{}_\mathrm{dd}}(\tau,\nu) \nonumber \\ 
    & ~~~~+ \underbrace{w_{{}_\mathrm{rx}}(\tau,\nu)*_{\sigma}n_{{}_\mathrm{dd}}(\tau,\nu)}_{\ n_{\mathrm{dd}}^{w_{\mathrm{rx}}}(\tau,\nu)}, 
    \label{eq:sec3eq19}
\end{align}
where $\mathbf{h}_{\mathrm{eff}}(\tau,\nu)$ denotes the ``effective channel'' consisting of the twisted convolution of $w_{{}_\mathrm{tx}}(\tau,\nu),\ h_{\mathrm{phy}}(\tau,\nu)$, and $w_{{}_\mathrm{rx}}(\tau,\nu)$, and $n_{{}_\mathrm{dd}}^{w_{\mathrm{rx}}}(\tau,\nu)$ denotes the noise filtered through the receive filter. The DD signal $y_{{}_\mathrm{dd}}^{w_{\mathrm{rx}}}(\tau,\nu)$ is sampled on the information grid $\Lambda_{\mathrm{dd}}$, resulting in the discrete quasi-periodic DD domain received signal $y_{{}_\mathrm{dd}}[k,l]$ as
\begin{align}
    \mathbf{y}_{{}_\mathrm{dd}}[k,l]&= y_{{}_\mathrm{dd}}^{w_{\mathrm{rx}}}\left(\tau=\frac{k\tau_{\mathrm p}}{M},\nu=\frac{l\nu_{\mathrm p}}{N}\right) \nonumber \\
    &= \sum_{k',l'\in\mathbb{Z}} \mathbf{h}_{\mathrm{eff}}[k-k',l-l']\mathbf{x}_{{}_\mathrm{dd}}[k',l'] e^{j2\pi\frac{k'(l-l')}{MN}} \nonumber \\
    &~~~~~+ \mathbf{n}_{{}_\mathrm{dd}}[k,l].
    \label{eq:sec3eq20}
\end{align}
Hence, the $\mathbf{y}_{{}_\mathrm{dd}}[k,l]$ samples are given by:
\begin{align}
    \mathbf{y}_{{}_\mathrm{dd}}[k,l]=\mathbf{h}_{\mathrm{eff}}[k,l]*_{\sigma_\mathrm{d}}\mathbf{x}_{{}_\mathrm{dd}}[k,l]+\mathbf{n}_{{}_\mathrm{dd}}[k,l],
    \label{eq:sec3eq21}
\end{align}
where $*_{\sigma_\mathrm{d}}$ is twisted convolution in discrete DD domain, which can be evaluated as:
\begin{align}
    \mathbf{h}_{\mathrm{eff}}[k,l] &*_{\sigma_\mathrm{d}}\mathbf{x}_{{}_\mathrm{dd}}[k,l] = \nonumber \\&\sum_{k',l'\in\mathbb{Z}}\mathbf{h}_{\mathrm{eff}}[k-k',l-l']\mathbf{x}_{{}_\mathrm{dd}}[k',l'] e^{j2\pi\frac{k'(l-l')}{MN}},
    \label{eq:sec3eq22}
\end{align}
where the effective channel filter $\mathbf{h}_{\mathrm{eff}}[k,l]$ and filtered noise samples $n_{{}_\mathrm{dd}}[k,l]$ are respectively given by
\begin{align}
    \mathbf{h}_{\mathrm{eff}}[k,l]&=h_{\mathrm{eff}}\left(\tau=\frac{k\tau_{p}}{M},\nu=\frac{l\nu_{p}}{N}\right), \label{discr2} \\ 
    \mathbf{n}_{{}_\mathrm{dd}}[k,l]&=n_{{}_\mathrm{dd}}^{w_{\mathrm{rx}}}\left(\tau=\frac{k\tau_{p}}{M},\nu=\frac{l\nu_{p}}{N}\right).
    \label{eq:sec3eq23_24}
\end{align}
Because of the quasi-periodicity in the DD domain, it is sufficient to consider the received samples $y_{{}_\mathrm{dd}}[k,l]$ within the fundamental period $\mathcal{D}_0$. Writing the $y_{{}_\mathrm{dd}}[k,l]$ samples as a vector, the received signal model can be written in matrix-vector form as \cite{bitspaper1, bitspaper2}
\begin{align}
    \mathbf{y}=\mathbf{Hx}+\mathbf{n},
    \label{eq:sec3eq25}
\end{align}
where $\mathbf{x,y,n} \in\mathbb{C}^{MN\times 1}$, such that their $(kN+l+1)$th entries are given by $\mathbf{x}[kN+l+1]=\mathbf{x}_{{}_\mathrm{dd}}[k,l]$, $\mathbf{y}[kN+l+1]=\mathbf{y}_{{}_\mathrm{dd}}[k,l]$, $\mathbf{n}[kN+l+1] = \mathbf{n}{{}_\mathrm{dd}}[k,l]$, and $\mathbf{H}\in\mathbb{C}^{MN\times MN}$ is the channel matrix with
\vspace{-1mm}
\begin{align}
    \mathbf{H}&[k'N+l'+1,kN+l+1] = \nonumber \\
    &\sum_{m,n\in\mathbb{Z}}\mathbf{h}_{\mathrm{eff}}[k'-k-nM, l'-l-mN]e^{j2\pi nl/N} \nonumber \\
    &~~~~~~~~~\times e^{j2\pi\frac{(l'-l-mN)(k+nM)}{MN}},
    \label{eq:sec3eq26}
\end{align}
where $k',k=0,\ldots,M-1$, $l',l=0,\ldots,N-1$. The transceiver operations described above are shown in Fig.~\ref{fig:sec3fig3}

Note that the end-to-end input-output (I/O) relation in \eqref{eq:sec3eq21} is determined completely through \textit{twisted convolutions}. This is different from other modulation schemes where the I/O relations involve linear convolutions and/or element-wise multiplications.

\begin{figure}
    \centering
    \includegraphics[width=0.8\linewidth]{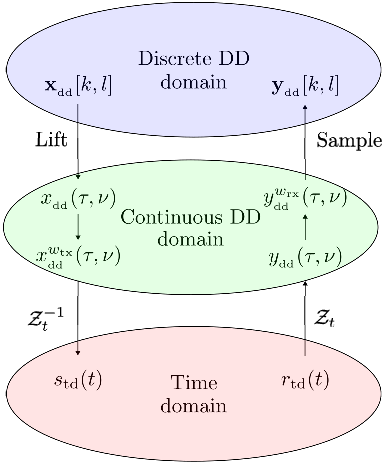}
    \caption{Zak-OTFS transceiver signal processing.}
    \label{fig:sec3fig3}
\end{figure}

\subsection{Crystallization Condition}
\label{subsec:crystallization}

Recall from~\eqref{eq:sec3eq21}, the input and output are related through twisted convolution. $\mathbf{x}_{{}_{\mathrm{dd}}}[k, l]$ is quasi-periodic (see~\eqref{eq:quasi_per}) and so is $\mathbf{y}_{{}_\mathrm{dd}}[k, l]$ because $\mathbf{y}_{{}_\mathrm{dd}}[k, l]$ is given by the twisted convolution of $\mathbf{h}_{\mathrm{eff}}[k,l]$ (which is not quasi-periodic) and $\mathbf{x}_{{}_{\mathrm{dd}}}[k, l]$. At the receiver, we need to ensure that the quasi-periodic repetitions of $\mathbf{y}_{{}_\mathrm{dd}}[k, l]$ are well separated, or in other words, the quasi-periodic repetitions do not alias. When transmitted over a channel, the DD spread of each symbol in $\mathbf{x}_{{}_{\mathrm{dd}}}[k, l]$ is governed by the DD spread in $\mathbf{h}_{\mathrm{eff}}[k,l]$, which is in turn dependent on the corresponding quantities in the physical channel. The condition for non-aliasing is that the delay and Doppler width of each quasi-periodic repetition must be greater than the delay and Doppler spread of the channel\footnote{Here the delay and Doppler spread refer to the spreads of the physical channel and not the effective channel. Hence, this theory is valid for arbitrary pulse shapes.}, respectively~\cite{bitspaper1, bitspaper2}. Let $\tau_{\max}$ and $\nu_{\max}$ respectively denote the maximum delay and Doppler spread in the physical channel, then this condition can be written as:
\begin{align}
    \tau_{\max} < \tau_p \ \mathrm{and} \ 2\vert\nu_{\max}\vert < \nu_p,
    \label{eq:sec3eq27}
\end{align}
where $2\vert\nu_{\max}\vert$ is the Doppler spread of the channel. This condition is referred to as the \textit{crystallization condition}. Since $\tau_p\nu_p = 1$, this implies that $2\tau_{\max}\vert\nu_{\max}\vert < 1$. Channels which satisfy this condition are referred to as underspread channels. Most practical channels are underspread~\cite{tse2005fundamentals}.

\subsection{Spread-Carrier Zak-OTFS}
\label{subsec:zak_spread}

Now, we describe an alternative realization of Zak-OTFS based on ``spread carriers''~\cite{Mehrotra2025_WCLSpread}. The TD representation of a DD pulse at $(k_0, l_0)$ is:
\begin{align}
    \label{eq:pulsone1}
    \mathbf{x}^{(\mathrm{p})}_{(k_0,l_0)}[n] = \frac{1}{\sqrt{N}} \sum_{d \in \mathbb{Z}} e^{\frac{j2\pi}{N} d l_0} \delta[n-k_0-dM],
\end{align}
which is also the expression of the TD pulsone described in Fig.~\ref{fig:sec3fig1}. Here, $n=0, 1, \cdots, MN-1$, $k_0 = 0, 1, \cdots, M-1, l_0 = 0, 1, \cdots, N-1$. Taking the GDAFT per Definition~\ref{def:gdaft} of the TD pulsone, we obtain~\cite{Mehrotra2025_WCLSpread}:
\begin{align}
    \label{eq:cazac1}
    \mathbf{x}^{(\mathrm{c})}_{(k_0,l_0)}[n] = \mathcal{F}_a\mathbf{x}_{p}[n] = &\frac{e^{\frac{j2\pi}{MN} (An^2+Bnk_0 +Ck_0^2)}}{\sqrt{MN}} \epsilon_N \left(\frac{CM}{N}\right)_J \nonumber \\ 
    & \times e^{-\frac{j2\pi}{N} (4CM)_N^{-1} (Bn + l_0 + 2Ck_0)^2},
\end{align}
where the simplification follows from Identity~\ref{idty:gauss_sum}. This is a constant amplitude zero autocorrelation (CAZAC) sequence~\cite{benedetto_cazac}. Information symbols can equivalently be mounted on the CAZAC sequences. Since CAZAC sequences have constant amplitude, the basis element has much better peak-to-average-power (PAPR) than the pulsone basis, which has a peaky structure (see Fig.~\ref{fig:sec3fig1}).  GDAFT is unitary, hence, the regular Zak-OTFS realized using pulsones is unitarily equivalent to the Zak-OTFS realized using the CAZAC sequences (also called spread carrier, since the energy is spread over the entire length of the CAZAC sequence).

\subsection{Over-the-Air (OTA) Demonstrations}
Below we review OTA experiments in sub-6 GHz, mmWave, and sub-THz frequency bands \cite{OTA_1, OTA_2, OTA_3} that explore the practicality of the Zak-OTFS carrier waveform. The same signal processing framework scales across all frequency bands, and the DD-domain channel representation is interpretable and approximately time-invariant, simplifying estimation and equalization. Strong compatibility with SDR platforms and high-frequency frontends demonstrate the practicality of Zak-OTFS. Spread pilot designs improve spectral efficiency and reduce PAPR~\cite{Aug2024paper} while introducing new challenges in power allocation and interference management.
\subsubsection*{Sub-6 GHz (3 GHz)}
Spread-pilot Zak-OTFS enables joint channel estimation and data detection within a single delay-Doppler frame~\cite{OTA_1}. Note that, the approach followed here does not divide DD resources between the pilot and data symbols~\cite{Aug2024paper}. The experiments demonstrate successful recovery of both channel response and data symbols, with iterative processing between channel estimation and data equalization improving the performance of both. The results highlight the critical role of the pilot-to-data power ratio (PDR) in balancing the trade-off between channel prediction accuracy and data detection reliability.

\subsubsection*{mmWave (28 GHz)} 
The mmWave experiments emphasize robustness to hardware impairments including carrier frequency offset (CFO), synchronization error, and phase noise \cite{OTA_2}. Conducted on the COSMOS testbed \cite{cosmos2025}, the results demonstrate that a short Zadoff-Chu preamble provides sufficient coarse synchronization, while the Zak-OTFS input-output relation naturally absorbs CFO and timing effects into the effective delay-Doppler channel---eliminating the need for separate impairment compensation. Root-raised-cosine filtering is shown to provide superior bit error rate (BER) performance relative to sinc filtering under mobility conditions.

\subsubsection*{Sub-THz (140/240 GHz)}
The sub-THz experiments address the most severe impairments with extreme path loss and hardware nonlinearities. The focus is integrated sensing and communications (ISAC) using both point and spread pilots \cite{OTA_3}. Spread pilots enable ISAC within a single delay-Doppler frame while reducing peak-to-average power ratio (PAPR) by more than 5 dB relative to point pilots. The delay-Doppler channel estimate remains dominated by a stable line-of-sight component across frames in the stationary setup.

\subsubsection*{BER Performance Across Frequency Bands\footnote{Note that BER measurements were obtained under different experimental conditions---sub-6 GHz used USRP X310 SDRs at 1 m separation, mmWave used COSMOS testbed at 20 m with mobility, and sub-THz used custom front-ends at 1 m stationary LOS---and path loss characteristics vary significantly across bands.}}
Even though the experimental conditions differ substantially across frequency bands, (including testbed hardware, link distances, mobility, and path loss regimes), the BER results in Fig. \ref{fig:SNR_plot}  consistently reveal the same Zak-OTFS operating principle. At sub-6 GHz (see Fig. \ref{fig:SNR_plot}a), iterative channel recovery improves BER by reducing pilot-data interference, with performance critically dependent on PDR selection. At mmWave frequencies (see Fig. \ref{fig:SNR_plot}b), BER confirms that CFO and synchronization errors are effectively absorbed into the delay-Doppler channel representation, with root-raised-cosine filtering outperforming sinc filtering. Here we use $\beta=0.5$ for the RRC filter which introduces bandwidth expansion, while the sinc filter incurs no bandwidth expansion. Notably, the BER curve exhibits an error floor at high SNR due to ill-conditioned MMSE equalization stemming from inaccurate channel estimation at these frequencies along with other hardware impairments\footnote{Note that the performance here is different from that shown in Fig.~\ref{fig:perfectcsi}. This performance difference arises from various factors such as the difference in signal processing in the practical vs the theoretic implementation, channel observed between the transmit and receive antenna vs the channel model assumed, and so on.}. Similar observations hold in the sub-THz band, where experiments at 140 GHz and 240 GHz suggest that performance degrades with increasing multiplier stages in the front-end. Collectively, these results, obtained across diverse testbeds and propagation environments, demonstrate that Zak-OTFS performance is not fundamentally constrained by carrier frequency; rather, it is governed by channel prediction quality, synchronization accuracy, pilot design effectiveness, and RF front-end implementation in the target band.

\begin{figure}[!t]
    \centering
     \begin{subfigure}{\linewidth}
        \centering
        \includegraphics[width=0.81\linewidth]{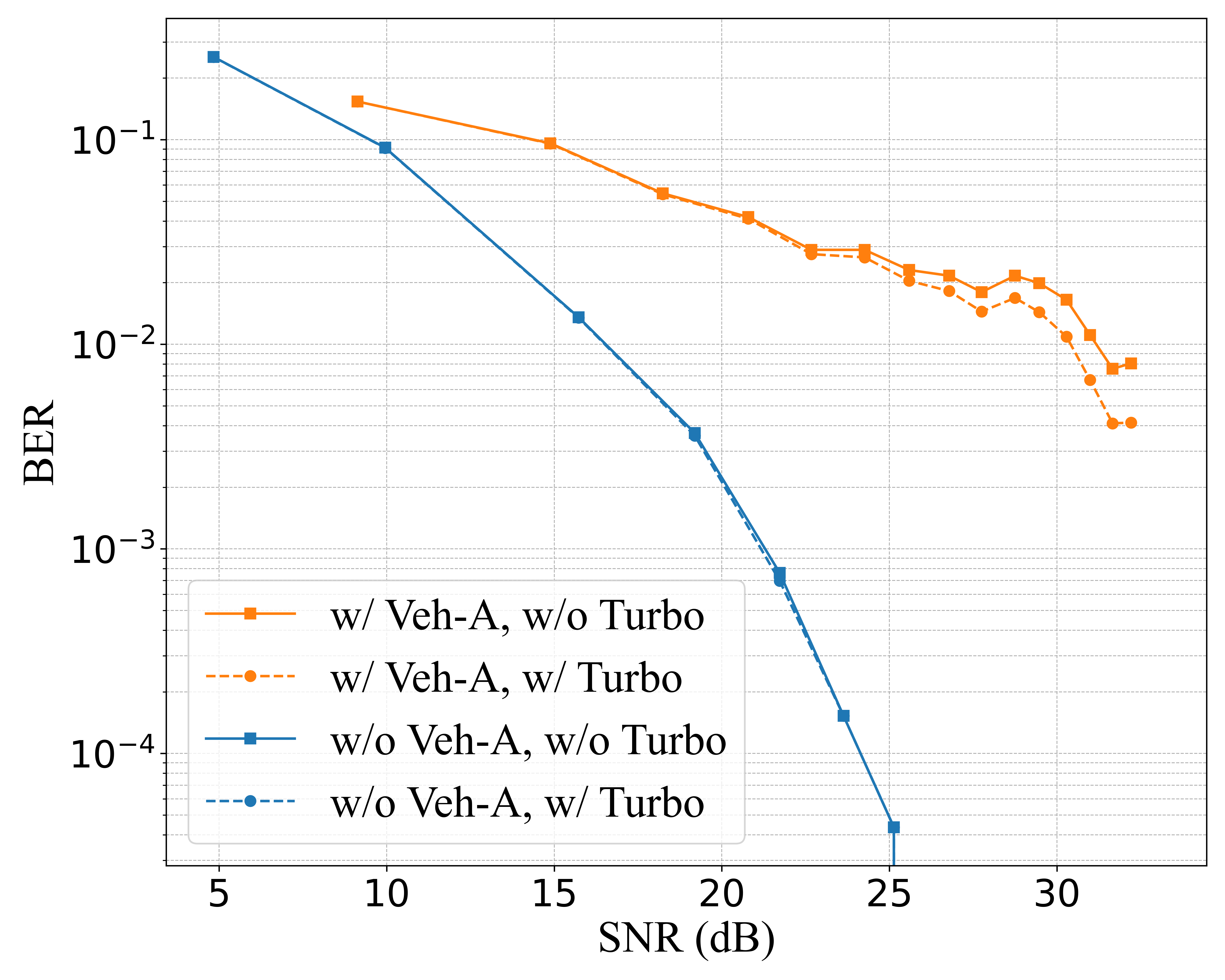}
        \caption{Sub-6 GHz \cite{OTA_1}}
        \label{fig:BER_SNR_4QAM}
    \end{subfigure}
    \\
    \begin{subfigure}{\linewidth}
         \centering
    \includegraphics[width=0.9\linewidth]{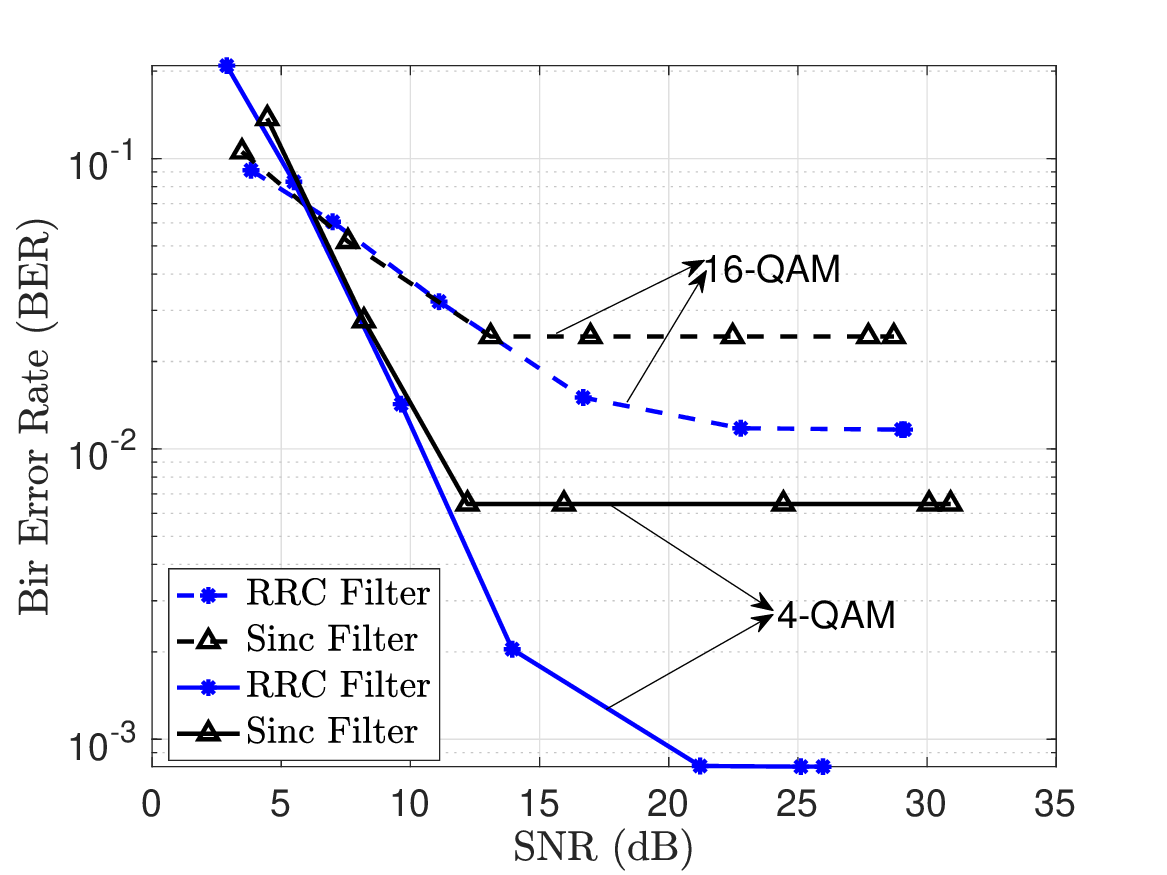}
    \caption{ mmWave \cite{OTA_3}}
    \label{fig:ber}
    \end{subfigure}
     \caption{BER versus SNR at different frequency bands} 
    \label{fig:SNR_plot}
    \vspace{-5mm}
\end{figure}

In this Section, we have described the Zak-OTFS modulation framework and in the following Sections, we consider different aspects of this framework. Once such aspect is the pulse shaping filter used at the transmitter and receiver. In addition to limiting the duration and bandwidth of the transmitted Zak-OTFS frame, the choice of pulse shaping filters also plays an important role in the equalization/channel estimation performance. In the next Section, we take a deep dive into the filters, their characteristics and trade-offs.

\subsection{Open Research Problems}
From a practical implementation perspective, several challenges remain before Zak-OTFS can be widely deployed. A key issue is the development of low-complexity and hardware-efficient receivers capable of real-time processing for large delay-Doppler grids and higher-order modulations. Efficient pilot and power allocation strategies, particularly for spread pilots, must be optimized under realistic constraints such as limited dynamic range and imperfect synchronization. Robust operation under high mobility and rapidly time-varying channels requires improved channel tracking and adaptive frame design. Additionally, tighter integration with RF hardware is needed, including accurate modeling and mitigation of phase noise, nonlinearities, and quantization effects, especially at mmWave and sub-THz frequencies. Finally, system-level challenges such as multi-user access, synchronization overhead reduction, and seamless integration with existing communication standards remain open and critical for practical deployment.

\section{Filter Design}
\label{sec:filter_design}

In Zak-OTFS the filtering is carried out in the DD domain. In the literature of Zak-OTFS, composite filters are employed, i.e., a filter in delay domain multiplied with another filter in Doppler domain. The composite filter employed at the transmitter can be written as:
\begin{align}
    \mathbf{w}(\tau, \nu) = w_{{}_\mathrm{tx}}^{(\tau)}(\tau)w_{{}_\mathrm{tx}}^{(\nu)}(\nu),
\end{align}
where $w_{{}_\mathrm{tx}}^{(\tau)}(\tau)$ and $w_{{}_\mathrm{tx}}^{(\nu)}(\nu)$ are the filters in the delay and Doppler domain, respectively. A pulse shaping filter is characterized by three metrics: $(i)$ localization to enable accurate I/O relation estimation (sensing), $(ii)$ orthogonality on the information lattice to prevent inter-symbol interference, and $(iii)$ no time and bandwidth expansion beyond $B$ and $T$ to achieve full spectral efficiency. A filter simultaneously meeting all three objectives is ideal for both sensing and communication. We describe each metric in detail in the following Subsections.

\subsection{Localization}
A filter that is well localized is ideal for accurate estimation of the I/O relation. Recall that in Zak-OTFS we estimate the effective channel, i.e., $\mathbf{h}_{\mathrm{eff}}[k, l]$. The effective channel is the twisted convolution of the pulse shaping filter at the receiver, the physical channel, and the pulse shaping filter at the transmitter. Since the physical channel introduces fractional delay and Doppler shifts (or off-grid shifts), the received symbols are also shifted by off-grid amounts in delay and Doppler. These symbols are converted to the DD domain and filtered through the pulse shaping filter that is matched to the transmit pulse shaping filter. The output is then sampled on the grid. The on-grid samples do not correspond to the actual shifts experience from the channel and this gives rise to interference. If the filter is extremely well localized, the pulse shape dies down rapidly and the interference resulting from mismatch between the on-grid sampling instants and the off-grid shifts is minimized. It is therefore crucial to have filters that are well localized, especially for accurate estimation of the I/O relation. Examples of filters that are well localized are root raised cosine (RRC)~\cite{bitspaper1,bitspaper2,otfs_book,Calderbank2025_isac,Mohammed2024_pulseshaping,Calderbank2025_interleaved,Gopalam2024_tfwindowing, preamblepaper}, Gaussian~\cite{Mohammed2024_pulseshaping}, Gaussian-sinc~\cite{Chockalingam2025_gs}, isotropic orthogonal transform algorithm (IOTA)~\cite{Mehrotra2025_IOTA}, and Hermite~\cite{Chockalingam2025_hermite}.

\subsection{Orthogonality}
At the transmitter, information symbols are filtered through the transmit pulse shaping filter. After the filtering operation, it must be ensured that all the information symbols are orthogonal and no inter-symbol interference (ISI) or inter-carrier interference (ICI) is introduced. This is possible if the filter is designed to be orthogonal when shifted by the resolution i.e., $\nicefrac{\tau_p}{M}$ along delay axis and $\nicefrac{\nu_p}{N}$ along Doppler axis. This orthogonality is crucial to obtaining good bit-error performance as the information symbols do not interfere with one another. Examples of filters that are orthogonal are sinc~\cite{bitspaper1,bitspaper2,otfs_book,Calderbank2025_isac,Mohammed2024_pulseshaping}, RRC, and IOTA.

\subsection{Time and/or Bandwidth Expansion}
The role of the pulse shaping filters is to ensure that the bandwidth and time are limited. When the bandwidth $B$ and time $T$ are limited, the Zak-OTFS system can transmit $MN=BT$ number of information symbols. In some filter designs there maybe a trade-off between the localization along delay (Doppler) and the bandwidth (time) occupied by the pulse. That is, the pulse could be made more localized at the cost of increased bandwidth and/or time. Given a bandwidth $B$ and time $T$, this would either mean that the pulse is not localized or that $MN$ information symbols cannot be transmitted. RRC is an example of a filter that requires time and/or bandwidth expansion for good localization.

The following pulse shaping filters have been considered in prior work on Zak-OTFS~\cite{bitspaper1,bitspaper2,otfs_book,Calderbank2025_isac,Mohammed2024_pulseshaping,Calderbank2025_interleaved,Gopalam2024_tfwindowing,Chockalingam2025_gs, preamblepaper}.

\subsubsection{Sinc}

The sinc filter is given by:
\begin{align}
    \label{eq:sinc1}
    \mathbf{w}(\tau,\nu) &= \sqrt{BT}~\text{sinc}(B\tau)~\text{sinc}(T\nu).
\end{align}

\subsubsection{RRC}

The RRC filter is given by:
\begin{align}
    \label{eq:rrc1}
    \mathbf{w}(\tau,\nu) &= \sqrt{BT}~\text{rrc}_{\beta_\tau}(B\tau)~\text{rrc}_{\beta_\nu}(T\nu),
\end{align}
where $0 \leq \beta_\tau,\beta_\nu \leq 1$ and:
\begin{align}
    \label{eq:rrc2}
    \text{rrc}_{\beta}(x) &= \frac{\sin(\pi x (1-\beta)) + 4\beta x\cos(\pi x (1+\beta))}{\pi x (1-(4\beta x)^2)}.
\end{align}

When $\beta_\tau = \beta_\nu = 0$, the RRC filter specializes to the sinc filter. However, when $\beta_\tau, \beta_\nu > 0$, there is bandwidth and time expansion beyond $B$ and $T$ to $(1+\beta_\tau) B$ and $(1+\beta_\nu) T$.

\subsubsection{Gaussian}

The Gaussian filter is given by:
\begin{align}
    \label{eq:gauss1}
    \mathbf{w}(\tau,\nu) &= \sqrt{BT}\bigg(\frac{4\alpha_\tau\alpha_\nu}{\pi^2}\bigg)^{\nicefrac{1}{4}} e^{-\big[\alpha_\tau(B\tau)^2+\alpha_\nu(T\nu)^2\big]}.
\end{align}

When $\alpha_\tau = \alpha_\nu = 1.584$, there is no bandwidth and time expansion beyond $B$ and $T$.

\subsubsection{Gaussian-sinc}

The Gaussian-sinc filter is given by:
\begin{align}
    \label{eq:gs1}
    \mathbf{w}(\tau,\nu) &= \sqrt{BT}~\Omega_\tau\Omega_\nu~\text{sinc}(B\tau)~\text{sinc}(T\nu) \nonumber \\ &~~~\times e^{-\big[\alpha_\tau(B\tau)^2+\alpha_\nu(T\nu)^2\big]}.
\end{align}

When $\alpha_\tau = \alpha_\nu = 0.044$ and $\Omega_\tau = \Omega_\nu = 1.0278$, there is no bandwidth and time expansion beyond $B$ and $T$.

\subsubsection{IOTA} 
The IOTA pulse shaping filter is derived by applying IOTA procedure to an extremely well localized pulse shaping filter (called the prototype filter). The purpose of applying the procedure is to orthogonalize the pulse shape while still being very well localized without time and/or bandwidth expansion. There are two IOTA filters based on the choice of the prototype filter - Gaussian IOTA and prolate spheroidal wave function (PSWF) IOTA.

\subsubsection{Hermite}
The Hermite filter is defined as:
\begin{align}
    \mathbf{w}(\tau, \nu) = \sum_{n_1=0}^{N_c-1}\sum_{n_2=0}^{N_c-1} c_{2n_1} \phi_{2n_1}(\sigma_\tau, \tau)d_{2n_2} \phi_{2n_2}(\sigma_\nu, \nu),
\end{align}
where 
\begin{align}
    \phi_n(\sigma, t) = \sqrt{\sigma}\frac{\pi^{-1/4}}{\sqrt{2^n n!}} H_n(\sigma t) e^{-(\sigma t)^2/2},
\end{align}
$H_n(t)$ is the Hermite polynomial given by:
\begin{align}
    H_n(t) = (-1)^n e^{t^2} \frac{d^n}{dt^n} e^{-t^2},
\end{align}
$N_c$ is the number of even-ordered basis functions, and $\sigma_\nu,\sigma_\tau > 0$ is used to control the pulse width.

\begin{table}
    \centering
    \caption{Comparison of different DD pulse shaping filters.}
    {
    \setlength{\tabcolsep}{2.25pt}
    \renewcommand{\arraystretch}{1.25}
    \begin{tabular}{|>{\centering\arraybackslash}p{2.75cm}|c|c|c|}
         \hline
         Filter & Sidelobe Level & Orthogonal & Time/BW Limited \\ 
         \hline
         Sinc~\cite{bitspaper1,bitspaper2,otfs_book,Calderbank2025_isac,Mohammed2024_pulseshaping} & High & \checkmark & \checkmark \\ 
         RRC~\cite{bitspaper1,bitspaper2,otfs_book,Calderbank2025_isac,Mohammed2024_pulseshaping,Calderbank2025_interleaved,Gopalam2024_tfwindowing, preamblepaper} & Low & \checkmark & $\times$ \\ 
         Gaussian~\cite{Mohammed2024_pulseshaping} & None & $\times$ & \checkmark \\ 
         Gaussian-sinc~\cite{Chockalingam2025_gs} & Low & $\times$ & \checkmark \\
         IOTA~\cite{Mehrotra2025_IOTA} & Low & \checkmark & \checkmark \\
         Hermite~\cite{Chockalingam2025_hermite} & Low & $\times$ & \checkmark \\
         \hline
    \end{tabular}
    }
    \label{tab:filt_comp}
\end{table}

\begin{figure*}
    \centering
    \begin{subfigure}{0.49\linewidth}
    \includegraphics[width=\textwidth]{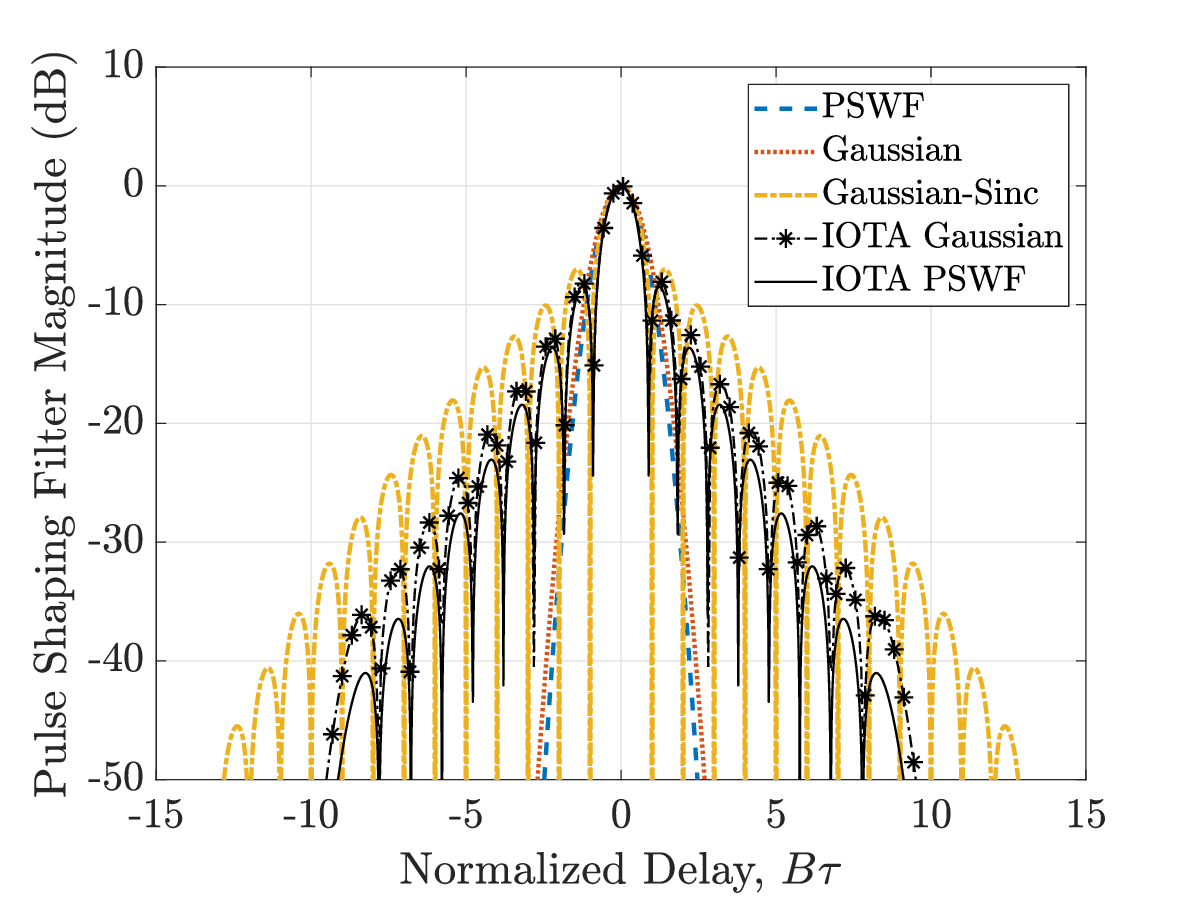}
    \caption{Pulse shaping filter magnitude vs normalized delay.}
        \label{fig:pulse_vs_del}
    \end{subfigure}
    \begin{subfigure}{0.49\linewidth}
        \includegraphics[width=\textwidth]{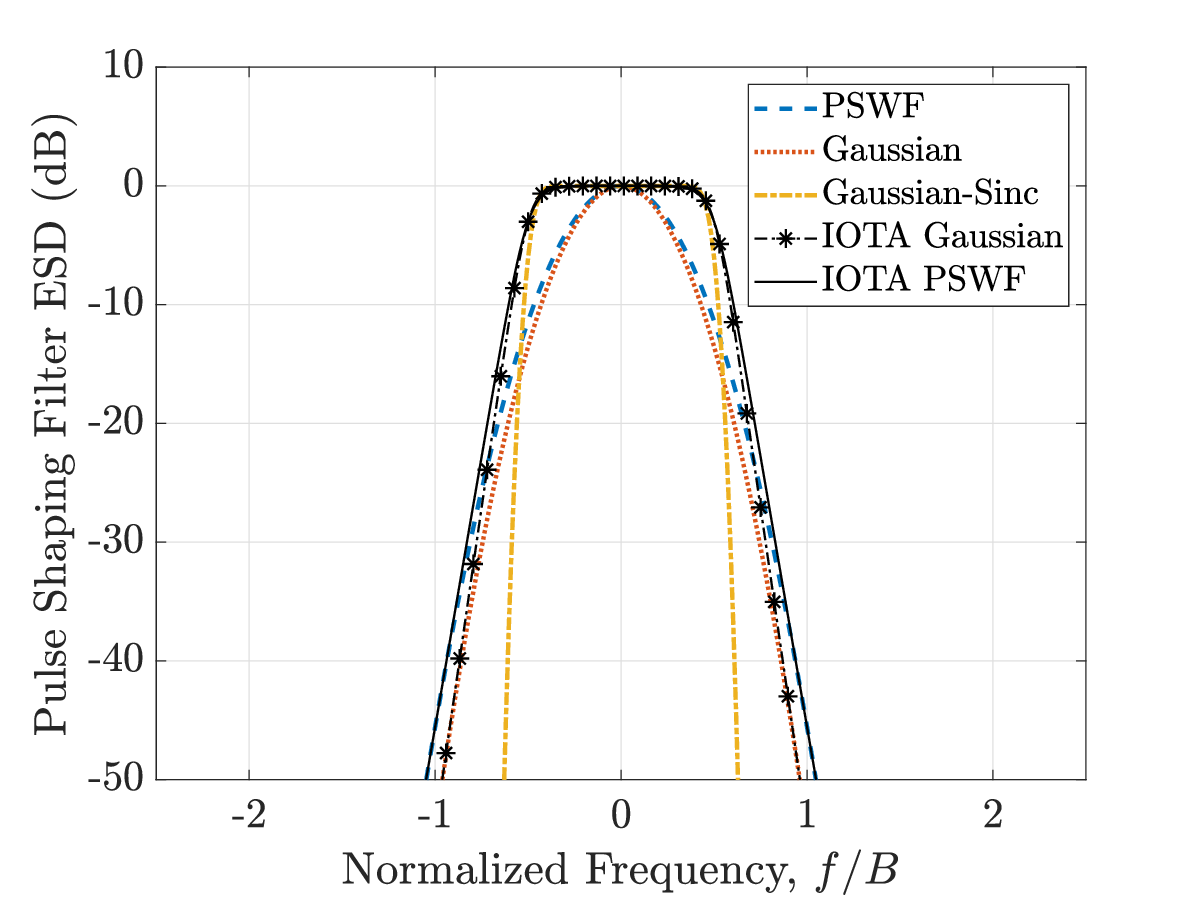}
    \caption{Energy spectral density (ESD) vs normalized frequency.}
        \label{fig:fft_vs_freq}
    \end{subfigure}


    
    \caption{Pulse shaping filter variation and energy concentration. (a): The proposed IOTA pulse shapes have a sharper main lobe and lower sidelobes compared to the Gaussian-sinc filter~\cite{Chockalingam2025_gs}. Moreover, the IOTA PSWF pulse has smaller main lobe and sidelobes compared to the IOTA Gaussian filter. (b): The IOTA pulse shapes have flat energy spectral density within $f \in [\nicefrac{-B}{2},\nicefrac{B}{2}]$ and a roll-off similar to their prototype pulses (Gaussian or PSWF). All pulse shapes have $99.99\%$ energy concentration within $f \in [\nicefrac{-B}{2},\nicefrac{B}{2}]$. See~\cite[Fig. 1]{Mehrotra2025_IOTA} for filter variation with Doppler and energy concentration in time.}
    \label{fig:pulse}
\end{figure*}

Table~\ref{tab:filt_comp} compares various pulse shaping filters in the literature in light of the metrics discussed above. It is desirable for the filter to $(i)$ have low sidelobe level which is synonymous with good localization, $(ii)$ be orthogonal, and $(iii)$ be time and bandwidth limited. While sinc is orthogonal and limited in time and bandwidth, it decays slowly and is not localized. RRC is localized only when there is time and/or bandwidth expansion. Gaussian is very well localized but is not orthogonal. Gaussian-sinc has low sidelobe levels and is time and bandwidth limited but is not orthogonal. Filters obtained through IOTA framework are the most ideal filters (they are localized, orthogonal, and limited in time/bandwidth), while Hermite pulses are not orthogonal. Note that simultaneous optimization of all three metrics is limited by the Balian-Low theorem~\cite{benedetto1994differentiation}.

Figure~\ref{fig:pulse} shows the pulse shaping filter magnitude as a function of the normalized delay and the energy spectral density (ESD) as a function of normalized frequency for various filters. We do not show RRC filter as it requires time/bandwidth expansion and Hermite filter since the filter is very similar in sidelobe levels and ESD to Gaussian-sinc filter~\cite{Chockalingam2025_hermite}. From Fig.~\ref{fig:pulse}(\subref{fig:pulse_vs_del}) PSWF and Gaussian are very well localized and have no sidelobes. Gaussian-sinc has more spread and decays more slowly. The IOTA Gaussian and IOTA PSWF are the most localized filters while also being orthogonal. From Fig~\ref{fig:pulse}(\subref{fig:fft_vs_freq}), the ESD of all the filters is approximately limited to one sided bandwidth $B$. 

\begin{figure}
    \centering
    \includegraphics[width=\linewidth]{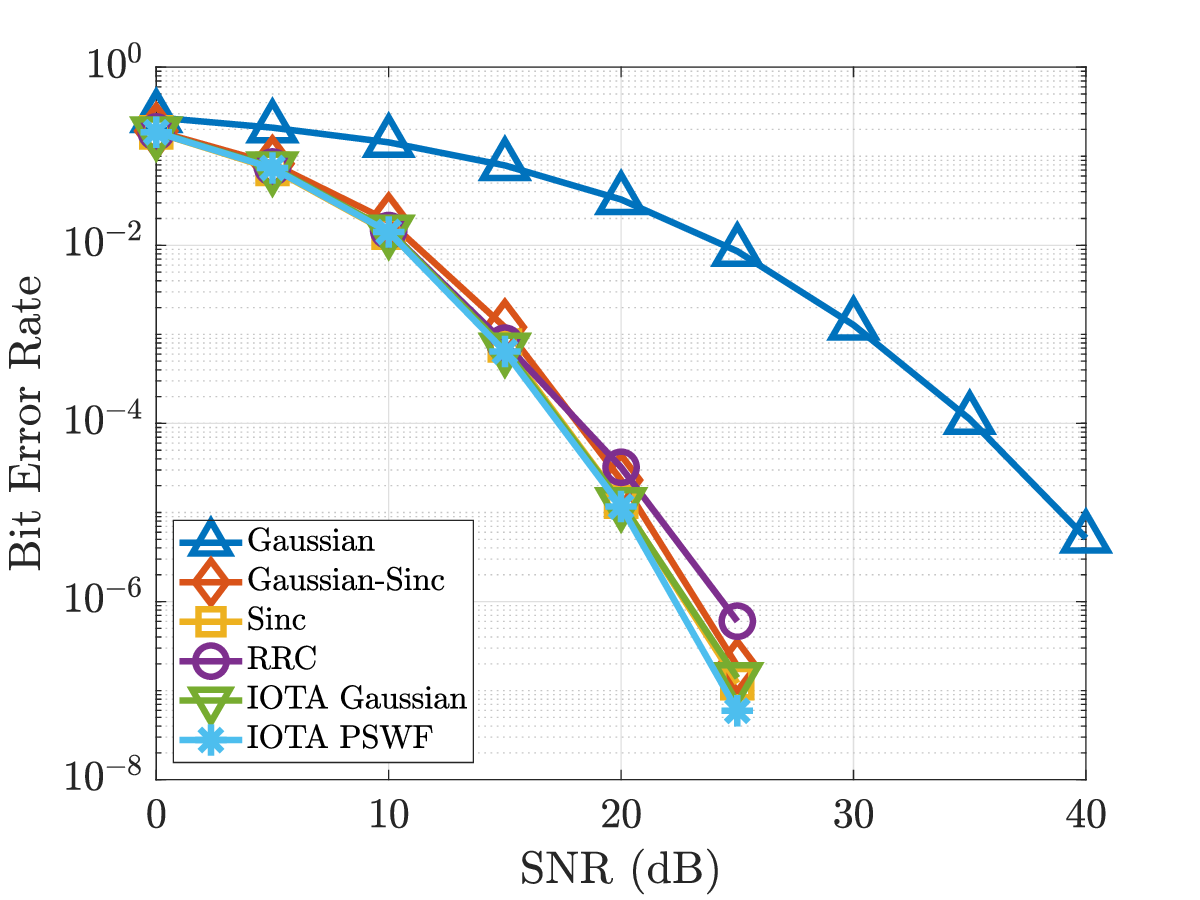}
    \caption{Uncoded $4$-QAM data detection performance with perfect I/O relation knowledge at the receiver.}
    \label{fig:perfectcsi}
\end{figure}

Figure~\ref{fig:perfectcsi} shows the bit-error performance obtained with different pulse shaping filters with perfect I/O knowledge at the receiver. The performance of the Gaussian filter is far inferior compared to all the other filters, owing to its non-orthogonality. The best performance is achieved with IOTA PSWF, IOTA Gaussian, and sinc filters, thanks to their orthogonality. Since Gaussian-sinc is approximately orthogonal, the performance is slightly worse compared to that of sinc. RRC filter follows in performance, however, it is important to note that this performance is obtained with time/bandwidth expansion. 

\subsection{Open Research Problems}
\label{subsec:filter_future}


Our discussion thus far has assumed \emph{separable} filters with the \emph{same functional structure} in delay and Doppler (i.e., $\mathbf{w}(\tau,\nu) = w_{{}_\mathrm{tx}}^{(\tau)}(\tau) w_{{}_\mathrm{tx}}^{(\nu)}(\nu)$ with $w_{{}_\mathrm{tx}}^{(\tau)}(\tau) = w_{{}_\mathrm{tx}}^{(\nu)}(\nu)$). Designing more general filters that are non-separable and have possibly different functional structure in delay and Doppler is an open problem. We have also not addressed the broad question of \emph{optimality} in our discussion -- are there DD filters that meet the Balian-Low theorem and optimally balance the three performance metrics of orthogonality, localization, no time / bandwidth expansion? Furthermore, another question that requires addressing is -- does the choice of the pulse shaping filter affect the predictability of the modulation? Finally, our discussion has been limited to single-antenna links. The question of optimal filter design for multiple-input, multiple-output (MIMO) systems is also open, where it may be possible to choose different pulse shaping filters at different antennas.




\section{Inter-Frame Predictability}
\label{sec:predictability}

\begin{figure}
    \centering
    \includegraphics[width= \linewidth]{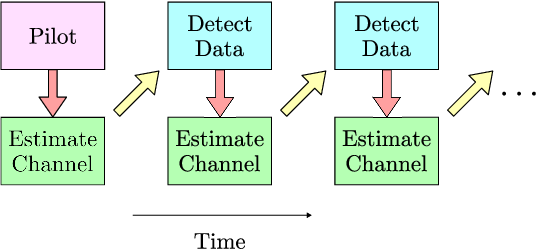}
    \caption{Block diagram of the differential communication scheme.}
    \label{fig:diff_comm_blk_dia}
\end{figure}

Signals from the transmitter undergo multiple reflections from reflectors before reaching the receiver. In the DD domain, these paths are characterized by their delay and Doppler values. The delay of a path is a function of the distance traveled by the path from the transmitter to the receiver and the Doppler of a path is a function of the relative velocity of the reflectors. In an urban environment, these reflectors typically consist of immobile objects like buildings and trees and mobile objects like vehicles. In a short span of time spanning a few frame durations (order of milliseconds), the channel can be assumed to be stationary. This is the basis for inter-frame predictability. Note that, the predictability discussed in the previous section is intra-frame predictability which deals with predictability of the channel within a single frame. 

\subsection{Precoder Design}
\label{subsec:precoder_design}
Inter-frame predictability or the quasi-stationarity of the channel across few frames allows for efficient precoder design. The precoder matrix once generated can be used for multiple frames. One example of precoder~\cite{mattu2025improving} design is presented below.

Consider the Zak-OTFS system model presented in~\eqref{eq:sec3eq25}. The QR-factorization~\cite{horn2012matrix} (alternately, the singular value decomposition of the channel matrix could also be used) of the hermitian of the channel matrix $\mathbf{H}$ is:
\begin{align}
    \mathbf{H}^{\mathsf{H}} = \mathbf{Q}\mathbf{R},
    \label{eq:prec1}
\end{align}
where $\mathbf{Q}\in \mathbb{C}^{MN\times MN}$ is an unitary matrix and $\mathbf{R}\in\mathbb{C}^{MN\times MN}$ is an upper (or lower) triangular matrix.
At the transmitter the transmit vector is precoded as:
\begin{align}
    \mathbf{x}' = \mathbf{Q}\mathbf{x}.
    \label{eq:prec2}
\end{align}
The vector $\mathbf{x}'$ is the transmit vector.
The system model becomes:
\begin{align}
    \mathbf{y} = \mathbf{R}^{\mathsf{H}}\mathbf{Q}^{\mathsf{H}}\mathbf{Q}\mathbf{x} + \mathbf{n} = \mathbf{R}^{\mathsf{H}}\mathbf{x} + \mathbf{n}.
    \label{eq:prec3}
\end{align}
To detect information symbols in $\mathbf{x}$, methods like forward substitution could be used which incurs complexity $\mathcal{O}(M^2N^2)$, compared to the naive equalization with channel matrix $\mathbf{H}$ via MMSE which incurs complexity $\mathcal{O}(M^3N^3)$~\cite{horn2012matrix}.

\subsection{Differential Communication}
\label{subsec:diff_comm}
Inter-frame predictability can also be leveraged to improve the spectral efficiency by means of reducing the pilot symbol transmission. The channel estimate obtained from the previous frame transmission could be used for equalization/detection of the current frame. However, the estimate soon becomes outdated and this necessitates transmission of another pilot symbol. A way to prevent this is to use detected data as pilot symbols and re-estimate the channel after each detection. This estimate can then be used for detection of subsequent frame and the process repeats. This prevents the estimates from becoming outdated very quickly. This is the motivation behind differential communication~\cite{mattu2025differential}. 

Differential communication techniques were introduced in the mid-20th century to address the practical challenges of coherent carrier phase recovery in radio systems. Early forms such as differential phase-shift keying (DPSK) enabled noncoherent detection by encoding information in the phase difference between consecutive symbols rather than in absolute phase, thereby simplifying receiver design and improving robustness to oscillator instability and slow phase variations \cite{feher1960dpsk, simon2005digital}. As bandwidth efficiency requirements increased, higher-order schemes such as differential quadrature phase-shift keying (DQPSK) were developed and later adopted in cellular and satellite systems, including $\pi/4$-DQPSK in second-generation digital standards \cite{proakis2001digital}. In the late 1990s, differential techniques were extended to multiple-antenna systems through differential space-time modulation, which eliminated the need for channel state information in rapidly fading environments \cite{hochwald2000differential}. Although advances in digital signal processing and pilot-assisted coherent detection reduced the performance gap and led to the dominance of coherent modulation in modern broadband systems, differential communication remains relevant in low-complexity, high-mobility, and noncoherent massive MIMO scenarios \cite{marzetta2010noncooperative}.

Block diagram of the differential communication scheme is shown in Fig.~\ref{fig:diff_comm_blk_dia}. To begin the process a pilot frame is transmitted and an estimate of the channel is obtained. Since the channel is almost stationary in one frame duration (order of milliseconds), the same estimate can be used for subsequent data frame detection. Using the detected data, another estimate of the channel is obtained and this process is repeated. This alleviates the need for frequent pilot transmissions. 

Towards achieving differential communication, we first show that the channel can effectively be estimated from the detected data. The time-domain symbols mounted on pulsone bases is:
\begin{align}
    \mathbf{x}[n] = \sum_{k_0, l_0}\mathbf{x}^{(\mathrm{p})}_{k_0, l_0}[n] \mathbf{X}[k_0, l_0],
    \label{eq:dif_det1}
\end{align}
The system model for Zak-OTFS can be represented as:
\begin{align}
    \mathbf{y}[n] = \sum_{k_0, l_0}\sum_{k, l} &\mathbf{h}_{\mathrm{eff}}[k, l]\mathbf{x}^{(\mathrm{p})}_{k_0, l_0}[n-k]e^{\frac{j2\pi}{MN}l(n-k)}\mathbf{X}[k_0, l_0] \nonumber \\
    & + \mathbf{n}[n],
    \label{eq:dif_det2}
\end{align}
where $\mathbf{x}^{(\mathrm{p})}_{k_0, l_0}[n]$ is the $(k_0, l_0)$th pulsone basis in the time-domain, $\mathbf{X}[k_0, l_0]$ is the $(k_0, l_0)$th data symbol mounted on the corresponding pulsone bases. 
\begin{lemma}
    The cross-ambiguity between a pulsone indexed by $(k_0, l_0)$ and $(k_1, l_1)$ is (up to a phase):
    \begin{align}
        \mathbf{A}_{\mathbf{X}_{k_0, l_0}^{(\mathrm{p})}, \mathbf{X}_{k_1, l_1}^{(\mathrm{p})}}[k, l] = \mathbf{A}_{\mathbf{X}^{(\mathrm{p})}, \mathbf{X}^{(\mathrm{p})}}[k - (k_0-k_1), l-(l_0-l_1)],
        \label{eq:diff_det_prelim}
    \end{align}
where $\mathbf{A}_{\mathbf{X}^{(p)}, \mathbf{X}^{(p)}}[k, l]$ is the self-ambiguity of the pulsone.
\end{lemma}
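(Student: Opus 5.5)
The plan is to write every pulsone as a Heisenberg shift of a single reference pulsone, and then use the composition law of the Heisenberg-Weyl group (Theorem~\ref{thm:hw_group}) to move the shifts inside the cross-ambiguity. Take $\mathbf{x}^{(\mathrm{p})} := \mathbf{x}^{(\mathrm{p})}_{(0,0)}$ from~\eqref{eq:pulsone1} as the reference pulsone.

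\textbf{Step 1: pulsones as shifts of the reference.} We verify directly from~\eqref{eq:pulsone1} and Definition~\ref{def:heis_op} that
\begin{align*}
\mathbf{x}^{(\mathrm{p})}_{(k_0,l_0)} = e^{j\theta_{k_0,l_0}}\,\mathcal{D}_{(k_0,\,l_0 M)}\,\mathbf{x}^{(\mathrm{p})}.
\end{align*}
A delay by $k_0$ moves the pulse train onto the residue class $n\equiv k_0 \bmod M$. The modulation $e^{\frac{j2\pi}{MN}l_0 M(n-k_0)}$ evaluated at $n-k_0=dM$ equals $e^{\frac{j2\pi}{N}dl_0}$, which is exactly the tone in~\eqref{eq:pulsone1}. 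So the phase $\theta_{k_0,l_0}$ here is in fact trivial.

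On the discrete time grid, the Doppler index $l_0$ of the DD grid therefore corresponds to a time-domain Doppler shift $l_0M$. The shift $(k-(k_0-k_1),\,l-(l_0-l_1))$ in the statement should be read in the DD indexing, that is, with the time-domain Doppler argument scaled by $M$. Equivalently, by Lemma~\ref{lmm:heis_op_dzt}, in the DD domain the pulse $\mathbf{X}^{(\mathrm{p})}_{k_0,l_0}$ is simply the DD shift $\mathcal{D}^{\mathsf{DD}}$ of the pulse at the origin by $(k_0,l_0)$. I will carry out the argument abstractly with a generic Heisenberg operator, so it applies in either representation.

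\textbf{Step 2: absorb the shifts into the cross-ambiguity.} Write the cross-ambiguity of Definition~\ref{def:amb_fun} as an inner product with a shifted copy,
\begin{align*}
\mathbf{A}_{\mathbf{x},\mathbf{y}}[k,l] = \langle \mathbf{x}, \mathcal{D}_{(k,l)}\mathbf{y}\rangle .
\end{align*}
Substituting Step 1 gives
\begin{align*}
\mathbf{A}_{\mathbf{x}_{0},\mathbf{x}_{1}}[k,l] = \big\langle \mathcal{D}_{u_0}\mathbf{x}^{(\mathrm{p})}, \mathcal{D}_{(k,l)}\mathcal{D}_{u_1}\mathbf{x}^{(\mathrm{p})}\big\rangle ,
\end{align*}
where $u_i$ is the shift attached to $(k_i,l_i)$. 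Because each $\mathcal{D}_{u_0}$ is unitary, this equals
\begin{align*}
\big\langle \mathbf{x}^{(\mathrm{p})}, \mathcal{D}_{u_0}^{-1}\mathcal{D}_{(k,l)}\mathcal{D}_{u_1}\mathbf{x}^{(\mathrm{p})}\big\rangle .
\end{align*}
By the group law in Theorem~\ref{thm:hw_group}, the product $\mathcal{D}_{u_0}^{-1}\mathcal{D}_{(k,l)}\mathcal{D}_{u_1}$ is a phase times $\mathcal{D}_{(k,l)-u_0+u_1}$. The result is therefore $\mathbf{A}_{\mathbf{x}^{(\mathrm{p})},\mathbf{x}^{(\mathrm{p})}}$ evaluated at $(k,l)-(u_0-u_1)$, multiplied by a unimodular phase. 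This is the claim, up to a phase.

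\textbf{Step 3: the only real work, the phase bookkeeping.} The composition rule is obtained by expanding the definitions:
\begin{align*}
\mathcal{D}_{(a,b)}\mathcal{D}_{(c,d)} = e^{-\frac{j2\pi}{MN}bc}\,\mathcal{D}_{(a+c,\,b+d)} .
\end{align*}
The inverse is $\mathcal{D}_{(a,b)}^{-1} = e^{-\frac{j2\pi}{MN}ab}\mathcal{D}_{(-a,-b)}$. Applying these twice yields a phase that depends on $(k,l)$ linearly and on $(k_0,l_0,k_1,l_1)$. Since the lemma is stated only up to a phase, this phase does not need to be simplified.

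\textbf{Main obstacle.} The delicate point is that the pulsone is $MN$-periodic, while the pulsones lie on an $M\times N$ quasi-periodic grid. We must confirm that the $M$-scaling of Doppler and the quasi-periodic wraparound introduce only phases, not a change in magnitude. This is guaranteed by the stabilizer structure of Example~\ref{ex:comm_subgrp_ex1}: shifts by $(aM,bN)$ act on $\mathbf{x}^{(\mathrm{p})}$ as pure phases, so any reduction modulo the lattice changes only the phase.
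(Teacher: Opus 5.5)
The identity claimed in your Step 1 is false, and that error is what pushes you into ``reinterpreting'' the statement. By Definition~\ref{def:heis_op}, at $n-k_0=dM$ the modulation of $\mathcal{D}_{(k_0,l_0M)}\mathbf{x}^{(\mathrm{p})}$ is $e^{\frac{j2\pi}{MN}l_0M\cdot dM}=e^{\frac{j2\pi}{N}l_0dM}$, not $e^{\frac{j2\pi}{N}l_0d}$. In fact $\mathcal{D}_{(k_0,l_0M)}\mathbf{x}^{(\mathrm{p})}=\mathbf{x}^{(\mathrm{p})}_{(k_0,(l_0M)_N)}$, which is a different pulsone unless $(M-1)l_0\equiv 0 \bmod N$.

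The correct identity is $\mathbf{x}^{(\mathrm{p})}_{(k_0,l_0)}=\mathcal{D}_{(k_0,l_0)}\mathbf{x}^{(\mathrm{p})}$, with trivial phase, because $e^{\frac{j2\pi}{MN}l_0\cdot dM}=e^{\frac{j2\pi}{N}l_0d}$. This agrees with your DD description, which was right. By Lemma~\ref{lmm:heis_op_dzt}, conjugation by the DZT turns $\mathcal{D}_{(k,l)}$ into a DD shift by the same $(k,l)$, since the DD Doppler bin $\nu_p/N=1/T$ is exactly the time-domain Doppler resolution. There is no factor $M$ between the two indexings, so your two descriptions in Step 1 are not equivalent.

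Consequently, the identity your argument delivers for the actual pulsones is false:
\[
\mathbf{A}_{\mathbf{x}^{(\mathrm{p})}_{(k_0,l_0)},\mathbf{x}^{(\mathrm{p})}_{(k_1,l_1)}}[k,l]=(\text{phase})\,\mathbf{A}_{\mathbf{x}^{(\mathrm{p})},\mathbf{x}^{(\mathrm{p})}}[k-(k_0-k_1),\,l-(l_0-l_1)M].
\]
Take $M=2$, $N=3$, $(k_0,l_0)=(0,1)$, $(k_1,l_1)=(0,0)$. The left side equals $1$ at $(k,l)=(0,1)$. The right side vanishes there, because the pulsone self-ambiguity is supported on $l\equiv 0 \bmod N$ (Example~\ref{ex:comm_subgrp_ex1}) and $-1\not\equiv 0 \bmod 3$.

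With $u_i=(k_i,l_i)$, your Step 2 goes through verbatim and gives the statement literally, with no reinterpretation. The correct composition laws are $\mathcal{D}_{(a,b)}\mathcal{D}_{(c,d)}=e^{+\frac{j2\pi}{MN}bc}\mathcal{D}_{(a+c,b+d)}$ and $\mathcal{D}_{(a,b)}^{-1}=e^{+\frac{j2\pi}{MN}ab}\mathcal{D}_{(-a,-b)}$. Your signs are flipped, which is harmless up to a phase. Using them, one gets
\[
\mathbf{A}_{\mathbf{x}^{(\mathrm{p})}_{(k_0,l_0)},\mathbf{x}^{(\mathrm{p})}_{(k_1,l_1)}}[k,l]=e^{-\frac{j2\pi}{MN}[k_0l_0+lk_1-l_0(k+k_1)]}\,\mathbf{A}_{\mathbf{x}^{(\mathrm{p})},\mathbf{x}^{(\mathrm{p})}}[k-(k_0-k_1),\,l-(l_0-l_1)].
\]
Your ``main obstacle'' then disappears: all operators are indexed by $\mathbb{Z}_{MN}$ and the group law is exact modulo $MN$, so no wraparound or scaling issue arises.

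Once fixed, your argument is a genuinely different route from the paper's. The paper expands the DD cross-ambiguity using quasi-periodic delta pulses and shows it is supported on $((k_0-k_1)_M+nM,\,(l_0-l_1)_N+mN)$. It then matches this with the support $(nM,mN)$ of the pulsone self-ambiguity. Your covariance argument gives equality of the full functions, with an explicit phase, at every $(k,l)$, and it applies to any waveform and its Heisenberg translates. The paper's support matching, by contrast, implicitly relies on both functions being unimodular on their supports.
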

\begin{IEEEproof}
    The cross-ambiguity between a pulsone indexed by $(k_0, l_0)$ and $(k_1, l_1)$ is evaluated as:
    \begin{align}
        \mathbf{A}_{\mathbf{X}_{k_0, l_0}^{(\mathrm{p})}, \mathbf{X}_{k_1, l_1}^{(\mathrm{p})}}[k, l] &= \sum_{k_1, l_1}\sum_{n_1, m_1 \in \mathbb{Z}}\sum_{n_2, m_2 \in \mathbb{Z}}e^{\frac{j2\pi}{N}n_1l'} \times \nonumber \\
        &\hspace{5mm}\delta[k'-k_0-n_1M]\delta[l'-l_0-m_1N] \times \nonumber \\
        &\hspace{5mm}e^{-\frac{j2\pi}{N}n_2(l'-l)}\delta[k'-k-k_1-n_2M]\times \nonumber \\
        &\hspace{5mm}\delta[l'-l-l_1-m_2N]e^{-\frac{j2\pi}{MN}l(k'-k)}.
        \label{eq:diff_det_proof1}
    \end{align}
    From $\delta[k'-k_0-n_1M]\delta[l'-l_0-m_1N]$, we have $n_1 = m_1 = 0$ and $k'=k_0, l' = l_0$, since $k_0, k' = 0, 1, \cdots, M-1$ and $l_0, l' = 0, 1, \cdots, N-1$. Substituting in \eqref{eq:diff_det_proof1}:
    \begin{align}
        \mathbf{A}_{\mathbf{X}_{k_0, l_0}^{(\mathrm{p})}, \mathbf{X}_{k_1, l_1}^{(\mathrm{p})}}[k, l] &= \sum_{k_1, l_1} \sum_{n_2, m_2 \in \mathbb{Z}}e^{-\frac{j2\pi}{N}n_2(l_0-l)} \times \nonumber \\
        &\hspace{5mm}\delta[k_0-k-k_1-n_2M] \times \nonumber \\
        &\hspace{5mm}\delta[l_0-l-l_1-m_2N]e^{-\frac{j2\pi}{MN}l(k_0-k)}.
    \end{align}
    $\delta[k-(k_0-k_1)-n_2M] \implies k = (k_0 - k_1) \bmod M$ and $\delta[l-(l_0-l_1)-m_2N] \implies l = (l_0 - l_1) \bmod N$. This means that the cross-ambiguity is supported (alternately, the sum is non-zero) on a lattice given by the points $((k_0 - k_1)_M, {(l_0 - l_1)_N})$. The self ambiguity $\mathbf{A}_{\mathbf{X}^{(p)}, \mathbf{X}^{(p)}}[k, l]$ of the pulsone is supported on the lattice $(nM, mN)$ \cite{Calderbank2025_isac} and therefore (up to a phase):
    \begin{align}
        \mathbf{A}_{\mathbf{X}_{k_0, l_0}^{(\mathrm{p})}, \mathbf{X}_{k_1, l_1}^{(\mathrm{p})}}[k, l] = \mathbf{A}_{\mathbf{X}^{(\mathrm{p})}, \mathbf{X}^{(\mathrm{p})}}[k - (k_0-k_1), l-(l_0-l_1)].
        \label{eq:diff_det_proof2}
    \end{align}
\end{IEEEproof}

Consider the cross-ambiguity function between the received information symbols and the transmitted information symbols:
\begin{align}
    \mathbf{A}_{\mathbf{y}, \mathbf{x}}[k', l'] &= \frac{1}{MN}\sum_{n=0}^{MN-1}\mathbf{y}[n]\mathbf{x}^*[n-k']e^{-\frac{j2\pi}{MN}l'(n-k')} \nonumber \\
    &= \frac{1}{MN}\Bigg(\sum_{n=0}^{MN-1}\sum_{k_0, l_0}\sum_{k, l}\mathbf{h}_{\mathrm{eff}}[k, l]\mathbf{x}^{(\mathrm{p})}_{k_0, l_0}[n-k] \times \nonumber \\
    &\hspace{5mm}e^{\frac{j2\pi}{MN}l(n-k)}\mathbf{X}[k_0, l_0]\sum_{k_1, l_1}(\mathbf{x}^{(\mathrm{p})}_{k_1, l_1}[n-k'])^*\times \nonumber \\
    &\hspace{5mm}\mathbf{X}^{*}[k_1, l_1]e^{-\frac{j2\pi}{MN}l'(n-k')} + \nonumber \\
    &\hspace{5mm}\sum_{n=0}^{MN-1}\mathbf{n}[n]\mathbf{x}^*[n-k']e^{-\frac{j2\pi}{MN}l'(n-k')}\Bigg) \nonumber \\
    &\overset{(a)}{=} \frac{1}{MN}\sum_{k, l}\mathbf{h}_{\mathrm{eff}}[k, l]\sum_{k_0, l_0}\mathbf{X}[k_0, l_0] \times \nonumber \\
    &\hspace{5mm}\sum_{k_1, l_1}\hspace{-1mm}\mathbf{X}^*[k_1, l_1]\sum_{n=0}^{MN-1}\mathbf{x}^{(\mathrm{p})}_{k_0, l_0}[n-k]e^{\frac{j2\pi}{MN}l(n-k)} \times \nonumber \\
    &\hspace{5mm}(\mathbf{x}^{(\mathrm{p})}_{k_1, l_1}[n-k'])^*e^{-\frac{j2\pi}{MN}l'(n-k')},
    \label{eq:dif_det3}
\end{align}
where step $(a)$ follows because the pulsone samples and noise samples are uncorrelated.
Substituting $\bar{n} = n-k$ in \eqref{eq:dif_det3}:
\begin{align}
    \mathbf{A}_{\mathbf{y}, \mathbf{x}}[k', l'] &= \frac{1}{MN}\sum_{k, l}\mathbf{h}_{\mathrm{eff}}[k, l]\sum_{k_0, l_0}\mathbf{X}[k_0, l_0] \times \nonumber \\
    &\hspace{5mm}\sum_{k_1, l_1}\mathbf{X}^*[k_1, l_1]\sum_{\bar{n}=-k}^{MN-1-k}\mathbf{x}^{(\mathrm{p})}_{k_0, l_0}[\bar{n}]e^{\frac{j2\pi}{MN}l\bar{n}} \times \nonumber \\
    &\hspace{5mm}(\mathbf{x}^{(\mathrm{p})}_{k_1, l_1}[\bar{n}-(k'-k)])^*e^{-\frac{j2\pi}{MN}l'(\bar{n}-(k'-k))} \nonumber \\
    &= \sum_{k, l}\mathbf{h}_{\mathrm{eff}}[k, l]\sum_{k_0, l_0}\mathbf{X}[k_0, l_0]\sum_{k_1, l_1}\mathbf{X}^*[k_1, l_1] \times \nonumber \\
    &\hspace{5mm}\mathbf{A}_{\mathbf{x}^{(\mathrm{p})}_{k_0, l_0}, \mathbf{x}^{(\mathrm{p})}_{k_1, l_1}}[k'-k, l'-l]e^{\frac{j2\pi}{MN}l'(k'-k)} \nonumber \\
    &= \mathbf{h}_{\mathrm{eff}}[k', l'] \ \ *_\sigma \nonumber \\
    &\hspace{-5mm}\underbrace{\left(\sum_{k_0, l_0}\mathbf{X}[k_0, l_0]\sum_{k_1, l_1}\mathbf{X}^*[k_1, l_1] \mathbf{A}_{\mathbf{x}^{(\mathrm{p})}_{k_0, l_0}, \mathbf{x}^{(\mathrm{p})}_{k_1, l_1}}[k', l']\right)}_{\mathbf{B}[k', l']}.
    \label{eq:dif_det4}
\end{align}
Since $\mathbf{A}_{\mathbf{x}^{(\mathrm{p})}_{k_0, l_0}, \mathbf{x}^{(\mathrm{p})}_{k_1, l_1}}[k', l'] = \mathbf{A}_{\mathbf{X}_{k_0, l_0}^{(\mathrm{p})}, \mathbf{X}_{k_1, l_1}^{(\mathrm{p})}}[k, l]$ (i.e., the time-domain ambiguity is same as the DD domain ambiguity), substituting \eqref{eq:diff_det_prelim} in $\mathbf{B}[k', l']$, we have:
\begin{align}
    \mathbf{B}[k, l] &= \sum_{k_0, l_0}\mathbf{X}[k_0, l_0]\sum_{k_1, l_1}\mathbf{X}^*[k_1, l_1] \times \nonumber \\
    &\hspace{5mm}\mathbf{A}_{\mathbf{X}^{(\mathrm{p})},\mathbf{X}^{(\mathrm{p})}}[k-(k_0-k_1), l-(l_0-l_1)].
    \label{eq:dif_det5}
\end{align}
Note $\mathbf{A}_{\mathbf{X}^{(\mathrm{p})},\mathbf{X}^{(\mathrm{p})}}[k-(k_0-k_1), l-(l_0-l_1)] = \mathds{1}_{\{k=(k_0-k_1)\bmod M, l=(l_0-l_1)\bmod N\}}$. Substituting $k_1 = (k-k_0)_M$ and $l_1 = (l-l_0)_N$ in \eqref{eq:dif_det5}:
\begin{align}
    \mathbf{B}[k, l] &= \sum_{k_0, l_0}\mathbf{X}[k_0, l_0]\mathbf{X}^*[(k-k_0)_M, (l-l_0)_N].
    \label{eq:dif_det6}
\end{align}
This implies that the cross-ambiguity between the transmitted data frame and the received frame is the twisted convolution between the effective channel $\mathbf{h}_{\mathrm{eff}}[k, l]$ and the inner product between the transmitted data and its delay and Doppler shifted version. If the information symbols are chosen uniformly at random from a constellation, asymptotically we have\footnote{The asymptotics are only in the frame size and not in the constellation size. The frame size must be chosen sufficiently large so as to enable the law of large numbers (see \cite{diff_det_arxiv} for more details).}:
\begin{align}
    \mathbf{A}_{\mathbf{y}, \mathbf{x}}[k', l'] \approx \mathbf{h}_{\mathrm{eff}}[k', l'] *_\sigma e_{\mathrm{d}}\delta[k]\delta[l] = e_{\mathrm{d}}\mathbf{h}_{\mathrm{eff}}[k', l'],
    \label{eq:dif_det7}
\end{align}
where $e_\mathrm{d}$ is the energy of information symbols. The cross-ambiguity between the received and transmitted data symbols is therefore approximately equal to the estimate of the effective channel, up to a scale.

\subsection{Numerical Results}
\begin{figure}
    \centering
    \includegraphics[width=\linewidth]{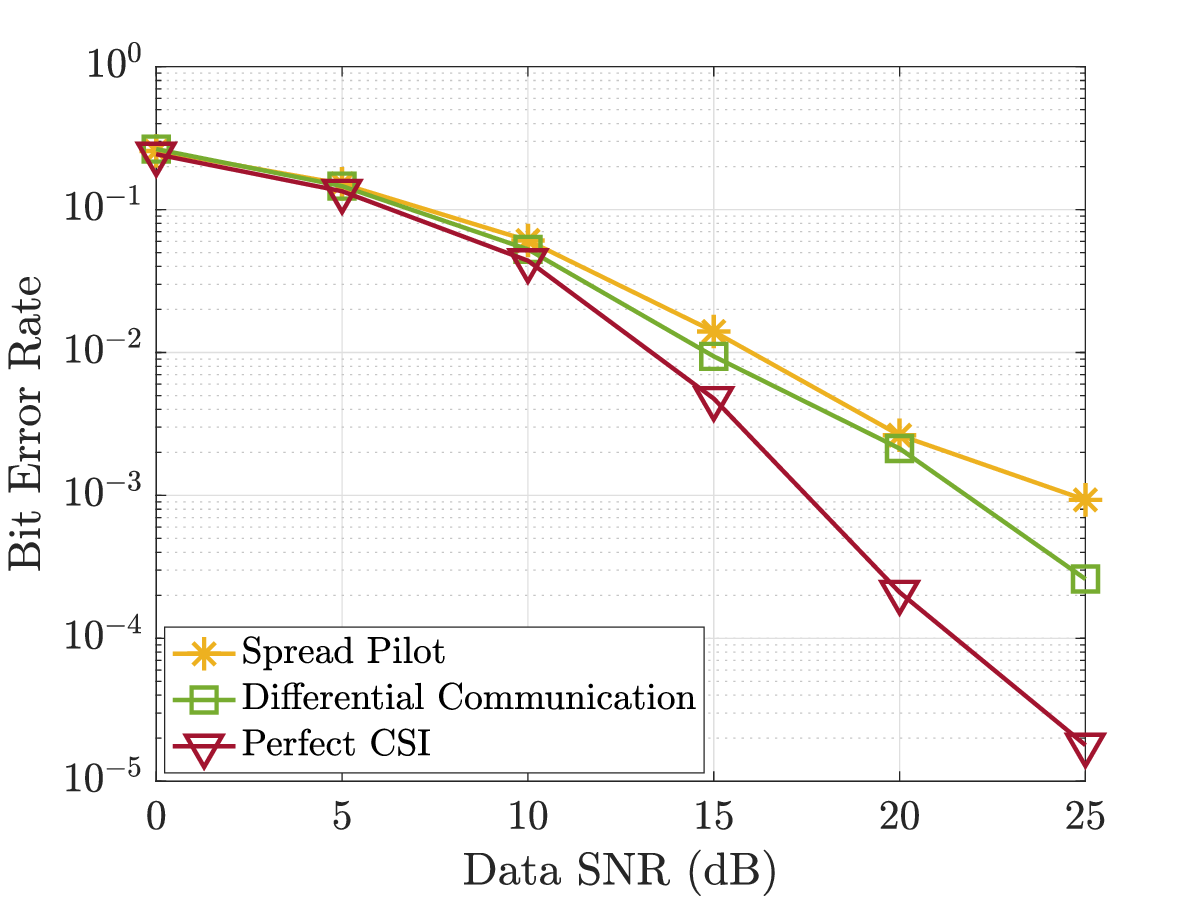}
    \caption{Performance curves with differential communication scheme, spread pilot~\cite{Calderbank2025_isac}, and perfect CSI for 4-QAM modulation. Zak-OTFS with $M=31, N=37, \nu_p=30 $ kHz, RRC pulse shaping with parameter $\beta_\tau=\beta_\nu=0.6$.}
    \label{fig:ber_diff_comm}
\end{figure}
Here we present the bit-error performance of the differential communication scheme. To prevent error propagation, a pilot frame is transmitted every $30$ data frames. In Fig.~\ref{fig:ber_diff_comm}, we compare the bit-error performance of the differential communication scheme with that of spread pilot scheme~\cite{Calderbank2025_isac} and perfect channel knowledge at the receiver. Note that, both the differential communication scheme and spread pilot scheme achieve full spectral efficiency. For fair comparison, we assume same frame energy for both spread pilot and the data frame for differential communication and so the data frame in the differential communication scheme enjoys higher energy (equal to the sum of pilot and data energy in the spread pilot frame). It is seen that the performance of the differential communication scheme is very close to that with perfect channel knowledge and this is followed by the spread pilot scheme. On the receiver complexity, the differential communication scheme just requires cross-ambiguity computations while the spread pilot scheme requires cross-ambiguity followed by pilot removal and data detection. Differential communication scheme achieves better performance at lower complexity compared to the spread pilot scheme.

\subsection{Open Research Problems}
\label{subsec:interframe_future}

In the above discussion, we described a method for reducing the pilot retransmission interval by transmitting a pilot frame at a fixed interval of $30$ data frames. An interesting research direction is to understand the trade-offs between communication performance, sensing / DD channel estimation performance \& energy efficiency, and to design / adapt the pilot retransmission interval to optimally balance these three metrics. An interesting open question is understanding the conditions when differential communication is possible, and understanding relevant trade-offs with various waveform choices, extension to multi-antenna and multi-user systems, interplay with coding, etc.

\section{Uplink Multiple Access}
\label{sec:multipleaccess}
The  3G, 4G, and 5G standards were defined by the challenge of engineering the wireless downlink to support internet browsing and video streaming. Today, cellphones are data sinks, but if new applications like autonomous driving take root, terminals might tomorrow become data sources. An emerging challenge in 6G is supporting a massive number of devices that sporadically and intermittently transmit short data packets to a central base station, under stringent conditions on energy consumption and signaling overhead, and often under mild latency and reliability constraints \cite{uplinkMA_1, uplinkMA_2, uplinkMA_3}. This challenge is referred to as massive machine-type communications (mMTC).  
\begin{figure*}
    \centering
    \includegraphics[width=\linewidth]{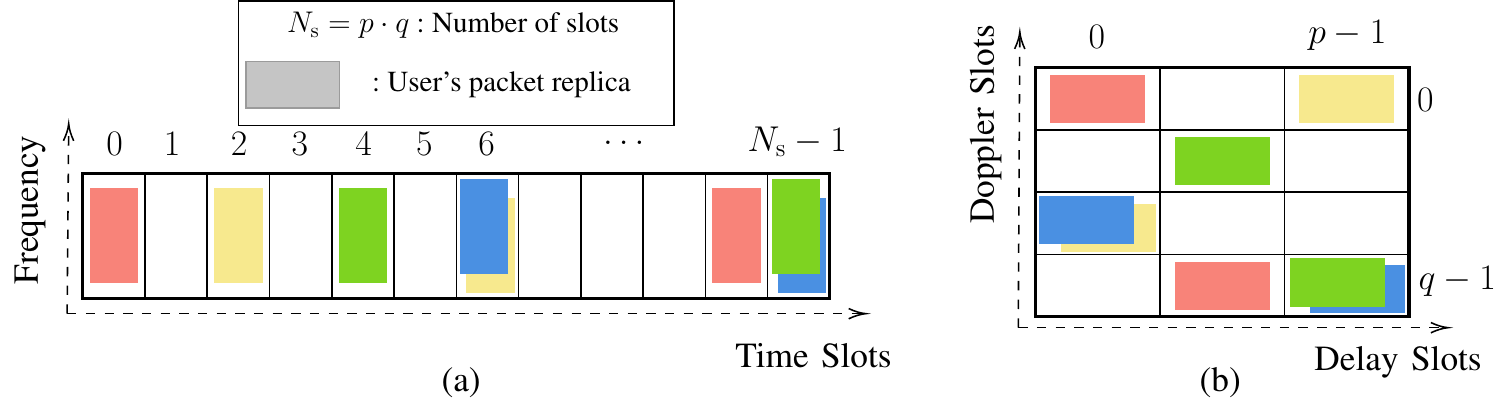}
    \caption{Comparison between (a) conventional coded random access in the time-frequency domain and (b) proposed Zak-OTFS-based coded random access in the delay-Doppler domain. In both cases, colored rectangles represent user-specific packet replicas transmitted in randomly selected slots within the corresponding resource grid.}
    \label{fig:uplinkMA_fig1}
\end{figure*}
\subsection{Coded Random Access}
This section describes how to achieve massive multiple access \cite{uplinkMA_4, uplinkMA_5} while enabling devices to transmit autonomously, whenever new data arrives, without requiring explicit scheduling by the base station. This approach, termed grant-free random access \cite{uplinkMA_6, uplinkMA_7, uplinkMA_8}, reduces latency and protocol complexity but requires careful integration of the medium access control (MAC) and physical (PHY) layers since simultaneous transmission may result in packet collisions. We employ coded random access (CRA) where a user uploads replicas of a data packet into multiple uplink slots \cite{uplinkMA_9}. When a packet is detected, we use successive interference cancellation (SIC) to remove it from the other slots where it is present. Zak-OTFS provides accurate channel prediction in the presence of mobility and delay spread, and predictability of the physical layer is essential if SIC is to be successful. Furthermore, as mentioned in the previous section, the channel in the DD domain varies as allowed by the physics of the reflectors. This implies that DD domain channel estimates remain stationary for a longer duration compared to the time-frequency estimates in the current OFDM standard. For grant-free mMTC, this inherent stability is foundational, it not only drastically reduces pilot overhead but also enables reliable SIC across packet replicas, a critical collision-resolution process that fails under OFDM due to rapid inter-slot channel variations. This predictability motivates the pilot frame shown in Fig.~\ref{fig:uplinkfig2}, wherein the channel does not change significantly between the pilot and subsequent data frames.

Unsourced random access is a theoretical framework proposed by Polyanskiy \cite{uplinkMA_10} to model grant-free random access for the power-constrained AWGN channel. He derived an achievable probability of error for random coding as a function of signal-to-noise ratio that improved dramatically on the performance of naive random-access schemes such as slotted ALOHA (SA) \cite{uplinkMA_11}. Significant improvements to the SA benchmark, particularly for a large number of users, were then obtained by tailoring CRA schemes to the AWGN channel \cite{uplinkMA_12, uplinkMA_13}. Subsequent work has extended CRA schemes to flat Rayleigh fading channels \cite{uplinkMA_14, uplinkMA_15}, and in this Section we illustrate what is possible for CRA in wireless channels that exhibit significant delay and Doppler spread. 

SIC-based architectures rely on accurate estimation of the input/output (I/O) relation over multiple transmission slots. This presents a challenge for OFDM and single carrier (SC) modulation because the effective channel response varies rapidly between slots and prior knowledge of the channel is needed to acquire the I/O relation. The essential difficulty is that of accurately estimating the channel from a very limited number of pilot symbols. The advantage of a predictable waveform such as Zak-OTFS is that accurate channel estimation across slots enables SIC across user packet replicas. We now describe a CRA scheme based on Zak-OTFS and compare performance under high mobility and user density with an OFDM-based CRA baseline (see \cite{uplinkMA_16} for more details). 

\subsection{MAC Layer, Random Access and SIC}
The SA random-access protocol partitions time into slots and active users independently attempt transmission of a single information packet without any coordination. Analysis of SA assumes the collision channel model in which a packet is successfully received only if it is the sole transmission in a given slot, and all colliding packets are lost. This model also assumes that the BS detects all collisions and provides feedback to all users at the end of each slot indicating whether a packet was successfully received or whether retransmission is required in subsequent time slots. SA is a simple and fully decentralized protocol, but the probability of collision grows with the number of active users, significantly reducing throughput and reliability.

More advanced random-access protocols employ slot diversity where users transmit multiple replicas of the same packet in multiple time slots. Diversity slotted ALOHA (DSA) \cite{uplinkMA_17} improved upon the reliability of SA by increasing the probability that at least one replica avoids a collision and is successfully received. 

The introduction of SIC made it possible to resolve packet collisions by iteratively decoding packets and subtracting their reconstructed contributions from received signals. SIC is the foundation of modern CRA protocols such as contention resolution diversity slotted ALOHA (CRDSA) \cite{uplinkMA_6} and irregular repetition slotted ALOHA (IRSA) \cite{uplinkMA_7}. In these schemes, each user transmits a predefined \cite{uplinkMA_6} or randomly selected \cite{uplinkMA_7} number of replicas in independently chosen time slots within a contention window as illustrated in Fig. \ref{fig:uplinkMA_fig1} (taken from \cite{uplinkMA_16}). The access process is modeled as a sparse bipartite graph and the decoding techniques are inspired by belief-propagation methods originally developed for erasure codes.

The single slot structures for cyclic prefix (CP)-OFDM and Zak-OTFS are illustrated in Fig. \ref{fig:uplinkfig2}. The time-bandwidth products of the corresponding CRA schemes are equal, and active users are assigned equivalent physical resources (see \cite{uplinkMA_16} for details). 

Each active Zak-OTFS user independently selects a fixed number of non-overlapping slots with uniform probability. Coherent SIC relies on consistency of transmitted replicas, so that in each slot, the user transmits the same pilot tile and the same modulated data symbols in the corresponding data tile. For comparison, we select cyclic prefix (CP)-OFDM with pilot insertion as a benchmark scheme. To provide a fair comparison, the CP-OFDM system is configured with a subcarrier spacing of $\Delta f = \nu_p$, a CP length matched to the maximum delay spread of the channel, and to enhance robustness against Doppler effects, pilot and data symbols are interleaved across alternating subcarriers.

Decoding in both CP-OFDM and Zak-OTFS schemes begins after all the signals from all slots have been received. The BS first identifies the singleton slots that are free from collisions, then uses the pilot symbols to perform channel estimation. In this system, MMSE equalization is employed for data detection. For OFDM, MMSE-based channel interpolation is additionally used to estimate the channel in slots where replicas occur, enabling successive interference cancellation. This is necessary because the OFDM channel varies across sub-frames and even between slots \cite{uplinkMA_18}. In contrast, for Zak-OTFS, the channel is approximately constant over a frame; hence, the channel estimated from one slot can be directly reused to cancel interference in other slots without requiring interpolation.  Decoding continues until all user data have been decoded or no further singleton slots can be identified. This alignment of decoding processes compares the two schemes on their ability to support channel prediction across data carriers.

\begin{figure*}
    \centering
    \includegraphics[width=0.75\linewidth]{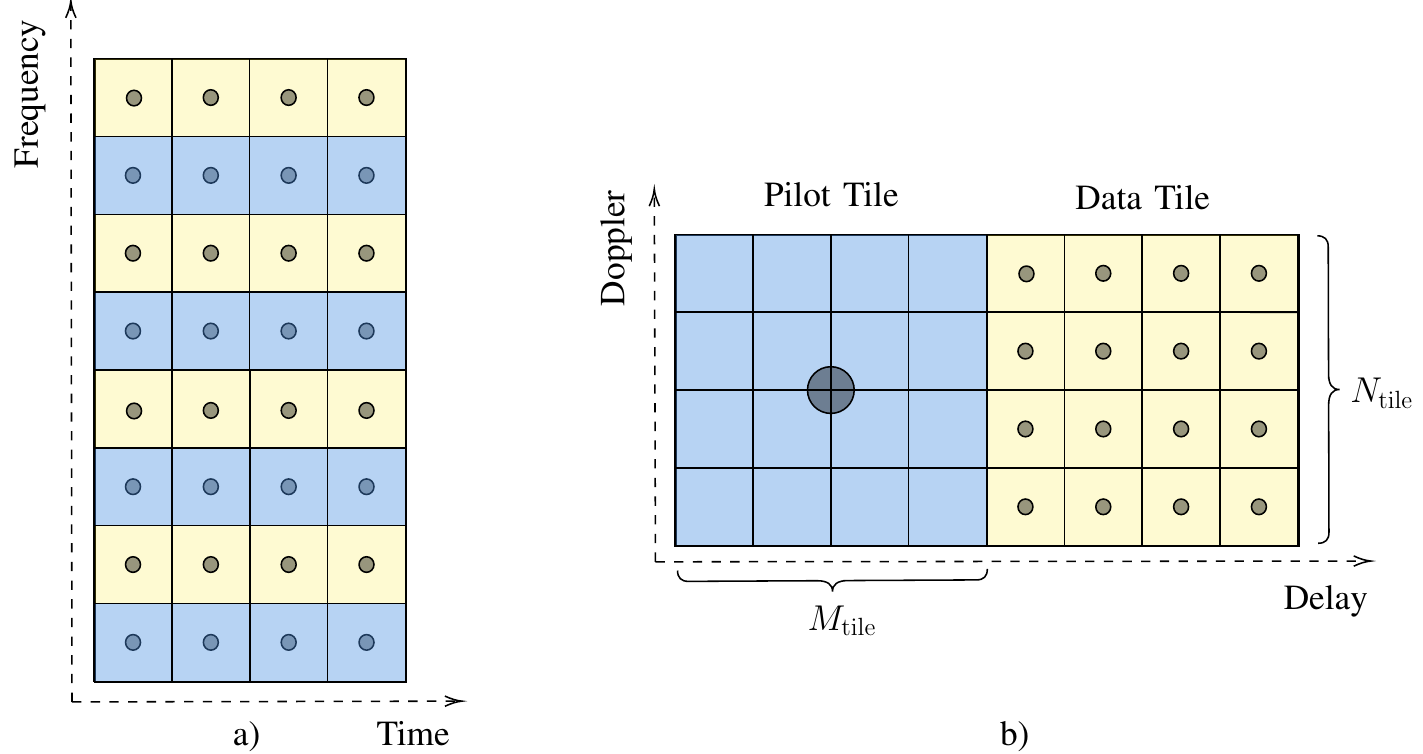}
    \caption{Single slot structures for OFDM and Zak-OTFS in the contention window illustrated in Fig. 1. (a) Time-frequency domain representation of the OFDM slot, where pilot and data symbols are spread across time and frequency resources, following a structure typical of OFDM-based transmissions. (b) Delay-Doppler domain representation of the Zak-OTFS slot, comprising two tiles: a pilot tile (blue) and a data tile (yellow). The pilot tile includes a single point pilot symbol centered in the grid (large circle), while the data tile carries modulated data symbols (small circles).}
    \label{fig:uplinkfig2}
\end{figure*}

\subsection{Performance Evaluation}
We first evaluate performance from the perspective of a single user transmitting in a single slot without interference from other users. We seek to understand the influence of key system parameters such as the number of delay / Doppler bins ($M_{tile} / N_{tile}$), the Doppler period $\nu_p$, and the pulse shaping filter. We consider both small slots ($M_{tile}=4, N_{tile}=4$) and larger slots ($M_{tile}=16, N_{tile}=16$), and for each configuration we employ a combination of BCH code and cyclic redundancy check with code rate approximately 0.5. We match the time and frequency span of both schemes by setting the CP-OFDM carrier spacing $\Delta f = \nu_p$, and the OFDM symbol duration to $\tau_p$. We also match the energy budgets of the pilot and data signals.
The small slot configuration corresponds to $M_{slot}=8, N_{slot}=4$ for Zak-OTFS and to $M_{sub}=8$ subcarriers and $N_{OFDM}=4$ symbols for CP-OFDM. When the Doppler period aligns with the subcarrier spacing ($\nu_p = 30$ kHz) Zak-OTFS exhibits a PLR error floor between $10^{-1}$ and $10^{-2}$ regardless of the choice of pulse shaping filter. CP-OFDM exhibits superior performance because it benefits from the coarse subcarrier spacing. When the Doppler spread is reduced to 5 kHz, CP-OFDM performance degrades significantly, and there is a PLR error floor between $10^{-1}$ and $10^{-2}$ because reduced subcarrier spacing leads to significant ICI.
The larger slot configuration corresponds to slot sizes $M_{slot}=32$, $N_{slot}=16$ for Zak-OTFS. Both Zak-OTFS and CP-OFDM benefit from the improved resolution, with CP-OFDM achieving extremely low PLR for SNRs less than -5 dB. Zak-OTFS with sinc pulse shaping now exhibits a PLR error floor of about $10^{-2}$. The PLR curve for Zak-OTFS with Gaussian pulse shaping has approximately the same shape as the PLR curve for CP-OFDM but is about 10 dB worse (for more details refer Fig. 5 and its associated text in \cite{uplinkMA_16}).
We next evaluate how rapidly time-varying channels impact SIC for both Zak-OTFS and CP-OFDM. We consider a transmission frame with two active users, an uncollided user and a collided user. The uncollided user is assumed to transmit one replica without interference and at least one replica with interference from another user (for example, the yellow user in Fig. \ref{fig:uplinkMA_fig1} (a)), The collided user is exemplified by the blue user in Fig. \ref{fig:uplinkMA_fig1} (a). Every replica is subject to interference and data can only be recovered by SIC after decoding and subtracting the contributions of interfering packets. If the yellow user in Fig. \ref{fig:uplinkMA_fig1}(a) is decoded in slot 2 its channel must be predicted in slot 6 to enable SIC. 
Predictability of Zak-OTFS results in a channel that is almost constant across the entire frame. Once a user is successfully decoded the contributions of replicas can be accurately subtracted from slots where they are present. By contrast, CP-OFDM channel prediction becomes less reliable the further a replica lies from the slot where the channel was estimated, and this time selectivity significantly impacts SIC performance and PLR. In the large slot configuration, a maximum Doppler frequency of 815 Hz is sufficient to make channel prediction unreliable, even in an adjacent slot, completely disabling SIC. Fig. \ref{fig:UplinkMA_fig3} illustrates PLR as a function of the number of active users per frame. Performance of CP-OFDM is reported under the assumption SIC is ineffective and is therefore not applied. By contrast, CRA with Zak-OTFS exhibits strong robustness to high mobility, enabling effective SIC and successful recovery of significantly more data packets. Zak-OTFS with a Gaussian pulse shaping filter is able to support about 60 users with a PLR of about $10^{-3}$. Reduced accuracy of Zak-OTFS channel estimation with a sinc filter results in a PLR error floor of about $10^{-2}$ and a visible performance gap. The methods developed by Polyanskiy \cite{uplinkMA_10} to support grant-free Gaussian random access would use the same resources to support about 200 users (note that the 6-path channel converts a single user into 6 correlated users).
\begin{figure}[htb!]
    \centering
    \includegraphics[width=\linewidth]{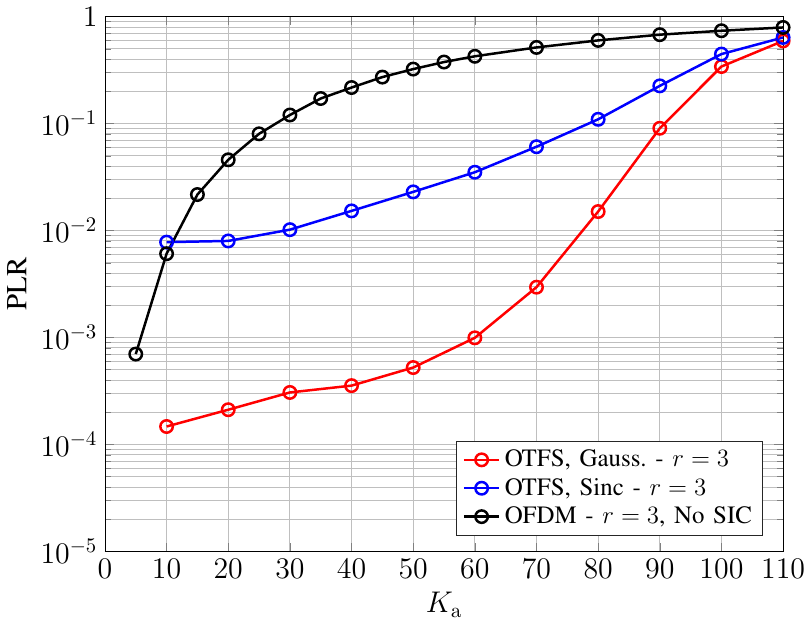}
    \caption{Packet loss rate (PLR) versus $K_a$, the number of active users per frame in a full-frame MMA uplink scenario. There are 128 slots, each slot occupies $M_{slot} = 32$ resources along the delay axis and $N_{slot} = 16$ resources along the Doppler axis, and the Doppler period $\nu_p = 30$ kHz. Each user transmits $r = 3$ replicas of the same packet (as is often assumed in the CRA literature \cite{uplinkMA_19, uplinkMA_20}). The SNR is fixed to 25 dB. Zak-OTFS-based CRA performance is shown for Gaussian and sinc pulse shaping, and compared against the OFDM-based scheme with no SIC.}
    \label{fig:UplinkMA_fig3}
\end{figure}

\subsection{Open Research Problems}
\label{subsec:uplink_future}
Is it possible to design coded random-access schemes for uplink mMTC using predictable modulations other than Zak-OTFS? Is it possible to improve CRA performance through joint detection of preambles and data repeated across multiple slots? Is it possible to estimate channels and separate users even in collided slots through deployment of a massive MIMO base station which provides additional spatial degrees of freedom? How do alternative methods of unsourced random access, such as coded compressed sensing, extend to doubly selective channels? How might they be integrated with Zak-OTFS? How can generative AI based channel estimation be tailored for random-access with doubly selective channels? How can semi-blind channel estimation algorithms be tailored to doubly selective channels?

\section{Improving Spectral Efficiency}
\label{sec:imp_spec_eff}
\begin{figure}
    \centering
    \begin{subfigure}{\linewidth}
        \includegraphics[width=\linewidth]{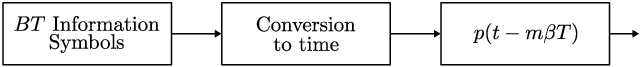}
        \caption{Conventional (Faster-than-Nyquist signaling)}
        \label{fig:conv_approach}
    \end{subfigure}\\
    \vspace{2mm}
    \begin{subfigure}{\linewidth}
        \includegraphics[width=\linewidth]{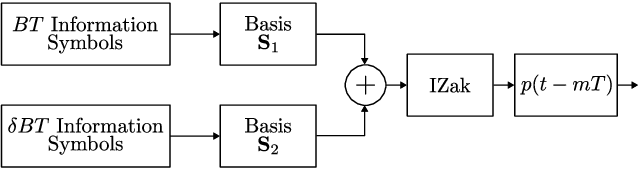}
        \caption{MUB}
        \label{fig:mub_approach}
    \end{subfigure}
    \caption{Two ways of improving spectral efficiency. Conventional scheme transmits $BT$ information symbols in the TF domain and improves spectral efficiency by increasing the sampling frequency (captured by ${0 < \beta < 1}$) in the time domain, while the MUB approach uses mutually unbiased bases in the DD domain to mount $BT + \delta BT = BT(1+\delta), 0<\delta<1$ information symbols, which requires no change in the sampling rate ($\beta = 1$) in the time domain. Setting $\nicefrac{1}{\beta} = 1+\delta$ equates spectral efficiency of the two approaches.}
    \label{fig:ftn_comp}
\end{figure}

Data transmission in bandwidth-limited channels, which are characteristic of wireless and wireline communication systems, is typically performed at the Nyquist rate. Formally, for a channel with bandwidth $B$, the number of independent data symbols that can be transmitted at the Nyquist rate in a time interval $T$ is $BT$~\cite{Tse2005}. This is achieved by mounting $BT$ independent information symbols on time-frequency pulses with the property that shifts in time of $\nicefrac{1}{B}$ (or its multiples) and shifts in frequency of $\nicefrac{1}{T}$ (or its multiples) are orthogonal to one another. In other words, the information symbols are mounted on a $BT$-dimensional basis which spans the $BT$ space. This concept of Nyquist signaling is also easily extended to communication in signal domains other than time-frequency, e.g., to the DD domain, as exemplified by Zak transform based orthogonal time frequency space, or Zak-OTFS~\cite{bitspaper1,bitspaper2}, by defining pulses that are orthogonal to shifts in delay of $\nicefrac{1}{B}$ and shifts in Doppler of $\nicefrac{1}{T}$ (or their multiples). The idea remains to mount $BT$ information symbols on an orthogonal basis defined in the appropriate domain.

Although the maximum number of orthogonal information signals that are possible in a $BT$-dimensional space is $BT$, more than $BT$ information symbols can be transmitted if the interference from non-orthogonal signaling can be handled at the transmitter/receiver. This forms the basis for faster-than-Nyquist signaling~\cite{mazo_ftn, liveris_ftn}. Most of the schemes achieving higher spectral efficiency~\cite{paul2025cholesky,rusek_ftn, calavolpe_ftn_tf_2lattice_poweralloc_turbo, zhang_ftn_tf_hexalattice_isioptmzn} use the setup shown in Fig.~\ref{fig:ftn_comp}(\subref{fig:conv_approach}). In this approach, the information symbols in the transform domain are converted to time domain and mounted on a band-limited pulse which is sampled at $\beta T, 0<\beta<1$, where $T$ is the Nyquist sampling interval. This requires changing the sampling frequency and possibly the sampling clock. Changing the sampling frequency introduces ISI.

Here we take the mutually unbiased bases (MUB)~\cite{Schwinger1960unitary} approach shown in Fig~\ref{fig:ftn_comp}(\subref{fig:mub_approach})~\cite{mattu2026improving}. Two bases are said to be MUB if the interference from one basis looks like Gaussian noise to the other\footnote{For an in-depth discussion on MUBs and their construction, interested readers are referred to \cite{Aug2024paper, preamblepaper}.\label{foot:mub}}. We construct two bases $\mathbf{S}_1$ and $\mathbf{S}_2$ so that they are mutually unbiased (see~\eqref{eq:mut_unb_basis}). This construction is carried out in the DD domain~\cite{preamblepaper} using Zak-transform based orthogonal time frequency space (Zak-OTFS) modulation~\cite{bitspaper1, bitspaper2}. $BT$ information symbols are mounted on the first basis and $\delta BT, 0 < \delta < 1$ symbols are mounted on the second basis. The modulated bases are then superimposed. The inverse Zak transform block converts the DD symbols to time which is mounted on a pulse sampled at the \textit{Nyquist rate}. Effectively, in MUB, the interference is moved from faster sampling of the transmit pulse to the interference between the two bases.

In the following, we describe our approach in detail.

Recall from~\eqref{eq:prec3} the system model after precoding is: $$ \mathbf{y} = \mathbf{R}^{\mathsf{H}}\mathbf{x} + \mathbf{n}.$$ At the receiver, an MMSE matrix is constructed as:
\begin{align}
    \label{eq:precoder4}
    \mathbf{W} = (\mathbf{R}\mathbf{R}^{\mathsf{H}}+\sigma^2\mathbf{I})^{-1}\mathbf{R},
\end{align}
where $\sigma^2$ is the variance of the additive noise. Finally, the receiver computes:
\begin{align}
    \label{eq:precoder5}
    \mathbf{y}' = \mathbf{W}\mathbf{y}.
\end{align}

\subsection{Mounting Info. Symbols on Mutually Unbiased Bases}
\label{subsec:mounting_symbols}

Let $\mathbf{x}_1', \mathbf{x_2}' \in \mathcal{A}^{MN \times 1}$ be two information vectors. Let $\mathbf{S}_1, \mathbf{S}_2 \in \mathbb{C}^{MN \times MN}$ denote two sets of bases. The bases each span the $MN$-dimensional space with the property that they are unbiased with respect to each other. Mathematically, the bases $\mathbf{S_1}, \mathbf{S_2}$ satisfy:
\begin{align}
    \label{eq:mut_unb_basis}
    \vert \mathbf{S}_i^{\mathsf{H}}\mathbf{S}_j\vert = \begin{cases}
        \mathbf{I}_{MN\times MN}, \ \ \ \ \ \quad \quad \text{if } i=j \\
        \frac{1}{\sqrt{MN}}\mathbf{1}_{MN \times MN}, \quad \text{if } i\neq j
    \end{cases}.
\end{align}

We mount $2MN$ information symbols on the Nyquist-grid\footnote{Here, by Nyquist grid, we mean the original OTFS grid, which has $MN$ degrees of freedom, with bandwidth $B=M\nu_p$ and time $T=N\tau_p$.}:
\begin{align}
    \label{eq:ftn_sys_model}
    \mathbf{y}' = \mathbf{W}\mathbf{H}\big(\sqrt{\alpha}\mathbf{Q}\mathbf{S}_1\mathbf{x}_1' + \sqrt{1-\alpha}\mathbf{Q}\mathbf{S}_2\mathbf{x}_2'\big) + \mathbf{W}\mathbf{n},
\end{align}
where 
the parameter $\alpha$ allocates power between the two frames\footnote{Energy in the first frame is $\alpha$ and in the second frame is $1-\alpha$, since $\mathbf{x}$ is drawn from a unit energy constellation. The total transmit energy is still unity.}. Notice that using the system model as defined above \textit{does not} incur energy, time, or bandwidth expansion.
Using the fact that $\mathbf{W}\mathbf{H}\mathbf{Q} \approx \mathbf{I}_{MN \times MN}$\footnote{$\mathbf{W}\mathbf{H}\mathbf{Q} = \mathbf{W}\mathbf{R}^{\mathsf{H}}\mathbf{Q}^{\mathsf{H}}\mathbf{Q} = (\mathbf{RR}^\mathsf{H} + \sigma^2\mathbf{I})^{-1}\mathbf{R}\mathbf{R}^{\mathsf{H}} \approx \mathbf{I}$, for small $\sigma^2$.\label{foot:mmse}}, the input-output relation can be approximated as:
\begin{align}
    \label{eq:ftn_pre_sys_model}
    \mathbf{y}' \approx \sqrt{\alpha}\mathbf{S}_1\mathbf{x}_1' + \sqrt{1-\alpha}\mathbf{S}_2\mathbf{x}_2' + \mathbf{n}'.
\end{align}
Note that we mount the information symbols on the mutually unbiased bases and information symbols used in $\mathbf{x}_1$ and $\mathbf{x}_2$ need not be from a uniquely decodable constellation set~\cite{harshan2011two}.

\subsection{Detection of Info. Symbols}
\label{subsec:detection_symbols}
At the receiver, the vector $\mathbf{y} \in \mathbb{C}^{MN \times 1}$ is received. The received $\mathbf{y}$ is combined to get $\mathbf{y}'$ (see~\eqref{eq:precoder5}). To detect $\mathbf{x}_1'$ and $\mathbf{x}_2'$ from $\mathbf{y}'$, we proceed as follows.
For detecting the $i$th frame ($i \in \{1, 2\}$), we perform pre-multiplication by the complex conjugate transpose of the basis matrix $\mathbf{S}_{i}$:
\begin{align}
    \label{eq:first_frame_eq}
    \mathbf{S}_{i}^{\mathsf{H}}\mathbf{y}' = \beta_{i}\mathbf{x}_{i}' + \tilde{\mathbf{n}}_{i},~i \in \{1,2\},
\end{align}
where $\tilde{\mathbf{n}}_{i} = \mathbf{S}_{i}^{\mathsf{H}}(\beta_{j}\mathbf{S}_{j}\mathbf{x}_{j}' + \mathbf{n}'), j \in \{2, 1\}$ is the resulting noise for $\beta_{1} = \sqrt{\alpha},~\beta_{2} = \sqrt{1-\alpha}$. Since matrices $\mathbf{S}_i, i=1, 2$ are chosen to be orthonormal, there is no noise enhancement. To recover information symbols $\mathbf{x}_i' \in \mathbb{C}^{MN \times 1}$, we perform and element-wise detection in Gaussian channel as:
\begin{align}
    \label{eq:first_frame_det}
    \hat{\mathbf{x}}'_i[q] = \underset{s \in \mathcal{A}}{\arg\min} \Vert \mathbf{S}_i^{\mathsf{H}}\mathbf{y}'[q] - \beta_is\Vert^2,
\end{align}
where $q = 0, 1, \cdots, MN-1$. Next, for detection of second frame, we cancel the contribution of the detected frame from the received frame as:
\begin{align}
    \label{eq:first_frame_rem}
    \bar{\mathbf{y}} = \mathbf{y}' - \beta_i\mathbf{S}_i\hat{\mathbf{x}}_i'.
\end{align}
To detect the second frame, the same Gaussian detection is performed as:
\begin{align}
    \label{eq:second_frame_det}
    \hat{\mathbf{x}}'_j[q] = \underset{s \in \mathcal{A}}{\arg\min} \Vert \mathbf{S}_j^{\mathsf{H}}\bar{\mathbf{y}}[q] - \beta_js\Vert^2,
\end{align}
where $q = 0, 1, \cdots, MN-1$.

\subsection{Trellis Coded Modulation}
\label{subsec:tcm}
From \eqref{eq:first_frame_eq} it is clear that the noise floor increases, since the effective noise is the sum of the additive Gaussian noise and the term $\mathbf{S}_i^{\mathsf{H}}\beta_j\mathbf{S}_j\mathbf{x}_j'$. This decreases the effective signal to interference plus noise ratio. The detector described in \eqref{eq:first_frame_det} or \eqref{eq:second_frame_det} therefore would have poor bit-error performance. To lift the information symbols well above noise level we use coding, specifically TCM, which ensures that the coded symbols do not create additional interference to the other frame, i.e., the number of uncoded symbols and coded symbols is the same albeit the coded symbols come from a higher constellation alphabet. The detection of information symbols follows that described in Sec. \ref{subsec:detection_symbols} with the only change being that instead of using detection in Gaussian channel as shown in \eqref{eq:first_frame_det}, we use a Viterbi decoder. 

\subsection{Turbo Iterations}
\label{subsec:turbo}
The first frame is decoded in the presence of the second frame. Subsequently, the decoded frame is re-encoded and subtracted from the received frame before the second frame is decoded. The performance can be improved if more rounds of interference cancellation and decoding are performed. This is termed as turbo iterations. Specifically, the decoded second frame is re-encoded and removed from the received frame and the first frame is decoded again. The decoded first frame is re-encoded and subtracted from the received frame before decoding the second frame. This forms one turbo iteration. Many turbo iterations could be performed until the performance improvement between two successive turbo iterations becomes insignificant.

\subsection{Maximizing Effective Rate}
\label{subsec:eff_rate}
In this Subsection, we derive the effective rate (for qualitative insight) as a function of the power distribution between the frames and sparsity of the second frame. Let $P_1 = \alpha P$ denote the energy in the first frame and $P_2 = (1-\alpha)P$ denote the energy in the second frame. Let $\nicefrac{P}{\sigma^2}$ denote the effective signal-to-noise ratio. The signal to interference plus noise ratio for the first frame is:
\begin{align}
    \mathsf{SINR}_1 = \frac{P_1}{\sigma^2 + \frac{\delta P_2}{MN}},
\end{align}
where $\delta \in [0, 1]$ denotes the sparsity of the second frame. The denominator follows from the fact that the second frame acts as interference to the first frame and that the interference is flat (see~\eqref{eq:mut_unb_basis}). Before we detect the second frame, we detect the first frame and subtract it from the received frame. Let $P_{s_1}$ denote the symbol error probability for the first frame. Then the signal to interference plus noise ratio for the second frame is:
\begin{align}
    \mathsf{SINR}_2 = \frac{P_2}{\sigma^2 + \frac{P_1P_{s_1}}{MN}}.
\end{align}
Given the $\mathsf{SINR}_1$, $P_{s_1}$ is given by~\cite{ungerboeck1982channel}:
\begin{align}
    P_{s_1} = Q\Big(\sqrt{2d_{\mathsf{free}}\mathsf{SINR}_1}\Big),
\end{align}
where $Q(\cdot)$ denotes the $Q$-function, $d_{\mathsf{free}}$ is the free Euclidean distance of the TCM trellis. From the Shannon formula, the rate can be written as:
\begin{align} 
    R_1 &= \log_2(1+\mathsf{SINR}_1) \\
    R_2 &= \log_2(1+\mathsf{SINR}_2).
\end{align}
The goal is to now choose $P_1$ and $P_2$, or equivalently $\alpha$, to maximize the effective rate $R_1+\delta R_2$. Fig.~\ref{fig:rate_vs_alpha_vs_delta} shows the plot of effective rate as a function of $\alpha$ and $\delta$. The optimal choice for $\alpha$ is a value close to, but not equal to, $1$\footnote{Here we have chosen the empirically optimal $\alpha$ value. Deriving the theoretically optimal value of $\alpha$ is an important direction of future research.\label{foot:opt_alpha}}.

\begin{figure}
    \centering
    \includegraphics[width=0.8\linewidth]{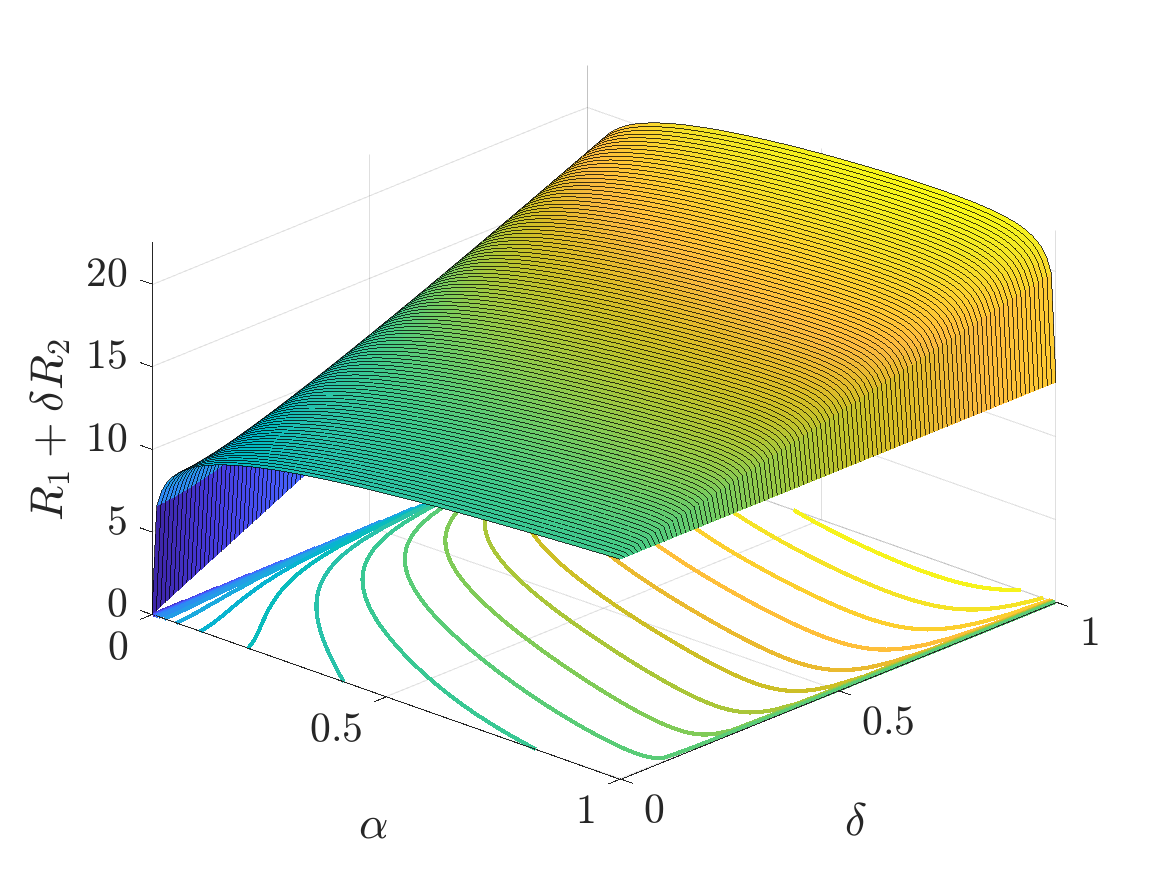}
    \caption{Effective rate as a function of $\alpha$ and $\delta$ for $40$ dB SNR. $M=31, N=37, d_{\mathsf{free}}=\sqrt{20}$. For a given $\delta$, increasing $\alpha$ increases the effective rate, but dips when $\alpha=1$, which corresponds to single frame transmission. For a given $\alpha$, the effective rate increases with $\delta$. The plot is agnostic to domain.}
    \label{fig:rate_vs_alpha_vs_delta}
\end{figure}

\subsection{Spectral Efficiency \& Performance Tradeoff}
\label{subsec:choosing_delta}
The input-output relation is given by~\eqref{eq:ftn_pre_sys_model}. Suppose the detection for frame $\mathbf{x}_1'$ is carried out first. Then, the signal energy is:
\begin{align}
    \label{eq:choose_delta1}
    E_s = \Vert \sqrt{\alpha}\mathbf{S}_x\mathbf{x}_1'\Vert_2^2 = \vert\alpha\vert = \alpha,
\end{align}
since the information symbols are chosen from a unit energy constellation and the basis is orthonormal. The interference plus noise power is:
\begin{align}
    \label{eq:choose_delta2}
    E_{n} = \Vert\sqrt{1-\alpha}\mathbf{S}_2\mathbf{x}_2'+\mathbf{n}'\Vert_2^2 = (1-\alpha)\delta + \sigma^2,
\end{align}
where the last equality follows from the fact that $\mathbf{x}_2'$ is $\delta$-sparse and $\sigma^2$ is the variance of noise. For detection of $\mathbf{x}_1'$, we would require $E_s$ to be greater than $E_n$. Let $\gamma>1$ be a real number, then:
\begin{align}
    \label{eq:choose_delta3}
    E_s \geq \gamma E_n \implies \alpha \geq \gamma((1-\alpha)\delta + \sigma^2) \implies  \delta \leq \frac{\alpha-\gamma\sigma^2}{\gamma(1-\alpha)}.
\end{align}
Hence, $\delta$ cannot be arbitrarily large. The inverse relation between $\delta$ and $\gamma$ implies that increasing $\delta$ leads to poor performance for given $\sigma^2$ and $\alpha$. This puts a limit on the maximum achievable spectral efficiency much like the Mazo limit for the conventional faster-than-Nyquist signaling~\cite{mazo1975faster}.

\subsection{Complexity}
\label{subsec:compl}
\begin{table}[t]
    \caption{Complexity of the MUB approach.}
    \label{tab:compl}
    \renewcommand{\arraystretch}{1.25}
    \centering
    \begin{tabular}{|c|c|}
        \hline
        Operation & Complexity \\
        \hline
        Mount on basis &  $\mathcal{O}(M^2N^2)$ \\
        \hline
        QR-precoding & $\mathcal{O}(M^2N^2)$\\
        \hline
        MMSE equalization & $\mathcal{O}(M^3N^3)$\\
        \hline
        TCM encoding \& decoding & $\mathcal{O}(MN)$ \& $\mathcal{O}(2^mMN)$\\
        \hline
        Interference cancellation & $\mathcal{O}(MN) + \mathcal{O}(M^2N^2)$\\
        \hline
    \end{tabular}
\end{table}
The complexity involved in each step of the MUB approach is presented in Table~\ref{tab:compl}. At the transmitter, mounting the information symbols on the bases incurs complexity square in the frame dimensions. QR-precoding also incurs similar complexity. Equalization via MMSE incurs cubic complexity owing to matrix inverse operation. At the transmitter (receiver), TCM encoding (decoding) incurs complexity $\mathcal{O}(MN)$ ($\mathcal{O}(2^mMN)$, $m$ is the memory size). Each turbo iteration has an interference cancellation step which has complexity $\mathcal{O}(MN)+\mathcal{O}(M^2N^2)$. The overall complexity is dominated by the MMSE equalization which is common to both MUB and Nyquist-sampling approach.

\subsection{Numerical Results}
\label{subsec:num_res_mub}
\begin{figure}
    \centering
    \includegraphics[width=0.9\linewidth]{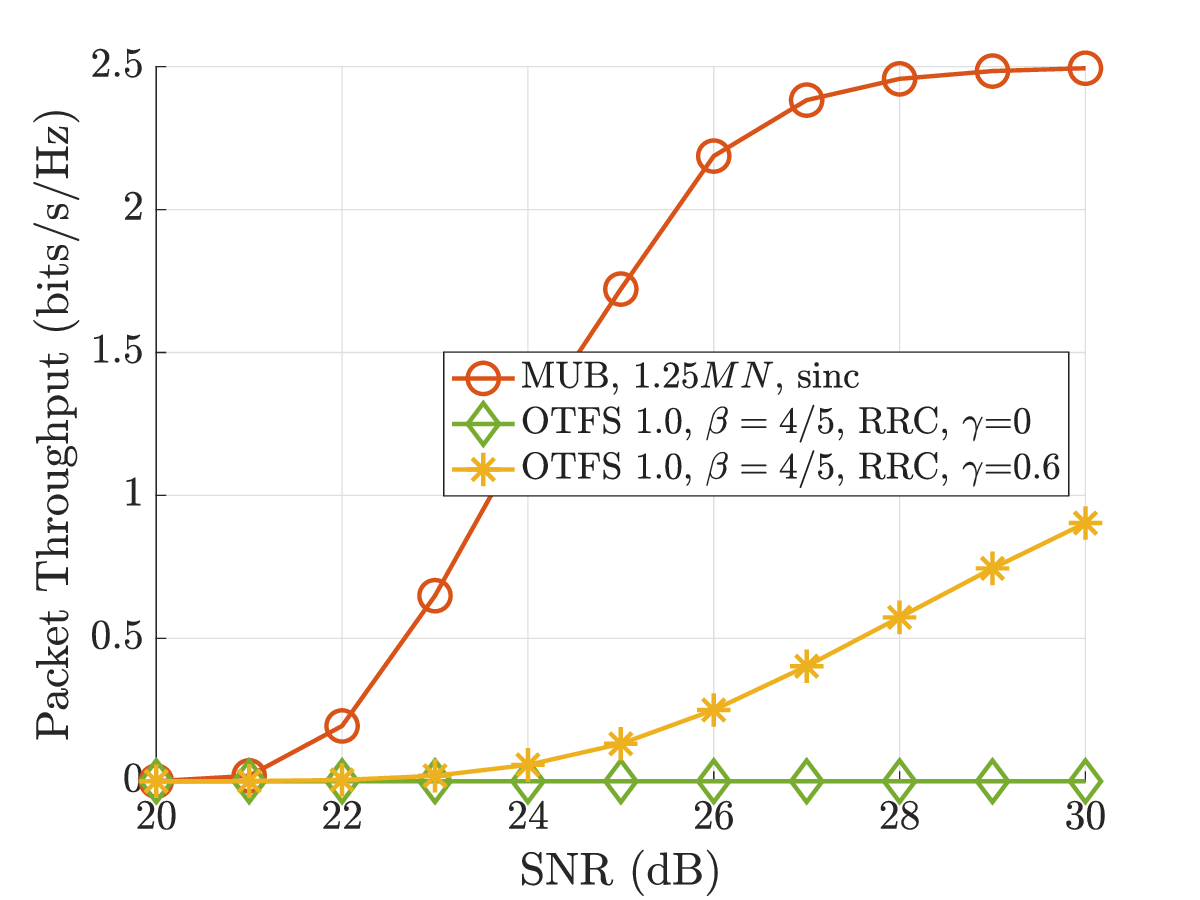}
    \caption{Throughput of the proposed MUB scheme as a function of SNR for $1.25MN$ symbols accounting for packet retransmission. A 6-path Vehicular-A channel~\cite{veh_a}, and parameters $M=31, N=37, \nu_p=30$ kHz. The proposed MUB scheme achieves the highest throughput.}
    \label{fig:throughput}
    \vspace{-4mm}
\end{figure}

We simulate Zak-OTFS MUB approach, OFDM, and OTFS 1.0 system models with FTN approach.
For the MUB scheme, we consider the first frame ($\mathbf{x}_1$) to be a full frame containing $MN$ symbols and the superimposed second frame ($\mathbf{x}_2$) to be sparse ($=0.25MN$ symbols) and detect the full frame first. We TCM encode both the frames. For TCM encoding, we use the following rule:
\begin{align}
    \label{eq:4_qam_16_qam_encoding}
    s = (-1)^{(\mathbf{g}_{11}^\top\mathbf{b})_2} - 3(-1)^{(\mathbf{g}_{12}^\top\mathbf{b})_2}
\end{align}
where $\mathbf{g}_{11} = [0 \ 1 \ 0]^\top, \mathbf{g}_{12} = [1 \ 1 \ 1]^\top$ are the generator polynomials, $\mathbf{b} \in \{0, 1\}^{n \times 1}$ is a binary vector to the input of the encoder, and $n = 3$ is the sliding window length. The output of TCM encoder takes values in the set $\{-2, -1, 1, 2\}$. Complex information symbols are constructed by treating TCM encoded output at even indices as real part and at the odd indices as corresponding imaginary part. The TCM encoder converts 4-QAM symbols (i.e., 2 bits) to amplitude modulated 16-QAM symbols (i.e., 4 bits)\footnote{In this paper, as a proof of concept, we use this design for the TCM encoder. However, optimizing the code design is an important direction for future research.}. For Zak-OTFS, OTFS 1.0, and OFDM, we use the same precoding technique described in Sec.~\ref{subsec:precoder_design}. Further, for fair comparison we use the same TCM encoding for all modulation schemes. The power distribution between the frames is fixed by choosing $\alpha=0.9$.

For the MUB scheme, we transmit $1.25MN$ symbols in bandwidth $B$ and time $T$. This implies a compression ratio of $\nicefrac{1}{1.25} = \nicefrac{4}{5}$. Therefore for comparison, in OFDM and OTFS 1.0 we choose the compression factor $\beta=\nicefrac{4}{5}$.

Let $p_b$ be the BER and $L$ the packet length in bits. Packet success probability $P_s=(1-p_b)^L$, packet error probability $P_e=1-P_s$. Let $N$ be the number of transmissions per packet; with automatic repeat request (ARQ), $\mathbb{E}[N]=1/P_s$. For rate $R$, throughput is $T=\frac{R}{\mathbb{E}[N]}=R P_s=R(1-p_b)^L$. For MUB (our approach), $L=2.5MN, R=2.5$ and for OTFS 1.0 FTN, $L=2 MN, R=\tfrac{2}{\beta(1+\gamma)}$.

Fig.~\ref{fig:throughput} shows the throughput as a function of SNR using the above calculations. The proposed MUB approach achieves the highest throughput and at high SNRs reaches the maximum achievable limit, while OTFS 1.0 with $\gamma=0.6$ achieves only about 1 bits/s/Hz. OTFS 1.0 with $\gamma=0$ does not achieve usable throughput since the BER floors and would require multiple retransmissions.

\subsection{Open Research Problems}
\label{open_res_prob_mub}
In this Section, we used TCM for encoding both frames. Would using LDPC codes give better performance? The order in which the frames are detected/decoded plays an important role, which has not been considered here. The value of $\alpha$ is influenced by this decision and therefore choosing to detect/decode the full frame first pushes the optimal value of $\alpha$ to be close to $1$. Detecting the sparse frame first may result in different optimal $\alpha$ and may also improve performance. Lastly, the effect of distribution of $1.25MN$ symbols between the two frames remains to be studied.

\section{Equalization}
\label{sec:equalization}

The choice between embracing and preventing ICI/ISI (recall Fig.~\ref{fig:prevent_vs_embrace}) has implications on the overall performance of the wireless system. For example, while the philosophy of embracing ICI/ISI allows for tighter packing of information symbols, such systems are harder to equalize. On the other hand, the philosophy of preventing ICI/ISI results in more separated information symbols but such systems are easier to equalize. In this Section, we discuss the equalization complexity of modulation schemes from both philosophies and show how to reduce this complexity when we embrace ICI/ISI\footnote{In this Section, we reduce the complexity of Zak-OTFS by using a transform that maps the DD channel to the frequency domain. However, an alternate architecture where a spread carrier is overlaid on a CP-OFDM signal is also possible and is studied in~\cite{nisar2026improving}. Acquiring channel is efficient in Zak-OTFS and equalizing is efficient in CP-OFDM and this approach combines the advantages of both.}.


\subsection{OFDM}
\label{subsec:ofdm_eq}

In OFDM, the philosophy is to avoid/prevent ISI/ICI. In other words, the subcarrier spacing $\Delta f$ is adjusted such that the maximum Doppler spread $\nu_{\max}$ is insignificant in comparison, i.e., $\Delta f \gg \nu_{\max}$. The information grid for OFDM shown in Fig.~\ref{fig:prevent_vs_embrace} where the information symbols are placed on a coarse grid. This choice of preventing ISI/ICI ensures that the information symbols do not interact with one another when passing through the channel which translates to easier equalization at the receiver.

Equalization in OFDM is carried out using a one-tap equalizer~\cite{tse2005fundamentals}. The system model for OFDM can be written as:
\begin{align}
    \label{eq:ofdm_eq1}
    \mathbf{y}_{\mathsf{F}} = \mathbf{H}_{\mathsf{F}}\mathbf{x}_{\mathsf{F}} + \mathbf{n}_{\mathsf{F}},
\end{align}
where $\mathbf{y}_{\mathsf{F}}, \mathbf{x}_{\mathsf{F}}, \mathbf{n}_{\mathsf{F}}\in \mathbb{C}^{MN\times 1}$ are respectively the received, transmitted, and noise vector in the frequency domain. The channel matrix $\mathbf{H}_{\mathsf{F}}$ (under the assumption $\Delta f \ll \nu_{\max}$) is diagonal. The one-tap equalizer is:
\begin{align}
    \label{eq:ofdm_eq2}
    \mathbf{y}_{\mathsf{F}}' = \mathbf{W}_{\mathsf{F}}\mathbf{y}_{\mathsf{F}},
\end{align}
where $\mathbf{W}_{\mathsf{F}}\in \mathbb{C}^{MN \times MN}$ is either $\mathbf{H}_{\mathsf{F}}^{-1}$ (the least-squares equalizer) or $(\mathbf{H}_{\mathsf{F}}^\mathsf{H}\mathbf{H}_{\mathsf{F}} + \sigma^2\mathbf{I})^{-1}\mathbf{H}_{\mathsf{F}}^{\mathsf{H}}$ (the MMSE equalizer). Since $\mathbf{H}_{\mathsf{H}}$ is a diagonal matrix, computation of $\mathbf{W}_{\mathsf{F}}$ (also a diagonal matrix) is very efficient and incurs complexity linear in the frame dimension.


\subsection{Zak-OTFS}
\label{subsec:zak_otfs_eq}
In Zak-OTFS, on the other hand, the philosophy is to embrace ISI/ICI. This implies that the channel ``spreads'' the information symbols, and the spread is greater than the symbol separation along both delay and Doppler axes. A channel with $P$ resolvable paths results in $P$ copies of each transmitted symbols. Therefore the channel matrix has (at least) $P$ non-zero elements along each column. Further, each received symbol is also a superposition of $P$ transmitted symbols. Therefore each row has (at least) $P$ non-zero elements. Practical pulse shapes decay slowly, so these add a few non-zero entries in each row and column around each of the $P$ entries. The result is that the channel matrix in the DD domain is not sparse. Computing the least squares or MMSE equalizer incurs complexity cubic in the frame dimension. 

The complexity of equalizing Zak-OTFS can be reduced if the equalization is carried out in the frequency domain instead of the DD domain~\cite{Mattu2025_npj}. Symbols in the DD domain can be moved to the frequency domain using the inverse discrete frequency Zak transform (IDFZT)~\cite[Chapter~8]{otfs_book}\footnote{Note that, this expression is different from DZT~\cite{lampel2022onotfs, yogesh2023onthe}. DZT converts the information symbols in the DD domain to time domain whereas IDFZT converts DD domain to frequency domain.}:
\begin{align}
    \label{eq:zakofdm4}
    \mathbf{s}[i] = \frac{1}{\sqrt{M}} \sum_{k_0=0}^{M-1} \mathbf{X}[k_0,i \bmod N] e^{-j\frac{2\pi}{MN} i k_0},
\end{align}
where $\mathbf{X} \in \mathbb{C}^{M\times N}$ is the matrix of DD symbols, and $\mathbf{s} \in \mathbb{C}^{MN \times 1}$ is a vector of symbols in the frequency domain. Note that, the IDFZT is a unitary transform. Let $\mathbf{R} \in \mathbb{C}^{MN \times MN}$ denote the IDFZT matrix, then the system model for Zak-OTFS can be expressed as (continuing on from~\eqref{eq:sec3eq25}):
\begin{align}
    &\mathbf{y} = \mathbf{H}\mathbf{x} + \mathbf{n} = \mathbf{R}^{\mathsf{H}}\mathbf{R}\mathbf{H}\mathbf{R}^{\mathsf{H}}\mathbf{R}\mathbf{x} + \mathbf{n} \nonumber \\
    &\implies \mathbf{R}\mathbf{y} = \mathbf{R}\mathbf{H}\mathbf{R}^{\mathsf{H}}(\mathbf{R}\mathbf{x}) + \mathbf{R}\mathbf{n} \implies \mathbf{r} = \mathbf{H}_{\mathsf{FD}}\mathbf{s} + \mathbf{w},
\end{align}
where $\mathbf{H}_{\mathsf{FD}}\in\mathbb{C}^{MN \times MN}$ is the frequency domain channel matrix, $\mathbf{r}, \mathbf{s}, \mathbf{w}\in\mathbb{C}^{MN \times 1}$ are respectively the transmitted symbols, received symbols, and noise vector in the frequency domain. Since the IDFZT is unitary, both system models are equivalent. The FD channel matrix $\mathbf{H}_{\mathsf{FD}}$ is modulo banded as shown in Fig.~\ref{fig:mod_banded}. This structure in the channel matrix can be exploited to devise low-complexity equalization algorithms. One such algorithm is the conjugate gradient method (CGM)~\cite{liu2020energy}. The idea is to convert the received DD symbols to frequency domain, equalize in the frequency domain and move the equalized symbols back to the DD domain for detection. 

\begin{figure}
    \centering
    \includegraphics[width=0.9\linewidth]{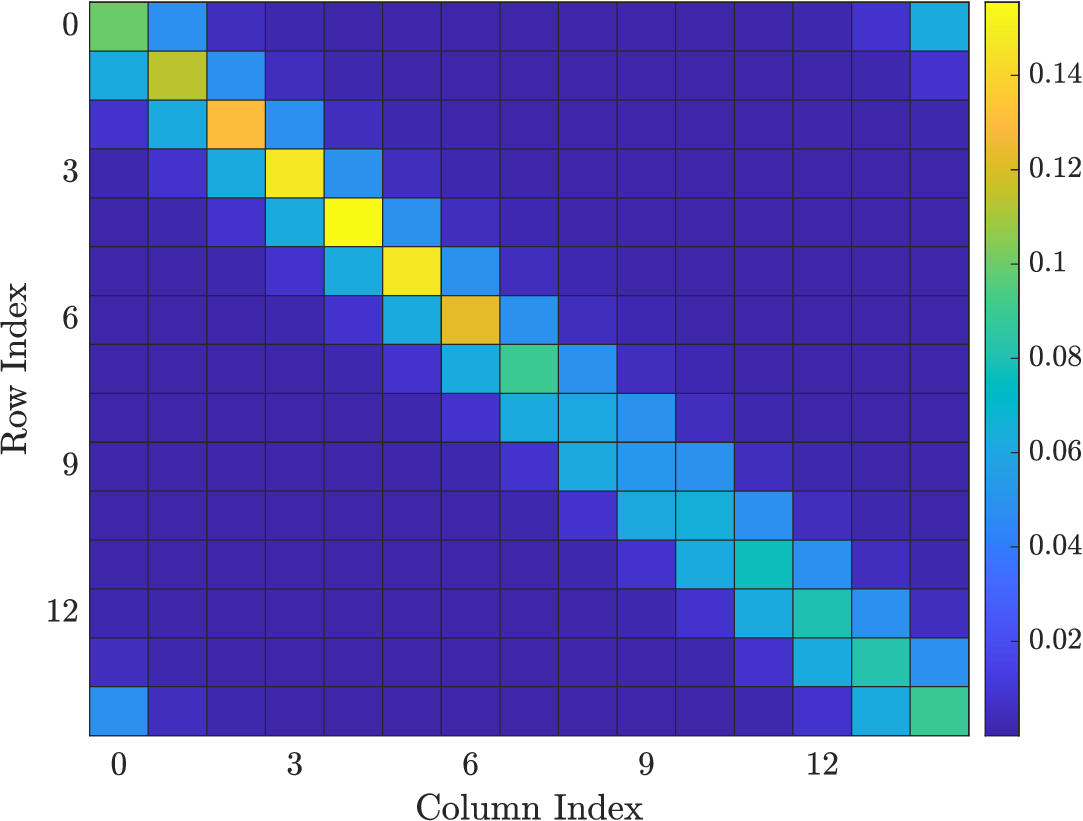}
    \caption{Modulo banded structure in the FD channel matrix $\mathbf{H}$ for $M=3, N=5, \nu_p = 30$ kHz, and the Veh-A channel~\cite{veh_a}.}
    \label{fig:mod_banded}
\end{figure}

\begin{algorithm}
    \caption{Iterative conjugate gradient algorithm}
    \label{alg:conj_grad}
    \begin{algorithmic}[1]
        \STATE \textbf{Inputs:} Channel matrix $\mathbf{H}$, received vector $\mathbf{r}$, covariance matrix of noise $\mathbf{R}_n$, number of delay bins $M$, number of Doppler bins $N$, tolerance $\epsilon$, maximum number of iterations $k$
        \STATE \textbf{Initialize:} $\breve{\mathbf{H}} = \mathbf{H}^{\mathsf{H}}\mathbf{H}$, $\mathbf{b} = \mathbf{H}^{\mathsf{H}}\mathbf{r}$, $\tilde{\mathbf{s}}^{(0)} = \mathbf{0}_{MN\times 1}$, $\mathbf{c}^{(0)} = \mathbf{b} - \breve{\mathbf{H}}\tilde{\mathbf{s}}^{(0)}-\mathbf{R}_n\tilde{\mathbf{s}}^{(0)}$, $\mathbf{p}^{(0)} = \mathbf{c}^{(0)}$, $c^{(0)}_{\mathsf{norm}} = \Vert\mathbf{c}\Vert_2^2$
        \FOR {$i=1:k$}
            \STATE $\mathbf{a}_{\mathbf{p}} = \breve{\mathbf{H}}{\mathbf{p}^{(i-1)}}+\mathbf{R}_n\mathbf{p}^{(i-1)}$
            \STATE $\alpha = \frac{c^{(i-1)}_{\mathsf{norm}}}{\mathbf{p}^{\mathsf{H}}\mathbf{a}_{\mathbf{p}}}$
            \STATE $\tilde{\mathbf{s}}^{(i)} = \tilde{\mathbf{s}}^{(i-1)} + \alpha \mathbf{p}^{(i-1)}$
            \STATE $\mathbf{c}^{(i)} = \mathbf{c}^{(i-1)} - \alpha\mathbf{a}_{\mathbf{p}}$
            \STATE $c^{(i)}_{\mathsf{norm}} = \Vert\mathbf{c}^{(i)}\Vert_2^2$
            \IF{$c^{(i)}_{\mathsf{norm}} < \epsilon^2$}
                \STATE break
            \ENDIF
            \STATE $\mathbf{p}^{(i)} = \mathbf{c}^{(i)} + \frac{c^{(i)}_{\mathsf{norm}}}{c^{(i-1)}_{\mathsf{norm}}}\mathbf{p}^{(i-1)}$
            \STATE $c^{(i-1)}_{\mathsf{norm}} = c^{(i)}_{\mathsf{norm}}$
        \ENDFOR
        \RETURN $\tilde{\mathbf{s}}^{(i)}$
    \end{algorithmic}
\end{algorithm}

The algorithm listing of CGM is provided in Algorithm~\ref{alg:conj_grad}. Instead of computing the matrix inverse directly which incurs cubic complexity, the CGM algorithm iteratively computes the inverse while solving for the equalized symbol vector. The complexity of the CGM algorithm is dependent on the number of non-zero entries in the channel matrix. This is characterized by a parameter called the ``spread width'' (denoted by $b$), which denotes the number of non-zero entries on either side of the diagonal.

In the Algorithm, $\mathbf{p}^{(i)}$ denotes the search direction vector used to update the estimated symbol vector $\tilde{\mathbf{s}}^{(i)}$. At initialization, the estimate is set to $\tilde{\mathbf{s}}^{(0)}=\mathbf{0}$, and hence the initial residual is:
\begin{align}
\mathbf{c}^{(0)}
&= \mathbf{b} - \left(\breve{\mathbf{H}}+\mathbf{R}_n\right)
\tilde{\mathbf{s}}^{(0)} = \mathbf{b} = \mathbf{H}^{\mathsf{H}}\mathbf{r}.
\end{align}
Since no previous search direction is available at the first iteration, the initial search direction is chosen as the initial residual, i.e.:
\begin{equation}
\mathbf{p}^{(0)}
= \mathbf{c}^{(0)}
= \mathbf{H}^{\mathsf{H}}\mathbf{r}.
\end{equation}
In subsequent iterations, $\mathbf{p}^{(i)}$ is recursively constructed from the current residual and the previous search direction so that the search directions are conjugate with respect to the system matrix $\breve{\mathbf{H}}+\mathbf{R}_n$. Specifically, the search direction is updated as:
\begin{equation}
\mathbf{p}^{(i)}
=
\mathbf{c}^{(i)}
+
\frac{\left\|\mathbf{c}^{(i)}\right\|_2^2}
     {\left\|\mathbf{c}^{(i-1)}\right\|_2^2}
\mathbf{p}^{(i-1)}.
\end{equation}
The vector $\mathbf{p}^{(i-1)}$ therefore determines the direction along which the current symbol estimate is updated according to:
\begin{equation}
\tilde{\mathbf{s}}^{(i)}
=
\tilde{\mathbf{s}}^{(i-1)}
+
\alpha\mathbf{p}^{(i-1)}.
\end{equation}

\subsection{Numerical Results}
\begin{figure}
    \centering
    \includegraphics[width=0.99\linewidth]{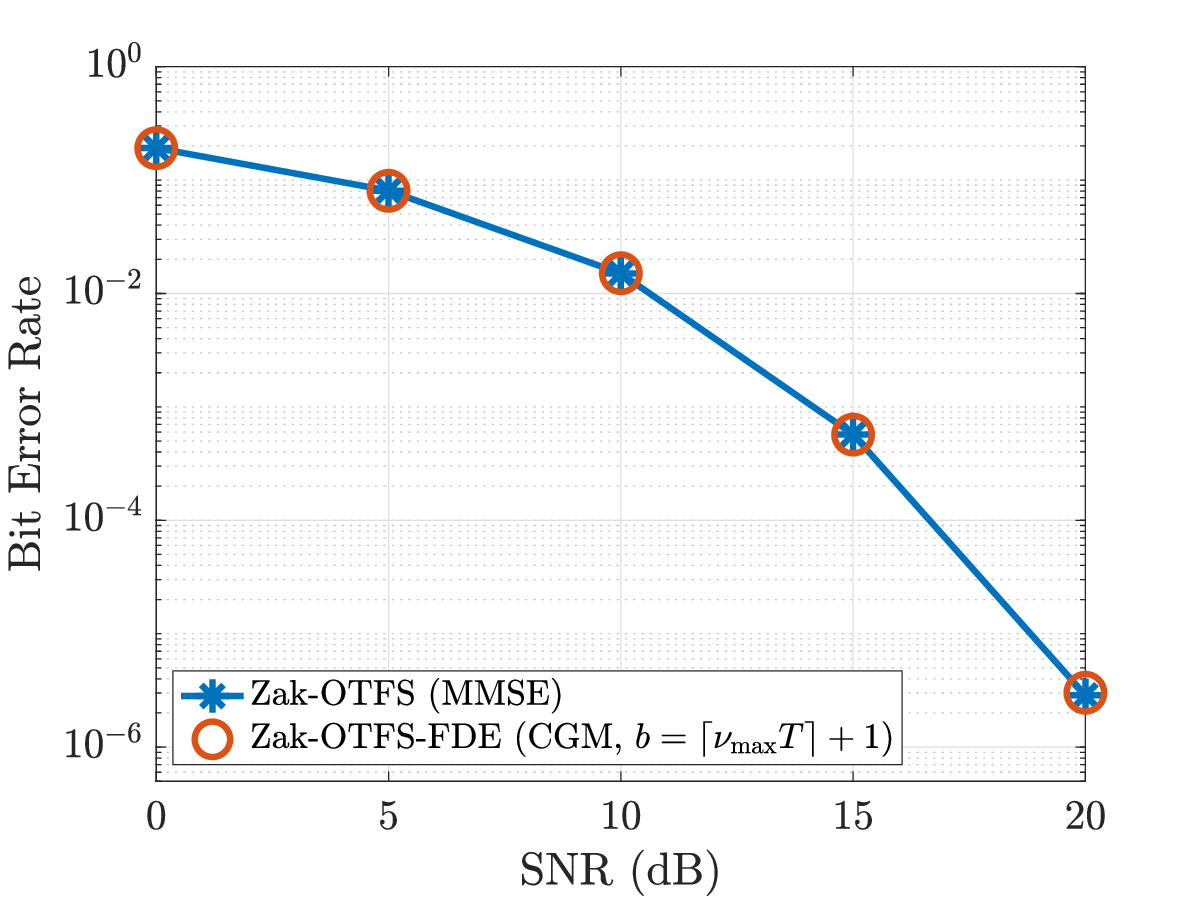}
    \caption{BER performance comparing the DD and FD equalization (FDE) of Zak-OTFS. FDE of Zak-OTFS via CGM algorithm with $k=250, \epsilon=10^{-6}, b=\lceil\nu_{\max}T\rceil+1$. Simulation parameters: $M = 31$, $N = 37$, $\nu_{p} = 30$ kHz, Veh-A channel with $\nu_{\max} = 815$ Hz, RRC pulse shaping filter with $\beta_\tau = \beta_\nu = 0.6$.}
    \label{fig:perf_csi_vs_algo_M31_N37_highdop}
\end{figure}
Figure~\ref{fig:perf_csi_vs_algo_M31_N37_highdop} shows the bit-error performance of Zak-OTFS using MMSE equalizer in the DD domain and CGM in the frequency domain. There is no performance difference between the two approaches. The DD equalization incurs complexity cubic in the order of frame dimension while frequency domain equalization (FDE) incurs complexity linear in the frame dimension.

\subsection{Open Research Problems}
\label{subsec:equalization_future}


Our method above is an example of a low-complexity equalization scheme, and in general the problem of overcoming the $O(M^3N^3)$ equalization complexity barrier of Zak-OTFS is open. Is it possible to design a one-tap equalizer for Zak-OTFS, akin to the equalizer in OFDM? Furthermore, how should equalization be performed in MIMO systems? The broad question of the optimality of separating equalization from channel estimation and symbol detection remains open. Another interesting open problem is the design of filters that reduce the equalization complexity (by reducing the effective number of taps) by introducing controlled interference in the transfer domain.


\section{Neural Receiver}
\label{sec:neural_rx}

\subsection{Principle of Neural Reception as Deconvolution}
Wireless signal propagation is fundamentally governed by convolution. The discrete-time baseband model for linear time invariant (LTI) channels is:
\begin{equation}
    y[n] = (h * x)[n] + w[n],
\end{equation}
where $h$ is the channel impulse response. Therefore, receiver design reduces to approximating the inverse operator:
\begin{equation}
    x[n] \approx (g * y)[n], \quad g \approx h^{-1}.
\end{equation}

Recent work proposes a \emph{universal neural receiver} that explicitly adopts this perspective, where the neural network is designed to invert convolution rather than implicitly learn it \cite{calderbank_universal_rx}. The key idea is to separate \emph{what to invert} (which is channel-dependent) from \emph{how to invert it} (which depends on the specific implementation details). This two-stage approach results in lightweight architectures capable of real-time adaptation at the \emph{speed of wireless}.

\subsection{Reservoir Computing as a Neural Equalizer}
Reservoir computing (RC), particularly when implemented using echo-state networks (ESNs), provides an efficient realization of neural receivers. An ESN consists of a fixed recurrent reservoir with $\mathsf{N}_{\mathrm{n}}$ randomly interconnected neurons, an input weight layer and a trainable linear readout:
\begin{align}
    \mathbf{s}[n] &= \sigma\big(\mathbf{W}_{\mathrm{res}} \mathbf{s}[n-1] + \mathbf{W}_{\mathrm{in}} \mathbf{y}[n]\big), \\
    \hat{\mathbf{x}}[n] &= \mathbf{W}_{\mathrm{out}} \mathbf{s}[n],
\end{align}
where $\sigma(\cdot)$ denotes a non-linear activation function, e.g., the hyperbolic tangent function, $\mathbf{s}[n] \in \mathbb{C}^{\mathsf{N}_{\mathrm{n}} \times 1}$ denotes the internal reservoir state at time index $n$, $\mathbf{y}[n] \in \mathbb{C}^{\mathsf{d}_{\mathrm{in}} \times 1}$ and $\hat{\mathbf{x}}[n] \in \mathbb{C}^{\mathsf{d}_{\mathrm{out}} \times 1}$ respectively denote the input to and output from the ESN at time index $n$, and $\mathbf{W}_{\mathrm{in}} \in \mathbb{C}^{\mathsf{N}_{\mathrm{n}} \times \mathsf{d}_{\mathrm{in}}}$, $\mathbf{W}_{\mathrm{res}} \in \mathbb{C}^{\mathsf{N}_{\mathrm{n}} \times \mathsf{N}_{\mathrm{n}}}$ and $\mathbf{W}_{\mathrm{out}} \in \mathbb{C}^{\mathsf{d}_{\mathrm{out}} \times \mathsf{N}_{\mathrm{n}}}$ respectively denote the input, reservoir and output weight matrices.

Only $\mathbf{W}_{\mathrm{out}}$ is trained via least squares to enable \emph{online, low-complexity learning} using limited pilot symbols \cite{jere_xai_rc_mimo}:
\begin{equation}
    \mathbf{W}_{\mathrm{out}} = \arg\min_{\mathbf{W}} \|\mathbf{W}\mathbf{S} - \mathbf{X}\|_F^2,
\end{equation}
where $\mathbf{S} = \begin{bmatrix}
    \mathbf{s}[0] & \cdots & \mathbf{s}[T-1]
\end{bmatrix} \in \mathbb{C}^{\mathsf{N}_{\mathrm{n}} \times T}$ and $\mathbf{X} = \begin{bmatrix}
    \hat{\mathbf{x}}[0] & \cdots & \hat{\mathbf{x}}[T-1]
\end{bmatrix} \in \mathbb{C}^{\mathsf{d}_{\mathrm{out}} \times T}$ respectively denote the matrix of reservoir states and ESN outputs / training labels across $T$ time steps.

\subsection{Signal Processing Interpretation of ESNs}
A key insight is that a \emph{linearized ESN} (with identity $\sigma(\cdot)$) is equivalent to a linear time-invariant (LTI) filter. A single neuron corresponds to a first-order IIR filter:
\begin{equation}
    H(z) = \frac{1}{1 - a z^{-1}},
\end{equation}
while a multi-neuron reservoir implements a higher-order IIR system \cite{jere_equalization_rc}. Therefore, an ESN realizes:
\begin{equation}
    \hat{x}[n] = \sum_{k=0}^{\infty} g[k]\, y[n-k],
\end{equation}
where the effective impulse response $g[k]$ is parameterized by the reservoir. From a geometric perspective, the ESN constructs a function approximation:
\begin{equation}
    g \in \mathrm{span}\{g_k\}_{k=1}^{K},
\end{equation}
where each neuron contributes a basis filter. The training corresponds to projecting the inverse channel response onto this subspace \cite{jere_xai_rc_mimo}.

\subsection{Structured Inverse Filtering and Toeplitz Representation}
Over a block length $T$, the LTI channel can be written as:
\begin{equation}
    \mathbf{y} = \mathbf{H}\mathbf{x},
\end{equation}
where $\mathbf{H}$ is a lower triangular Toeplitz (LTT) matrix. The neural receiver approximates:
\begin{equation}
    \hat{\mathbf{x}} = \mathbf{G}\mathbf{y}, \quad \mathbf{G} \approx \mathbf{H}^{-1}.
\end{equation}

The ESN implicitly parameterizes $\mathbf{G}$ as a finite-length inverse impulse response. Due to truncation, $\mathbf{G}$ is approximately Toeplitz but not strictly banded. Furthermore, the impulse response can be expressed as a sum of exponentials:
\begin{equation}
    g[k] = \sum_{i=1}^{N} c_i \lambda_i^k,
\end{equation}
corresponding to a \emph{Vandermonde (pole) expansion}, which explains the expressive power of ESNs \cite{rc_universal_approx}.

\subsection{Incorporating Domain Knowledge}
Unlike conventional deep learning approaches, RC can directly incorporate domain knowledge by configuring the reservoir weights. For LTI channels, the framework presented thus far enables utilizing channel statistics to initialize ESNs appropriately via pole placement~\cite{jere_equalization_rc,jere_xai_rc_mimo,calderbank_universal_rx}. Similar insights extend to MIMO systems, where domain knowledge, e.g., separability of the channel across antennas and time / frequency, can be used to design and configure ESNs appropriately~\cite{jere_xai_rc_mimo}.

\subsection{$2$D Neural Receiver for Linear Time Varying Channels}
The RC architecture described thus far was limited to one dimension in order to approximate LTI channels. In high-mobility scenarios, the channel is better represented by linear time varying (LTV) channels. In this context, a \emph{two-dimensional} RC ($2$D-RC) architecture has been proposed in~\cite{xu_2drc_icc,xu_2drc_twc} to equalize multicarrier OTFS (MC-OTFS) transmissions. The discretized MC-OTFS input-output relation is given by 2D circular convolution:
\begin{equation}
    Y[k,l] = \sum_{m,n} H[m,n] X[k-m,l-n].
\end{equation}

The $2$D-RC embeds the $2$D circular convolution structure into its architecture to equalize channels in the delay-Doppler domain. Unlike earlier approaches requiring multiple $1$D reservoirs, the $2$D-RC uses a single structured network and achieves improved performance and generalization \cite{xu_2drc_twc}.

\subsection{Open Research Problems}
\label{subsec:neural_rx_future}

In this Section, we described a general neural network architecture capable of de-convolving general LTI channels and LTV channels in the specific case of MC-OTFS transmissions. Key questions remain unaddressed, such as how to extend the $2$D-RC architecture for MC-OTFS to other waveforms, e.g., Zak-OTFS, where the input-output relation is given by twisted convolution as in~\eqref{eq:sec3eq21}? Moreover, we have not addressed the general question of \emph{interpretability} -- is it possible to physically interpret the learned parameters of the neural receiver, e.g., in terms of the delay-Doppler representation of the channel? Furthermore, is it possible to utilize similar architectures at the \emph{transmitter}, e.g., for precoding? Finally, extensions to multi-antenna, multi-user systems bring forth the possibility of generalizing the $1$D and $2$D neural receiver architectures to \emph{three dimensions}, which has not been explored in prior work.

\section{Radar}
\label{sec:radar}

\begin{figure*}[!ht]
\centering
\begin{subfigure}{\linewidth}
    \includegraphics[width=\textwidth]{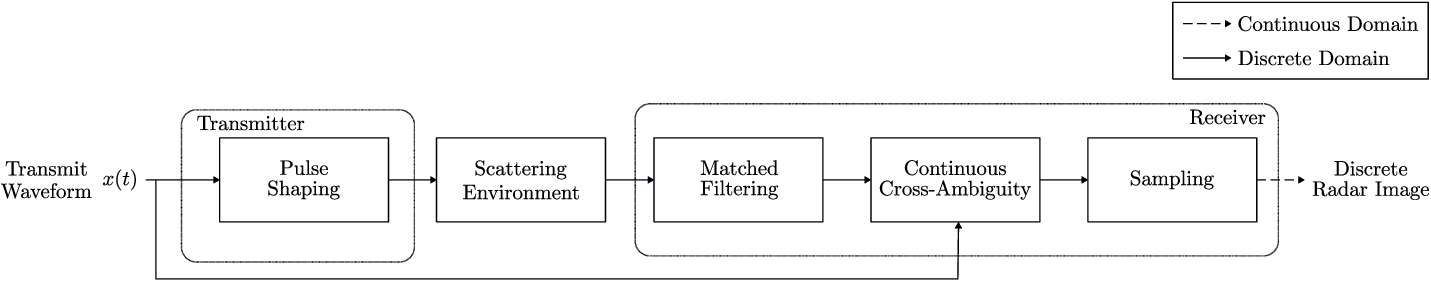}
\caption{Continuous radar architecture.}
    \label{fig:cont_radar}
\end{subfigure} \\
\begin{subfigure}{\linewidth}
    \includegraphics[width=\textwidth]{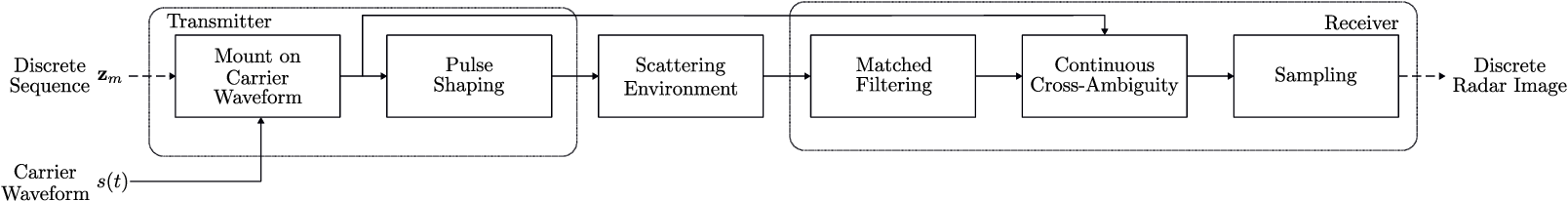}
\caption{Discrete radar architecture with separate optimization of sequences and carriers.}
    \label{fig:disc_phase_coded_radar}
\end{subfigure} \\
\begin{subfigure}{\linewidth}
    \includegraphics[width=\textwidth]{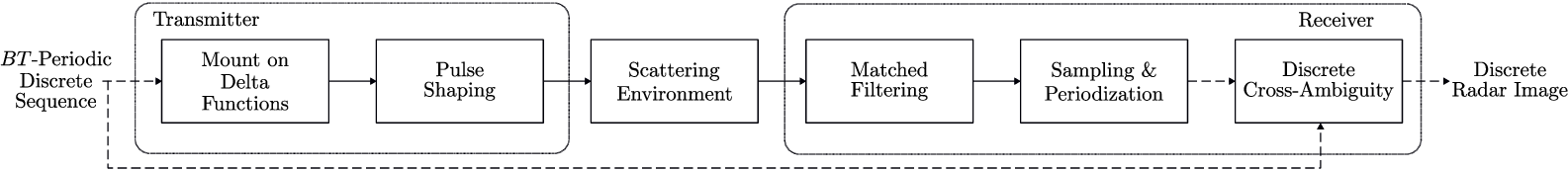}
\caption{Proposed discrete radar architecture in~\cite{Mehrotra2025_EURASIP}.}
    \label{fig:disc_prop_radar}
\end{subfigure}
\caption{Block diagrams showing different radar architectures. Figure adapted from~\cite{Mehrotra2025_EURASIP}.}
\vspace{-5mm}
    \label{fig:block_diag}
\end{figure*}




At a high-level, the operation of a radar consists of transmitting a waveform, which is reflected by the scattering environment and is processed at the receiver~\cite{Woodward1953,Skolnik1980,Levanon2004}. The system model in~\eqref{eq:prelim1} also applies to radar systems, with the channel spreading function $h(\tau,\nu)$ denoting the reflectivity of physical objects in the scattering environment as a function of delay $\tau$ and Doppler $\nu$. Upon sampling, the discrete system model in~\eqref{eq:prelim2} applies. Radar systems estimate the sampled function $\widehat{\mathbf{h}}_{\mathsf{eff}}[k,l]$ in~\eqref{eq:prelim2} (``discrete radar image'') as a function of delay index $k$ and Doppler index $l$ by computing the cross-ambiguity function between the transmitted and received waveforms as in~\eqref{eq:prelim6}. The goal of radar waveform design is to choose the transmitted waveform such that the estimated image of the scattering environment, $\widehat{\mathbf{h}}_{\mathsf{eff}}[k,l]$, accurately matches the ground-truth channel spreading function.

There are two common architectures for radar systems -- continuous radar and discrete radar with separate optimization of sequences -- which are illustrated in Figures~\ref{fig:block_diag}(\subref{fig:cont_radar})-(\subref{fig:disc_phase_coded_radar}).

Continuous radars~\cite{Auslander1985,Miller1991,Moran2001_mathofradar,Moran2004_grouptheory_radar,Calderbank2015_ltv,Jankiraman2018_fmcw} (Fig.~\ref{fig:block_diag}(\subref{fig:cont_radar})) transmit a continuous TD waveform $x(t)$ after pulse shaping to limit time and bandwidth to $T$ and $B$ respectively. The receiver performs matched filtering to remove the impact of pulse shaping, following which a continuous domain cross-ambiguity function is computed between the output of the matched filter and the transmit waveform. The radar forms a discrete radar image by sampling the continuous cross-ambiguity function at multiples of the delay and Doppler resolution.

Conventional discrete radars~\cite{benedetto_phasecoded,Dang2020,Tang2022,Costas1984,Bell2003,Vehmas2021,Pezeshki2009,Fei2024_phasecoded_compl} (see Fig.~\ref{fig:block_diag}(\subref{fig:disc_phase_coded_radar})) mount a discrete sequence $\mathbf{z}_m$ onto a continuous carrier waveform (e.g., a rectangular pulse train), where the sequence is optimized separately from the carrier waveform. Possible strategies for sequence optimization include coding either the phase~\cite{benedetto_phasecoded,Dang2020,Tang2022,Fei2024_phasecoded_compl}, frequency~\cite{Costas1984,Bell2003,Vehmas2021} or amplitude~\cite{Pezeshki2009} of the carrier waveform. After pulse shaping at the transmitter and matched filtering at the receiver, a continuous cross-ambiguity function is computed, which is sampled to obtain the discrete radar image. 

\begin{figure*}[!ht]
\centering
\begin{subfigure}{0.32\linewidth}
    \includegraphics[width=\textwidth]{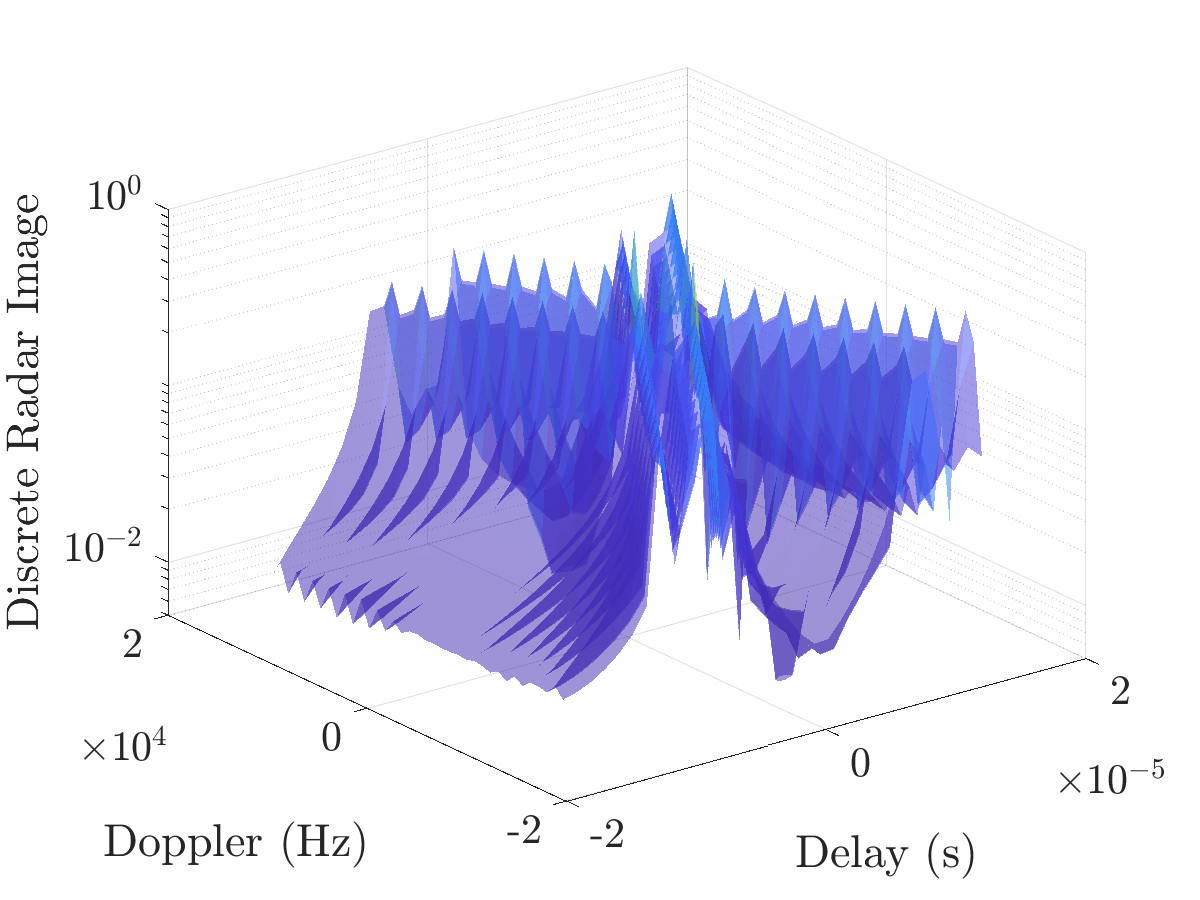}
\caption{Continuous (Fig.~\ref{fig:block_diag}(\subref{fig:cont_radar})).}
    \label{fig:heatmaps_fmcw}
\end{subfigure}
\begin{subfigure}{0.32\linewidth}
    \includegraphics[width=\textwidth]{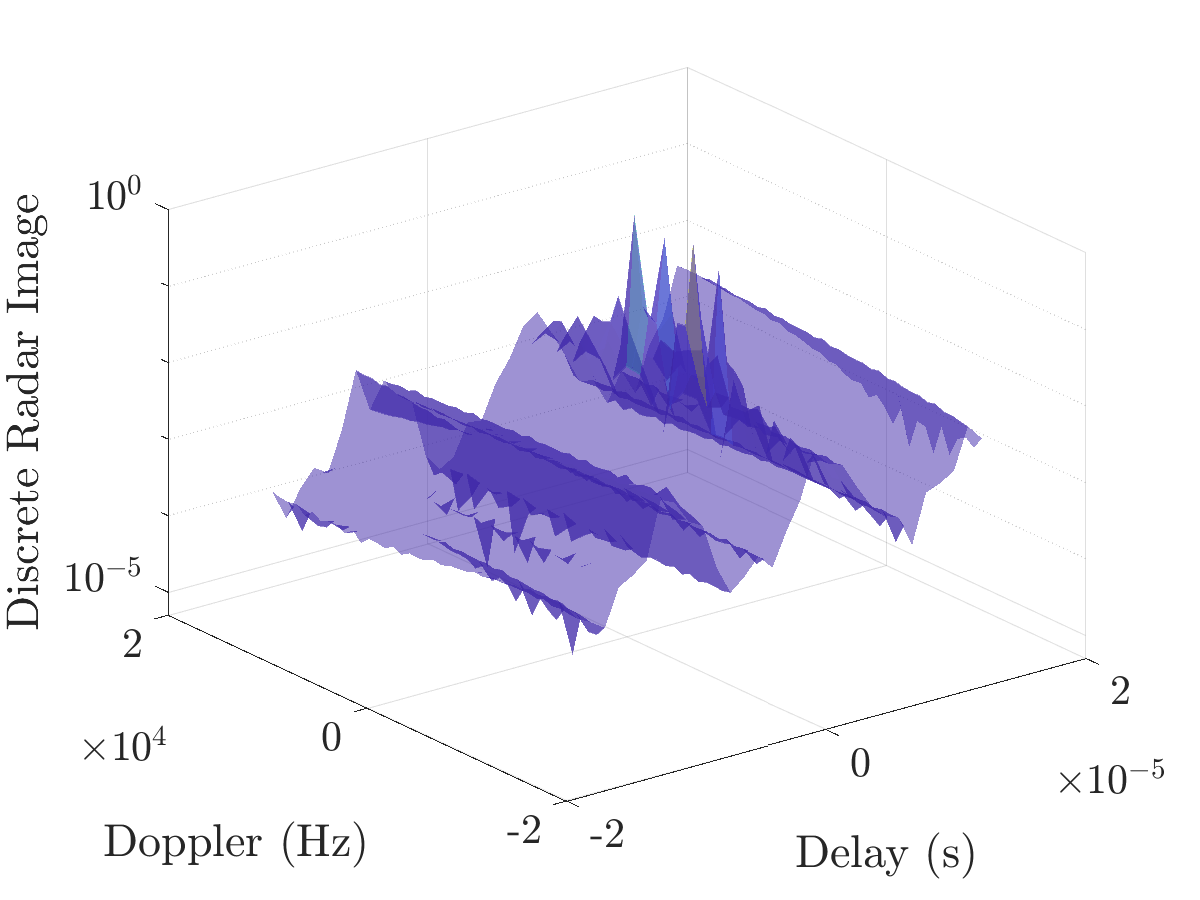}
\caption{Discrete (Fig.~\ref{fig:block_diag}(\subref{fig:disc_phase_coded_radar})).}
    \label{fig:heatmaps_phasecoded}
\end{subfigure}
\begin{subfigure}{0.32\linewidth}
    \includegraphics[width=\textwidth]{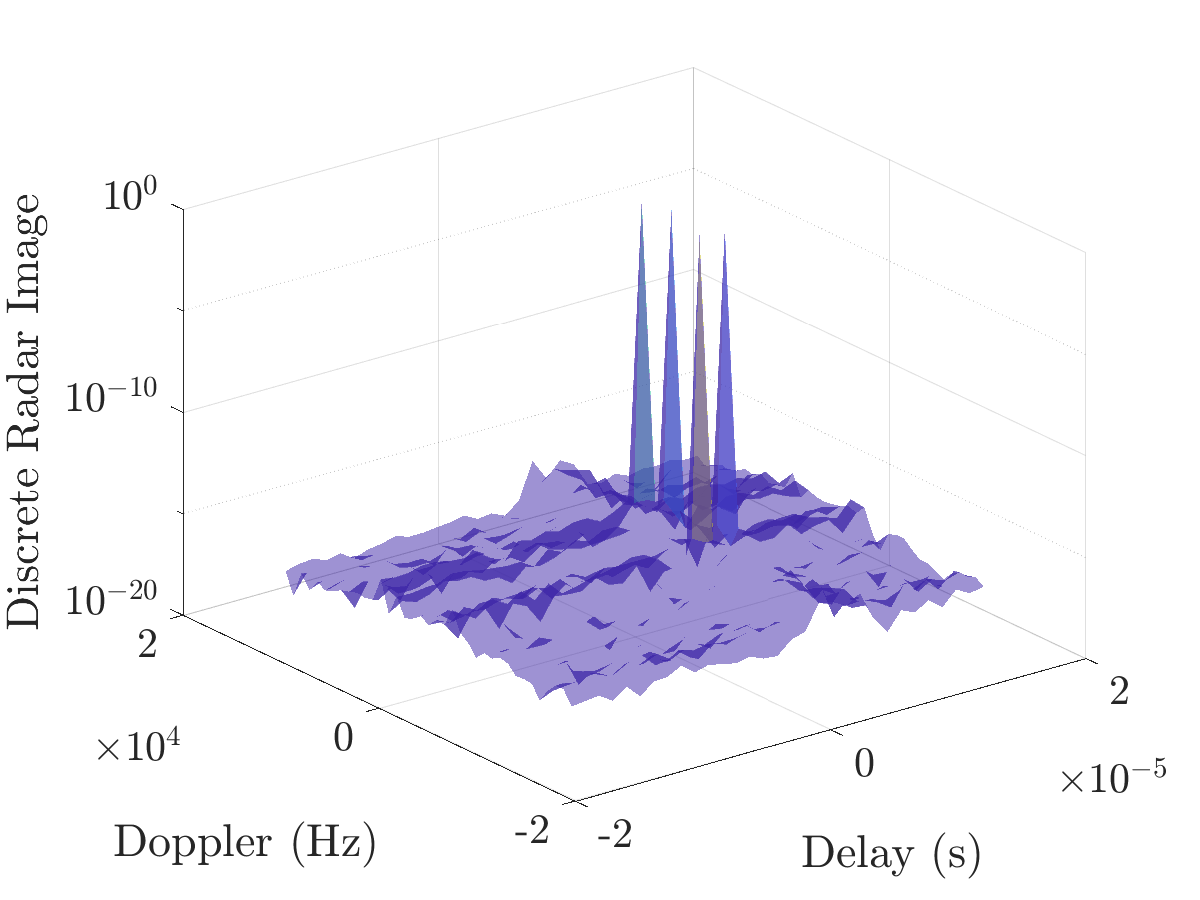}
\caption{Proposed (Fig.~\ref{fig:block_diag}(\subref{fig:disc_prop_radar})).}
    \label{fig:heatmaps_zak}
\end{subfigure}
\caption{Continuous \& discrete radar architectures generate radar images with significant sidelobes around ground-truth target locations due to the choice of carrier waveforms. The proposed approach in~\cite{Mehrotra2025_EURASIP} generates a perfectly localized radar image.}
\vspace{-5mm}
    \label{fig:heatmaps}
\end{figure*}

In~\cite{Mehrotra2025_EURASIP}, we propose a different architecture for discrete radar, illustrated in Fig.~\ref{fig:block_diag}(\subref{fig:disc_prop_radar}). We mount $BT$-periodic discrete sequences onto delta functions, which are pulse shaped and matched filtered at the received. The output of the matched filter is first sampled and periodized to period $BT$. Then, a discrete cross-ambiguity function is computed between the discrete transmitted sequence and the sampled \& periodized matched filter output to obtain the discrete radar image. 

\begin{figure*}[!ht]
\centering
\begin{subfigure}{0.45\linewidth}
    \includegraphics[width=\textwidth]{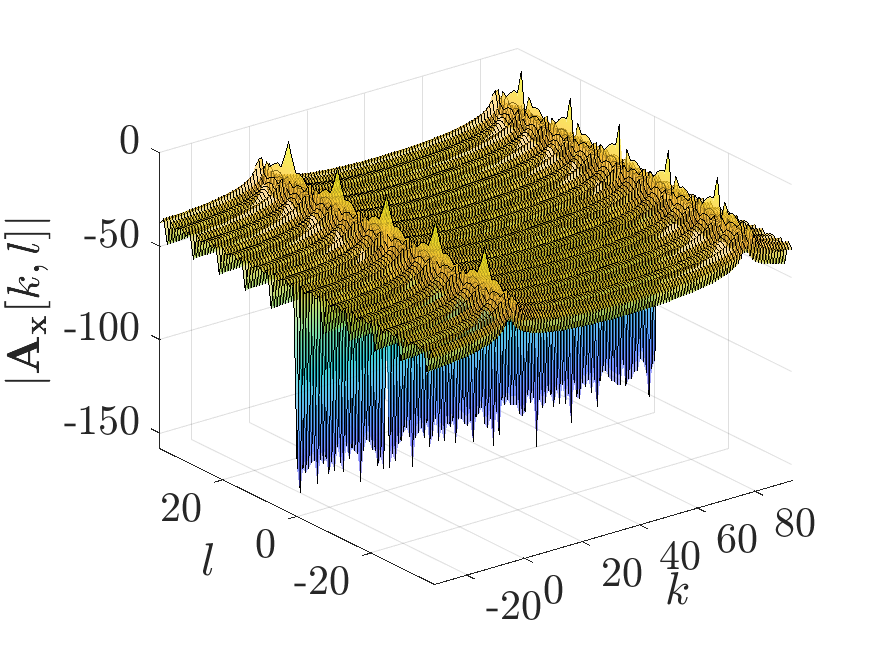}
\caption{ZC phase-coded waveform~\cite{benedetto_phasecoded}.}
    \label{fig:sa_benedetto}
\end{subfigure}
\begin{subfigure}{0.45\linewidth}
    \includegraphics[width=\textwidth]{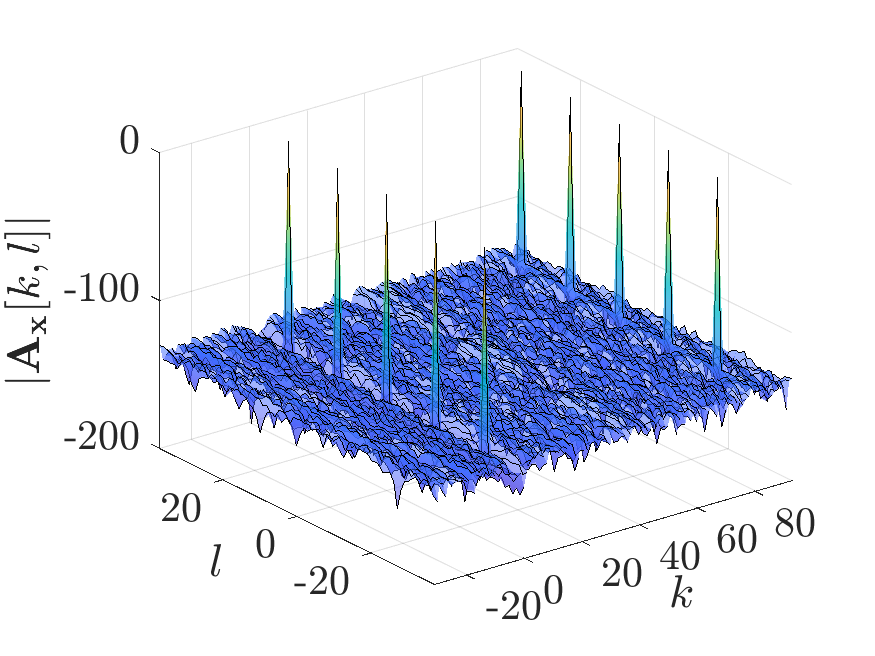}
\caption{Chirp eigenvector from Example~\ref{ex:comm_subgrp_ex2}.}
    \label{fig:sa_proposed}
\end{subfigure}
\caption{The proposed discrete radar architecture in Fig.~\ref{fig:block_diag}(\subref{fig:disc_prop_radar}) achieves perfectly localized ``bed-of-nails'' ambiguity functions. (a) The standard approach of phase coding a rectangular waveform~\cite{benedetto_phasecoded} in discrete radars (see Fig.~\ref{fig:block_diag}(\subref{fig:disc_phase_coded_radar})) has high sidelobes due to poor carrier waveform ambiguity characteristics. (b) Choosing the transmitted waveform as an eigenvector of a maximal commutative subgroup results in ``bed-of-nails'' ambiguity functions. Figure adapted from~\cite{Mehrotra2025_EURASIP}.}
\vspace{-5mm}
    \label{fig:sa}
\end{figure*}

The key difference with conventional discrete radars (Fig.~\ref{fig:block_diag}(\subref{fig:disc_phase_coded_radar})) is that the approach from~\cite{Mehrotra2025_EURASIP} jointly optimizes the sequence and carrier waveform, which significantly reduces sidelobes of the ambiguity function and improves target detection, as we show in Section~\ref{subsec:impl_radar_wvf}. The Heisenberg-Weyl group theory described in Section~\ref{sec:foundations} also enables developing radar waveform libraries with low peak-to-average power ratio (PAPR), which we describe in Section~\ref{subsec:papr_radar}. Sampling prior to the cross-ambiguity computation results in significant complexity gains over both continuous and discrete radars (Figs.~\ref{fig:block_diag}(\subref{fig:cont_radar})-(\subref{fig:disc_phase_coded_radar})), as we show in Section~\ref{subsec:low_compl_radar}. Finally, the proposed architecture enables \emph{instantaneous polarimetry} at low computational cost, as described in Section~\ref{subsec:pol_radar}.

\begin{figure*}[!ht]
\centering
\begin{subfigure}{0.49\linewidth}
    \includegraphics[width=\textwidth]{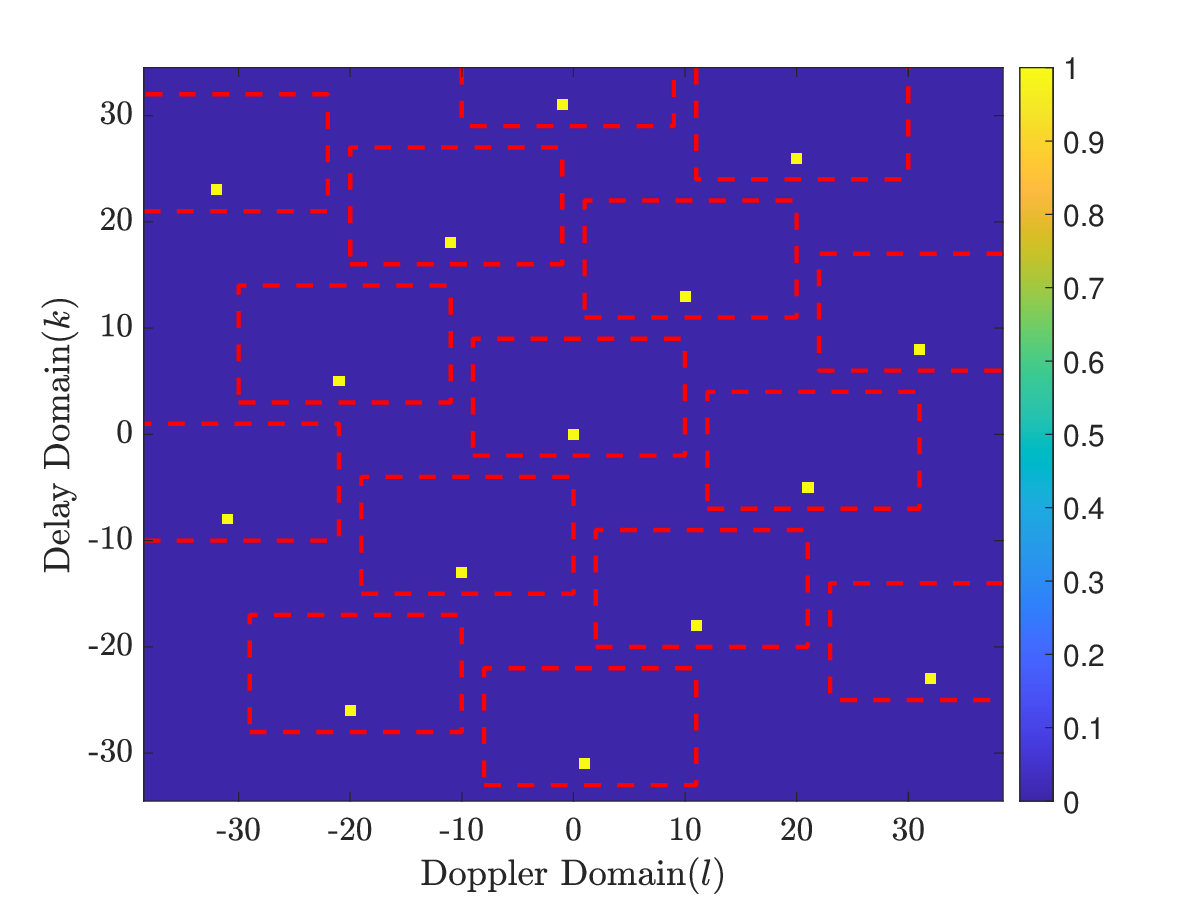}
\caption{Choice of $\mathcal{S}$ satisfying~\eqref{eq:no_aliasing}.}
    \label{fig:no_aliasing_ex1}
\end{subfigure}
\begin{subfigure}{0.49\linewidth}
    \includegraphics[width=\textwidth]{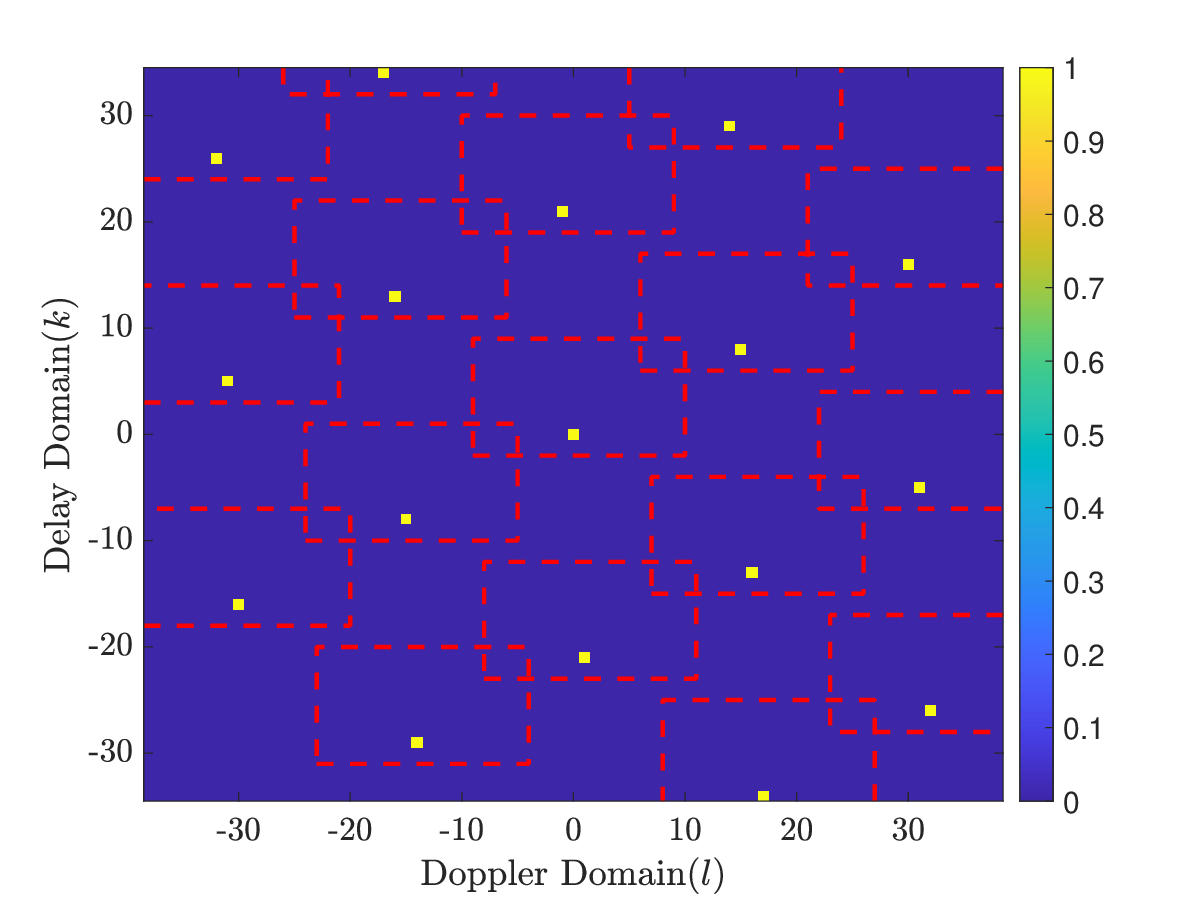}
\caption{Choice of $\mathcal{S}$ not satisfying~\eqref{eq:no_aliasing}.}
    \label{fig:no_aliasing_ex2}
\end{subfigure}
\caption{Example illustrating the choice of an appropriate maximal commutative subgroup $\mathcal{T}_{MN}$ whose delay-Doppler index set $\mathcal{S}$ satisfies the condition in~\eqref{eq:no_aliasing} for a given maximum scattering environment support $\mathcal{C}$. In this example, $\mathcal{C} = [k_{\min},k_{\max}] \times [l_{\min},l_{\max}]$ with $k_{\min} = -2$, $k_{\max} = 8$, $l_{\min} = -9$, $l_{\max} = 9$. Figure adapted from~\cite{Mehrotra2025_WCLSpread,Mehrotra2025_EURASIP}.}
\vspace{-5mm}
    \label{fig:no_aliasing_ex}
\end{figure*}

Fig.~\ref{fig:heatmaps} compares the discrete radar images obtained using the three radar architectures from Fig.~\ref{fig:block_diag} for a four target scattering environment. In Fig.~\ref{fig:heatmaps}(\subref{fig:heatmaps_fmcw}) we consider a continuous radar with an up-slope and down-slope linear frequency modulated (LFM) transmit waveform~\cite{Calderbank2015_ltv}. The resulting radar image suffers from extremely high sidelobes due to the ambiguity function characteristics of the LFM waveform (see~\cite{zakotfs_ltv} for more details). In Fig.~\ref{fig:heatmaps}(\subref{fig:heatmaps_phasecoded}) we consider a discrete radar with a Zadoff-Chu (ZC) phase-coded rectangular waveform. Despite improved target discrimination capabilities, there are significant sidelobes due to the non-zero sidelobes of the waveform (cf. Fig.~\ref{fig:sa}(\subref{fig:sa_benedetto})). The proposed approach generates a radar image with perfectly localized targets in Fig.~\ref{fig:heatmaps}(\subref{fig:heatmaps_zak}).

\begin{remark}
    \label{rmk:radar1}
    Radars ask questions of scattering environments in order to control higher-layer functions such as target tracking. Images of scatterers are a means to an end rather than an end in itself. There may be small differences in the images of a scattering environment produced by continuous and discrete radars (cf. Fig.~\ref{fig:heatmaps}), but both images have the information necessary to control higher layer functions. We refer the reader to the work of Bell~\cite{Bell2002} for information theoretic measures of effectiveness, to Kershaw and Evans~\cite{Evans2002} for information theoretic criteria for waveform scheduling to support tracking. For more details, see~\cite{Moran2004_wvf_lib,Moran2009_wvf_lib_survey,Evans2002,Moran2006}.
\end{remark}

\subsection{Benefit 1: ``Bed-of-Nails'' Ambiguity Functions}
\label{subsec:impl_radar_wvf}

One benefit of the proposed discrete radar architecture is that it enables perfectly localized \emph{``bed-of-nails''} ambiguity functions. Recall that an accurate estimate of the DD scattering environment is achieved when the self-ambiguity function of the transmit waveform is a ``thumbtack'', $\mathbf{A}_{\mathbf{x}}[k,l] \approx \delta[k] \delta[l]$. However, Moyal's Identity (Identity~\ref{idty:moyal}) limits what can be achieved.

The standard approach taken by discrete radars (see Fig.~\ref{fig:block_diag}(\subref{fig:disc_phase_coded_radar})) is to regard the waveform as a signal modulated onto a carrier and to separate the carrier modulation and demodulation processes from the analysis of ambiguity. However, their performance is limited by the ambiguity function characteristics of the chosen carrier waveform. As an illustration, Fig.~\ref{fig:sa}(\subref{fig:sa_benedetto}) plots the magnitude of the self-ambiguity function for a Zadoff-Chu phase-coded rectangular waveform as defined in~\cite{benedetto_phasecoded}, with the waveform designed to have self-ambiguity function magnitude close to $1$ on the line $2\alpha k - l \equiv 0 \bmod{MN}$. The ambiguity characteristics of the rectangular carrier waveform results in significant sidelobes outside the locations given by the line.

In contrast, the proposed discrete radar architecture in Fig.~\ref{fig:block_diag}(\subref{fig:disc_prop_radar}) achieves \emph{perfectly localized} ``bed-of-nails'' ambiguity functions with minimal sidelobes. Specifically, we choose the radar waveform as an eigenvector of a maximal commutative subgroup. Recall from Corollary~\ref{corr:max_comm_ambg} that for a maximal commutative subgroup $\mathcal{T}_{MN}$ with eigenbasis $\big\{\mathbf{v}_{i}\big\}_{i=1}^{MN}$, the self-ambiguity function of any eigenvector $\mathbf{v}_{i}$ vanishes for all $e^{\frac{j2\pi}{MN}m} \mathcal{D}_{(k,l)} \not\in \mathcal{T}_{MN}$ and is unimodular only at $e^{\frac{j2\pi}{MN}m} \mathcal{D}_{(k,l)} \in \mathcal{T}_{MN}$. Fig.~\ref{fig:sa}(\subref{fig:sa_proposed}) illustrates the advantage of our approach. We consider the discrete chirp eigenvector corresponding to the Heisenberg-Weyl maximal commutative subgroup in Example~\ref{ex:comm_subgrp_ex2}, and observe essentially no sidelobes in the self-ambiguity function magnitude outside the self-ambiguity function support $2\alpha k - l \equiv 0 \bmod{MN}$.

\subsubsection{How to Select a Radar Waveform}
\label{subsubsec:radar_wvf_cryst}

How should one choose an appropriate maximal commutative subgroup $\mathcal{T}_{MN}$ in our approach? Moyal's Identity (Identity~\ref{idty:moyal}) states that a perfect ``thumbtack'' ambiguity function is unachievable. However, it is feasible to design a waveform whose self-ambiguity function is zero at all locations excluding the origin in a connected region $\mathcal{C}$ corresponding to the maximum possible support of the scattering environment\footnote{e.g., $\mathcal{C} = [k_{\min},k_{\max}] \times [l_{\min},l_{\max}]$ based on prior knowledge of the minimum/maximum delay and Doppler spreads of the scattering environment.}, $(k,l) \in \mathcal{C}$, $(k,l) \neq (0,0)$. Recall from Corollary~\ref{corr:max_comm_ambg} that our approach of transmitting an eigenvector of a maximal commutative subgroup $\mathcal{T}_{MN}$ yields self-ambiguity functions that are unimodular only at delay-Doppler indices in the set $\mathcal{S} = \big\{(k,l) \big| e^{\frac{j2\pi}{MN}m} \mathcal{D}_{(k,l)} \in \mathcal{T}_{MN} \big\}$, akin to a ``bed-of-nails''.

\begin{figure*}[!ht]
\centering
\begin{subfigure}{0.32\linewidth}
    \includegraphics[width=\textwidth]{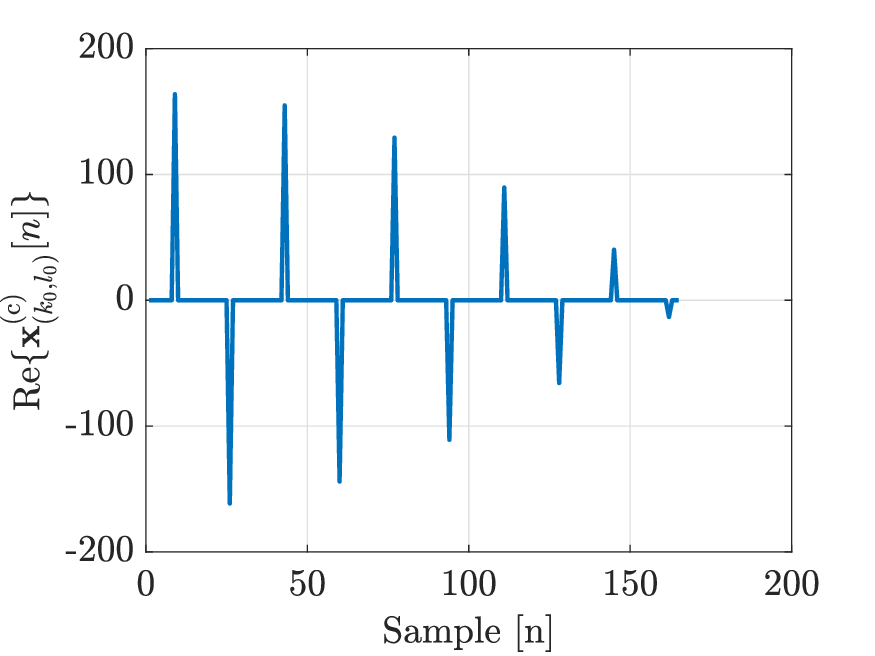}
\caption{Pulsone basis element.}
    \label{fig:tdwvf_pulsone}
\end{subfigure}
\begin{subfigure}{0.32\linewidth}
    \includegraphics[width=\textwidth]{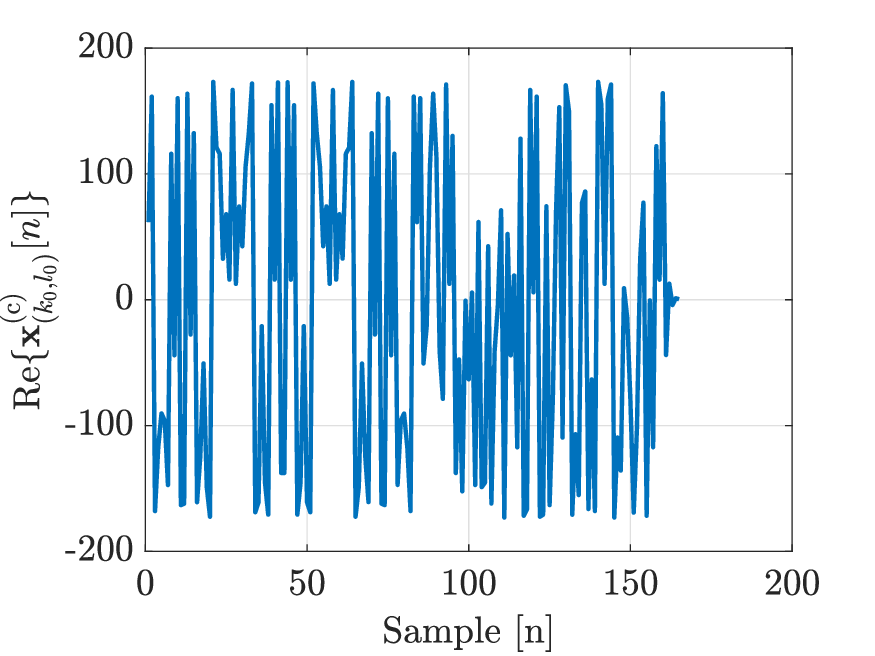}
\caption{GDAFT of pulsone.}
    \label{fig:tdwvf_cazac}
\end{subfigure}
\begin{subfigure}{0.32\linewidth}
    \includegraphics[width=\textwidth]{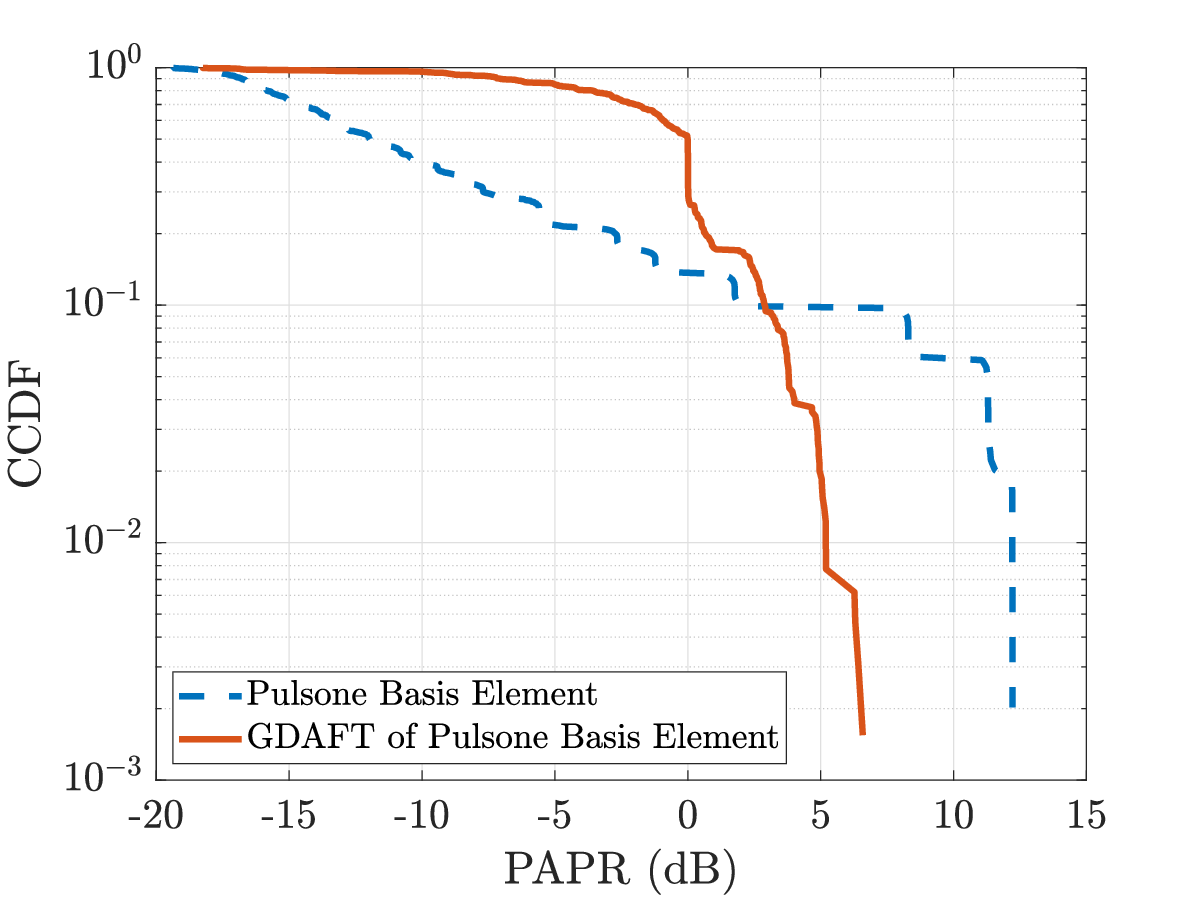}
\caption{PAPR comparison.}
    \label{fig:tdwvf_papr}
\end{subfigure}
\caption{The GDAFT (Definition~\ref{def:gdaft}) reduces the PAPR of the pulsone basis element (Example~\ref{ex:comm_subgrp_ex1}) by about $5.6$ dB. Figure adapted from~\cite{Mehrotra2025_WCLSpread,Mehrotra2025_EURASIP}.}
\vspace{-5mm}
    \label{fig:papr}
\end{figure*}

Therefore, we propose to choose $\mathcal{T}_{MN}$ such that translates of the maximum scattering environment support $\mathcal{C}$ by the elements of $\mathcal{S}$ do not overlap:
\begin{equation}
    \label{eq:no_aliasing}
    \bigg(\!\bigcup_{(k,l) \in \mathcal{S}}\!\big(\mathcal{C} + (k,l)\big)\!\bigg)\!\cap\!\bigg(\!\bigcup_{(k',l') \in \mathcal{S}}\!\big(\mathcal{C} + (k',l')\big)\!\bigg)\!=\!\emptyset.
\end{equation}
for all $(k,l) \neq (k',l')$.

This is the \emph{crystallization condition} from Section~\ref{subsec:crystallization}, stating that an image of the scattering environment can be read off from the response to the radar waveform at delay-Doppler locations around the locations in $\mathcal{S}$. Fig.~\ref{fig:no_aliasing_ex} illustrates how to choose $\mathcal{T}_{MN}$ for a given support $\mathcal{C}$.

\subsection{Benefit 2: Waveform Libraries with Small PAPR}
\label{subsec:papr_radar}

Another benefit of the proposed discrete radar architecture is that it enables defining waveform libraries with low peak-to-average power ratio (PAPR). Adaptive radars define libraries of waveforms from which an appropriate waveform is chosen in real-time based on the operational requirement~\cite{Moran2004_wvf_lib,Moran2009_wvf_lib_survey,Evans2002,Moran2006}. For instance, it was shown in~\cite{Moran2004_wvf_lib,Moran2009_wvf_lib_survey} that waveform libraries that rotate and chirp a template waveform maximize the mutual information for tracking applications. The ability to rotate a template waveform has also shown to be useful in sensing scattering environments where the product of the maximum delay and Doppler spreads exceeds the time-bandwidth product $BT$~\cite{Calderbank2025_isac}. At the same time, it is essential to ensure that each waveform in the library can be transmitted at low hardware cost. A standard metric to quantify the hardware cost of a waveform transmission is the PAPR~\cite{Jiang2008_papr}, which measures the ratio of the instantaneous power to the average power of a waveform. Waveforms with a large PAPR are undesirable from a hardware perspective since they require high dynamic range components and degrade the efficiency of power amplifiers.

The theory of symplectic transformations presented in Section~\ref{subsec:foundation_hwgroup} enables meeting both objectives, i.e., defines waveform libraries with low PAPR. Recall from Lemma~\ref{lmm:weil_prop} that symplectic transformations rotate the ambiguity functions of their inputs, e.g., the GDAFT in Definition~\ref{def:gdaft}. The GDAFT and similar transformations can be used to define a discrete radar waveform library following the principles in~\cite{Moran2004_wvf_lib,Moran2009_wvf_lib_survey}. 

A useful property of the GDAFT is that it maps pulsones to low PAPR waveforms, which have been used for spread carrier communication~\cite{Mehrotra2025_WCLSpread} as described in Section~\ref{subsec:zak_spread}. Fig.~\ref{fig:papr} illustrates how the GDAFT maps the ``peaky'' pulsone waveform (Fig.~\ref{fig:papr}(\subref{fig:tdwvf_pulsone})) to a constant amplitude waveform (Fig.~\ref{fig:papr}(\subref{fig:tdwvf_cazac})). Fig.~\ref{fig:papr}(\subref{fig:tdwvf_papr}) plots the complementary cumulative distribution function (CCDF) of the PAPR (in dB) of both waveforms, showing a $5.6$ dB PAPR reduction on applying the GDAFT.

\begin{figure*}[!t]
    \centering
    \begin{subfigure}{0.43\linewidth}
        \centering
        \includegraphics[width=\linewidth]{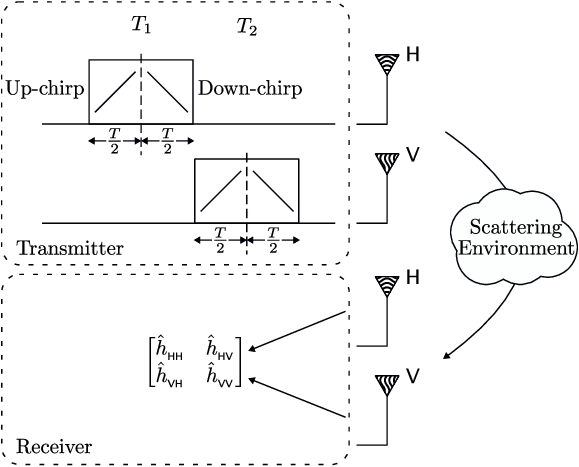}
        \caption{Sequential polarimetry via FMCW.}
        \label{fig:pol_fmcw}
    \end{subfigure}
    \hspace{15mm}
    \begin{subfigure}{0.39\linewidth}
        \centering
        \includegraphics[width=\linewidth]{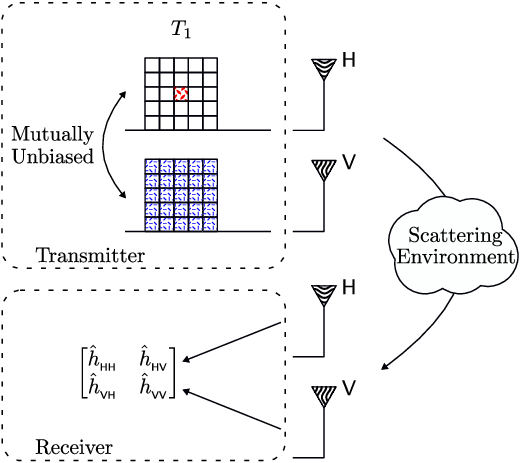}
        \caption{Proposed instantaneous polarimetry via Zak-OTFS.}
        \label{fig:pol_zak}
    \end{subfigure}
    \caption{Comparison of different approaches for polarimetry. (a) Sequential polarimetry with FMCW transmits polarized FMCW waveforms over two frames, with each frame subdivided into two halves with an up-chirp and a down-chirp respectively. The associated Doppler resolution is $\nicefrac{2}{T}$ and the computational complexity is $\mathcal{O}(B^2T^2)$. (b) Instantaneous polarimetry with Zak-OTFS transmits a Zak-OTFS pulsone and a mutually unbiased spread waveform obtained via a unitary transformation of the pulsone in a single frame. Compared to the sequential approach in (a), the proposed approach has $2 \times$ smaller latency, $2 \times$ improved Doppler resolution of $\nicefrac{1}{T}$, and a computational complexity of only $\mathcal{O}(BT \log T)$. Figure adapted from~\cite{Mehrotra2025_polTRS}.}
    \label{fig:pol_block_dia}
\end{figure*}

\subsection{Benefit 3: Low-Complexity Cross-Ambiguity Computation}
\label{subsec:low_compl_radar}

The third benefit of the proposed discrete radar architecture is the complexity gain in the cross-ambiguity computation over the conventional continuous and discrete radar architectures. As described previously, radars form an image of the scattering environment by computing the cross-ambiguity function from Definition~\ref{def:amb_fun} between the received and transmitted signals. In the following Lemma, we show that the complexity of calculating the cross-ambiguity with the proposed discrete radar architecture in Fig.~\ref{fig:block_diag}(\subref{fig:disc_prop_radar}) is only near-linear compared to quadratic complexity with conventional  radar architectures in Figs.~\ref{fig:block_diag}(\subref{fig:cont_radar})-(\subref{fig:disc_phase_coded_radar}).

\begin{lemma}[\cite{Mehrotra2025_EURASIP}]
    \label{lmm:low_compl_radar}
    The cross-ambiguity between two unit-norm $MN$-periodic sequences $\mathbf{x}$ and $\mathbf{y}$, where $\mathbf{y}$ is the Zak-OTFS pulsone from Example~\ref{ex:comm_subgrp_ex1}, has $\mathcal{O}(MN\log N)$ complexity. The complexity remains unchanged on applying a symplectic transformation to $\mathbf{x}$ and $\mathbf{y}$. However, when $\mathbf{y}$ is an amplitude-, phase- or frequency-coded waveform, the computational complexity is $\mathcal{O}(M^2N^2)$.
\end{lemma}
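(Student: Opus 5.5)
We work directly from Definition~\ref{def:amb_fun} and exploit the sparsity of the pulsone in Example~\ref{ex:comm_subgrp_ex1}. Write $\mathbf{y} = \mathbf{v}_{i}$ with $i = k_0 + l_0 M$, so that $\mathbf{y}[n] = \frac{1}{\sqrt{N}} e^{\frac{j2\pi}{N} d l_0}$ when $n = k_0 + dM$ and $\mathbf{y}[n]=0$ otherwise. The sum defining $\mathbf{A}_{\mathbf{x},\mathbf{y}}[k,l]$ then collapses: only indices $n$ with $n - k \equiv k_0 \bmod M$ contribute. There are exactly $N$ such indices, $n = k + k_0 + dM$ with $d \in \mathbb{Z}_N$. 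Hence
\begin{align*}
\mathbf{A}_{\mathbf{x},\mathbf{y}}[k,l] = \frac{1}{\sqrt{N}} e^{-\frac{j2\pi}{MN} l k_0} \sum_{d \in \mathbb{Z}_N} \mathbf{x}[k+k_0+dM]\, e^{-\frac{j2\pi}{N} d (l_0 + l)} ,
\end{align*}
where we used $e^{-\frac{j2\pi}{MN} l (k_0 + dM)} = e^{-\frac{j2\pi}{MN} lk_0} e^{-\frac{j2\pi}{N} ld}$.

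The key observation is that, for fixed $k$, the inner sum is a length-$N$ DFT of the decimated sequence $d \mapsto \mathbf{x}[k+k_0+dM]$, evaluated at frequency $(l+l_0)_N$. It depends on $l$ only through $l \bmod N$, up to the explicit unimodular prefactor. Moreover, the decimated sequence depends on $k$ only through $k \bmod M$: shifting $k$ by $M$ only re-indexes $d$, which multiplies the result by a phase. The computation therefore has three steps.
\begin{enumerate}
\item For each of the $M$ residues $r = (k+k_0)_M$, compute one $N$-point FFT of $\mathbf{x}[r+dM]$. This costs $M \cdot \mathcal{O}(N\log N) = \mathcal{O}(MN\log N)$, and the result is exactly the DZT of $\mathbf{x}$ (Definition~\ref{def:dzt}).
\item Obtain every entry $\mathbf{A}_{\mathbf{x},\mathbf{y}}[k,l]$ by table lookup, multiplied by a known phase, at $\mathcal{O}(1)$ per entry. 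Only the $MN$ entries on an $M\times N$ fundamental cell (the DD radar image) need to be produced, since all others follow by quasi-periodicity.
\item Confirm that the total cost is $\mathcal{O}(MN\log N)$.
\end{enumerate}

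For the symplectic statement, apply Lemma~\ref{lmm:weil_prop} and Lemma~\ref{lmm:heis_op_gdaft}. The cross-ambiguity of $\mathcal{W}(g)\mathbf{x}$ and $\mathcal{W}(g)\mathbf{y}$ at $g\cdot(k,l)$ equals $\mathbf{A}_{\mathbf{x},\mathbf{y}}[k,l]$ up to an explicit phase $e^{j\frac{\pi}{MN}[ack^2+bdl^2+2bclk]}$. We therefore need the preimage $\mathbf{x} = \mathcal{W}(g)^{-1}\tilde{\mathbf{x}}$ of the received signal, run the pulsone algorithm above, and relabel the grid by the linear map $g$. Relabeling and phase correction are $\mathcal{O}(1)$ per entry. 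The step I expect to be the main obstacle is showing that applying $\mathcal{W}(g)^{-1}$ (e.g.\ the inverse GDAFT) to the received signal does not exceed $\mathcal{O}(MN\log N)$. I would factor the GDAFT of Definition~\ref{def:gdaft} as chirp multiplication, then a length-$MN$ DFT, then another chirp multiplication. This costs $\mathcal{O}(MN\log(MN))$, which is of the same order as $\mathcal{O}(MN\log N)$ whenever $M$ is at most polynomial in $N$. Alternatively, one can note that the relevant eigenvectors are again sparse after decimation along the rotated lattice, and redo the collapse argument directly on that lattice.

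For the final claim, take $\mathbf{y}$ to be an amplitude-, phase- or frequency-coded waveform. Generically it has full support on all $MN$ samples and no decimation structure, so no terms drop out of the sum. For each delay $k$, the $MN$ Doppler values come from one length-$MN$ FFT of $\mathbf{x}[n]\mathbf{y}^*[n-k]$. This gives $\mathcal{O}(MN\log MN)$ per delay, and $MN$ delays give $\mathcal{O}(M^2N^2\log MN)$; direct evaluation over the $MN\times MN$ grid is likewise order $M^2N^2$. Either way the cost is at least quadratic, since a full delay sweep over $MN$ lags is required. I would state this as the complexity of the standard algorithm rather than prove a formal lower bound.
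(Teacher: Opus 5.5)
Your pulsone computation is the paper's argument. You substitute the pulsone into Definition~\ref{def:amb_fun}, collapse the sum to the $N$ indices $n\equiv k+k_0 \bmod M$, and recognize a length-$N$ FFT for each delay residue. This gives $\mathcal{O}(MN\log N)$ over the $M\times N$ image. Your identification with the DZT of $\mathbf{x}$ and your quasi-periodicity bookkeeping are equivalent to the paper's re-indexing via $\lfloor (k+k_0)/M\rfloor$.

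You genuinely differ on the symplectic claim. The paper calls it a direct consequence of Lemma~\ref{lmm:weil_prop}, but that lemma only relates magnitudes of ambiguity functions and says nothing about how to compute them. Your route pulls the received signal back through $\mathcal{W}(g)^{-1}$, runs the pulsone algorithm, and relabels by $g$ using the explicit phase from Lemma~\ref{lmm:heis_op_gdaft}. This exposes the cost the paper leaves implicit. Your resulting bound $\mathcal{O}(MN\log MN)$ equals $\mathcal{O}(MN\log N)$ only when $\log M=\mathcal{O}(\log N)$. That is a legitimate caveat on the statement, which the paper's one-line proof does not address.

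Your fallback is not right as stated. The transformed pulsone is a constant-amplitude chirp (cf.~\eqref{eq:cazac1}), not a sparse sequence. For a chirp $\mathbf{v}[n]\propto e^{\frac{j2\pi}{MN}(\alpha n^2+\beta n)}$, the direct computation is:
\begin{itemize}
\item dechirp, $\mathbf{z}[n]=\mathbf{x}[n]e^{-\frac{j2\pi}{MN}(\alpha n^2+\beta n)}$;
\item take one length-$MN$ FFT $\hat{\mathbf{z}}$;
\item read off $\mathbf{A}_{\mathbf{x},\mathbf{v}}[k,l]$ as a known phase times $\hat{\mathbf{z}}[(l-2\alpha k)_{{}_{MN}}]$.
\end{itemize}
This again costs $\mathcal{O}(MN\log MN)$, so it does not avoid the $\log(MN)$ factor either.

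The coded-waveform part contains a counting slip.
\begin{itemize}
\item For the pulsone you produce only the $MN$ entries of the $M\times N$ image, but for the coded waveform you switch to the full $MN\times MN$ grid.
\item On that grid your FFT-per-lag count gives $\mathcal{O}(M^2N^2\log MN)$, which is not the claimed $\mathcal{O}(M^2N^2)$.
\item Direct evaluation on that grid is $M^2N^2$ entries times $MN$-term sums, i.e.\ $\mathcal{O}(M^3N^3)$, not $\mathcal{O}(M^2N^2)$.
\item Your justification that a full sweep over $MN$ lags is required is unsupported. The region to be imaged is set by the scattering environment, not by the waveform.
\end{itemize}
The like-for-like comparison the paper intends uses the same $M\times N$ image. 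Without the decimation structure, each entry is an $MN$-term sum that does not collapse, so direct evaluation costs $MN\cdot MN=\mathcal{O}(M^2N^2)$.

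You are right not to claim a lower bound, and you should also drop ``at least quadratic.'' Splitting each lag's Doppler sum via $n=a+bM$ into $M$ length-$N$ FFTs computes that image in $\mathcal{O}(M^2N\log N)$. So the $\mathcal{O}(M^2N^2)$ figure describes only the straightforward algorithm, which is all the paper's own one-line argument supports.
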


\begin{IEEEproof}
    Substituting the pulsone in Definition~\ref{def:amb_fun}:
    \begin{align}
        \label{eq:low_compl1}
        \mathbf{A}_{\mathbf{x},\mathbf{y}}[k, l] &= \sum_{n \in \mathbb{Z}_{MN}}\mathbf{x}[n]\mathbf{y}^*[(n-k)_{{}_{MN}}]e^{-\frac{j2\pi}{MN}l(n-k)} \nonumber \\
        &= \frac{1}{\sqrt{N}}\sum_{n \in \mathbb{Z}_{MN}}\mathbf{x}[n] e^{-\frac{j2\pi}{MN}l(n-k)} \sum_{d \in \mathbb{Z}} e^{-\frac{j2\pi}{N} d l_0} \nonumber \\ &~~~~~~~~~~~~~~~\times \delta[(n-k-k_0-dM)_{{}_{MN}}] \nonumber \\
        &= \frac{1}{\sqrt{N}}\sum_{\bar{k} \in \mathbb{Z}_{M}}\sum_{\bar{l} \in \mathbb{Z}_{N}}\mathbf{x}[\bar{k}+\bar{l}M] e^{-\frac{j2\pi}{MN}l(\bar{k}+\bar{l}M-k)} \nonumber \\ &\times \sum_{d \in \mathbb{Z}} e^{-\frac{j2\pi}{N} d l_0} \delta[(\bar{k}+\bar{l}M-k-k_0-dM)_{{}_{MN}}] \nonumber \\
        &= \frac{1}{\sqrt{N}} e^{-\frac{j2\pi}{MN} \big[l((k+k_0)_{{}_M}-k) - \lfloor \frac{k+k_0}{M} \rfloor l_0 M \big]} \nonumber \\ &\times \bigg(\sum_{\bar{l} \in \mathbb{Z}_{N}} \mathbf{x}[(k+k_0)_{{}_M}+\bar{l}M] e^{-\frac{j2\pi}{N}(l + l_0)\bar{l}}\bigg).
    \end{align}
    
    The summation within the parenthesis simply corresponds to an FFT operation, hence for a given $k \in \mathbb{Z}_{{}_{M}}$ its value for all $l \in \mathbb{Z}_{{}_{N}}$ can be computed in $\mathcal{O}(N\log N)$ complexity. Across all $(k,l) \in \mathbb{Z}_{{}_{M}} \times \mathbb{Z}_{{}_{N}}$, the complexity is $\mathcal{O}(MN\log N)$.

    A direct consequence of the above result in conjunction with Lemma~\ref{lmm:weil_prop} is that the complexity remains $\mathcal{O}(MN\log N)$ on applying a symplectic transformation to $\mathbf{x}$ and $\mathbf{y}$.

    The general expression of an amplitude-, phase- or frequency-coded waveform is~\cite{Pezeshki2009}:
    \begin{equation}
        \label{eq:radarcoding1}
        \mathbf{y}(t) = \sum_{m \in \mathbb{Z}_{MN}} \mathbf{z}_{m} s(t-m T_{\mathsf{c}}),
    \end{equation}
    where $\mathbf{z}$ denotes an $MN$-periodic coding sequence and $s(t)$ denotes the carrier waveform (``chip'') with chip duration $T_{\mathsf{c}}$. On sampling~\eqref{eq:radarcoding1} at sampling rate $\nicefrac{1}{T_{\mathsf{c}}}$, we obtain the $MN$-length sampled waveform:
    \begin{equation}
        \label{eq:radarcoding2}
        \mathbf{y}[n] = \sum_{m \in \mathbb{Z}_{MN}} \mathbf{z}_{m} s[(n-k)_{{}_{MN}}].
    \end{equation}

    Substituting~\eqref{eq:radarcoding2} in Definition~\ref{def:amb_fun}:
    \begin{align*}
        \mathbf{A}_{\mathbf{x},\mathbf{y}}[k, l]\!&=\!\sum_{n \in \mathbb{Z}_{MN}}\!\mathbf{x}[n]\mathbf{y}^*[(n-k)_{{}_{MN}}]e^{-\frac{j2\pi}{MN}l(n-k)} \nonumber \\
        &=\!\sum_{n,m}\!\mathbf{x}[n] \mathbf{z}^{*}_{m} s^{*}[(n-k-m)_{{}_{MN}}]e^{-\frac{j2\pi}{MN}l(n-k)}.
    \end{align*}

    In the absence of any additional structure on the coding sequence $\mathbf{z}$ and the sampled carrier waveform, computing the above expression requires $\mathcal{O}(M^2N^2)$ complexity. Moreover, it has been shown in~\cite{Calderbank2015_ltv} that the complexity of calculating the cross-ambiguity with a continuous radar architecture based on LFM waveforms is also $\mathcal{O}(M^2N^2)$.
\end{IEEEproof}

\subsection{Benefit 4: Instantaneous Polarimetry at Low Complexity}
\label{subsec:pol_radar}

The proposed architecture also enables polarimetry at near-linear computational complexity. Polarimetry is the ability to measure the scattering response of the environment across orthogonal polarizations, and is an important tool for enhancing the performance of both wireless communication and radar systems. In wireless communication, polarimetry provides a diversity gain~\cite{Vaughan1990_pol_div,Paulraj2002_pol_div,Valuenzela2002_pol_div,Mark2006_pol_div}, thereby improving the reliability of communication, as well as a spatial multiplexing gain~\cite{Andrews2001_pol_dof,Marzetta2002_pol_dof,Hughes2008_pol_dof,Poon2011_pol_dof}, which increases the capacity of the wireless link. Similarly, polarimetry increases the waveform degrees-of-freedom in radar systems~\cite{Boerner1990_polsurvey,Nehorai2009_polsurvey,Calderbank2009_wvfsurvey,Antar2002_polcomparison,Giuli1990_polsimult1,howard2007simple,Pezeshki2008,Calderbank2006_pol_phasecoded,Hochwald1995_polmodel,Cloude2005_poleig}, providing more information about the target and enabling improved detection of targets with small radar cross section (RCS), such as drones.

Polarimetry is enabled in radar and communication systems by transmitting and receiving on two orthogonal polarizations, e.g., on vertical and horizontal polarizations. The receiver estimates the $2 \times 2$ \emph{polarimetric scattering response} of the wireless/radar channel across all four combinations of transmit and receive polarizations. A standard approach is to transmit polarized waveforms \emph{sequentially} across two frames~\cite{Calderbank2009_wvfsurvey,Antar2002_polcomparison,Giuli1990_polsimult1,howard2007simple,Pezeshki2008}; see Fig.~\ref{fig:pol_block_dia}(\subref{fig:pol_fmcw}) for an example with frequency modulated continuous wave (FMCW) transmissions. From its measurements in each frame, the receiver estimates $2 \times 1$ slices of the full $2 \times 2$ polarimetric scattering response. Such an approach does not provide instantaneous estimates of the scattering response within a single frame. Changes in the scattering environment between the two frames (due to mobility) may partially decorrelate the obtained estimates~\cite{Antar2002_polcomparison,Giuli1990_polsimult1,howard2007simple,Pezeshki2008,Calderbank2006_pol_phasecoded}. FMCW systems also result in a Doppler resolution of $\nicefrac{2}{T}$ in a frame interval $T$ due to the need to transmit two chirps with positive slope (``up-chirp'') and negative slope (``down-chirp'') within the same frame, as shown in Fig.~\ref{fig:pol_block_dia}(\subref{fig:pol_fmcw}). Sequential polarimetry also prevents frame-by-frame processing \& increases the system latency, which is a critical factor for radar and communication performance in highly dynamic environments. When utilizing continuous waveforms, such as FMCW and pulsed waveforms~\cite{Jankiraman2018_fmcw,Uysal2020_phasecoded_fmcw,Skolnik1980,Levanon2004}, the computational complexity of sequential polarimetry is \emph{quadratic} in the time-bandwidth product~\cite{Calderbank2015_ltv,zakotfs_ltv,Mehrotra2025_EURASIP}.

To unlock the full benefits of polarimetry, it would be ideal to estimate the $2 \times 2$ polarimetric scattering response \emph{instantaneously} within a single transmission frame. Previous work~\cite{Giuli1990_polsimult1,howard2007simple,Pezeshki2008} transmits \emph{mutually unbiased} waveforms\footnote{\textcolor{black}{The term ``mutually unbiased'' is from quantum information theory~\cite{Schwinger1960unitary}. Formally, two $d$-length waveforms are mutually unbiased if their inner product has magnitude $\nicefrac{1}{\sqrt{d}}$. Measurements from one waveform are ``statistically independent'' to those from the other waveform with uniform probability $\nicefrac{1}{d}$.}} with small inner products simultaneously across orthogonal polarizations. Mutual unbiasedness ensures that the contribution of the other waveform looks like noise to the receiver when projected onto the basis of one of the transmit waveforms -- thus enabling the receiver to estimate the full $2 \times 2$ polarimetric scattering response estimation from a single received frame. Mutually unbiased waveforms have been designed in prior work~\cite{Giuli1990_polsimult1,howard2007simple,Pezeshki2008} by \emph{phase-coding} a common carrier waveform, e.g., a rectangular waveform, with mutually unbiased sequences, e.g., Zadoff-Chu sequences with distinct roots~\cite{Giuli1990_polsimult1} or complementary Golay pairs~\cite{howard2007simple,Pezeshki2008}. However, as shown in Section~\ref{subsec:low_compl_radar}, the computational complexity of phase-coding is \emph{quadratic} in the time-bandwidth product.

In~\cite{Mehrotra2025_polTRS}, we utilize the Zak-OTFS framework to design mutually unbiased waveforms as opposed to phase-coding. Specifically, we transmit a pulsone and a GDAFT-transformed pulsone -- which is a chirp as described in Section~\ref{subsec:foundation_hwgroup} and is mutually unbiased to the pulsone -- across orthogonal polarizations. The proposed approach enables instantaneous polarimetry at \emph{near-linear} computational complexity, with greater clutter resilience compared to phase-coding. 

Table~\ref{tab:prior_work} summarizes the contribution of~\cite{Mehrotra2025_polTRS}.

\begin{table}
    \centering
    \caption{Comparison of different approaches for polarimetry; $B$ denotes frame bandwidth, $T$ denotes frame interval.}
    {
    \setlength{\tabcolsep}{2.25pt}
    \renewcommand{\arraystretch}{1.25}
    \begin{tabular}{|c|c|c|c|c|}
         \hline
         Approach & Frame(s) & Target(s) & Doppler Res. & Complexity \\
         \hline
         \textbf{Zak-OTFS~\cite{Mehrotra2025_polTRS}} & \textbf{1} & $\mathbf{> 1}$ & $\mathbf{\nicefrac{1}{T}}$ & $\mathbf{\mathcal{O}(BT\log T)}$ \\
         Phase-coded~\cite{Giuli1990_polsimult1,howard2007simple,Pezeshki2008} & $1$ & $> 1$ & $\nicefrac{1}{T}$ & $\mathcal{O}(B^2T^2)$ \\
         FMCW~\cite{Antar2002_polcomparison,Jankiraman2018_fmcw} & $2$ & $1$ & $\nicefrac{2}{T}$ & $\mathcal{O}(B^2T^2)$ \\
         \hline
    \end{tabular}
    }
    \vspace*{-0.1in}
    \label{tab:prior_work}
\end{table}

\subsubsection{Polarimetric System Model}
\label{subsubsec:pol_sys_model}

We begin by modeling the polarimetric scattering response of a $P$-path channel. In uni-polarized systems, e.g. in~\eqref{eq:hphy}, the channel gain of each path $p \in \{1,\cdots,P\}$ is modeled by a complex scalar $h_{p}$. With dual-polarized transmit and receive antennas (along horizontal $\mathsf{H}$ and vertical $\mathsf{V}$ polarizations) the channel gain is modeled by a $2 \times 2$ \emph{polarimetric scattering response}~\cite{Nehorai2009_polsurvey,Calderbank2009_wvfsurvey,Antar2002_polcomparison,Giuli1990_polsimult1,howard2007simple,Pezeshki2008}:
\begin{align}
    \label{eq:pol_prelim1}
    \mathbf{H}_{p} &= \begin{bmatrix}
        h_{p}^{\mathsf{HH}} & h_{p}^{\mathsf{HV}} \\ h_{p}^{\mathsf{VH}} & h_{p}^{\mathsf{VV}}
    \end{bmatrix} = \mathbf{C}_{\mathsf{RX}} \mathbf{\Sigma}_{p} \mathbf{C}_{\mathsf{TX}},
\end{align}
where $\mathbf{C}_{\mathsf{TX}}$ (resp. $\mathbf{C}_{\mathsf{RX}}$) is a $2 \times 2$ matrix characterizing the polarization coupling at the transmitter (resp. receiver), and $\mathbf{\Sigma}_{p}$ is a $2 \times 2$ matrix of polarimetric scattering coefficients of the $p$th path/target\footnote{$\mathbf{\Sigma}_{p} = \mathbf{I}_{2}$ for a line-of-sight path with no reflection.}. Thus, the dual-polarized generalization of the system model in~\eqref{eq:prelim2} is:
\begin{align}
    \label{eq:pol1}
    \mathbf{y}^{(j)}[n]\!&=\!\sum_{i \in \{\mathsf{V},\mathsf{H}\}}\!\sum_{k,l \in \mathbb{Z}_{MN}}\!\mathbf{h}_{\mathsf{eff}}^{(j,i)}[k,l] \mathbf{x}^{(i)}[(n-k)_{{}_{MN}}] e^{\frac{j2\pi}{MN}l(n-k)} \nonumber \\ &~~~~~~~~~~~~~~~~~~~~+ \mathbf{w}^{(j)}[n],~i,j \in \{\mathsf{V},\mathsf{H}\},
\end{align}
where $\mathbf{x}^{(i)}$ denotes the signal transmitted by the $i$-polarized transmit antenna, $\mathbf{h}_{\mathsf{eff}}^{(j, i)}[k, l]$ denotes the effective channel between the $i$-polarized transmit antenna and the $j$-polarized receive antenna, and $\mathbf{w}^{(j)}[n]$ denotes the additive noise at the $j$-polarized receive antenna. In~\eqref{eq:pol1}, the polarimetric effective channel $\mathbf{h}_{\mathsf{eff}}^{(j, i)}[k, l]$ is defined similarly as in Section~\ref{subsec:zak_intro_to_zak} using the polarimetric continuous DD channel representation, $\mathbf{h}_{\mathrm{phy}}^{(j, i)}(\tau,\nu) = \sum_{p=1}^{P} h_{p}^{ji} \delta(\tau-\tau_p) \delta(\nu-\nu_p)$, where $h_{p}^{ji}$ is $(j,i)$th entry of the matrix $\mathbf{H}_{p}$ in~\eqref{eq:pol_prelim1}.

\begin{figure*}[!t]
    \centering
    \begin{subfigure}{0.45\linewidth}
        \centering
        \includegraphics[width=\linewidth]{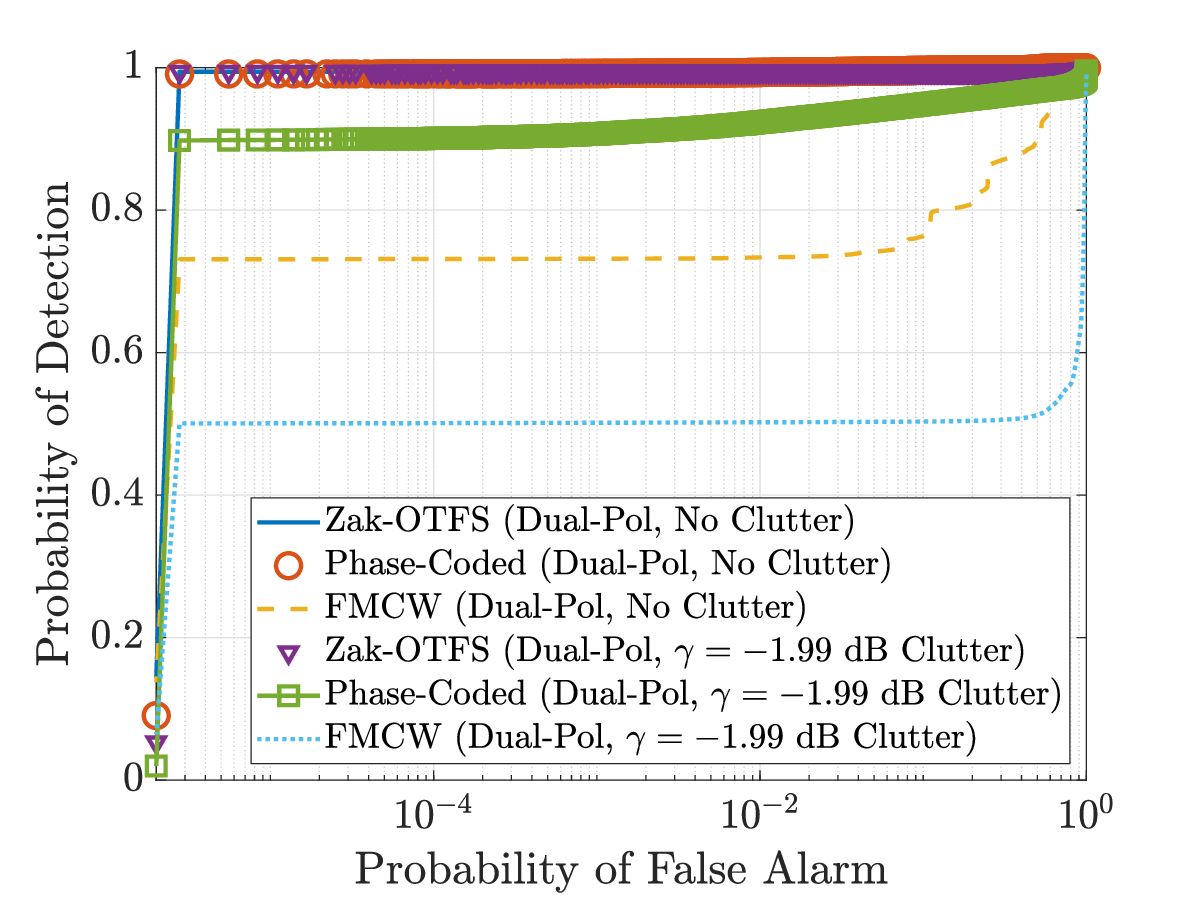}
        \caption{Detection ROC in constant-$\gamma$ clutter.}
        \label{fig:pol_roc}
    \end{subfigure}
    \hspace{15mm}
    \begin{subfigure}{0.45\linewidth}
        \centering
        \includegraphics[width=\linewidth]{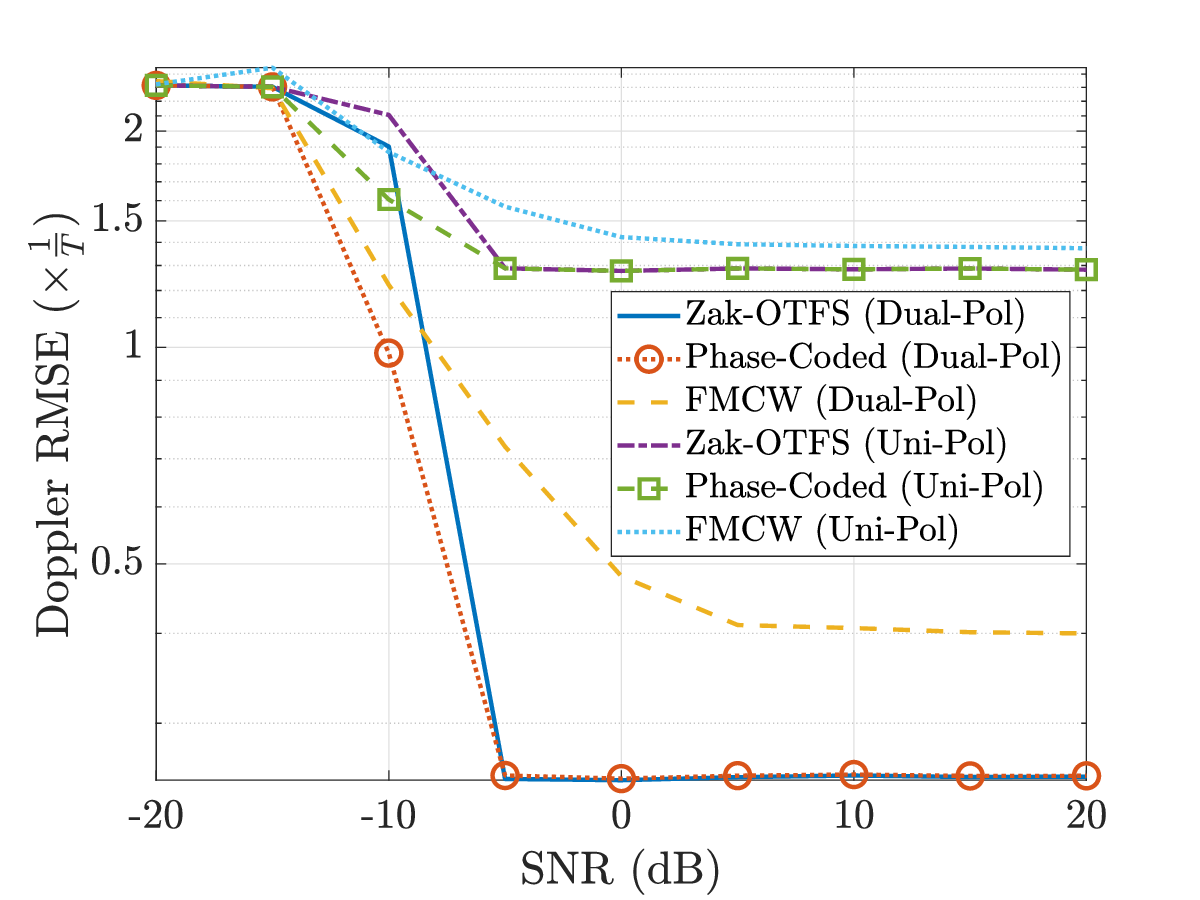}
        \caption{Doppler estimation RMSE (no clutter).}
        \label{fig:pol_mse}
    \end{subfigure}
    \caption{Single target detection and estimation performance. (a) Receiver operating characteristic (ROC) curve showing ideal target detection and greater clutter resilience with the proposed approach over phase-coded and FMCW waveforms. (b) Root mean squared error (RMSE) for Doppler estimation, normalized by the Doppler resolution $\nicefrac{1}{T}$. Doppler RMSE is similar for dual-polarized Zak-OTFS and phase-coded waveforms at high SNR, with $\sim 1.5 \times$ improvement over FMCW due to no loss in Doppler resolution. Significant improvements with dual-polarized vs uni-polarized waveforms. Figure adapted from~\cite{Mehrotra2025_polTRS}.}
    \label{fig:pol_results}
\end{figure*}

\subsubsection{Proposed Approach}
\label{subsubsec:pol_prop_appr}

To enable instantaneous polarimetry, we exploit the following \emph{mutual unbiasedness} property of the Zak-OTFS pulsone from Example~\ref{ex:comm_subgrp_ex1} and its GDAFT-transformed chirp waveform from Example~\ref{ex:comm_subgrp_ex2}:
\begin{align}
    \label{eq:unbiased}
    \mathbf{A}_{\mathbf{c},\mathbf{p}_{(k_0,l_0)}}[k,l] &= \frac{C_{(k_0,l_0)}[k,l]}{\sqrt{MN}},
\end{align}
where $\mathbf{c}$ denotes the output of the GDAFT applied to the pulsone waveform $\mathbf{p}_{(k_0,l_0)}$ localized at delay-Doppler indices $(k_0,l_0)$, $C_{(k_0,l_0)}[k,l]$ is a complex phase, $\big|C_{(k_0,l_0)}[k,l] \big| = 1$, and $\mathbf{A}_{\mathbf{y},\mathbf{x}}[k,l]$ denotes the cross-ambiguity function from Definition~\ref{def:amb_fun}. For instantaneous polarimetry, we transmit the two waveforms along orthogonal polarizations, e.g.,
\begin{align}
    \label{eq:prop1}
    \mathbf{x}^{(\mathsf{H})}[n] = \mathbf{p}_{(k_0,l_0)}[n],~\mathbf{x}^{(\mathsf{V})}[n] = \mathbf{c}[n].
\end{align}

For accurate channel estimation, the GDAFT parameters are chosen such that $\mathbf{c}[n]$ satisfies the crystallization condition in~\eqref{eq:no_aliasing} for all four components of the polarimetric effective channel $\mathbf{h}_{\mathsf{eff}}^{(j, i)}[k, l]$, for all $i,j \in \{\mathsf{V},\mathsf{H}\}$.

Mutual unbiasedness per~\eqref{eq:unbiased} enables accurate estimation of all four polarimetric effective channels $\mathbf{h}_{\mathsf{eff}}^{(j, i)}[k, l]$. Without loss of generality, we prove the result for the example considered in~\eqref{eq:prop1}. The estimate of $\mathbf{h}_{\mathsf{eff}}^{(j, i)}[k, l]$ from~\eqref{eq:prelim6} and~\eqref{eq:pol1} is:
\begin{align}
    \label{eq:prop3}
    \widehat{\mathbf{h}}_{\mathsf{eff}}^{(j, i)}[k,l] &= \sum_{i' \in \{\mathsf{V},\mathsf{H}\}} \mathbf{h}_{\mathsf{eff}}^{(j,i')}[k,l] \ast_{\sigma_{d}} \mathbf{A}_{\mathbf{x}^{(i')},\mathbf{x}^{(i)}}[k,l],
\end{align}
where $\ast_{\sigma_{d}}$ denotes discrete twisted convolution per~\eqref{eq:sec3eq22}.

The expression in~\eqref{eq:prop3} is the sum of two terms:
\begin{align}
    \label{eq:prop4}
    \widehat{\mathbf{h}}_{\mathsf{eff}}^{(j, i)}[k,l] &= \mathbf{h}_{\mathsf{eff}}^{(j,i)}[k,l] \ast_{\sigma_{d}} \mathbf{A}_{\mathbf{x}^{(i)},\mathbf{x}^{(i)}}[k,l] \nonumber \\ &+ \mathbf{h}_{\mathsf{eff}}^{(j,\bar{i})}[k,l] \ast_{\sigma_{d}} \mathbf{A}_{\mathbf{x}^{(\bar{i})},\mathbf{x}^{(i)}}[k,l],
\end{align}
where $\bar{i}$ denotes a polarization different from $i$ in the set $\{\mathsf{V},\mathsf{H}\}$. Since each sequence $\mathbf{x}^{(i)}$ satisfies the crystallization condition in~\eqref{eq:no_aliasing}, the first term is simply $\mathbf{h}_{\mathsf{eff}}^{(j,i)}[k,l]$. To simplify the second term, we substitute~\eqref{eq:unbiased} to obtain:
\begin{align}
    \label{eq:prop5}
    \widehat{\mathbf{h}}_{\mathsf{eff}}^{(j, i)}[k,l] &= \mathbf{h}_{\mathsf{eff}}^{(j,i)}[k,l] + \mathbf{h}_{\mathsf{eff}}^{(j,\bar{i})}[k,l] \ast_{\sigma_{d}} \frac{C[k,l]}{\sqrt{MN}} \nonumber \\
    &\approx \mathbf{h}_{\mathsf{eff}}^{(j,i)}[k,l],
\end{align}
where $C[k,l]$ is a phase term similar to that in~\eqref{eq:unbiased}. Since the second term is the twisted convolution of the effective channel $\mathbf{h}_{\mathsf{eff}}^{(j,\bar{i})}[k,l]$ with a constant amplitude term, it simply raises the noise floor of the channel estimate. Computing each cross-ambiguity term only incurs $\mathcal{O}(BT \log T)$ complexity per Lemma~\ref{lmm:low_compl_radar}, i.e., the overall complexity is \emph{near-linear} in $BT$.

\emph{Remark:} The estimate shown in \eqref{eq:prop5} is the maximum-likelihood estimate of the channel. Compared to the FMCW, Zak-OTFS based radar achieves two times improvement in the Doppler resolution. This is because FMCW splits the available duration into two halves where in each half the FMCW transmits an up and a down chirp. Furthermore, in the presence of multiple targets, elimination of ghost targets in FMCW is a challenge (see~\cite{nisar2026zak} for more details). Also, Zak-OTFS allows constructing waveform libraries with desired self-ambiguity properties via symplectic transformations such as GDAFT (see Definition~\ref{def:gdaft}). However, no such theory has been developed for MC-OTFS. Lastly, the response to a pilot in MC-OTFS is dependent on its location whereas Zak-OTFS provides uniform response regardless of the pilot location -- a property that might be desirable for radar.

\emph{Remark:} In a discrete-time model for $MN$-periodic sequences, the received vector $\mathbf{y}[n]$ is obtained from the transmitted vector $\mathbf{x}[n]$ by (twisted) convolution. Consider the problem of selecting an unbiased estimator theta for the scattering environment $\mathbf{h}[k, l]$, where $k=0,1, \cdots , M-1$ and $l=0,1, \cdots , N-1$. The Cramer-Rao Lower Bound (CRLB) is the minimum covariance of such an estimator and is given by $\mathbf{F}^{-1}(\theta)$, where $\mathbf{F}(\theta)$ is the Fisher Information Matrix (FIM). The FIM can be written in terms of the self-ambiguity function of the input $\mathbf{x}[n]$. The structure of the FIM $\mathbf{F}(\theta)$ is then determined by the symmetries of the waveform $\mathbf{x}[n]$, which is a reason for choosing highly symmetric waveforms. When $\mathbf{x}[n]$ is a Zak-OTFS carrier waveform, and when $\mathbf{h}[k, l]$ is supported in $\mathbb{Z}_M \times \mathbb{Z}_N$ (the crystallization condition), the FIM $\mathbf{F}(\theta)$ turns out to be the identity matrix. The interpretation is that the Zak-OTFS carrier waveform is an optimal unbiased estimator. The same is true for an AFDM carrier waveform (given a corresponding crystallization condition), as is expected since it is unitarily equivalent to the Zak-OTFS waveform. When $\mathbf{x}[n]$ is a discrete tone or a discrete pulse, the FIM $\mathbf{F}(\theta)$ has a block structure, is highly singular, and the non-zero eigenvalues are greater than one. The interpretation is that the carrier waveform is optimal at resolving delay (better than a spread waveform) and is blind to Doppler, or vice versa (cf.~\cite{liu2025cp, liu2026cp}).

\subsubsection{Numerical Results}
\label{subsubsec:pol_results}

We compare the performance of various polarimetry schemes in Fig.~\ref{fig:pol_results} for a monostatic polarimetric radar with frame transmissions of bandwidth $B = 930$ kHz and time $T = 1.2$ ms. Fig.~\ref{fig:pol_results}(\subref{fig:pol_roc}) plots the detection ROC curve for single-target detection in constant-$\gamma$ clutter, where $\gamma = -1.99$ dB is used to model a metropolitan terrain. Fig.~\ref{fig:pol_results}(\subref{fig:pol_roc}) shows ideal target detection performance and greater resilience to clutter with the proposed approach over competing methods based on phase-coding and FMCW transmissions. Fig.~\ref{fig:pol_results}(\subref{fig:pol_mse}) plots the root-mean-squared error (RMSE) for Doppler estimation in the absence of clutter. The Doppler RMSE matches for polarimetry with phase-coding and Zak-OTFS, with $\sim 1.5 \times$ improvement over FMCW as expected. Uni-polarized systems have significantly poorer delay and Doppler RMSE. Note that the RMSEs do not improve beyond a certain threshold due to the inherent resolution limits of the chosen waveforms. 

\subsection{Open Research Problems}
\label{subsec:radar_future}

In this Section, we have taken first steps towards illustrating the radar capabilities using the Zak-OTFS framework. It is an open problem to establish the value that Zak-OTFS brings to traditional radar signal processing, e.g., to clutter rejection, near-field, wideband and MIMO radar. Furthermore, is it possible to adapt the neural receiver architecture described in Section~\ref{sec:neural_rx} to enable radar signal processing without explicit signal processing blocks, such as the matched filter and the cross-ambiguity function?

\section{Conclusion}
\label{sec:conclusion}
We have presented a mathematical framework that provides a foundation for signal processing in the DD domain. At the center of our framework is a finite group (HW) of discrete delay and Doppler shifts, called the discrete Heisenberg-Weyl group. Given a basis of carrier waveforms, we have characterized properties that are useful for sensing and communications in terms of how the basis of carrier waveforms interacts with the discrete group HW. The Zak-OTFS basis is ideal for sensing because each carrier waveform is a pulse in the DD domain, and we have presented a general design framework that encompasses other waveforms proposed for 6G. Our design framework makes it possible to construct mutually unbiased bases of carrier waveforms, and we have described how these bases can be used to accomplish Faster than Nyquist signaling without changing the symbol period. We have described how to use the delay and Doppler degrees of freedom to design equalizers and filters that balance localization and fidelity of communications. We have described how a filter optimized for localization enables grant-free random access for channels subject to both mobility and delay spread. We have designed a discrete radar by repurposing the discrete-time model for communications and shown how mathematical properties of the Zak-OTFS basis reduce the complexity of calculating cross-correlation. At the end of each Section, we have listed open research problems that we hope will attract the attention of the research community.

\section{Acknowledgements}
\label{sec:acks}

We thank Samsung Research America for sharing their insights on coding and shaping, Prof. Krishna Narayanan (Texas A\&M University) for sharing his perspective on coded random access. We also thank Prof. Lingjia Liu, Dr. Karim Said and Mr. Yibin Liang (Virginia Tech) for sharing their insights into the neural receiver, Prof. Tingjun Chen (Duke University) for sharing his expertise on experimental over-the-air implementation, and Prof. Saif Khan Mohammed (IIT Delhi) for numerous discussions.

\bibliographystyle{IEEEtran}
\bibliography{references}
\end{document}